\definecolor{darkred}  {rgb}{0.5,0,0}
\definecolor{darkblue} {rgb}{0,0,0.5}
\definecolor{darkgreen}{rgb}{0,0.5,0}
\pgfmathsetmacro\MathAxis{height("$\vcenter{}$")}
\newcommand{\tikzlength}{0.7}
\newcommand{\scale}{0.78}
\newcommand{\mc}{\mathcal}
\renewcommand{\E}{\mathop{\mathbb E\/}}
\renewcommand{\polylog}{\mathrm{polylog}}
\renewcommand{\poly}{\mathrm{poly}}
\newcommand{\noti}{\mathsf{NI}}
\newcommand{\cnot}{\mathsf{CNOT}}
\newcommand{\ber}{\mathsf{Bern}}
\newcommand{\maj}{\mathsf{Maj}}
\renewcommand{\epsilon}{\varepsilon}
\newcommand{\ep}{\mathsf{EP}}
\newcommand{\under}[2]{\underbrace{#1}_{\substack{#2}}}
\newtheorem{theorem}{Theorem}[section]
\newtheorem{lemma}[theorem]{Lemma}
\newtheorem{question}{Question}
\newtheorem{claim}[theorem]{Claim}
\newtheorem{remark}{Remark}[section]
\newtheorem{definition}{Definition}[section]
\newtheorem{fact}{Fact}[section]
\newtheorem{conjecture}{Conjecture}
\newtheorem{corollary}[theorem]{Corollary}
\begin{document}
\title{Quantum computational advantage with constant-temperature Gibbs sampling}

\author{Thiago Bergamaschi\thanks{UC Berkeley. \href{mailto:thiagob@berkeley.edu}{thiagob@berkeley.edu}}
\and
Chi-Fang Chen\thanks{Caltech. \href{mailto:chifang@caltech.edu}{chifang@caltech.edu}}
\and
Yunchao Liu\thanks{UC Berkeley. \href{mailto:yunchaoliu@berkeley.edu}{yunchaoliu@berkeley.edu}}
}

\date{}

\maketitle

\begin{abstract}

A quantum system coupled to a bath at some fixed, finite temperature converges to its Gibbs state. This \emph{thermalization} process defines a natural, physically-motivated model of quantum computation. However, whether quantum computational advantage can be achieved within this realistic physical setup has remained open, due to the challenge of finding systems that thermalize quickly, but are classically intractable. Here we consider sampling from the measurement outcome distribution of quantum Gibbs states at constant temperatures, and prove that this task demonstrates quantum computational advantage. We design a family of commuting local Hamiltonians (parent Hamiltonians of shallow quantum circuits) and prove that they rapidly converge to their Gibbs states under the standard physical model of thermalization (as a continuous-time quantum Markov chain). On the other hand, we show that no polynomial time classical algorithm can sample from the measurement outcome distribution by reducing to the classical hardness of sampling from noiseless shallow quantum circuits. The key step in the reduction is constructing a fault-tolerance scheme for shallow IQP circuits against input noise.


\end{abstract}

\section{Introduction}
\label{section:introduction}


A major goal of today's quantum computing efforts is to realize quantum computational advantage in realistic physical setups. One such setup is open system thermalization, where a quantum many-body system is specified by a Hamiltonian $H$ and then coupled to a bath at finite (constant) temperature $\beta$, and the system converges to the Gibbs state $\rho_\beta\propto e^{-\beta H}$. Under physical assumptions,\footnote{The bath is Markovian and the coupling is weak.} this thermalization process can be described by a \textit{thermal Lindbladian} (a continuous-time quantum Markov chain), most notably the Davies generator~\cite{Davies1974MarkovianME} and its variants (e.g.~\cite{mozgunov2020completely}). This setup is especially relevant for physical platforms in which implementing digital quantum circuits is difficult. However, there has been no complexity-theoretic evidence showing that quantum computational advantage can be achieved in this model (see~\cref{sec:relatedwork} for a discussion).

In this paper, we provide such evidence by showing that quantum computational advantage can be achieved for the task of sampling from the measurement outcome distribution of Gibbs states at constant temperatures. In particular, we construct a family of commuting local Hamiltonians and show that its thermalization process (described by the Davies generator) is rapidly mixing. Meanwhile, its Gibbs state is classically intractable to sample from.

\begin{theorem}[Main result]
    \label{theorem:main} For any constant inverse-temperature $\beta = \Theta(1)$, there exists a family of $n$-qubit commuting $O(1)$-local Hamiltonians, such that the $n$-qubit Gibbs state $\rho_\beta$ is both
    \begin{enumerate}
        \item \emph{Rapidly Thermalizing.} It can be prepared within small trace distance by the Davies generator (a quantum Markov chain describing thermalization), in time $n^{o(1)}$. In addition, this process can be simulated on a quantum computer in time $n^{1+o(1)}$. And yet, 
        \item \emph{Classically Intractable.} Under certain complexity-theoretic assumptions, there is no polynomial time classical algorithm to sample from the measurement outcome distribution  $p(x)=\bra{x}\rho_\beta\ket{x}$ within small total variation distance.
    \end{enumerate}
\end{theorem}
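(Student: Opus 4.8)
The plan is to build the Hamiltonian family as parent Hamiltonians of shallow quantum circuits, so that the Gibbs state is a low-temperature-distorted version of a shallow-circuit state, and then handle the two halves of the theorem separately.

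For the \emph{classical intractability} half, I would start from the known classical hardness of sampling from the output distribution of noiseless constant-depth (shallow) IQP circuits — specifically, the standard conjectures (anti-concentration plus average-case hardness of the associated partition-function/weight-enumerator quantities) under which no polynomial-time classical sampler can reproduce such distributions up to small total variation distance. The key technical object to construct is a \emph{commuting} $O(1)$-local parent Hamiltonian $H$ whose Gibbs state $\rho_\beta$, when measured in the computational basis, yields a distribution that is computationally equivalent (up to efficient post-processing) to the output of a shallow IQP circuit on a noisy input state. Since the Hamiltonian is commuting, $\rho_\beta$ factorizes nicely over the stabilizer-like structure, and the effect of finite $\beta$ is to put each local ``check'' term into a thermal mixture of satisfied/violated — i.e. to inject independent bit-flip-type noise on the encoded degrees of freedom. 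The heart of the argument, as the abstract flags, is then a \emph{fault-tolerance scheme for shallow IQP circuits against input noise}: I would design an encoding so that the constant rate of thermal errors on the physical qubits is corrected (or rendered harmless) within the shallow-depth budget, reducing sampling from $p(x)=\bra{x}\rho_\beta\ket{x}$ to sampling from an ideal shallow IQP circuit. Combined with the hardness-of-shallow-IQP conjecture, this gives: a polynomial-time classical sampler for $p$ would yield one for ideal shallow IQP, contradicting the assumption.

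For the \emph{rapid thermalization} half, I would invoke the structure of commuting local Hamiltonians: their Davies generators are themselves (quasi-)local and, crucially, one can establish a system-size-independent spectral gap / modified-log-Sobolev constant. Concretely, because $H$ is commuting and geometrically/combinatorially local, the Davies Lindbladian decomposes so that a uniform lower bound on the gap can be obtained either by a finite-range decay-of-correlations argument for the Gibbs state at constant $\beta$, or by an explicit Dobrushin-type / detectability-lemma-style bound tailored to this family. This yields mixing time $\mathrm{polylog}(n)$, in fact $n^{o(1)}$, in trace distance; the $n^{1+o(1)}$ quantum simulation cost then follows from standard quasi-local Lindbladian simulation (the generator has $O(1)$-local jump operators, $\widetilde O(n)$ of them, each simulable in $n^{o(1)}$ time).

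The main obstacle I anticipate is the fault-tolerance construction in step one: ordinary quantum error correction requires non-constant depth for syndrome extraction and decoding, which is incompatible with keeping the circuit shallow and with keeping the parent Hamiltonian $O(1)$-local and \emph{commuting}. The trick will be to exploit the special non-adaptive, diagonal-in-the-$X$-basis structure of IQP circuits — errors commute past the IQP gates in a controlled way — so that ``correction'' amounts to classical post-processing of measurement outcomes rather than in-circuit recovery, and to choose an encoding (likely a concatenated or expander-based code with $O(1)$-depth encoder) whose logical error rate after a constant density of input errors is $o(1)$. Verifying that this simultaneously (a) keeps the Hamiltonian commuting and local, (b) preserves anti-concentration and average-case hardness of the encoded IQP instance, and (c) tolerates the precise noise model induced by the Davies dynamics at constant $\beta$, is where the real work lies.
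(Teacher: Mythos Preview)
Your classical-intractability half is essentially the paper's approach: parent Hamiltonians of shallow IQP circuits, the observation that the Gibbs state equals the circuit output under input bit-flip noise, and a fault-tolerance scheme that pushes the correction into classical post-processing by exploiting that $X$ errors commute through the diagonal part of an IQP circuit. Two clarifications will simplify your life. First, the noise model is not merely ``the precise noise model induced by the Davies dynamics'' that you must analyze carefully: for $H = C(\sum_i \ketbra{1}_i)C^\dagger$ one has \emph{exactly} $\rho_\beta = C(\mathcal{D}_p(\ketbra{0}))^{\otimes n}C^\dagger$ with $p=(1+e^\beta)^{-1}$, so worry (c) evaporates. Second, because the post-processing recovers a sample from the \emph{original} IQP circuit, worry (b) about preserving anti-concentration and average-case hardness of the encoded instance is a non-issue; hardness is inherited directly.

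The rapid-thermalization half has a genuine gap. Your proposed tools---decay of correlations, Dobrushin-type bounds, detectability-lemma arguments---all implicitly want geometric (lattice) locality or a high-temperature assumption. But the fault-tolerant circuit $\tilde C$ is built by attaching tree-shaped distillation gadgets to each input wire, and the resulting interaction graph is \emph{not} a lattice; the paper explicitly needs a mixing bound with no geometric-locality or temperature-threshold hypothesis. The paper's actual argument is quite different from anything you list: choose the Davies jump operators to be all $\ell$-qubit Paulis supported on each circuit lightcone $\mathsf{L}_i$, rotate the Lindbladian by $C$, and use the $1$-design identity $\mathbb{E}_P[P\otimes P]=2^{-\ell}\mathrm{SWAP}$ together with lightcone cancellation to write the rotated generator as a convex combination $q\cdot\mathcal{L}_{\noti}+(1-q)\cdot\mathcal{L}_{\text{rest}}$ with $q=4^{1-\ell}$, where $\mathcal{L}_{\noti}$ is the non-interacting single-qubit Lindbladian and $\mathcal{L}_{\text{rest}}$ shares the same fixed point. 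An MLSI for $\mathcal{L}_{\noti}$ plus data processing for $\mathcal{L}_{\text{rest}}$ then gives MLSI constant $\Omega(4^{-\ell}e^{-\beta})$.

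This also exposes a subtlety you should track: the mixing time scales with the \emph{circuit lightcone size} $\ell$, not with the Hamiltonian locality $r=\max_i|\mathrm{supp}(CZ_iC^\dagger)|$. The fault-tolerance scheme is engineered so that $r=O(1)$ (since $Z$ propagates only along the root-to-leaf path in the gadget tree) while $\ell=O(D\log^{1/D} n)$ for a tunable constant $D$; choosing $D$ large makes $4^\ell=n^{o(1)}$, which is where the $n^{o(1)}$ mixing time comes from. Your plan conflates these two locality notions, and without separating them you would not get constant Hamiltonian locality together with subpolynomial mixing.
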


The classical hardness is based on the hardness of approximate sampling from the output distribution of ideal shallow quantum circuits. The main result, therefore, places the hardness of rapidly mixing thermalization to the same level as ideal sampling-based quantum supremacy experiments (see \cref{section:appendix-hardness} for more details). 

A more general version of the result (\cref{thm:maingeneral}) is given in \cref{section:together}, where we generalize the above and show how to trade-off locality for mixing time, including a family of $O(\log\log n)$-local Hamiltonians which thermalizes in $\polylog(n)$ time.\footnote{In the initial posting of this work, we stated this latter construction as our main result. We thank James Watson and Joel Rajakumar for the observation that under appropriate parameter choices, our construction in fact has constant locality (see \cref{section:overview-distillation} and \cite{rajakumar2024gibbs}).}

\begin{figure}[t]
    \centering
    \begin{subfigure}[b]{0.4\textwidth}
    \begin{tikzpicture}[baseline={(0, 0.5*\tikzlength cm-\MathAxis pt)},x=\tikzlength cm,y=\tikzlength cm]
  \fill[blue!30!white,rounded corners=5pt]  (2.7,3) -- (4,4.3) -- (5.3,3) 
 -- (4,1.7) -- cycle;
 
  \draw[step=1,black] (0,0) grid (5,5);

  \foreach \x in {0,...,5}
  \foreach \y in {0,...,5}
  {
  \filldraw[black] (\x,\y) circle (2pt);}
   \node at (4,3)[anchor=north west]{$i$};
   \node at (5.2,3)[anchor=west]{$h_i=-C Z_i C^\dag$};
\end{tikzpicture}
    \caption{A Local Hamiltonian}
    \label{fig:localhamiltonian}
    \end{subfigure}
    \quad
    \begin{subfigure}[b]{0.4\textwidth}
    \begin{equation*}
    \rho_\beta\quad=\quad\begin{tikzpicture}[baseline={(0, 0.5*\tikzlength cm-\MathAxis pt)},x=\tikzlength cm,y=\tikzlength cm]
        \draw[black] (0,0) rectangle node{$C$} (6,1);
  \foreach \x in {1,2,...,6} {
        \draw[black] (\x-0.5, 0) -- (\x-0.5,-0.8);
        \draw[black] (\x-0.5, 1) -- (\x-0.5,1.5);
        
        \node at (\x-0.5,-0.4)[circle,fill=blue!60,inner sep=1.5pt]{};
        \node at (\x-0.5,-0.8)[anchor=north]{$\ket{0}$};
    }
    \end{tikzpicture}
\end{equation*}
    \caption{The Noise Model}
    \label{fig:noisemodel}
    \end{subfigure}
    \caption{(a) We consider local Hamiltonians $H=\sum_i h_i$ which are parent Hamiltonians of shallow quantum circuits. (b) The Gibbs states of these Hamiltonians $\rho_\beta\propto e^{-\beta H}$ are equivalent to the output state of $C$, where the input qubits are subject to bit-flip errors (blue dots) of rate $(1+e^{2\beta})^{-1}$.} 
    \label{fig:introduction}
\end{figure}

\paragraph{Our approach.} The family of Hamiltonians we consider is the class of ``parent'' Hamiltonians of shallow quantum circuits (\cref{fig:localhamiltonian}). Starting from a trivial, non-interacting Hamiltonian $H_\noti = -\sum_{i}Z_i$ consisting of single-qubit Pauli-$Z$ terms, we consider the family of Hamiltonians that are related to $H_\noti $ by a low depth circuit, 
\begin{equation}\label{eq:parenthamiltonian}
    \mathscr{H}=\left\{H:\exists\text{ low-depth circuit }C,\text{ }H=C H_\noti C^\dag\right\}.
\end{equation}
Each $H\in\mathscr{H}$ is local, commuting, and it encodes the computation $C$ in the sense that its ground state is the output of the circuit $C\ket{0^n}$. The reason that these Hamiltonians are good candidates for quantum advantage at constant temperatures lies in the following key observation:

\begin{quote}
\centering
\emph{
    The Gibbs state of each $H\in\mathscr{H}$ is a noisy version of the underlying computation, where random bit-flip errors are applied to the input qubits (\cref{fig:noisemodel}).
}
\end{quote}

This is a clean example of the general intuition that constant-temperature Gibbs states are very noisy and far from ground states. To encode computational hardness into the Gibbs states of $H\in\mathscr{H}$, it then suffices to design a shallow quantum circuit which is classically intractable to simulate even under input noise. Our main result then follows from two key technical contributions:

\begin{enumerate}
    \item \textbf{A construction of classically-hard shallow quantum circuits that are fault-tolerant against input noise.} Standard techniques in quantum fault-tolerance blow up the circuit depth, and in turn, the locality of the parent Hamiltonian\footnote{Our interest in decreasing the locality stems both from the practical challenges behind engineering systems with many-body interactions, and a complexity-theoretic understanding of the role of locality in the hardness of Gibbs sampling.}. We start from a specific family of classically-hard shallow circuits (namely, IQP circuits~\cite{Gao2016QuantumSF,BermejoVega2017ArchitecturesFQ}), and then design a low-overhead fault-tolerance scheme tailored to IQP circuits and the input noise model.
    
    \item \textbf{A proof that these Hamiltonians thermalize rapidly, via a modified log-Sobolev inequality.} We prove a rapid mixing bound for Hamiltonians in $\mathscr{H}$ which leverages the structure of the thermal Lindbladian (the quantum Markov chain describing thermalization), in combination with a carefully constructed lightcone argument for shallow quantum circuits.
\end{enumerate}


\subsection{Related work}
\label{sec:relatedwork}
\paragraph{Complexity of Gibbs states.} Establishing quantum computational advantage with constant-temperature Gibbs sampling faces inherent difficulties. After all, at high enough temperatures, Gibbs states are expected to be essentially classical objects; in particular, sampling from these Gibbs states is efficient to simulate on a classical computer\footnote{This does not contradict our result which holds for arbitrary constant temperature, due to the order of quantifiers; see~\cref{remark:order-of-quantifiers}.}~\cite{yin2023polynomialtime,bakshi2024hightemperature}. On the other hand, in the low temperature regime, preparing Gibbs states is expected to be hard in general even for a quantum computer;\footnote{Indeed, $\NP$-hard due to the classical PCP theorem \cite{AroraLMSS98}.} in particular, the thermalization process may take exponential time. 


Nevertheless, a path exists to circumvent these issues, by embedding some classically hard quantum computation into a local Hamiltonian. It is reasonable to hope that the nature of this embedding ensures that producing the Gibbs state is still tractable for quantum computers\footnote{In fact, we desire something even stronger: that the Hamiltonian is rapidly thermalizing.} (e.g.~\cite{Aharonov2007Adiabatic,chen2023local}), and one can further hope that the Gibbs state is classically hard. But there is yet another issue: standard means to embed quantum circuits into Hamiltonians \cite{kitaev02} typically encode the quantum computation into its ground state. However, Gibbs states at constant temperatures are understood to be very noisy, and far from the ground state. In this manner, to argue that this noisy version of the ground state remains classically hard, there must be an inherent \emph{fault-tolerance} to the circuit-to-Hamiltonian mapping. Our approach can be viewed as a clean example that satisfies all of the above criteria.

\paragraph{Gibbs samplers and rapid mixing.} Preparing Gibbs states (or Gibbs sampling) is a candidate application of quantum computers as well as an important quantum algorithmic primitive. While there are many proposed quantum Gibbs samplers, recent developments have focused on an approach of simulating open system (Lindbladian) dynamics, in particular the Davies generator and its variants which mimic thermalization in nature~\cite{Rall2023thermalstate,Chen2023QuantumTS,Chen2023AnEA}. 

The key missing ingredient to the efficiency of these quantum simulation algorithms is a bound on the mixing time of the underlying quantum Markov chain. The standard approach, via a bound on the spectral gap, gives a mixing time that has intrinsic polynomial dependence in $n$~\cite{Kastoryano2012QuantumLS}. A much stronger approach known as (quantum) \textit{log-Sobolev inequalities} consists of a decay of the relative entropy, and results in only $\polylog(n)$ mixing time, a phenomenon known as \textit{rapid mixing}. These stronger inequalities are notoriously hard to prove: examples have only been shown for certain commuting systems, in 1D~\cite{Bardet2023Rapid,Bardet2024} or on lattices above a threshold temperature~\cite{capel2021modified}. Our rapid mixing bound uses the lightcone structure of shallow quantum circuits, and does not require geometric locality or a temperature threshold.

\paragraph{Shallow quantum circuits and fault-tolerance.} 

Shallow quantum circuits are widely used in quantum algorithms for near-term devices and quantum supremacy experiments. The hardness of sampling from the output distribution of shallow quantum circuits provides the complexity foundation for these experiments (see~\cite{Hangleiter2022ComputationalAO} for a review). We focus on constant-depth \textit{instantaneous quantum polynomial time} (IQP) circuits $C = H^{\otimes n} D H^{\otimes n}$ where $D$ is a constant-depth diagonal unitary, which provides hardness due to the universality of measurement-based quantum computation~\cite{Gao2016QuantumSF,BermejoVega2017ArchitecturesFQ}. However, these circuits are not noise-robust and become classically simulable under noise~\cite{Bremner2016AchievingQS, Rajakumar2024PolynomialTimeCS}; fault-tolerance techniques are therefore necessary for classical hardness in our context.

There is a tension between shallow quantum circuits and the overhead of quantum fault-tolerance.\footnote{Note that some models of fault-tolerance assume instant classical computation and feedforward within a quantum circuit \cite{mezher20,Paletta2023RobustSI}. This is not allowed in our setting: all operations must be realized by quantum gates.} Standard techniques encode a constant depth quantum circuit into a fault-tolerant circuit of $\polylog(n)$ depth~\cite{Aharonov-Ben-Or}, and fault-tolerance with constant circuit depth overhead is only known for shallow Clifford circuits~\cite{Bravyi2020Quantum}. Ref.~\cite{Bremner2016AchievingQS} devised a fault-tolerance scheme specialized to IQP circuits and the input noise model, and we design a new scheme in this setting which achieves a significantly smaller overhead.

\subsection{Our Contributions}


\subsubsection{Efficient quantum Gibbs sampling via rapid mixing}
\label{subsection:contributions-samplers}

Our first result is a quantum algorithm for preparing the Gibbs states of $H\in\mathscr{H}$, given only a description of its local terms $H =\sum_i h_i$ (as $2^\ell\times 2^\ell$ matrices).\footnote{Although $H$ has a simple structure by definition, the underlying global structure (the low-depth circuit $C$) is hidden among the local terms, and is not directly accessible. See \cref{remark:gate-obfuscation} for a discussion.}

\begin{lemma}[Gibbs State Preparation]\label{lemma:results-gibbs-prep}
    Fix $\beta>0$, and let $H\in\mathscr{H}$ be the parent Hamiltonian of a quantum circuit on $n$ qubits, of depth $d$ and lightcone size $\ell$. Then, there exists a quantum algorithm which can prepare the Gibbs state of $H$ at inverse-temperature $\beta$ up to an error $\epsilon$ in trace distance in time $O(2^{4\ell}\cdot 2^d\cdot  e^{2\beta} \cdot n \cdot \poly (\log \frac{n}{\epsilon}, \ell, \beta))$.
\end{lemma}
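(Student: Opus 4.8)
The plan is to realize the Gibbs state $\rho_\beta \propto e^{-\beta H}$ as the stationary state of a quantum Markov process (the Davies generator, or a suitable variant such as the one in~\cite{mozgunov2020completely}) associated to $H$, to bound its mixing time, and then to simulate that Lindbladian on a quantum computer using one of the recently developed Lindbladian-simulation algorithms~\cite{Chen2023AnEA,Chen2023QuantumTS}. Since $H = C H_\noti C^\dagger$ with $C$ of depth $d$, each local term $h_i = -C Z_i C^\dagger$ is supported on a lightcone of at most $\ell$ qubits, the terms pairwise commute, and — crucially — the whole construction is unitarily conjugate, via $C$, to the trivial non-interacting Hamiltonian $H_\noti = -\sum_i Z_i$. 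I would exploit this conjugation heavily: the Davies dynamics for $H$ is exactly the $C$-conjugate of the Davies dynamics for $H_\noti$, and the latter is a product of $n$ independent single-qubit depolarizing-type processes which mixes in $O(1)$ time (with $\beta$-dependence $e^{O(\beta)}$). Thus the abstract mixing time of the continuous-time chain is essentially temperature-controlled and $n$-independent.

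The main work is turning this abstract mixing statement into a concrete gate-count bound, and here the $2^{4\ell}\cdot 2^d$ factors will appear. First I would argue that, because $H$ is commuting and each $h_i$ has lightcone $\ell$, the jump operators of the Davies generator for $H$ decompose into local pieces: the Bohr-frequency decomposition of $e^{iHt} A e^{-iHt}$ for a single-qubit coupling operator $A$ is supported on the (enlarged) lightcone, which has size $O(2^\ell)$ qubits in the worst case, so each jump operator acts on $O(2^\ell)$ qubits and can be written down explicitly as a matrix of dimension $2^{O(2^\ell)}$ — wait, more carefully, the lightcone itself has $\le \ell$ qubits, so each jump operator is a $2^\ell \times 2^\ell$ object and the $2^{4\ell}$ is the cost of manipulating/block-encoding these $2^\ell$-qubit operators (squared for the Lindbladian's quadratic dependence on jumps). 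The $2^d$ factor is the cost of synthesizing the conjugating gadget $C$ restricted to a lightcone, or equivalently of computing $e^{iHt}$ on a lightcone for the requisite short time. I would then invoke the Lindbladian simulation primitive: simulating the Davies generator for total evolution time $T$ to precision $\epsilon$ costs $\widetilde O(T)$ applications of block-encodings of the (local) jump operators and of $H$, each of which costs $2^{O(\ell)}\cdot 2^d$ elementary gates, and there are $O(n)$ such local terms, giving the stated $O(2^{4\ell}\cdot 2^d \cdot e^{2\beta}\cdot n \cdot \poly(\log\tfrac n\epsilon,\ell,\beta))$ after substituting $T = e^{O(\beta)}\cdot \poly\log\tfrac n\epsilon$ from the mixing bound. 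The $e^{2\beta}$ prefactor (rather than $e^{O(\beta)}$ inside $T$) would come from the weakest transition rate in the Metropolis/Davies filter, i.e.\ the rate at which the "wrong" bit-flip configuration is corrected, which scales like $(1+e^{2\beta})^{-1}$ as in Figure~\ref{fig:noisemodel}.

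Concretely I would organize the proof as: (i) write the Davies generator $\mathcal L_H$ for $H$ and observe $\mathcal L_H = (C\otimes \bar C)\,\mathcal L_{H_\noti}\,(C\otimes\bar C)^\dagger$ in a suitable sense, reducing mixing to the free case; (ii) prove the free case mixes in time $O(e^{2\beta})$ (a one-qubit computation, or cite a modified log-Sobolev inequality for the single-qubit Davies generator, tensorizing over the $n$ qubits); (iii) establish locality of the jump operators via the lightcone/commuting structure, bounding their support by the lightcone size $\ell$ and their operator description by $2^\ell\times 2^\ell$ matrices; (iv) invoke a black-box Lindbladian simulation result to convert (i)–(iii) into a gate count, tracking how the block-encoding cost $2^{O(\ell)}\cdot 2^d$ of each local jump operator, the number $n$ of terms, and the evolution time $T$ multiply together; (v) add up the errors (mixing-time truncation error plus simulation error) and rescale $\epsilon$.

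The step I expect to be the main obstacle is (iii)–(iv): carefully checking that the Davies jump operators for a commuting, lightcone-bounded $H$ really do remain exactly supported on individual lightcones (the Bohr-frequency / Fourier-transform-in-time decomposition can a priori spread support, and one must use commutativity of the $h_i$ to contain it), and then matching the block-encoding and simulation-cost bookkeeping so that the exponents come out as $2^{4\ell}$ and $2^d$ rather than something larger. A secondary subtlety is making sure the mixing-time bound for the conjugated chain is genuinely inherited from the free chain — the conjugation by $C$ is exact for the Lindbladian, but one should confirm the coupling/bath operators are chosen $C$-covariantly so that no $n$-dependence sneaks in through the choice of jump operators.
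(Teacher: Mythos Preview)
Your overall architecture --- build a Davies generator, bound the mixing time by comparison to the free chain $\mathcal{L}_{H_\noti}$, then simulate via a block-encoding framework --- matches the paper. But step (i) as you state it has a genuine gap, and it is not the ``secondary subtlety'' you flag at the end: it is the heart of the argument.

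The identity $\mathcal{L}_H = (C\otimes \bar C)\,\mathcal{L}_{H_\noti}\,(C\otimes\bar C)^\dagger$ holds \emph{only} if the jump operators for $H$ are themselves $C$-conjugates of the jump operators for $H_\noti$. But the algorithm is given only the local terms $\{h_i\}$ (and the lightcones $\mathsf{L}_i$), not the circuit $C$; see the discussion around \cref{lemma:results-gibbs-prep} and \cref{remark:gate-obfuscation}. So it cannot choose $C$-covariant couplings. And if one instead takes physically natural jump operators (say single-qubit Paulis), the rotated Lindbladian $C^\dagger \mathcal{L}_H[C\cdot C^\dagger]C$ has jump operators $C^\dagger P_i C$, which are $\ell$-qubit operators with no obvious relation to $\mathcal{L}_{H_\noti}$. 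Your proposed reduction to the free case therefore does not go through as stated.

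The paper's fix is different from anything in your outline: choose the jump operators to be \emph{all} $\ell$-qubit Paulis supported on each lightcone $\mathsf{L}_i$. Then, using the Pauli 1-design identity $\E_{P\in\mathcal{P}_\ell}[P\otimes P]=2^{-\ell}\,\mathrm{SWAP}$ together with a lightcone cancellation, the rotated Lindbladian is not equal to $\mathcal{L}_{H_\noti}$ but is a \emph{convex combination}
\[
C^\dagger \mathcal{L}_H[C\cdot C^\dagger]C \;=\; q\,\mathcal{L}_{\noti} \;+\; (1-q)\,\mathcal{L}_{rest}, \qquad q = 4^{1-\ell},
\]
where $\mathcal{L}_{rest}$ is some other Davies generator sharing the same fixed point (\cref{claim:overview-convex-combination}). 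Monotonicity of relative entropy under $\mathcal{L}_{rest}$ then lets one inherit the MLSI of $\mathcal{L}_\noti$ up to a factor $q$, yielding mixing time $O(4^\ell e^\beta \log n)$. Note in particular that one $4^\ell$ factor in the final bound comes from the \emph{mixing time}, not from the simulation cost as you guessed; the second $4^\ell$ and the $2^d$ come from the block-encoding implementation (\cref{lemma:dissipative_implementation}).
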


\noindent In general, the lightcone size $\ell$ is upper bounded by $\ell\leq 2^d$. We emphasize we do not make any assumptions on the temperature or geometric locality. This is important as our fault-tolerant circuits (\cref{lemma:results-iqp-ft}) are not naturally defined on a lattice. 

The algorithm in \cref{lemma:results-gibbs-prep} follows from a two-step argument. The first step is the design and analysis of a particular family of Davies generators \cite{Davies1974MarkovianME}, a family of dissipative Lindbladians whose local jumps (or transitions) are engineered to resemble the connectivity of the Hamiltonian. In \cref{lemma:mixingtimebound}, we prove that the mixing time of our Lindbladians is $t_{mix}=O(4^\ell\log n)$ via a modified log-Sobolev inequality. In principle, this step is already a \textit{thermal algorithm}, in the sense that ``placing the system in a fridge'' would drive it to the Gibbs state in time $t_{mix}\cdot \log (1/\varepsilon)$.

The second step is the simulation of the dissipative (non-unitary) dynamics on a quantum computer. We employ the block-encoding framework of \cite{Chen2023QuantumTS} which we significantly simplify as our family of Hamiltonians is commuting and has integer spectra. The quantum simulation adds a factor of $n$ to the running time, which may be hard to improve due to the absence of geometric locality. In \cref{sec:latticeprep}, we discuss an alternative method for Gibbs state preparation assuming finite dimensional lattice geometry using the framework of~\cite{Brando2016FiniteCL}.

\subsubsection{Fault-tolerance of shallow IQP circuits}
\label{subsection:contributions-ft}
The key ingredient for the classical hardness of sampling from quantum Gibbs states is to produce a shallow quantum circuit which is hard to sample from even under input noise. For this purpose, we design a fault-tolerance scheme for shallow IQP circuits~\cite{Gao2016QuantumSF,BermejoVega2017ArchitecturesFQ} since their gate set works nicely with fault tolerance techniques. Our result ensures that any IQP circuit can be made robust to input noise with only a small additive blow-up to the circuit depth (see \cref{lemma:IQP-FT-Distillation} for a more general statement). 

\begin{lemma}\label{lemma:results-iqp-ft}
Let $p<\frac{1}{2}$ be a constant bit-flip error rate, and let $C$ be an $n$ qubit IQP circuit of depth $d$. Then, there exists an $O(n\log \frac{n}{\epsilon})$ qubit circuit $\Tilde{C}$ of depth $d + o(\log \frac{n}{\epsilon})$, such that a sample from $\Tilde{C}$ under input noise (\cref{fig:noisemodel}) can be efficiently post-processed into a sample within $\epsilon$ total variation distance to the output distribution of $C$. 
\end{lemma}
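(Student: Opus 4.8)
The plan is to exploit the fact that, for an IQP circuit $C=H^{\otimes n}DH^{\otimes n}$ with $D$ diagonal in the $Z$ basis, input bit-flip noise is equivalent to an extremely benign noise channel on the \emph{outputs}. Indeed, a bit flip $X^s$ on the input ($s$ a string with i.i.d.\ $\ber(p)$ coordinates) gives measurement distribution
$\bra{x}H^{\otimes n}DH^{\otimes n}X^s\ket{0^n}$, and using $HXH=Z$ together with $[D,Z^s]=0$ this state equals $X^s\bigl(H^{\otimes n}DH^{\otimes n}\ket{0^n}\bigr)$. Hence a noisy sample is precisely a clean sample XORed by an (unknown) i.i.d.\ $\ber(p)$ string — a memoryless, below-threshold classical bit-flip channel. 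So it suffices to build an IQP circuit $\tilde C$ on $N=nm$ qubits whose clean output distribution is supported on a classical code that corrects i.i.d.\ $\ber(p)$ noise, and whose decoded ``message'' distribution is the output distribution $q$ of $C$.

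I would then add the redundancy \emph{inside the diagonal part}, not by copying outputs. Assign to each logical qubit $j\in[n]$ a block $B_j$ of $m$ physical qubits with representative $j^{(1)}$, let $F$ be a depth-$\lceil\log_2 m\rceil$ CNOT tree folding each block's parity into its representative (no ancillas), and set $\tilde C:=H^{\otimes N}\,F^{\dagger}\,D_{\mathrm{reps}}\,F\,H^{\otimes N}$, where $D_{\mathrm{reps}}$ applies $D$ to the $n$ representative qubits. Three observations make this work. First, $F^{\dagger}D_{\mathrm{reps}}F$ is again $Z$-diagonal (conjugating a $Z$-diagonal unitary by a CNOT circuit is a linear change of phase variables), so the reduction above applies verbatim: an input flip on qubit $l$ propagates through the fold/unfold pair and emerges as a single, independent flip of output bit $l$. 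Second, its phase function $\phi(y)$ depends on $y$ only through the $n$ block parities, so it is invariant under every repetition-code check $Z_{j^{(a)}}Z_{j^{(b)}}$; a one-line Fourier argument (reindex $y\mapsto y+h$ to get a sign flip) then shows $\tilde C\ket{0^N}$ is supported, in the computational basis, on the repetition code $\mathcal C=\{\mathrm{Rep}_m(u):u\in\{0,1\}^n\}$. Third, a short calculation collapsing the sum over preimages of each parity vector shows the amplitude at $\mathrm{Rep}_m(u)$ equals the amplitude of $C\ket{0^n}$ at $u$, so the message distribution is exactly $q$. The depth is $d+O(\log m)$ and the qubit count is $nm$; with $m=\Theta(\log\frac n\epsilon)$ this is $d+o(\log\frac n\epsilon)$ and $O(n\log\frac n\epsilon)$, matching the statement. (A slightly different bookkeeping — spreading each local gate $e^{i\theta Z_S}$ to $e^{i\theta Z_{\bigcup_{j\in S}B_j}}$ with per-gate CNOT-parity gadgets — gives the same thing and is perhaps the cleaner way to see the IQP structure; the ``fold once'' scheduling above is just to keep the depth additive in $d$.)

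The decoder and analysis are then routine. Given a noisy sample $\tilde x\oplus s$, output the block-wise majority $\hat u$, $\hat u_j=\maj(\,(\tilde x\oplus s)|_{B_j}\,)$. On block $j$ we observe $u_j$ XORed with $m$ i.i.d.\ $\ber(p)$ bits, so for fixed $p<\tfrac12$ the majority is wrong with probability $e^{-\Omega(m)}$; taking $m=\Theta(\log\frac n\epsilon)$ and a union bound over the $n$ blocks bounds the total failure probability by $\epsilon$, whence the decoded distribution is within $\epsilon$ of $q$ in total variation, and the post-processing is clearly efficient. (The more general \cref{lemma:IQP-FT-Distillation} then follows by the same argument with the block size treated as a free parameter; this also matches the distillation viewpoint of \cref{section:overview-distillation}, where the $m$ qubits of a block are $m$ noisy copies distilled by majority vote.)

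The step I expect to be the real obstacle is the one addressed in the second paragraph: the naive scheme — run $C$, then fan out each output bit with CNOTs — fails, because it reintroduces $n$ privileged ``logical'' input qubits, and a single flip of such a qubit propagates \emph{coherently} across an entire block, a correlated error that majority decoding cannot correct. The whole point of encoding into $D$ (spreading gates, not data) is that every one of the $N$ physical inputs is symmetric and contributes an \emph{independent} flip to a \emph{single} output coordinate; the technical crux is making this symmetry coexist with (i) a genuinely diagonal ``sandwich'' unitary, so that input noise stays a plain output XOR, (ii) support on a good classical code, and (iii) a low-depth, ancilla-free implementation — which is exactly what the $F^{\dagger}D_{\mathrm{reps}}F$ construction is engineered to achieve.
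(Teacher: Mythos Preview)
Your construction is correct for the statement as written, and it takes a genuinely different route from the paper's. The paper does \emph{not} encode the output of $C$ into a repetition code. Instead, for each input qubit it appends a block of $k$ ancilla qubits that never enter the IQP part of the circuit: a CNOT gadget (a $B$-ary tree of CNOTs, applied leaf-to-root) is laid on the block, and only the designated ``root'' qubit is fed into $C$. After measurement, the $k-1$ ancilla outcomes are used to \emph{reconstruct the input bit at the root} via recursive majority up the tree; since input flips commute through IQP to output flips, this suffices to undo the error on the corresponding output bit of $C$. Your scheme instead spreads the diagonal part across each block so that the \emph{clean output} of $\tilde C$ is itself a repetition codeword, and you decode the noisy codeword directly. (This is in fact close in spirit to the \cite{Bremner2016AchievingQS} construction the paper revisits in \cref{sec:measurementnoise}, though your ``fold once'' scheduling $F^\dagger D_{\mathrm{reps}}F$ neatly gives an additive rather than multiplicative depth overhead.)

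What each buys: your argument is clean and already hits depth $d+O(\log\log\tfrac{n}{\epsilon})$. But your parenthetical that ``the more general \cref{lemma:IQP-FT-Distillation} then follows by the same argument'' is not right, and this is precisely where the paper's design choice matters. In your circuit, a $Z$ on a representative, conjugated by $F^\dagger$, spreads to the \emph{entire} block (the representative is the ultimate target of every CNOT in the parity tree, and $Z$ on a CNOT target back-propagates to the control); hence $\tilde C Z_i\tilde C^\dagger$ picks up weight $\Theta(\ell\cdot m)$ and the parent Hamiltonian is only $O(\log n)$-local. The paper's $B$-tree gadget is engineered so that $Z$ on any gadget qubit touches only its $\le D$ ancestors on the path to the root (\cref{claim:z-locality}), giving parent-Hamiltonian locality $\ell+O(D)$---a constant for constant $D$, which is what drives the constant-locality claim in \cref{theorem:main}.
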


This result significantly reduces the blow-up in circuit depth compared to a prior fault-tolerance scheme of~\cite{Bremner2016AchievingQS}. Moreover, the locality of the resulting parent Hamiltonian $H=-\sum_i \Tilde{C} Z_i \Tilde{C}^\dag$ is only a constant. Our key idea is a non-adaptive state distillation scheme, drawing inspiration from magic state distillation~\cite{Bravyi2005Universal}: distilling a near-perfect initial state from noisy initial states, up to a known but uncorrected Pauli error. The error is propagated through the circuit and corrected in post-processing, similar to~\cite{Bravyi2020Quantum}. Propagating Pauli errors through non-Clifford circuits is hard in general, but here it works thanks to the structure of IQP circuits.

\subsubsection{Applications}

\paragraph{BQP Completeness under adaptive single-qubit measurements.} In addition to quantum advantage, using our techniques we can show that constant-temperature Gibbs states do have some inherent form of universality for quantum computation. In \cref{section:completeness} we prove that there exist local Hamiltonians whose Gibbs states are universal resource states for quantum computation, in the sense that they can be used for universal measurement-based quantum computation. 

\begin{theorem}\label{theorem:results-mbqc}
    Fix an inverse-temperature $\beta = \Theta(1)$. Then, there exists an $n$-qubit,  $O(1)$-local commuting Hamiltonian, whose Gibbs state at inverse-temperature $\beta$ is a universal resource state for quantum computation and is efficiently preparable on a quantum computer. 
\end{theorem}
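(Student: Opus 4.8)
The plan is to prove \Cref{theorem:results-mbqc} by essentially repackaging the machinery already assembled in the paper: the key observation (\Cref{fig:noisemodel}) that the Gibbs state of a parent Hamiltonian $H = -\sum_i \tilde C Z_i \tilde C^\dagger \in \mathscr{H}$ at inverse temperature $\beta$ is exactly the output state of $\tilde C$ with independent bit-flip noise of rate $p = (1+e^{2\beta})^{-1} < 1/2$ applied to each input qubit, combined with the fault-tolerance scheme of \Cref{lemma:results-iqp-ft} (or rather its natural analogue for MBQC-universal circuits), and the efficient preparability statement of \Cref{lemma:results-gibbs-prep}. So the first step is to recall that universal measurement-based quantum computation can be driven from a fixed resource state (e.g.\ a cluster/graph state, or the output of a shallow circuit that prepares such a state) together with a sequence of adaptive single-qubit measurements; the $\O(1)$-local commuting Hamiltonian will be the parent Hamiltonian $H = -\sum_i \tilde C Z_i \tilde C^\dagger$ of the shallow circuit $\tilde C$ that prepares the (fault-tolerant encoding of the) resource state.

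The second step is the heart of the argument: I would show that the bit-flip-on-inputs noise afflicting $\rho_\beta$ can be cleaned up \emph{inside} the MBQC protocol. Concretely, I would run the non-adaptive distillation idea from \Cref{subsection:contributions-ft} --- prepare many noisy copies of the relevant input state, distill a near-perfect logical state up to a known-but-uncorrected Pauli frame, propagate that Pauli frame through the circuit --- but now, rather than correcting the residual Pauli in classical post-processing of a sampling experiment, I would absorb it into the classical control of the MBQC: adaptive single-qubit measurements already come with a classically-tracked Pauli correction ("byproduct operators"), so an extra known Pauli offset on the resource state is harmless and is simply folded into the running Pauli-frame bookkeeping. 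Since the distillation circuit is itself shallow and the resulting $\tilde C$ is an $\O(1)$-depth circuit whose lightcone size $\ell$ is constant, \Cref{lemma:results-gibbs-prep} gives an efficient quantum algorithm preparing $\rho_\beta$ to inverse-polynomial trace distance, which is all one needs to then feed it into the MBQC routine.

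The third step is to package the two-sided conclusion: (i) \emph{universality} --- for any $\BQP$ computation, pick the MBQC pattern computing it, take $\tilde C$ to be the (FT, distillation-padded) shallow state-preparation circuit for that pattern's resource state, let $H = -\sum_i \tilde C Z_i \tilde C^\dagger$; then $\rho_\beta$ plus adaptive single-qubit measurements simulates the computation up to error that can be driven below any $1/\poly(n)$ by taking $\O(\log n)$ distillation redundancy, hence the Hamiltonian can be taken $\O(1)$-local and commuting; (ii) \emph{efficient preparability} is exactly \Cref{lemma:results-gibbs-prep} with $d,\ell = \O(1)$. One subtlety to handle carefully is the order of quantifiers in $\beta$: since $p = (1+e^{2\beta})^{-1}$ depends on $\beta$, the distillation overhead and the locality $\ell$ depend on $\beta$, but for any fixed constant $\beta$ they remain constants, matching the theorem statement (cf.\ \Cref{remark:order-of-quantifiers}); I would state this explicitly.

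I expect the main obstacle to be the second step --- verifying that a \emph{single, fixed} resource state (as opposed to a circuit whose output is post-selected/measured once, which is the setting of \Cref{lemma:results-iqp-ft}) can be distilled out of the input-noisy Gibbs state with only constant depth overhead, and that the residual uncorrected Pauli is genuinely compatible with, rather than destructive to, the adaptive-measurement structure of MBQC. In particular one must check that the Pauli frame does not interfere with the \emph{adaptivity} (later measurement bases depend on earlier outcomes, which depend on the frame), but this is exactly the standard MBQC byproduct-operator calculus, so it goes through; the bookkeeping is the real content. Everything else --- locality of the parent Hamiltonian, commutativity, constant lightcone, efficient preparation --- is inherited verbatim from the earlier lemmas.
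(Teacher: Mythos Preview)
Your proposal is correct and follows essentially the same route as the paper: take the parent Hamiltonian of the distillation-padded cluster-state preparation circuit $\tilde C$ from \cref{lemma:IQP-FT-Distillation}, observe that its Gibbs state is the input-noisy circuit output, measure the gadget ancillas to learn the residual Pauli-$Z$ frame, and fold that frame into the MBQC byproduct-operator bookkeeping. One minor correction: $\tilde C$ does \emph{not} have $O(1)$ depth or lightcone size (these are $O(D\log^{1/D}(n/\epsilon))$ by \cref{lemma:IQP-FT-Distillation}); only the parent Hamiltonian's locality is $O(1)$ via \cref{claim:z-locality}, but this is enough since the preparation time in \cref{lemma:results-gibbs-prep} is still $n^{1+o(1)}$.
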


\cref{theorem:results-mbqc} is based on the universality of cluster-states for measurement-based quantum computation. That is to say, any quantum computation of bounded size can be implemented using adaptive single-qubit measurements on top of a fixed 2D cluster-state (e.g.~\cite{Broadbent2008UniversalBQ}). We design a Hamiltonian whose Gibbs state resembles a noisy version of a cluster-state, such that under adaptive single-qubit operations, one can nevertheless correct and distill out computation. 

\paragraph{Gibbs sampling under measurement errors.} An interesting question is whether the thermal quantum advantage demonstrated in this paper, is itself robust to noise. That is, in realistic physical platforms, we expect imperfect state preparation, noisy system-bath couplings, and erroneous measurements. As a starting point to this problem, we consider a model where the Gibbs state preparation is ideal, but there are random bit-flip errors in the measurement outcome.

We show that the quantum advantage survives in this model, albeit at a higher locality. 

\begin{theorem}\label{theorem:results-measurement-errors}
    Fix an inverse temperature $\beta = \Theta(1)$, and a measurement error rate $p < \frac{1}{2}$. There exists a family of $n$-qubit, $O(\log n)$-local Hamiltonians, such that sampling from their Gibbs state at inverse-temperature $\beta$, under measurement errors of rate $p$, is classically intractable under certain complexity-theoretic assumptions. Moreover, there exists a $\poly(n)$ time quantum algorithm to produce said Gibbs state.
\end{theorem}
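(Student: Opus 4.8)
The plan is to reduce to the main construction by bolting a layer of \emph{output} error-correction on top of it. Recall that for $H = C H_\noti C^\dag \in \mathscr{H}$ the Gibbs state is $\rho_\beta = C\,\sigma^{\otimes n}\, C^\dag$, where $\sigma$ is the single-qubit classical mixture that sends $\ket 0 \mapsto \ket 1$ with probability $r := (1+e^{2\beta})^{-1}$; hence measuring $\rho_\beta$ in the computational basis and then flipping each outcome bit independently with probability $p$ is exactly the result of running $C$ on an input subject to \emph{both} independent $r$-rate input bit-flips and $p$-rate output bit-flips. Let $\tilde C$ be the fault-tolerant IQP circuit of \cref{lemma:results-iqp-ft} (the circuit underlying \cref{theorem:main}): it already absorbs the input noise, has lightcone $\ell_{\tilde C}=O(1)$, uses $N=O(n\log\tfrac n\epsilon)$ qubits, and its raw output string $y$ can be classically post-processed into a sample $\epsilon$-close to a hard shallow-IQP distribution. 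So it remains to protect $y$ against the extra $p$-rate output noise while keeping the parent Hamiltonian $O(\log n)$-local.

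First I would set $\tilde C_{\mathrm{new}} = F\,\tilde C$, where $F$ is a \emph{fan-out} layer: for each output wire $i$ of $\tilde C$ introduce $k = \Theta(\log\tfrac{N}{\epsilon})$ fresh qubits $a_{i,1},\dots,a_{i,k}$ and apply $\cnot$s with control $i$ and targets $a_{i,1},\dots,a_{i,k}$ (a depth-$k$ star). The family is $H' = -\sum_v \tilde C_{\mathrm{new}} Z_v \tilde C_{\mathrm{new}}^\dag$ over all $N'=N(k+1)$ qubits $v$; it is commuting and a parent Hamiltonian of the depth-$O(\log n)$ circuit $\tilde C_{\mathrm{new}}$. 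For the locality count: since $\cnot$ fixes $Z$ on its control, a term on an ancilla obeys $\tilde C_{\mathrm{new}} Z_{a_{i,m}} \tilde C_{\mathrm{new}}^\dag = Z_i Z_{a_{i,m}}$ (weight $2$), while a term on a $\tilde C$-qubit $j$ equals $F\,(\tilde C Z_j \tilde C^\dag)\,F^\dag$, and conjugating by $F$ only enlarges the $X$- or $Y$-component of $\tilde C Z_j \tilde C^\dag$ on an output qubit $i$ to also cover $\{a_{i,1},\dots,a_{i,k}\}$; so the support is at most $|\mathrm{lightcone}_{\tilde C}(j)| + O(1)\cdot k = O(\log n)$. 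Thus $H'$ is $O(\log n)$-local, and \cref{lemma:results-gibbs-prep} with $\ell,d=O(\log n)$ gives a $\poly(n)$-time quantum preparation algorithm (and, via the mixing bound, $\poly(n)$-time Davies thermalization).

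Next I would verify classical hardness through post-processing. In the noisily measured Gibbs state of $H'$, the outcome on output wire $i$ is $y_i$ plus an independent $p$-flip, and the outcome on ancilla $a_{i,m}$ is $y_i \oplus e_{i,m}$ plus an independent $p$-flip, where $e_{i,m}\sim\ber(r)$ is that ancilla's own input noise and $y_i$ is the ideal raw $\tilde C$-output bit (on noisy input). Conditioned on $y_i$, these $k+1$ bits are \emph{independent} noisy copies of $y_i$, each corrupted with probability at most $q$ where $1-2q=(1-2r)(1-2p)>0$, so $q<\tfrac12$ is a fixed constant. A majority vote over each block recovers $y_i$ except with probability $e^{-\Omega(k)}\le \epsilon/(3N)$; a union bound recovers all of $y$ with probability $\ge 1-\epsilon/3$, and then \cref{lemma:results-iqp-ft}'s post-processing yields a sample within $\epsilon$ total variation of the hard IQP distribution. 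Hence a $\poly(n)$-time classical sampler for the $p$-noisy measurement distribution of $\rho_\beta(H')$ would, composed with the classical decoder and post-processor, give a $\poly(n)$-time classical sampler within $O(\epsilon)$ of a hard shallow-IQP distribution, contradicting the same complexity-theoretic assumptions invoked in \cref{theorem:main}.

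The main obstacle is making the redundancy genuinely useful, i.e. arguing the $k+1$ copies of $y_i$ are conditionally independent: one must check that all the (highly correlated) effects of input noise propagating through $\tilde C$ are already absorbed into the single bit $y_i$, so that the fan-out contributes only the fresh, independent ancilla noises $e_{i,m}$. This is also what pins the locality at $O(\log n)$: a depth-$O(\log k)$ \emph{tree} fan-out would give smaller lightcones, but a leaf copy would then equal $y_i$ XORed with the $\Theta(\log k)$ ancilla noises along its root-to-leaf path, so its bias is only $(1-2r)^{\Theta(\log k)}$, which for arbitrary constant $r$ decays polynomially in $k$ and cannot be majority-amplified past inverse-polynomial error — forcing the depth-$k$ star, whose lightcone is $\Theta(\log n)$. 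A secondary point to nail down is that $\tilde C$ from \cref{lemma:results-iqp-ft} can indeed be taken with $\ell_{\tilde C}=O(1)$ (as in \cref{theorem:main}); even if only $O(\log\log n)$ were available, the $k=\Theta(\log n)$ term dominates and the final locality is still $O(\log n)$.
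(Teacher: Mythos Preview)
Your hardness argument is sound, and your $O(\log n)$ bound on the \emph{Hamiltonian locality} of $H'$ is correct. The gap is in the quantum preparation step: you invoke \cref{lemma:results-gibbs-prep} with ``$\ell,d=O(\log n)$'', but the $\ell$ in that lemma is the \emph{circuit lightcone size}, not the locality of the parent Hamiltonian, and for your $\tilde C_{\mathrm{new}}=F\tilde C$ the lightcone is strictly larger than $O(\log n)$. An input qubit $j$ of $\tilde C$ reaches $\ell_{\tilde C}$ output wires inside $\tilde C$, and each of those is then fanned out by $F$ to $k$ ancillas, so the lightcone of $j$ in $\tilde C_{\mathrm{new}}$ has size $\Theta(\ell_{\tilde C}\cdot k)$, not $\ell_{\tilde C}+O(1)\cdot k$. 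The assertion $\ell_{\tilde C}=O(1)$ is not available: by \cref{lemma:IQP-FT-Distillation} the lightcone of $\tilde C$ is $\ell+O(D\log^{1/D}(n/\epsilon))$, whose minimum over $D$ is $\Theta(\log\log n)$ (at $D\approx\log\log n$), never a constant. Consequently $\ell_{\mathrm{new}}=\Omega(k\cdot\log\log n)=\Omega(\log n\cdot\log\log n)$, and the runtime $2^{O(\ell_{\mathrm{new}})}=n^{\Omega(\log\log n)}$ is not polynomial. Your remark that ``even if only $O(\log\log n)$ were available\ldots the $k=\Theta(\log n)$ term dominates'' is true for the locality count (which is essentially additive, being governed by the $Z$-propagation), but the lightcone composes \emph{multiplicatively}, and that is the parameter the mixing-time bound sees.

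The paper avoids this multiplicative blowup by not layering two separate protection mechanisms. It uses the observation that for IQP circuits input and output bit-flips are equivalent, so temperature-induced input noise at rate $r$ together with measurement noise at rate $p$ is simply input noise at the combined rate $q=r(1-p)+p(1-r)<\tfrac12$. A \emph{single} fault-tolerance layer then suffices: the repetition-code IQP encoding of \cite{Bremner2016AchievingQS} (\cref{lemma:iqp-ft,lemma:iqp-ft-multi-controlled}) produces an encoded circuit that is itself IQP, with $2r$-local $Z$-rotations implemented in depth $O(\log r)$ and lightcone $O(r)=O(\log n)$, directly yielding $\poly(n)$ preparation via \cref{lemma:results-gibbs-prep}. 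If you want to salvage your layered route, one option is to fan out only the $n$ root qubits (not all $N$ outputs of $\tilde C$) and check separately that the distillation decoder already tolerates output noise on its own ancillas via its internal majorities; the lightcone then becomes $O(\ell_{\tilde C}+k)$ rather than $O(\ell_{\tilde C}\cdot k)$, but carrying this through forces $B=\Theta(\log n)$ and essentially reproduces the single-layer repetition scheme.
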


The Hamiltonians of \cref{theorem:results-measurement-errors} are similar to that of \cref{theorem:main}, in the sense that they are parent Hamiltonians of fault-tolerant IQP circuits. However, to ensure classical hardness under measurement errors, our quantum circuits now need to be fault-tolerant against both input and output errors. (Recall that the ``input errors'' come from temperature, while ``output errors'' come from actual physical noise in measurements.) To do so, in \cref{sec:measurementnoise} we appeal to an optimized construction of a prior fault-tolerance scheme by \cite{Bremner2016AchievingQS}, at the cost of an increase to the locality of the Hamiltonians, which also changes the mixing time from $n^{o(1)}$ to $\poly(n)$.

\subsection{Discussion}
\label{section:discussion}

We conclude by discussing two future directions, broadly related to the complexity of Gibbs sampling. The first of which concerns the BQP Completeness of Gibbs sampling (without adaptivity). 

\begin{question}[BQP Completeness of Gibbs Sampling]
For every $n$ qubit, $\poly(n)$ depth quantum circuit $C$, does there exist a Hamiltonian $H$ and a constant inverse-temperature $\beta>0$ such that by sampling from its Gibbs state one can recover the output of the quantum computation $C$?
\end{question}

Partial progress on this question has recently been made by~\cite{chen2023local}, albeit, only at very low temperatures where the Gibbs state approximates the ground state. In particular, they showed how to embed an arbitrary quantum computation into a (modified) Feynman-Kitaev circuit-to-Hamiltonian mapping, which could be efficiently prepared by a Lindbladian evolution. Whether similar ideas could work at constant temperatures remains an open problem.

Another interesting direction lies in the time overhead for fault-tolerance, and for quantum advantage using shallow circuits which are robust to noise. 

\begin{question}[Quantum Advantage in Noisy Shallow Circuits] 
Does there exist a family of constant depth quantum circuits (using only quantum gates) which is classically hard to sample from in the presence of depolarizing noise on each gate?
\end{question}

Its main motivation lies in the design of quantum advantage experiments, which can be implemented on near-term devices. Depolarizing noise on each gate of the circuit, however, is naturally a significantly more general noise model than input noise. Nevertheless, the same question with input noise remains open as well.

\section{Technical Overview}
In this section we give a sketch of our two main technical contributions:  (1) A proof of a modified log-Sobolev inequality for a family of Davies generators, via a lightcone argument (\cref{section:gibbs_state_prep}); and (2) A fault-tolerance scheme for shallow IQP circuits against input noise, via non-adaptive state distillation (\cref{section:hardness}). We begin by presenting some basic notation and background on thermal Lindbladians.

\subsection{Gibbs state preparation via rapid mixing}
\label{section:gibbs_state_prep}



Fix a Hamiltonian $H\in \mathscr{H}$. By definition, there exists a shallow circuit $C$ such that

\begin{equation}\label{eq:defhamiltonianoverview}
    H = \sum_{i\in [n]} h_i,\quad\text{where}\quad h_i = C\left(\ketbra{1}_i\otimes \mathbb{I}_{[n]\setminus i}\right) C^\dagger,
\end{equation}
and each $\ketbra{1}_i$ is a single-qubit projection. Note that \cref{eq:defhamiltonianoverview} is equivalent to \cref{eq:parenthamiltonian} up to a shift. The eigenstates of $H$ of energy $k\in [n]$ are all the states $C\ket{x}$, where $x\in \{0, 1\}^n$ has Hamming weight $|x| = k$. We denote the projection $\Pi_k$ onto the eigenspace of $H$ of energy $k$ as
\begin{equation}\label{eq:projeigenspace}
    \Pi_k = C\bigg(\sum_{|x| = k}\ketbra{x} \bigg) C^\dagger.
\end{equation}

\noindent We consider two notions of locality for $C$ and $H$ respectively:
\begin{itemize}
    \item The circuit lightcone. The \emph{lightcone} $\mathsf{L}_i$ of qubit $i$ is the set of qubits that can be reached by $i$ via gates in $C$, and we define the \emph{lightcone size} as $\ell=\max_i|\mathsf{L}_i|$.
    \item The Hamiltonian locality. Let $\mathsf{S}_i=\mathrm{supp}(h_i)=\mathrm{supp}(C Z_i C^\dag)$ be the set of qubits that $h_i$ acts nontrivially on. The \emph{locality} of the Hamiltonian $H$ is defined as $r=\max_i |\mathsf{S}_i|$. 
\end{itemize}

Note that $\mathsf{S}_i$ is related to the propagation of $Z_i$ under $C$, and thus by definition we have $\mathsf{S}_i\subseteq \mathsf{L}_i$. In fact, $r\ll \ell$ for the family of circuits we consider.

\subsubsection{Our Davies generators}
We determine our family of thermal Lindbladians, or Davies generators, by specifying two ingredients: a set of jump operators, and transition weights. Technically, general thermal Lindbladians for noncommuting Hamiltonians need not take the Davies' form (see, e.g,~\cite{Chen2023QuantumTS,Chen2023AnEA}), but for commuting Hamiltonians, the Davies' generator is nonetheless sufficient for all our discussions.

\begin{itemize}
    \item \textbf{Jump Operators.} To generate the transitions, we consider the set of jump operators which are local, $\ell$-qubit Pauli operators on the support of each lightcone $\mathsf{L}_i$,\footnote{This set of jump operators which ``drive'' the transition can be essentially arbitrary, however, this choice resembling the connectivity of the underlying Hamiltonian will play an important role in our analysis.}
\begin{equation}
    \{A^a\}_{a\in \mathcal{A}} = 2^{-\ell}\cdot  \bigg\{P_{\mathsf{L}_i}\otimes \mathbb{I}_{[n]\setminus \mathsf{L}_i}: i\in [n], P\in \mathcal{P}_{\ell}\bigg\},
\end{equation}
where $\mathcal{P}_{\ell}=\{I,X,Y,Z\}^{\otimes \ell}$.\footnote{Note that there are $|\mc A|=n\cdot 4^\ell$ jump operators.} In contrast to classical Markov Chain transitions, these quantum jumps will change the energy of the system in superposition. Thereby, it will be convenient to decompose the jump operators into the energy basis:
\begin{align}
A^a_\nu := \sum_{k\in [n]} \Pi_{k+\nu} A^a \Pi_k \quad \text{such that}\quad \sum_{\nu\in [-n, n]} A^a_\nu = A^a.  
\end{align}

\item \textbf{Transition Weights.} The transition weight is selected to be the Glauber dynamics weight, $\gamma(\nu) = 1/(1+e^{-\beta \nu})$ for all $\nu\in[-n,n]$.
\end{itemize}

\noindent Put together, the associated family of Davies generators $\mathcal{L}$ can be written down as\footnote{Where $\{A,B\}:=AB+BA$ is the anti-commutator.}
\begin{equation}\label{equation:Lindbladian}
    \mathcal{L}[\rho] = \sum_{a \in \mathcal{A}} \sum_{\nu}\gamma(\nu) \bigg(A^a_\nu \rho (A^a_\nu)^\dagger -\frac{1}{2} \bigg\{(A^a_\nu)^\dagger A^a_\nu, \rho \bigg\}\bigg).
\end{equation}

This construction satisfies the \textit{quantum detailed balance} condition, which implies that the desired Gibbs state is a fixed point $\mathcal{L}[\rho_\beta] = 0$ of the evolutions (see e.g. \cite{Wocjan2021SzegedyWU}, or \cref{fact:detailed-balance}). It remains to show that the Lindblad dynamics, governed by the exponential map

\begin{equation}
    \frac{d}{dt}\rho=\mathcal{L}[\rho] \Rightarrow \rho(t) = e^{\mathcal{L}t}[\rho_0],
\end{equation}

\noindent converges quickly to $\rho_\beta$. This is achieved by presenting a bound on the mixing time of $\mc L$, which is the shortest time $t_{mix}$ such that 
\begin{equation}
    \left\|e^{\mathcal{L}t_{mix}} [\rho-\sigma]\right\|_1\leq \frac{1}{2}\left\|\rho-\sigma\right\|_1,\quad \text{for all density matrices}\,\,\rho, \sigma.
\end{equation}

\subsubsection{A lightcone argument for the modified log-Sobolev inequality}
To study the mixing time of our algorithm, our starting point is first to study the trivial non-interacting Hamiltonian $H_\noti = \sum_{i\in[n]}\ketbra{1}_i\otimes \mathbb{I}_{[n]\setminus \{i\}}$, and prove a rapid mixing bound for the associated Davies generator $\mc L_\noti$. Subsequently, we argue that the mixing time of $\mc L$ can be \textit{compared} with that of $\mc L_\noti$. This is achieved by leveraging the lightcone structure of shallow quantum circuits. We begin by presenting basic definitions of Log-Sobolev bounds. 


\paragraph{Mixing time bounds via log-Sobolev inequalities.} There are two general purpose methods to bound the mixing time of Lindbladian evolution. The first of which consists of a bound on the spectral gap of $\mathcal{L}$. Unfortunately, a spectral gap bound comes at an inherent polynomial overhead to the mixing time, see \cref{section:gibbs_prep_appendix}. Instead, we make use of a much sharper notion of convergence known as a modified log-Sobolev inequality (MLSI) \cite{Kastoryano2012QuantumLS}. Informally, a MLSI quantifies the rate of decay of the relative entropy,\footnote{The quantum relative entropy between two density matrices $\rho, \sigma$ is given by $D(\rho||\sigma) = \Tr[\rho\cdot (\log \rho-\log \sigma\big)]$.} by relating it to the relative entropy itself:
\begin{equation}\label{eq:MLSI}
      \frac{d}{dt}\bigg|_{t=0}D\big(e^{t\mathcal{L}}[\rho] ||\rho_\beta\big) \leq -\alpha \cdot D\big(\rho||\rho_\beta\big)\quad \text{for every density matrix}\,\,\rho,  \tag*{(MLSI)}
\end{equation}
\noindent where $\alpha$ is known as the MLSI constant. This clearly implies an exponential decay of $D(e^{\mc L t}[\rho]||\rho_\beta)\leq e^{-\alpha t} \cdot D(\rho||\rho_\beta)$. Which, in turn, tells us the mixing time is bounded by $t_{mix}\leq \alpha^{-1}\cdot O(\log n)$ (Pinsker's inequality). This logarithmic mixing time bound is known as \textit{rapid mixing}, and proving good lower bounds on the constant $\alpha$ has proven to be quite challenging in the quantum setting. 

\paragraph{The non-interacting Lindbladian.} The simplest Hamiltonian in the family $\mathscr{H}$ is the non-interacting system $H_\noti$. Its Gibbs state is the tensor product state $\sigma_\beta\propto \big(e^{-\beta\ketbra{1}}\big)^{\otimes n}$. Under our framework (described in \cref{equation:Lindbladian}), its associated Lindbladian $\mc L_\noti$ has the same form as $\mc L$, except that the circuit $C$ has been replaced by the identity. In this manner, $\mc L_\noti$ itself can also be written as a sum of non-interacting, single-qubit components:
\begin{equation}
    \mathcal{L}_{\noti} = \sum_{i\in [n]} \mathcal{L}_{single}^i\otimes \mathbb{I}_{[n]\setminus \{i\}}
\end{equation}
Since each single qubit Lindbladian $\mathcal{L}_{single}$ is highly explicit (it acts on $2\times 2$ matrices), in \cref{section:gibbs_prep_appendix}, following now standard techniques, we are able to prove simple bounds on its MLSI constant. 


\begin{claim}
    $\mathcal{L}_{\noti}$ satisfies a MLSI with constant $\Omega(e^{-\beta})$.
\end{claim}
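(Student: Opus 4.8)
The plan is to reduce the claim to a statement about a single qubit, using the tensorization property of the modified log-Sobolev constant. Since $\mathcal{L}_{\noti} = \sum_{i\in[n]} \mathcal{L}_{single}^i \otimes \mathbb{I}_{[n]\setminus\{i\}}$ is a sum of commuting single-site Lindbladians, each of which has the same Gibbs fixed point $\sigma_\beta = (\sigma_1)^{\otimes n}$ in tensor-product form, standard tensorization results for the MLSI constant of classical-like (or more precisely, ``doubly stochastic'' / primitive detailed-balance) Lindbladians of product structure tell us that $\alpha(\mathcal{L}_{\noti}) \geq \alpha(\mathcal{L}_{single})$ up to a constant — in fact for this product structure the constant is preserved exactly, $\alpha(\mathcal{L}_{\noti}) = \alpha(\mathcal{L}_{single})$. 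So the whole problem collapses to lower-bounding the MLSI constant of the explicit single-qubit Lindbladian $\mathcal{L}_{single}$, and I would cite the tensorization lemma (e.g.\ from \cite{Kastoryano2012QuantumLS} or the survey literature on quantum MLSI) rather than reprove it.

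Next I would write down $\mathcal{L}_{single}$ explicitly. It acts on $2\times 2$ matrices, with jump operators being the (rescaled) single-qubit Paulis $\{I,X,Y,Z\}$ decomposed into energy sectors of $h = \ketbra{1}$, and Glauber weights $\gamma(\nu) = 1/(1+e^{-\beta\nu})$ with $\nu \in \{-1,0,+1\}$. Concretely, $X$ and $Y$ each split into a raising part (weight $\gamma(+1) = 1/(1+e^{-\beta})$) and a lowering part (weight $\gamma(-1) = 1/(1+e^{\beta})$), while $I$ and $Z$ are diagonal ($\nu=0$, weight $\gamma(0)=1/2$) and contribute only dephasing. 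Grouping terms, $\mathcal{L}_{single}$ is, up to an overall constant, the amplitude-damping-type generator that maps any state toward $\sigma_1 \propto \mathrm{diag}(1, e^{-\beta})$, with the off-diagonal (coherence) block decaying at a rate of order $\gamma(+1)+\gamma(-1)+\gamma(0) = \Theta(1)$ and the population (diagonal) block relaxing at rate $\gamma(+1)+\gamma(-1) = \Theta(e^{-\beta})$ — the latter being the bottleneck since for large $\beta$ the transition into the high-energy state is exponentially suppressed.

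Finally, to get the MLSI constant for this $2$-state-like generator, I would invoke the known equivalence (for primitive Lindbladians whose fixed point is diagonal in the jump basis, which holds here since $h$ is diagonal and the off-diagonal block decouples) between the quantum MLSI constant and the MLSI constant of the underlying classical Markov chain on the two energy levels, which is the $2$-state chain with stationary distribution $(\pi_0,\pi_1) = (1,e^{-\beta})/(1+e^{-\beta})$ and transition rates $\gamma(\pm 1)$. For a two-state continuous-time chain the MLSI constant is comparable to the spectral gap (the two differ by at most a $\log(1/\pi_{\min})$-type factor, and here $\pi_{\min} = \Theta(e^{-\beta})$ so this factor is $\Theta(\beta)$), and the spectral gap of this chain is $\gamma(+1)+\gamma(-1) = \Theta(e^{-\beta})$; a direct computation of the two-state MLSI constant gives the same order $\Theta(e^{-\beta})$ without even the $\beta$ loss. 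Combined with the $\Theta(1)$ decay rate on the coherence block, this yields $\alpha(\mathcal{L}_{single}) = \Omega(e^{-\beta})$, and hence by tensorization $\alpha(\mathcal{L}_{\noti}) = \Omega(e^{-\beta})$.

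The main obstacle is the single-qubit MLSI bound itself: while the generator is fully explicit, the quantum MLSI constant is genuinely harder to control than the spectral gap, and one must be careful that the diagonal and off-diagonal blocks decouple under $\mathcal{L}_{single}$ (so that the relative-entropy decay really does reduce to a classical two-state computation plus a benign coherence term). Verifying this decoupling and the resulting reduction — rather than the two-state MLSI estimate, which is classical and standard — is where the care is needed; once that is in hand the $e^{-\beta}$ scaling falls out of the smaller of the two relaxation rates.
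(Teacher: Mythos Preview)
Your outline differs from the paper's proof in both steps, and the tensorization step contains a genuine gap. In the quantum setting, the MLSI constant of a sum of commuting single-site generators with a product fixed point is \emph{not} known to equal (or even be comparable to) the single-site MLSI constant; this is precisely the failure of quantum MLSI to tensorize that motivated the introduction of the \emph{complete} MLSI (CMLSI). The paper handles this by first converting the single-qubit spectral gap into a CMLSI bound via the black-box result of Gao et al.\ (\cref{theorem:cmlsi_from_gap}), which gives $\alpha_c \geq \Delta \cdot \|\sigma^{-1}\|^{-1}/D^2$, and then uses the CMLSI together with strong subadditivity of the non-interacting conditional expectations to pass to $\mathcal{L}_{\noti}$. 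Your citation of \cite{Kastoryano2012QuantumLS} for tensorization does not suffice: that paper in fact observes that the ordinary (quantum) log-Sobolev constants need not tensorize.

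There is also an arithmetic slip in your single-qubit analysis: with Glauber weights $\gamma(\nu) = (1+e^{-\beta\nu})^{-1}$ one has $\gamma(+1)+\gamma(-1) = 1$ identically, so the population relaxation rate is $\Theta(1)$, not $\Theta(e^{-\beta})$. Consistently, the paper computes the single-qubit spectral gap to be $\geq 1/4$, and the $e^{-\beta}$ in the final bound arises entirely from the $\|\sigma^{-1}\|^{-1}$ factor in the gap-to-CMLSI conversion, not from any relaxation rate. (The paper even remarks in a footnote that this $\beta$-dependence may be removable.) Your diagonal/off-diagonal decoupling idea is reasonable for bounding the single-qubit MLSI directly, but even if carried out it would only give you MLSI, not CMLSI, so the tensorization gap would remain.
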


\paragraph{The convex combination argument.} The main technical challenge in our analysis lies in relating $\mathcal{L}_\noti$ with our family of Davies generators $\mathcal{L}$ from \cref{equation:Lindbladian}, in order to inherit the rapid mixing properties from the former. The crux of our proof lies in analyzing $\mc L$ in a basis rotated by $C$, to show that the rotated Davies generator is a \textit{convex combination} of $\mathcal{L}_\noti$ and some other Davies generator. This involves a delicate lightcone argument shown in \cref{fig:lightconeargument}, and discussed shortly.

\begin{claim}
    \label{claim:overview-convex-combination} In a basis rotated by $C$, the Lindbladian $\mathcal{L}$ from \cref{equation:Lindbladian} can be written as a convex combination
    \begin{equation}
    \Tilde{\mathcal{L}} \equiv   C^\dagger \mathcal{L}[C\cdot C^\dagger]C = q\cdot \mathcal{L}_{\noti}[\cdot] +(1-q) \cdot \mathcal{L}_{rest}[\cdot ],
    \end{equation}
    where both $\mathcal{L}_{\noti}, \mathcal{L}_{rest}$ share the fixed point $ \sigma_\beta$, and $q = 4^{1-\ell}$. 
\end{claim}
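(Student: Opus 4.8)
The plan is to compute the conjugated Lindbladian $\tilde{\mathcal{L}} = C^\dagger \mathcal{L}[C \cdot C^\dagger] C$ directly from the definition in \cref{equation:Lindbladian}, and identify the non-interacting piece inside it. The first step is to observe that conjugation by $C$ commutes with the energy-basis decomposition: since $\Pi_k = C(\sum_{|x|=k}\ketbra{x})C^\dagger$, the rotated energy projectors $C^\dagger \Pi_k C = \sum_{|x|=k}\ketbra{x}$ are exactly the Hamming-weight projectors of $H_\noti$. Hence $C^\dagger A^a_\nu C = (C^\dagger A^a C)_\nu$, where now the subscript $\nu$ refers to the $H_\noti$ energy decomposition. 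So it suffices to understand the rotated jump operators $C^\dagger A^a C = 2^{-\ell} C^\dagger (P_{\mathsf{L}_i}\otimes \mathbb{I}) C$. The Lindbladian $\tilde{\mathcal{L}}$ is then the Davies generator built from this rotated jump set with the same weights $\gamma(\nu)$.

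The key structural point — this is the lightcone argument advertised in \cref{fig:lightconeargument} — is that for each $i$ and each Pauli $P\in\mathcal{P}_\ell$ supported on $\mathsf{L}_i$, the rotated operator $C^\dagger (P_{\mathsf{L}_i}\otimes\mathbb{I})C$ is again a Pauli supported on $\mathsf{L}_i$ (or at least on a fixed-size region determined by the lightcone), because $C$ has lightcone $\mathsf{L}_i$ on that block. In particular, among the $4^\ell$ choices of $P$ on $\mathsf{L}_i$, exactly the $4$ single-qubit Paulis $\{I,X,Y,Z\}$ acting on the qubit $i$ pull back — well, more precisely: the single-qubit Pauli $Z_i$ (and $I$) on qubit $i$, which is what defines the non-interacting terms, appears as $C^\dagger Z_i C = h_i$-type term, but we want the reverse direction. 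The cleaner way: among the jump operators $A^a$, the subset consisting of single-qubit Paulis $\{P_i : P\in\{I,X,Y,Z\}\}$ for each $i$ — after rotation by $C^\dagger$ on the \emph{other} side — reproduces exactly the jump operators of $\mathcal{L}_\noti$. Since $\mathcal{L}_\noti$'s jump set is $2^{-\ell}\{P_{\mathsf{L}_i}: P\in\mathcal{P}_\ell \text{ with } P \text{ acting as identity off qubit } i\}$ padded appropriately, and $\mathcal{L}$'s rotated jump set $C^\dagger\{\cdot\}C$ contains these (the lightcone guarantees $C^\dagger(Q_i\otimes\mathbb{I}_{\mathsf{L}_i\setminus i})C$ is supported in $\mathsf{L}_i$ and as $Q$ ranges over $\{I,X,Y,Z\}$ we recover single-qubit-on-$i$ Paulis conjugated back), one partitions the $4^\ell$ Paulis on each lightcone into the $4$ that yield the $H_\noti$ jumps and the remaining $4^\ell - 4$. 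Because the Davies generator is \emph{linear} in the set of jump operators (each jump contributes an independent dissipator term with nonnegative coefficient), the sum splits, and normalizing gives the weight $q = 4/4^\ell = 4^{1-\ell}$ on the $\mathcal{L}_\noti$ part and $1-q$ on the complementary part $\mathcal{L}_{rest}$. Finally, both pieces are themselves Davies generators with the \emph{same} transition weights $\gamma(\nu)$ and satisfy detailed balance with respect to $\sigma_\beta$ (this is the content of \cref{fact:detailed-balance} applied to each jump-operator subset separately), so both share the fixed point $\sigma_\beta$.

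The main obstacle I anticipate is making the lightcone pull-back precise enough to conclude that the rotated single-qubit-on-$i$ jumps of $\tilde{\mathcal{L}}$ coincide \emph{exactly}, including the $2^{-\ell}$ normalization and multiplicity, with the jump operators generating $\mathcal{L}_\noti$ on the appropriate padded support — and ensuring the energy-decomposition subscript $\nu$ is taken consistently (with respect to $H_\noti$ after rotation, since $C^\dagger\Pi_k C$ are the Hamming-weight projectors). One has to check that no "cross terms" between the $4$ special Paulis and the other $4^\ell-4$ appear: this is automatic because distinct jump operators $A^a$ never mix in the Davies form — each indexes its own dissipator. A secondary point is verifying that $\mathcal{L}_{rest}$, despite being built from a strange-looking rotated jump set, still satisfies detailed balance with $\sigma_\beta$; this follows because detailed balance of a Davies generator depends only on the Hamiltonian (through the energy projectors) and the weight function $\gamma$, not on which jump operators are used, so it holds term-by-term.
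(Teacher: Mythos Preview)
Your proposal has a genuine gap at the key step. You want to partition the $4^\ell$ jump operators on $\mathsf{L}_i$ so that the four ``single-qubit-on-$i$'' Paulis $P_i\otimes\mathbb{I}_{\mathsf{L}_i\setminus i}$, after conjugation by $C^\dagger$, become the single-qubit jump operators of $\mathcal{L}_\noti$. But this is simply false: for a generic (non-Clifford) shallow circuit $C$, the operator $C^\dagger(P_i\otimes\mathbb{I})C$ is \emph{not} a single-qubit Pauli on $i$ --- it is some unitary supported on the backward lightcone of $i$, generally neither a Pauli nor localized on one qubit. (You seem to sense this yourself when you write ``but we want the reverse direction''.) The reverse candidates $C P_i C^\dagger$ do conjugate back to $P_i$, but they are not among your jump operators $\{P_{\mathsf{L}_i}:P\in\mathcal{P}_\ell\}$ unless $C$ is Clifford. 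So there is no direct four-element subset of the jump set that survives conjugation as $\mathcal{L}_\noti$'s jumps, and the linearity-in-jump-operators argument breaks.

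The paper's actual argument supplies exactly the missing idea: the Lindbladian depends on the jump operators only through their \emph{second moment} $\sum_{a\in\mathcal{A}_i} A^a\otimes \overline{A^a}$, and random $\ell$-qubit Paulis form a unitary 1-design, so this second moment is invariant under any substitution $A^a\to U_i A^a U_i^\dagger$ with $U_i$ supported on $\mathsf{L}_i$. Choosing $U_i$ to be the gates of $C$ lying in the lightcone of qubit $i$, one has $C^\dagger U_i(P_i\otimes\mathbb{I})U_i^\dagger C = P_i\otimes\mathbb{I}$ because the lightcone gates exactly cancel $C$ on the relevant wires. Thus \emph{after this free rotation} the four single-qubit Paulis do yield the $\mathcal{L}_\noti$ jump operators, and the $4/4^\ell$ split follows. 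This 1-design/second-moment invariance step (equivalently, the SWAP identity $\E_P[P\otimes P]=2^{-\ell}\,\mathrm{SWAP}$ used twice in \cref{fig:lightconeargument}) is the crux, and it is absent from your proposal.
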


We emphasize that the parameter $q\in [0, 1]$ only depends on the lightcone size of $C$. Moreover, while $\mathcal{L}_{\noti}$ is the well-understood non-interacting system discussed previously, $\mathcal{L}_{rest}$ may apriori be arbitrary. However, at the very least we know it shares a fixed point with $\mathcal{L}_{\noti}$.

Neverthless, in \cref{section:gibbs_state_prep}  we show that convexity is precisely enough\footnote{This is inspired by \cite{Onorati2016MixingPO}, who leveraged the concavity of the spectral gap to prove mixing properties of stochastic Hamiltonians.} to exhibit an MLSI for $\mathcal{L}$, with a constant which is only a multiplicative factor of $q$ off of that of $\mathcal{L}_\noti$.

\begin{lemma}\label{lemma:rapid-mixing}
    $\mathcal{L}$ satisfies a MLSI with constant $\Omega(4^{-\ell}e^{-\beta})$.
\end{lemma}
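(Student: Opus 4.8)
The plan is to derive \cref{lemma:rapid-mixing} from \cref{claim:overview-convex-combination} together with the \cref{claim} that $\mc L_\noti$ satisfies an MLSI with constant $\Omega(e^{-\beta})$, by proving a general fact about convex combinations of Davies generators sharing a fixed point. Because the MLSI concerns the rotated Gibbs state $\sigma_\beta = C^\dagger \rho_\beta C$, and because relative entropy is invariant under the unitary conjugation by $C$, it suffices to prove the MLSI for $\Tilde{\mc L} = q\,\mc L_\noti + (1-q)\,\mc L_{rest}$ with respect to $\sigma_\beta$; the MLSI constant of $\mc L$ with respect to $\rho_\beta$ is then identical. So the real content is the following claim: if $\mc L = q\,\mc L_1 + (1-q)\,\mc L_2$ where $\mc L_1,\mc L_2$ are Davies generators (hence satisfy quantum detailed balance / are GNS-symmetric) with a common fixed point $\sigma$, and $\mc L_1$ satisfies an MLSI with constant $\alpha_1$, then $\mc L$ satisfies an MLSI with constant at least $q\,\alpha_1$.

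The key step is the superadditivity of the entropy production (the Dirichlet-type form associated to the MLSI). Define, for a generator $\mc L$ with fixed point $\sigma$, the entropy production $\mathrm{EP}_{\mc L}(\rho) := -\tfrac{d}{dt}\big|_{t=0} D(e^{t\mc L}[\rho]\,\|\,\sigma) = -\Tr\!\big(\mc L[\rho](\log\rho - \log\sigma)\big)$. This functional is linear in $\mc L$, so $\mathrm{EP}_{\mc L}(\rho) = q\,\mathrm{EP}_{\mc L_1}(\rho) + (1-q)\,\mathrm{EP}_{\mc L_2}(\rho)$. Crucially, for any Davies generator with fixed point $\sigma$ the entropy production is nonnegative (this is the standard monotonicity of relative entropy under the CPTP semigroup, or Spohn's inequality), so $\mathrm{EP}_{\mc L_2}(\rho)\ge 0$. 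Hence $\mathrm{EP}_{\mc L}(\rho) \ge q\,\mathrm{EP}_{\mc L_1}(\rho) \ge q\,\alpha_1\, D(\rho\,\|\,\sigma)$, where the last inequality is exactly the MLSI for $\mc L_1$. Rearranging gives $\tfrac{d}{dt}\big|_{t=0} D(e^{t\mc L}[\rho]\,\|\,\sigma) \le -q\,\alpha_1\, D(\rho\,\|\,\sigma)$, i.e. $\mc L$ satisfies an MLSI with constant $q\,\alpha_1$. Plugging in $q = 4^{1-\ell} = \Theta(4^{-\ell})$ and $\alpha_1 = \Omega(e^{-\beta})$ yields the claimed constant $\Omega(4^{-\ell}e^{-\beta})$.

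I expect the main subtlety — not so much an obstacle as a point requiring care — to be confirming that $\mc L_{rest}$ genuinely is (or at least behaves like) a valid quantum detailed-balance Lindbladian with fixed point $\sigma_\beta$, so that $\mathrm{EP}_{\mc L_{rest}}(\rho)\ge 0$ holds for all $\rho$. \cref{claim:overview-convex-combination} asserts $\mc L_{rest}$ shares the fixed point $\sigma_\beta$, but nonnegativity of entropy production really only needs $\mc L_{rest}$ to generate a CPTP semigroup with $\sigma_\beta$ as a fixed point (then data processing gives $D(e^{t\mc L_{rest}}[\rho]\|\sigma_\beta)\le D(\rho\|\sigma_\beta)$, and differentiating at $t=0$ gives $\mathrm{EP}_{\mc L_{rest}}\ge 0$); one should check that the decomposition in \cref{claim:overview-convex-combination} indeed produces such a generator and that the differentiation at $t=0$ is justified (e.g. by restricting to full-rank $\rho$ and using continuity, as is standard in the MLSI literature). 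A secondary point is to state cleanly the unitary-covariance reduction $D(\rho\|\rho_\beta) = D(C^\dagger\rho C\,\|\,\sigma_\beta)$ and $\mathrm{EP}_{\mc L}(\rho) = \mathrm{EP}_{\Tilde{\mc L}}(C^\dagger\rho C)$, which is immediate from $\Tilde{\mc L} = C^\dagger\mc L[C\cdot C^\dagger]C$ and invariance of trace and logarithm under conjugation. Finally I would remark that the same convex-combination argument transfers a spectral gap bound (as in \cite{Onorati2016MixingPO}) but that here we need the stronger MLSI, which is why we insisted on the entropy-production formulation rather than the Dirichlet form of the gap.
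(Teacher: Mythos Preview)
Your proposal is correct and follows essentially the same approach as the paper: decompose the entropy production linearly via the convex combination $\Tilde{\mc L} = q\,\mc L_\noti + (1-q)\,\mc L_{rest}$, apply the MLSI of $\mc L_\noti$ to the first piece, and bound the second piece by zero via data processing (the paper phrases this as $\frac{d}{dt}D(e^{t\mc L_{rest}}[\rho]\|\sigma_\beta)|_{t=0}\le 0$ since $\sigma_\beta$ is fixed by $\mc L_{rest}$). Your anticipated subtlety about $\mc L_{rest}$ generating a genuine CPTP semigroup is exactly what the paper resolves in the proof of the convex combination claim, where it observes that $\mc L_{rest}$ is itself a Davies generator on a disjoint set of jump operators satisfying detailed balance with respect to $\sigma_\beta$.
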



\paragraph{The lightcone argument.} We conclude by discussing the key technical step: a proof of \cref{claim:overview-convex-combination} via a lightcone argument. The starting point is to examine the Davies generator $\mc L$ (\cref{equation:Lindbladian}) and its rotated version $\Tilde{\mathcal{L}}=C^\dagger \mathcal{L}[C\cdot C^\dagger]C$. The goal is to show that ``a fraction of'' $\Tilde{\mathcal{L}}$ equals the Davies generator of the non-interacting Hamiltonian $H_\noti$, that is, within that fraction, the effect of $C$ is erased.

To begin, note that the Davies generator $\mathcal{L}$ (\cref{equation:Lindbladian}) only depends on the circuit $C$ through the jump operators decomposed into the frequency basis, $A_\nu^a$, which we recollect can be written as

\begin{equation}
    A^a_\nu = \sum_k \Pi_{k+\nu} A^a \Pi_k=\sum_k C\bigg(\sum_{|y| = k+\nu}\ketbra{y} \bigg) C^\dagger A^a C\bigg(\sum_{|x| = k}\ketbra{x} \bigg) C^\dagger.
\end{equation}
Crucially, due to the rotation of $\mathcal{L}$ to $\Tilde{\mathcal{L}}$,\footnote{And the fact that our jump operators are Pauli operators.} we observe that the dependence of $C$ within $\Tilde{\mathcal{L}}$ is only through second-moment operators of the form
\begin{equation}
    \E_{P\sim\mc P_\ell} [C^\dag P C\otimes C^\dag P C],
\end{equation}
where we consider the sum of all jump operators that act on a specific lightcone $\mathsf{L}_i$ of size $\ell$, and recall that the jump operators are $\ell$-qubit Pauli operators. It remains to express this operator as a convex combination, as shown in \cref{fig:lightconeargument}. A sketch of the argument follows:
\begin{itemize}
    \item Step (i): Uses the identity $\E_{P\sim\mc P_\ell}\left[P\otimes P\right]=\frac{1}{2^\ell}\cdot  \mathrm{SWAP}$, and linearity of expectation. 
    \item Step (ii): Since $C$ is a low-depth circuit, one can cancel the quantum gates within the lightcone of qubit $i$ with their inverse.
    \item Step (iii): Uses the identity $\E_{P\sim\mc P_\ell}\left[P\otimes P\right]=\frac{1}{2^\ell}\cdot  \mathrm{SWAP}$ again, but in the other direction.
    \item Step (iv): Re-writes the expectation into two parts: the first part is over the 4 single-qubit Paulis that act only on qubit $i$, and the second part, is all the remaining $\ell$-qubit Paulis. 
\end{itemize}

The crux of the argument lies in noting that in the first part of Step (iv), the $i$th qubit has been completely disentangled from the remaining circuit. Thereby, the single-qubit Pauli acts on a disentangled wire, and all remaining gates cancel with each other.

This gives the desired convex combination, where the first term corresponds to the non-interacting Hamiltonian with single-qubit jump operators. Finally, note that our choice of the jump operators (as $\ell$-qubit Paulis acting on each lightcone) is crucial for this argument, and it is unclear if an arbitrary choice of jump operators would suffice.

\begin{figure}[t]
    \centering
\begin{equation*}
\begin{aligned}
&\E_{P\sim\mc P_\ell}
\begin{tikzpicture}[baseline={(0, -0.75*\scale*\tikzlength cm-\MathAxis pt)},x=\scale*\tikzlength cm,y=\scale*\tikzlength cm]
        \draw[black] (0,0) rectangle node{$C^\dag$} (5,1);
  \foreach \x in {1,2,...,5} {
        \draw[black] (\x-0.5, 0) -- (\x-0.5,-0.5);
        \draw[black] (\x-0.5, 1) -- (\x-0.5,1.5);
    }
    
\draw[black] (0,-1.5) rectangle node{$C$} (5,-2.5);
  \foreach \x in {1,2,...,5} {
        \draw[black] (\x-0.5, -0.5) -- (\x-0.5,-1.5);
        \draw[black] (\x-0.5, -2.5) -- (\x-0.5,-3);
    }
\fill[blue!30!white] (1.25,-0.5) rectangle (3.75,-1);
    \node at (2.5,-0.75) {\small $P$};
    \end{tikzpicture}
\otimes 
\begin{tikzpicture}[baseline={(0, -0.75*\scale*\tikzlength cm-\MathAxis pt)},x=\scale*\tikzlength cm,y=\scale*\tikzlength cm]
        \draw[black] (0,0) rectangle node{$C^\dag$} (5,1);
  \foreach \x in {1,2,...,5} {
        \draw[black] (\x-0.5, 0) -- (\x-0.5,-0.5);
        \draw[black] (\x-0.5, 1) -- (\x-0.5,1.5);
    }
    
\draw[black] (0,-1.5) rectangle node{$C$} (5,-2.5);
  \foreach \x in {1,2,...,5} {
        \draw[black] (\x-0.5, -0.5) -- (\x-0.5,-1.5);
        \draw[black] (\x-0.5, -2.5) -- (\x-0.5,-3);
    }
\fill[blue!30!white] (1.25,-0.5) rectangle (3.75,-1);
    \node at (2.5,-0.75) {\small $P$};
    \end{tikzpicture}
\overset{(i)}{=}\frac{1}{2^\ell}\cdot
\begin{tikzpicture}[baseline={(0, -0.75*\scale*\tikzlength cm-\MathAxis pt)},x=\scale*\tikzlength cm,y=\scale*\tikzlength cm]
        \draw[black] (0,0) rectangle node{$C^\dag$} (5,1);
  \foreach \x in {1,2,...,5} {
        \draw[black] (\x-0.5, 1) -- (\x-0.5,1.5);
    }
    
\draw[black] (0,-1.5) rectangle node{$C$} (5,-2.5);
  \foreach \x in {1,2,...,5} {
        \draw[black] (\x-0.5, -2.5) -- (\x-0.5,-3);
    }

        \draw[black] (5.5,0) rectangle node{$C^\dag$} (10.5,1);
  \foreach \x in {1,2,...,5} {
        \draw[black] (5.5+\x-0.5, 1) -- (5.5+\x-0.5,1.5);
    }
    
\draw[black] (5.5,-1.5) rectangle node{$C$} (10.5,-2.5);
  \foreach \x in {1,2,...,5} {
        \draw[black] (5.5+\x-0.5, -2.5) -- (5.5+\x-0.5,-3);
    }
\draw[black] (1-0.5, 0) -- (1-0.5,-1.5);
\draw[black] (5.5+1-0.5, 0) -- (5.5+1-0.5,-1.5);
\draw[black] (5-0.5, 0) -- (5-0.5,-1.5);
\draw[black] (9.5+1-0.5, 0) -- (9.5+1-0.5,-1.5);
\foreach \x in {1,2,...,3} {
    \draw[black] (\x+0.5,0)..controls (\x+0.5,-0.5) and (\x+6,-1) ..(\x+6,-1.5);
    \draw[black] (\x+0.5,-1.5)..controls (\x+0.5,-1) and (\x+6,-0.5) ..(\x+6,0);
}
    \end{tikzpicture}
\\[10pt]
&\overset{(ii)}{=}\frac{1}{2^\ell}\cdot\begin{tikzpicture}[baseline={(0, -0.75*\scale*\tikzlength cm-\MathAxis pt)},x=\scale*\tikzlength cm,y=\scale*\tikzlength cm]

\draw[black] (0,0) -- (1,0) -- (2,1) -- (0,1) -- (0,0);
\draw[black] (5,0) -- (4,0) -- (3,1) -- (5,1) -- (5,0);
\draw[black] (1.5,0) -- (1.5,0.5);
\draw[black] (2.5,0) -- (2.5,1);
\draw[black] (3.5,0) -- (3.5,0.5);

\draw[black] (5.5,0) -- (6.5,0) -- (7.5,1) -- (5.5,1) -- (5.5,0);
\draw[black] (10.5,0) -- (9.5,0) -- (8.5,1) -- (10.5,1) -- (10.5,0);
\draw[black] (5.5+1.5,0) -- (5.5+1.5,0.5);
\draw[black] (5.5+2.5,0) -- (5.5+2.5,1);
\draw[black] (5.5+3.5,0) -- (5.5+3.5,0.5);

\draw[black] (0,-2.5) -- (2,-2.5) -- (1,-1.5) -- (0,-1.5) -- (0,-2.5);
\draw[black] (5,-2.5) -- (3,-2.5) -- (4,-1.5) -- (5,-1.5) -- (5,-2.5);
\draw[black] (1.5,-1.5) -- (1.5,-2);
\draw[black] (2.5,-1.5) -- (2.5,-2.5);
\draw[black] (3.5,-1.5) -- (3.5,-2);

\draw[black] (5.5+0,-2.5) -- (5.5+2,-2.5) -- (5.5+1,-1.5) -- (5.5+0,-1.5) -- (5.5+0,-2.5);
\draw[black] (5.5+5,-2.5) -- (5.5+3,-2.5) -- (5.5+4,-1.5) -- (5.5+5,-1.5) -- (5.5+5,-2.5);
\draw[black] (5.5+1.5,-1.5) -- (5.5+1.5,-2);
\draw[black] (5.5+2.5,-1.5) -- (5.5+2.5,-2.5);
\draw[black] (5.5+3.5,-1.5) -- (5.5+3.5,-2);

  \foreach \x in {1,2,...,5} {
        \draw[black] (\x-0.5, 1) -- (\x-0.5,1.5);
    }
  \foreach \x in {1,2,...,5} {
        \draw[black] (\x-0.5, -2.5) -- (\x-0.5,-3);
    }

  \foreach \x in {1,2,...,5} {

        \draw[black] (5.5+\x-0.5, 1) -- (5.5+\x-0.5,1.5);
    }
    
  \foreach \x in {1,2,...,5} {
        \draw[black] (5.5+\x-0.5, -2.5) -- (5.5+\x-0.5,-3);
    }
\draw[black] (1-0.5, 0) -- (1-0.5,-1.5);
\draw[black] (5.5+1-0.5, 0) -- (5.5+1-0.5,-1.5);
\draw[black] (5-0.5, 0) -- (5-0.5,-1.5);
\draw[black] (9.5+1-0.5, 0) -- (9.5+1-0.5,-1.5);
\foreach \x in {1,2,...,3} {
    \draw[black] (\x+0.5,0)..controls (\x+0.5,-0.5) and (\x+6,-1) ..(\x+6,-1.5);
    \draw[black] (\x+0.5,-1.5)..controls (\x+0.5,-1) and (\x+6,-0.5) ..(\x+6,0);
}
    \end{tikzpicture}
\overset{(iii)}{=}\E_{P\sim\mc P_\ell}\begin{tikzpicture}[baseline={(0, -0.75*\scale*\tikzlength cm-\MathAxis pt)},x=\scale*\tikzlength cm,y=\scale*\tikzlength cm]

\draw[black] (0,0) -- (1,0) -- (2,1) -- (0,1) -- (0,0);
\draw[black] (5,0) -- (4,0) -- (3,1) -- (5,1) -- (5,0);
\draw[black] (1.5,0) -- (1.5,0.5);
\draw[black] (2.5,0) -- (2.5,1);
\draw[black] (3.5,0) -- (3.5,0.5);

\draw[black] (5.5,0) -- (6.5,0) -- (7.5,1) -- (5.5,1) -- (5.5,0);
\draw[black] (10.5,0) -- (9.5,0) -- (8.5,1) -- (10.5,1) -- (10.5,0);
\draw[black] (5.5+1.5,0) -- (5.5+1.5,0.5);
\draw[black] (5.5+2.5,0) -- (5.5+2.5,1);
\draw[black] (5.5+3.5,0) -- (5.5+3.5,0.5);

\draw[black] (0,-2.5) -- (2,-2.5) -- (1,-1.5) -- (0,-1.5) -- (0,-2.5);
\draw[black] (5,-2.5) -- (3,-2.5) -- (4,-1.5) -- (5,-1.5) -- (5,-2.5);
\draw[black] (1.5,-1.5) -- (1.5,-2);
\draw[black] (2.5,-1.5) -- (2.5,-2.5);
\draw[black] (3.5,-1.5) -- (3.5,-2);

\draw[black] (5.5+0,-2.5) -- (5.5+2,-2.5) -- (5.5+1,-1.5) -- (5.5+0,-1.5) -- (5.5+0,-2.5);
\draw[black] (5.5+5,-2.5) -- (5.5+3,-2.5) -- (5.5+4,-1.5) -- (5.5+5,-1.5) -- (5.5+5,-2.5);
\draw[black] (5.5+1.5,-1.5) -- (5.5+1.5,-2);
\draw[black] (5.5+2.5,-1.5) -- (5.5+2.5,-2.5);
\draw[black] (5.5+3.5,-1.5) -- (5.5+3.5,-2);

  \foreach \x in {1,2,...,5} {
        \draw[black] (\x-0.5, 1) -- (\x-0.5,1.5);
    }
  \foreach \x in {1,2,...,5} {
        \draw[black] (\x-0.5, -2.5) -- (\x-0.5,-3);
    }

  \foreach \x in {1,2,...,5} {

        \draw[black] (5.5+\x-0.5, 1) -- (5.5+\x-0.5,1.5);
    }
    
  \foreach \x in {1,2,...,5} {
        \draw[black] (5.5+\x-0.5, -2.5) -- (5.5+\x-0.5,-3);
    }
\draw[black] (1-0.5, 0) -- (1-0.5,-1.5);
\draw[black] (5.5+1-0.5, 0) -- (5.5+1-0.5,-1.5);
\draw[black] (5-0.5, 0) -- (5-0.5,-1.5);
\draw[black] (9.5+1-0.5, 0) -- (9.5+1-0.5,-1.5);
\fill[blue!30!white] (1.25,-0.5) rectangle (3.75,-1);
\node at (2.5,-0.75) {\small $P$};
\fill[blue!30!white] (5.5+1.25,-0.5) rectangle (5.5+3.75,-1);
\node at (5.5+2.5,-0.75) {\small $P$};
\foreach \x in {1,2,...,3} {
    \draw[black] (\x+0.5,0) -- (\x+0.5,-0.5);
    \draw[black] (5.5+\x+0.5,0) -- (5.5+\x+0.5,-0.5);
    \draw[black] (\x+0.5,-1) -- (\x+0.5,-1.5);
    \draw[black] (5.5+\x+0.5,-1) -- (5.5+\x+0.5,-1.5);
}
\end{tikzpicture}
\\[10pt]
&\overset{(iv)}{=}\frac{1}{4^{\ell-1}}\cdot\E_{P\sim\mc P_i}\,\begin{tikzpicture}[baseline={(0, -0.75*\scale*\tikzlength cm-\MathAxis pt)},x=0.8*\scale*\tikzlength cm,y=\scale*\tikzlength cm]
\foreach \x in {1,2,...,5} {
        \draw[black] (\x-0.5, 1.5) -- (\x-0.5,-3);
        \draw[black] (5.5+\x-0.5, 1.5) -- (5.5+\x-0.5,-3);
    }

\fill[blue!30!white] (2.1875,-0.5) rectangle (2.8125,-1);
\node at (2.5,-0.75) {\small $P$};
\fill[blue!30!white] (5.5+2.1875,-0.5) rectangle (5.5+2.8125,-1);
\node at (5.5+2.5,-0.75) {\small $P$};

\end{tikzpicture}
\,+
\frac{4^{\ell-1}-1}{4^{\ell-1}}\cdot\E_{Q\sim\mc P_\ell\setminus \mc P_i}\begin{tikzpicture}[baseline={(0, -0.75*\scale*\tikzlength cm-\MathAxis pt)},x=\scale*\tikzlength cm,y=\scale*\tikzlength cm]

\draw[black] (0,0) -- (1,0) -- (2,1) -- (0,1) -- (0,0);
\draw[black] (5,0) -- (4,0) -- (3,1) -- (5,1) -- (5,0);
\draw[black] (1.5,0) -- (1.5,0.5);
\draw[black] (2.5,0) -- (2.5,1);
\draw[black] (3.5,0) -- (3.5,0.5);

\draw[black] (5.5,0) -- (6.5,0) -- (7.5,1) -- (5.5,1) -- (5.5,0);
\draw[black] (10.5,0) -- (9.5,0) -- (8.5,1) -- (10.5,1) -- (10.5,0);
\draw[black] (5.5+1.5,0) -- (5.5+1.5,0.5);
\draw[black] (5.5+2.5,0) -- (5.5+2.5,1);
\draw[black] (5.5+3.5,0) -- (5.5+3.5,0.5);

\draw[black] (0,-2.5) -- (2,-2.5) -- (1,-1.5) -- (0,-1.5) -- (0,-2.5);
\draw[black] (5,-2.5) -- (3,-2.5) -- (4,-1.5) -- (5,-1.5) -- (5,-2.5);
\draw[black] (1.5,-1.5) -- (1.5,-2);
\draw[black] (2.5,-1.5) -- (2.5,-2.5);
\draw[black] (3.5,-1.5) -- (3.5,-2);

\draw[black] (5.5+0,-2.5) -- (5.5+2,-2.5) -- (5.5+1,-1.5) -- (5.5+0,-1.5) -- (5.5+0,-2.5);
\draw[black] (5.5+5,-2.5) -- (5.5+3,-2.5) -- (5.5+4,-1.5) -- (5.5+5,-1.5) -- (5.5+5,-2.5);
\draw[black] (5.5+1.5,-1.5) -- (5.5+1.5,-2);
\draw[black] (5.5+2.5,-1.5) -- (5.5+2.5,-2.5);
\draw[black] (5.5+3.5,-1.5) -- (5.5+3.5,-2);

  \foreach \x in {1,2,...,5} {
        \draw[black] (\x-0.5, 1) -- (\x-0.5,1.5);
    }
  \foreach \x in {1,2,...,5} {
        \draw[black] (\x-0.5, -2.5) -- (\x-0.5,-3);
    }

  \foreach \x in {1,2,...,5} {

        \draw[black] (5.5+\x-0.5, 1) -- (5.5+\x-0.5,1.5);
    }
    
  \foreach \x in {1,2,...,5} {
        \draw[black] (5.5+\x-0.5, -2.5) -- (5.5+\x-0.5,-3);
    }
\draw[black] (1-0.5, 0) -- (1-0.5,-1.5);
\draw[black] (5.5+1-0.5, 0) -- (5.5+1-0.5,-1.5);
\draw[black] (5-0.5, 0) -- (5-0.5,-1.5);
\draw[black] (9.5+1-0.5, 0) -- (9.5+1-0.5,-1.5);
\fill[green!30!white] (1.25,-0.5) rectangle (3.75,-1);
\node at (2.5,-0.75) {\small $Q$};
\fill[green!30!white] (5.5+1.25,-0.5) rectangle (5.5+3.75,-1);
\node at (5.5+2.5,-0.75) {\small $Q$};
\foreach \x in {1,2,...,3} {
    \draw[black] (\x+0.5,0) -- (\x+0.5,-0.5);
    \draw[black] (5.5+\x+0.5,0) -- (5.5+\x+0.5,-0.5);
    \draw[black] (\x+0.5,-1) -- (\x+0.5,-1.5);
    \draw[black] (5.5+\x+0.5,-1) -- (5.5+\x+0.5,-1.5);
}
\end{tikzpicture}
\end{aligned}
\end{equation*}
    \caption{A lightcone argument for proving the modified log-Sobolev inequality.}
    \label{fig:lightconeargument}
\end{figure}

\subsubsection{Efficient implementation on a quantum computer}

Rapid mixing of the Davies generator implies that the Gibbs state can be efficiently prepared in the thermal model of computation, described by coupling the quantum system to a thermal bath \cite{Kliesch_2011}. Next, we briefly discuss how to simulate the dissipative Lindbladian evolution $e^{\mathcal{L}t}$ on a quantum computer.

We leverage the ``continuous-time quantum Gibbs sampler" framework of \cite{Chen2023QuantumTS}. They show that to implement the map $e^{\mathcal{L}t}$ one requires $\Tilde{O}(t)$ black-box invocations to a \textit{unitary block-encoding} of the Lindblad operators (\cite{Chen2023QuantumTS}, Theorem I.1). In turn, to implement such a block-encoding for Hamiltonians $H$ of integer spectra, it suffices to design quantum circuits which implement the Hamiltonian simulation of $H$, a block-encoding for the jump operators $A^a$, as well as a certain``frequency filter" which implements the Glauber dynamics weight. In \cref{section:implementation}, we discuss circuit implementations of all these ingredients, summarized in the following Lemma.




\begin{lemma}[Dissipative Lindbladian Implementation]\label{lemma:dissipative_implementation_overview}
    Fix parameters $t \geq  1$ and $\epsilon\leq \frac{1}{2}$. Let $\mathcal{L}$ denote the Lindbladian of \cref{equation:Lindbladian}, defined by a quantum circuit $C$ on $n$ qubits of depth $d$ and lightcone size $\ell$. Then, we can simulate the map $e^{t\mathcal{L}}$ to error $\epsilon$ in diamond norm using a quantum circuit of depth $O(t\cdot n\cdot 4^{\ell}\cdot 2^d \cdot \poly(\ell, \log n, \log \frac{1}{\epsilon}, \log t))$.
\end{lemma}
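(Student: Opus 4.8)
The plan is to instantiate the ``continuous-time quantum Gibbs sampler'' of \cite{Chen2023QuantumTS}: their simulation theorem implements $e^{t\mc L}$ to diamond-norm error $\epsilon$ using $\tilde O(t)=O\!\big(t\cdot\polylog(t/\epsilon)\big)$ black-box calls to a unitary block-encoding of the (coherent-form) operator attached to $\mc L$, and for a Davies generator of the form in \cref{equation:Lindbladian} that block-encoding is built from three circuit ingredients that I would supply explicitly: (i) controlled Hamiltonian simulation $e^{\pm iHs}$, invoked inside an operator Fourier transform that resolves each jump $A^a$ into its frequency components $A^a_\nu=\sum_k\Pi_{k+\nu}A^a\Pi_k$; (ii) a block-encoding of the jumps packed into an address register, $\sum_{a\in\mc A}A^a\otimes\ketbra{a}$; and (iii) a ``frequency filter'' that multiplies the $\nu$-component by $\sqrt{\gamma(\nu)}$, $\gamma(\nu)=1/(1+e^{-\beta\nu})$. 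The whole point is that $H$ is commuting with \emph{integer} spectrum in $[0,n]$ — the hypothesis that collapses the delicate Gaussian-window analysis of \cite{Chen2023QuantumTS} (designed for general, continuous-spectrum Hamiltonians) to an exact, finite construction.

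For the Hamiltonian simulation: since $H=CH_{\noti}C^\dagger$ with $H_{\noti}$ diagonal and integer-valued, $e^{-iHs}=C\,e^{-iH_{\noti}s}\,C^\dagger$ where $e^{-iH_{\noti}s}$ is one layer of single-qubit phase gates; equivalently, by commutativity $e^{-iHs}=\prod_i e^{-ish_i}$ with each $e^{-ish_i}$ acting on $\le\ell$ qubits. A controlled version, with $s$ held in an $O(\log n)$-bit register, is then a depth-$2^{O(d)}\poly(\ell,\log n)$ circuit. The integer-spectrum hypothesis enters here decisively: every Bohr frequency $\nu$ is an integer, an $\ell$-local Pauli jump pulls back through the depth-$d$ circuit to a $\le\ell\cdot 2^d$-local operator in the energy basis, so a single jump connects only $O(\ell\,2^d)$ distinct energies, and the operator Fourier transform becomes an \emph{exact} sum over $\tilde O(2^d)$ time points rather than a truncated Gaussian integral — this is the origin of the explicit $2^d$ in the bound, and no frequency-resolution error is incurred.

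For the jump-operator block-encoding: $\{A^a\}=2^{-\ell}\{P_{\mathsf L_i}\otimes\mathbb I:i\in[n],P\in\mc P_\ell\}$ are rescaled Paulis, so $\sum_a A^a\otimes\ketbra{a}$ is block-encoded by a \textsc{select} gate which, conditioned on $a=(i,P)$, applies the Pauli $P$ to the $\ell$ qubits of $\mathsf L_i$; preparing the uniform address state over $|\mc A|=n4^\ell$ labels costs depth $O(\log n+\ell)$, and sweeping \textsc{select} over $i\in[n]$ (the $\le\ell$ single-qubit factors of each $P$ hit disjoint wires and parallelize) costs depth $O(n)\poly(\ell,\log n)$; the $n\cdot4^\ell$ in the final bound is what survives after carrying the normalizations of this block-encoding (the number of jumps in the Davies sum) through Chen's query count. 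Finally, the filter $\nu\mapsto\sqrt{\gamma(\nu)}$ is a bounded smooth function of the $O(d+\log\ell)$-bit integer $\nu$; I would compute a $\poly(\log n,\log\tfrac1\epsilon)$-bit approximation into an ancilla by a reversible arithmetic circuit and apply a controlled rotation (equivalently: QSVT with a degree-$\poly(\beta,\log\tfrac1\epsilon)$ polynomial approximant), of depth $\poly(\ell,\log n,\log\tfrac1\epsilon)$. Composing (i)--(iii) into the block-encoding, invoking the $\tilde O(t)$-query theorem, and multiplying per-call depths gives a circuit of depth $O\!\big(t\cdot n\cdot 4^\ell\cdot 2^d\cdot\poly(\ell,\log n,\log\tfrac1\epsilon,\log t)\big)$, with the error in diamond norm as provided directly by the simulation theorem.

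The hard part is not any one ingredient but the glue between them: one must certify that, under the integer-spectrum hypothesis, the operator Fourier transform is genuinely \emph{exact}, so that the only errors are the polynomial-approximation error of the filter and the truncation error of Chen's simulation theorem, and that these compose to $O(\epsilon)$ in diamond norm with only $\polylog$ overhead; one must implement \textsc{select} over the $n4^\ell$ lightcone-Paulis in depth linear — not quadratic — in $n$ despite overlapping lightcones; and one must track the block-encoding normalizations and the Glauber weights carefully enough to land the explicit $n\cdot4^\ell\cdot2^d$ (and the $e^{O(\beta)}$ that, for constant $\beta$, is absorbed into the $\poly$) rather than a looser bound with $n$ replaced by $2^{\poly(d)}$ or by $\poly(n)$.
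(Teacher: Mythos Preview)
Your high-level plan is the paper's: invoke the block-encoding simulation theorem of \cite{Chen2023QuantumTS}, and supply the three ingredients (controlled $e^{iH\bar t}$, a \textsc{select} over the jumps, and the Glauber filter), using the integer spectrum to make the operator Fourier transform exact. That part is right. But your accounting of where each factor in $n\cdot 4^\ell\cdot 2^d$ comes from is wrong in two places, and the gap is a missing structural lemma.

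First, the $2^d$ does \emph{not} come from the number of Fourier time points. The paper uses the full $2n{+}1$ points of $S_{\pi/n}$; these sit in superposition in a time register and are consumed by a \emph{single} controlled-$e^{iH\bar t}$ call, so they contribute only $O(\log n)$ to the depth, not a multiplicative $2^d$. The actual origin of $2^d$ is a graph-coloring lemma you never invoke: the interaction hypergraph of $H$ has degree at most $\ell\cdot\ell_r$ with $\ell_r\le 2^d$, so the terms $\{h_i\}$ (and likewise the lightcones $\{\mathsf L_i\}$) can be partitioned into $\Delta\le \ell\cdot 2^d+1$ color classes of pairwise non-overlapping supports. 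Within a color class, the controlled phases $e^{-i\bar t h_i}$ and the controlled Paulis in \textsc{select} are applied in parallel (after a GHZ-style fan-out of the control register), giving per-block-encoding depth $O(4^\ell\cdot\Delta\cdot\log n)$ and $O(\Delta\cdot(\ell+\log n))$ respectively. Without this coloring step, your $\prod_i e^{-ish_i}$ is a sequential product of $n$ many $\ell$-qubit gates, and your \textsc{select} sweep is depth $\Theta(n)$; combined with the $t\to nt$ time renormalization you then land at $\tilde O(t\cdot n^2)$, not the stated bound. (Your alternative $e^{-iHs}=C e^{-iH_{\noti}s}C^\dagger$ would fix the Hamiltonian-simulation depth, but the paper's access model gives the algorithm only the local terms $\{h_i\}$ and lightcones $\{\mathsf L_i\}$, not the global circuit $C$.)

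Second, the $n\cdot 4^\ell$ does not come from the jump-count normalization. Because each jump is already scaled by $2^{-\ell}$, one has $\sum_{a\in\mathcal A}(A^a)^\dagger A^a = n\cdot 4^\ell\cdot 4^{-\ell}\cdot\mathbb I = n\cdot\mathbb I$, so the time renormalization is $t\to n\,t$, contributing exactly the factor $n$. The $4^\ell$ arises instead inside the controlled Hamiltonian simulation: each $e^{-i\bar t h_i}$ is a generic $\ell$-qubit unitary, synthesized in $O(4^\ell)$ two-qubit gates. Your observation that a single jump touches only $O(\ell\cdot 2^d)$ Bohr frequencies is correct but unused; it would at best shrink the time register from $O(\log n)$ to $O(d+\log\ell)$ bits, which is already inside the $\poly(\cdot)$.
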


\noindent Put together with our bound on the mixing time, we arrive at our main statement on Gibbs state preparation in \cref{lemma:results-gibbs-prep}.

\subsection{Classical hardness of Gibbs sampling}
\label{section:hardness}

As discussed in \cref{section:introduction}, to obtain the classical intractability of quantum Gibbs sampling it suffices to construct a family of low depth quantum circuits which are hard to sample from even in the presence of input errors (\cref{fig:noisemodel}). The reason this imposes a challenge is two-fold. First, it is known that many classically hard shallow quantum circuits actually become classically simulable under input noise~\cite{Bremner2016AchievingQS}, thereby suggesting a need for fault-tolerance techniques. However, standard fault-tolerance techniques \cite{Aharonov-Ben-Or} often come with a prohibitive circuit depth overhead, which blows up the locality of the parent Hamiltonian. We address these challenges by designing a fault-tolerance scheme tailored to the input noise model with small overhead.

Our plan is to focus on IQP circuits, which are known to be already classically hard at constant depth. We show that their commuting structure plays an important role in our fault-tolerance techniques at low overhead.

\subsubsection{Quantum computational advantage with shallow IQP circuits}
\label{section:overview-advantage}

Recall that IQP circuits can be written as $C = H^{\otimes n} D H^{\otimes n}$, where $D$ is a diagonal unitary. The induced probability distribution $p(x)=|\mel{x}{C}{0^n}|^2$ is hard to sample from classically in general \cite{Bremner2010ClassicalSO}. While any family of constant-depth and classically-hard IQP circuits suffices for our purpose, in this paper we use the concrete example of cluster states on regular lattices composed with random $Z$-rotations\footnote{In the literature, these circuits are also known as the ``evolution (quench) of an nearest-neighbor, translationally invariant (NNTI) Hamiltonian''.}, which have become the basis for various proposals of sampling-based quantum supremacy using low-depth circuits \cite{Bremner2010ClassicalSO, Bremner2016AchievingQS, Gao2016QuantumSF, BermejoVega2017ArchitecturesFQ, Hangleiter2017AnticoncentrationTF, Novo2019QuantumAF}.

We present the structure of these circuits in more detail in \cref{section:appendix-hardness}, where we additionally present a comprehensive discussion on the foundations of their hardness. As a brief overview, note that 2D cluster states with single-qubit $Z$ rotations is a universal resource state for measurement-based quantum computation (MBQC)~\cite{Broadbent2008UniversalBQ}. This implies that \textit{exactly} sampling from their output distribution is hard in the worst-case \cite{Aaronson2010TheCC}. The hardness of approximate sampling from these architectures are based on further assumptions \cite{Bremner2015AveragecaseCV, Gao2016QuantumSF, Hangleiter2017AnticoncentrationTF}, which we rigorously define in \cref{section:appendix-hardness}. The following theorem thus provides the complexity-theoretic basis of our hardness arguments.

\begin{theorem}[Complexity of constant-depth IQP sampling~\cite{Gao2016QuantumSF, BermejoVega2017ArchitecturesFQ}]\label{theorem:sampling}
    There exists a constant $\delta > 0$, and a family of constant depth IQP circuits $\{C_n\}_{n\geq 1}$ on $n$ qubits, such that no randomized classical polynomial-time algorithm can sample from the output distribution of $C_n$ up to additive error $\delta$ in total variation distance, assuming the average-case hardness of computing a fixed family of partition functions (\cref{conjecture:mixture_of_errors}), and the non-collapse of the Polynomial Hierarchy.
\end{theorem}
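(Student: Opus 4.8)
The plan is to follow the standard Stockmeyer-counting route to sampling-based quantum advantage, instantiated for the constant-depth IQP family. Concretely, I would take $\{C_n\}$ to be the cluster-state-plus-random-$Z$-rotation circuits of \cref{section:appendix-hardness}, written $C_n = H^{\otimes n} D_\theta H^{\otimes n}$ with $D_\theta$ a constant-depth diagonal unitary whose single- and two-qubit phase gates have angles $\theta$ drawn from the prescribed distribution. The structural starting point is that, upon expanding both Hadamard layers, each output probability $p_\theta(x) = |\bra{x} C_n \ket{0^n}|^2$ is a normalized Ising-type partition function $Z_\theta(x)$ with complex couplings determined by $\theta$ and $x$; thus (multiplicatively) approximating $p_\theta(x)$ on a typical instance is equivalent to approximating a member of the partition-function family appearing in \cref{conjecture:mixture_of_errors}.

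Next I would argue the contrapositive. Suppose some randomized classical $\poly(n)$-time algorithm $\mathcal{A}$ samples from $p_\theta$ within total variation distance $\delta$. Feeding the description of $\mathcal{A}$ (its output string as a function of its coin flips) into Stockmeyer's approximate counting theorem yields an $\mathsf{FBPP}^{\mathsf{NP}}$ procedure that, for at least a $(1 - O(\delta))$-fraction of outcomes $x$, produces a multiplicative $(1 \pm \tfrac14)$-approximation to $p_\theta(x)$. Averaging over the randomness used to sample $x$ from $p_\theta$ and over the choice of $\theta$, Markov's inequality converts this into a $\mathsf{BPP}^{\mathsf{NP}}$ procedure that additively estimates $p_\theta(x)$ to within $\pm \epsilon \cdot 2^{-n}$ on a noticeable fraction of instances $(\theta, x)$.

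The crucial third step is anticoncentration: for the NNTI/cluster-state family one has, with constant probability over $x$, $p_\theta(x) = \Omega(2^{-n})$, proved via a second-moment (Paley--Zygmund) estimate in \cite{Hangleiter2017AnticoncentrationTF, BermejoVega2017ArchitecturesFQ}. This promotes the additive $\pm \epsilon 2^{-n}$ estimate to an honest multiplicative estimate of $Z_\theta(x)$ for a noticeable fraction of $(\theta,x)$, i.e.\ it places average-case multiplicative approximation of exactly this partition-function family inside $\mathsf{BPP}^{\mathsf{NP}}$. Invoking \cref{conjecture:mixture_of_errors}, which posits that this average-case approximation problem is $\#\mathsf{P}$-hard (the conjecture being the point where a worst-to-average interpolation over the rotation angles would enter, were it provable), together with Toda's theorem $\mathsf{P}^{\#\mathsf{P}} \supseteq \mathsf{PH}$, we conclude that $\mathsf{PH}$ collapses to its third level, contradicting the non-collapse hypothesis. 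Hence no such classical sampler exists; the value of $\delta$ is the constant dictated by the anticoncentration-vs-Stockmeyer trade-off.

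\textbf{Main obstacle.} The delicate part is making the anticoncentration step and the average-case hardness assumption line up at \emph{constant} depth. Bounded lightcones make anticoncentration nonobvious, so one genuinely needs the cluster-state construction (a generic shallow circuit need not anticoncentrate); and one must check that the partition functions arising as amplitudes of this particular family are precisely those for which \cref{conjecture:mixture_of_errors} is stated, with the instance distribution and the multiplicative error regime matching. Reconciling these two constraints --- enough structure for provable anticoncentration, while staying inside the hypothesized hard family --- is exactly why the statement is conditional, and is where the bulk of the work in \cite{Gao2016QuantumSF, BermejoVega2017ArchitecturesFQ} lies.
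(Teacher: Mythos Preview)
The paper does not prove this theorem; it is imported from \cite{Gao2016QuantumSF,BermejoVega2017ArchitecturesFQ}, and \cref{section:appendix-hardness} only sketches the surrounding context. Your Stockmeyer-counting outline is the correct skeleton and is what those references do.

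One point of conflation is worth flagging. You invoke anticoncentration to promote an additive $\pm\epsilon 2^{-n}$ estimate to a multiplicative one, and then appeal to \cref{conjecture:mixture_of_errors}. But \cref{conjecture:mixture_of_errors} is specifically \cite{Gao2016QuantumSF}'s \emph{mixed} additive-plus-multiplicative hardness assumption,
\[
|\tilde p_x - p_x| \le \tfrac{1}{\poly(n)}\, p_x + \tfrac{\epsilon}{\delta\cdot 2^n},
\]
which is tailored so that the Stockmeyer output already lands in the hypothesized hard regime without a separate anticoncentration step. The route you describe --- multiplicative average-case hardness plus anticoncentration --- is the \cite{BermejoVega2017ArchitecturesFQ} variant, which the paper mentions as a ``related result'' based on a different (and, as stated there, partly conjectural) pair of assumptions. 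Under the conjecture actually referenced in the theorem statement, your anticoncentration step is unnecessary; under the alternative assumptions it is essential. Either route reaches the conclusion, but you should be clear about which package of hypotheses you are invoking so that the error model in Stockmeyer's output matches the one declared $\#\mathsf{P}$-hard.
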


To establish classical hardness of the Gibbs sampling task, it suffices to map the above circuit $C$ to a fault-tolerant circuit $\Tilde{C}$, such that a sample from the output distribution of $\Tilde{C}$ under input noise can be efficiently post-processed into an ideal sample from $C$. The key challenge is to reduce the fault-tolerance overhead in $\Tilde{C}$, so that the corresponding parent Hamiltonian has small locality.

\subsubsection{Fault-tolerance of IQP circuits against input noise}
\label{section:overview-distillation}


The starting point in our approach is the observation that it suffices to \textit{error-detect} the random inputs bits, instead of correcting them, to preserve the hardness-of-sampling of $C$. Indeed, bit-flip errors (which are Pauli-$X$ errors) on the input of IQP circuits, become phase-flip errors after the first layer of Hadamard gates, and thus commute with the entire IQP circuit. In this manner, they are equivalent to bit-flip errors on the measured output string. Therefore, if we could identify the computational basis state $\ket{r} = \otimes_i \ket{r_i}$ fed into the IQP circuit, we would be able to correct the measured output sample by simply subtracting $r \in \{0, 1\}^n$. Indeed, we emphasize we don't intend to correct the input error within the quantum circuit at all, as this would require decoding and feedforward, and potentially a much deeper circuit. Instead, we correct the error only during classical post-processing (that is, after all qubits are measured). 

The crux of our approach is the design of a ``distillation" gadget, which independently pre-processes each input bit $r_i$ into $k$ others in such a manner which enables us to reconstruct $r_i$ (with high probability) given only the other $k-1$ noisy bits. We illustrate this task with a simple example, based on the repetition code.

\begin{figure}[t]
    \centering
    \begin{subfigure}[b]{0.2\textwidth}
    \begin{tikzpicture}[baseline={(0, 0.5*\tikzlength cm-\MathAxis pt)},x=\tikzlength cm,y=\tikzlength cm]
    \draw  node[fill,circle,inner sep=2pt,minimum size=1pt] at (0,0) (1) {};
    \draw  node[fill,circle,inner sep=2pt,minimum size=1pt] at (-2,-3) (2) {};
    \draw  node[fill,circle,inner sep=2pt,minimum size=1pt] at (-1,-3) (3) {};
    \draw  node[fill,circle,inner sep=2pt,minimum size=1pt] at (0,-3) (4) {};
    \node at (1,-3) {$\cdots$};
    \draw  node[fill,circle,inner sep=2pt,minimum size=1pt] at (2,-3) (5) {};
    \draw[->,style=thick] (1) -- (2);
    \draw[->,style=thick] (1) -- (3);
    \draw[->,style=thick] (1) -- (4);
    \draw[->,style=thick] (1) -- (5);
    \end{tikzpicture}
    \caption{Repetition gadget}
    \label{fig:repetition}
    \end{subfigure}
    \begin{subfigure}[b]{0.4\textwidth}
    \begin{tikzpicture}[baseline={(0, 0.5*\tikzlength cm-\MathAxis pt)},x=\tikzlength cm,y=\tikzlength cm]
    \draw  node[fill=violet,circle,inner sep=2pt,minimum size=1pt] at (0,0) (1) {};
    \draw  node[fill=violet,circle,inner sep=2pt,minimum size=1pt] at (-2,-1.5) (2) {};
    \draw  node[fill=violet,circle,inner sep=2pt,minimum size=1pt] at (0,-1.5) (3) {};
    \draw  node[fill=violet,circle,inner sep=2pt,minimum size=1pt] at (2,-1.5) (5) {};
    \draw[->,style=thick] (1) -- (2);
    \draw[->,style=thick] (1) -- (3);
    \draw[->,style=thick] (1) -- (5);

    \draw node[fill,circle,inner sep=2pt,minimum size=1pt] at (-2.5,-3) (6) {};
    \draw node[fill,circle,inner sep=2pt,minimum size=1pt] at (-2,-3) (7) {};
    \draw node[fill,circle,inner sep=2pt,minimum size=1pt] at (-1.5,-3) (8) {};
    \draw[->,style=thick] (2) -- (6);
    \draw[->,style=thick] (2) -- (7);
    \draw[->,style=thick] (2) -- (8);

    \draw node[fill,circle,inner sep=2pt,minimum size=1pt] at (-0.5,-3) (9) {};
    \draw node[fill,circle,inner sep=2pt,minimum size=1pt] at (0,-3) (10) {};
    \draw node[fill,circle,inner sep=2pt,minimum size=1pt] at (0.5,-3) (11) {};
    \draw[->,style=thick] (3) -- (9);
    \draw[->,style=thick] (3) -- (10);
    \draw[->,style=thick] (3) -- (11);

    \draw node[fill=violet,circle,inner sep=2pt,minimum size=1pt] at (1.5,-3) (12) {};
    \draw node[fill=violet,circle,inner sep=2pt,minimum size=1pt] at (2,-3) (13) {};
    \draw node[fill=violet,circle,inner sep=2pt,minimum size=1pt] at (2.5,-3) (14) {};
    \draw[->,style=thick] (5) -- (12);
    \draw[->,style=thick] (5) -- (13);
    \draw[->,style=thick] (5) -- (14);

    \draw node[fill,circle,inner sep=2pt,minimum size=1pt] at (0,-4.5) (15) {};
    \draw node[fill,circle,inner sep=2pt,minimum size=1pt] at (0.5,-4.5) (16) {};
    \draw node[fill,circle,inner sep=2pt,minimum size=1pt] at (1,-4.5) (17) {};
    \draw[->,style=thick] (12) -- (15);
    \draw[->,style=thick] (12) -- (16);
    \draw[->,style=thick] (12) -- (17);

    \draw node[fill,circle,inner sep=2pt,minimum size=1pt] at (1.5,-4.5) (21) {};
    \draw node[fill,circle,inner sep=2pt,minimum size=1pt] at (2,-4.5) (22) {};
    \draw node[fill,circle,inner sep=2pt,minimum size=1pt] at (2.5,-4.5) (23) {};
    \draw[->,style=thick] (13) -- (21);
    \draw[->,style=thick] (13) -- (22);
    \draw[->,style=thick] (13) -- (23);

    \draw node[fill=violet,circle,inner sep=2pt,minimum size=1pt] at (3,-4.5) (18) {};
    \draw node[fill=violet,circle,inner sep=2pt,minimum size=1pt] at (3.5,-4.5) (19) {};
    \draw node[fill=violet,circle,inner sep=2pt,minimum size=1pt] at (4,-4.5) (20) {};
    \draw[->,style=thick] (14) -- (18);
    \draw[->,style=thick] (14) -- (19);
    \draw[->,style=thick] (14) -- (20);

    \node at (4,-4.9) {$\vdots$};
    \draw node[fill=violet,circle,inner sep=2pt,minimum size=1pt] at (4,-5.5) (24) {};
    \draw node[fill=violet,circle,inner sep=2pt,minimum size=1pt] at (4,-7) (25) {};
    \draw node[fill=violet,circle,inner sep=2pt,minimum size=1pt] at (4.5,-7) (26) {};
    \draw node[fill=orange,circle,inner sep=2pt,minimum size=1pt] at (5,-7) (27) {};
    \draw[->,style=thick] (24) -- (25);
    \draw[->,style=thick] (24) -- (26);
    \draw[->,style=thick] (24) -- (27);

    \node at (2,-4.9) {$\vdots$};
    \node at (0.5,-4.9) {$\vdots$};
    \node at (0,-3.5) {$\vdots$};
    \node at (-2,-3.5) {$\vdots$};
    \end{tikzpicture}
    \caption{Recursive concatenation}
    \label{fig:recursive}
    \end{subfigure}
    \begin{subfigure}[b]{0.3\textwidth}
    \begin{tikzpicture}[baseline={(0, 0.5*\tikzlength cm-\MathAxis pt)},x=\tikzlength cm,y=\tikzlength cm]
    \fill[blue!30!white,rounded corners=5pt]  (0.2,0.2) -- (0,-1.5) -- (-1.5,0) -- cycle;
    \fill[blue!30!white,rounded corners=5pt]  (3.8,0.2) -- (4,-1.5) -- (5.5,0) -- cycle;
    \fill[blue!30!white,rounded corners=5pt]  (2,0.3) -- (1,-1.5) -- (3,-1.5) -- cycle;
    \fill[blue!30!white,rounded corners=5pt]  (2,3.7) -- (1,5.5) -- (3,5.5) -- cycle;
    \fill[blue!30!white,rounded corners=5pt]  (3.7,2) -- (5.5,1) -- (5.5,3) -- cycle;
    \fill[blue!30!white,rounded corners=5pt]  (0.3,2) -- (-1.5,1) -- (-1.5,3) -- cycle;
    \fill[blue!30!white,rounded corners=5pt]  (3.8,3.8) -- (5.5,4) -- (4,5.5) -- cycle;
    \fill[blue!30!white,rounded corners=5pt]  (0.2,3.8) -- (-1.5,4) -- (0,5.5) -- cycle;
    \fill[blue!30!white,rounded corners=5pt]  (1.2,1.2) -- (2.8,1.2) -- (2.8,2.8) -- (1.2,2.8) -- cycle;
    
    \draw[step=2,black,very thick] (0,0) grid (4,4);
    \foreach \x in {0,...,2}
    \foreach \y in {0,...,2}
    {
        \filldraw[black] (2*\x,2*\y) circle (2pt);}
    \end{tikzpicture}
    \caption{Fault-tolerant circuit}
    \label{fig:ftcircuit}
    \end{subfigure}
    \caption{Fault-tolerance via state distillation gadgets. (a) The repetition code gadget. (b) A $B$-Tree and the recursive concatenation scheme. Arrows denote the direction of $\cnot$ gates. (c) Pre-processing the circuit using distillation gadgets.}
    \label{fig:distillation}
\end{figure}

\paragraph{A distillation gadget based on the repetition code.} Recall that all input bits are noisy: each of them is flipped from 0 to 1 with probability $q$. Given $k$ bits drawn from $s\leftarrow \ber^k(q)$, suppose we designate the $k$-th bit as the ``root" and apply a $\cnot$ gate from it to the other $k-1$ bits (\cref{fig:repetition}). During the decoding stage, we would like to reconstruct the root bit given the other $k-1$ bits. To do this we simply compute the majority of the ``leaves":
\begin{equation}
\begin{aligned}
    \text{Gadget}(s_1, s_2, \cdots, s_k) &= (s_1\oplus s_k, s_2\oplus s_k, \cdots, s_{k-1}\oplus s_k, s_k),\\
    \hat s_k&=\maj(s_1\oplus s_k, s_2\oplus s_k, \cdots, s_{k-1}\oplus s_k).
\end{aligned}
\end{equation}
We show that the probability of failure (when $\hat s_k\neq s_k$) equals $\delta = q^{\Omega(k)}$.

To highlight how these gadgets can be used for fault-tolerance, given an $n$-qubit IQP circuit $C$, we begin by pre-processing each of $n$ input bits independently into a distillation gadget of size $k$, resulting in a circuit on $n\cdot k$ bits. Each of the $n$ ``root" bits are then fed into $C$ (\cref{fig:ftcircuit}). Note that the $n\cdot (k-1)$ remaining bits are untouched by $C$. In the end, after all qubits are measured, we can use the $n\cdot (k-1)$ ancilla bits to infer if an error had happened on each of the ``root" bits fed into the circuit. As argued earlier, if an error did happen, it can be corrected by simply flipping the measurement outcome since the error commutes with the circuit. If we choose $k = \Theta(\log n)$, then the entire error correction process succeeds with high probability.

\paragraph{Recursive concatenation and $B$-Trees.} In effect, the scheme above distills the ``root" bit $s_k$ with an effective bit-flip error rate $q^{\Omega(k)}$, using $k-1$ redundant ``syndrome" bits of error rate $q$. Note that it used no information about the distribution of $s_k$, only that of the ``leaves" $s_1, \cdots, s_{k-1}$.

To improve on this example, we bootstrap the above technique by recursively preparing ``syndrome" bits of better and better fidelity.\footnote{This construction is largely inspired by recursive magic state distillation schemes.} Suppose we organize $k$ bits into a tree of arity $B$ and depth 2, such $k = 1+B+B^2$. Moreover, apply the repetition code gadget on each layer, from leaves to root of the tree, by applying a $\cnot$ gate from each parent bit to their respective children bits in the tree. In doing so, by the previous analysis we can identify each bit at the middle layer, just using the bits at the leaves, with error probability $q^{\Omega(B)}$. By performing majority again at the middle layer, we are now able to identify the bit at the root of this two-layer tree with error rate $(q^{\Omega(B)})^{\Omega(B)} = q^{\Omega(B^2)}$. By recursively applying this approach on a $B$-tree of depth $d$, the error probability at the root of the tree scales doubly-exponentially with the depth $d$, $q^{B^{\Omega(d)}}$. 

At face value, it may seem that we haven't gained anything over the repetition code, as the error probability still only decays exponentially with the size of the gadget. The advantage lies instead in the \textit{locality} of the gadget. Indeed, consider the lightcone of the orange qubit $u$ at the leaf of the tree in~\cref{fig:recursive}. By examining the causal influence of this qubit, we conclude that only the qubits in the neighborhood of its path to the root (the purple nodes in~\cref{fig:recursive}) can lie in its lightcone. That is, if 
\begin{equation}
    u=u_0\rightarrow  u_1 \rightarrow \cdots \rightarrow u_d\equiv \text{ root }
\end{equation}
denotes the path from leaf to root, then the \textit{lightcone} of $u$ is contained the union of the neighborhoods $\mathsf{L}_u\equiv \cup_i^d N(u_i)$. Therefore, $|\mathsf{L}_u|\leq O(B\cdot d)$, which is a linear function of the depth of the tree. By further studying the propagation of $Z$ Pauli's through the gadget, we analogously show that that the locality of the parent Hamiltonian of the distillation circuit is $|\mathsf{S}_u|\leq d$; precisely the nodes on the path from leaf to root. \cref{lemma:results-iqp-ft} then follows from a careful choice of $B$ and $d$.

\subsection{Organization}

We organize the rest of this work as follows. \\

\noindent \textbf{Gibbs State Preparation.} In \cref{section:gibbs_prep_appendix}, we prove our rapid mixing bounds for Davies Generators, and in \cref{section:implementation} discuss their simulation on a quantum computer. In \cref{sec:latticeprep}, we discuss alternative state preparation algorithms for parent Hamiltonians of circuits which lie on lattices. \\

\noindent \textbf{Classical Intractability of Gibbs Sampling.} In \cref{section:noise}, we prove that the constant temperature Gibbs states of the Hamiltonians in $\mathscr{H}$, can be interpreted as the output of noisy circuits. In \cref{section:appendix-hardness}, we present an overview of the computational complexity of shallow IQP sampling. In \cref{section:fault-tolerance}, we present our fault-tolerance scheme based on state distillation. \\

\noindent Finally, in \cref{section:together}, we put everything together and prove our main result (\cref{theorem:main}).\\

\noindent \textbf{Applications.} In \cref{sec:measurementnoise}, we present our results on Gibbs sampling with measurement errors, and in \cref{section:completeness}, we discuss the BQP completeness of Gibbs sampling with adaptive single-qubit measurements.

\section{Rapid Mixing and Efficient Gibbs State Preparation}
\label{section:gibbs_prep_appendix}

We dedicate this section to a proof of the rapid convergence of our dissipative Lindbladians. We defer a discussion on its implementation using quantum circuits to \cref{section:implementation}. For simplicity, henceforth we re-scale the class of parent Hamiltonians,\footnote{Note that we are simply applying an affine transformation, $H = \frac{1}{2}\big(n\cdot \mathbb{I} + C\sum Z_i C^\dagger\big)$, such that the Gibbs state of $H$ at temperature $\beta$ is the same as that of $C\sum Z_i C^\dagger$ at temperature $\beta/2$.}
\begin{equation}
    H=\sum_{i\in [n]} h_i = \sum_i C \big(\ketbra{1}\otimes \mathbb{I}_{[n]\setminus i}\big) C^\dagger 
\end{equation}

\noindent to ensure frustration-freeness and positive integer spectra $[n] = \{0, \cdots, n\}$. Recall this Hamiltonian is commuting, and its eigenstates are given by $\{C\ket{x}:x\in \{0, 1\}^n\}$. Let $\ell$ be the lightcone size of $C$. The jump operators of our Davies generator are $\ell$-qubit Pauli operators on the support of the circuit lightcone. We refer the reader to \cref{section:gibbs_state_prep}, \cref{equation:Lindbladian} for a description of our Lindbladian. 

\begin{remark}
    The support $\mathsf{S}_i$ of each Hamiltonian term $h_i$ is contained within the lightcone $\mathsf{L}_i$, see below \cref{eq:projeigenspace}. In general, $h_i$ acts nontrivially on the entire lightcone (for example, when $C$ uses Haar random gates), but for our hard instances $|\mathsf{S}_i|\ll |\mathsf{L}_i|$. To ensure the quantum algorithm always ``knows'' $\{\mathsf{L}_i\}$ when it only sees $\{h_i\}$, we assume this information is encoded in the definition of the family of Hamiltonians, and is given to the algorithm. 

\end{remark}

\subsection{Preliminaries on thermal Lindbladians and their convergence}

We dedicate this subsection to background on the evolution and convergence of open quantum systems described by a Lindbladian. Recall, a general Lindbladian is a continuous-time Markov chain acting on density operators
\begin{align}
    \mathcal{L}[\rho] = \sum_{j} J_j\rho J^{\dagger}_j- \frac{1}{2} \{J_jJ_j^{\dagger},\rho\} \quad \text{for some set of Lindblad operators}\quad \{J_j\}_j,
\end{align}
which generates a family of completely positive and trace-preserving map
\begin{align}
    e^{\mathcal{L}t}[\rho]\quad \text{for each}\quad t\ge 0.
\end{align}

Our Lindbladians of interest satisfy a particular property known as detailed balance.

\begin{definition}
    [$s$-Inner Product] Fix a full rank density matrix $\sigma$ and $s\in [0, 1]$. We define the weighted Hilbert-Schmidt inner product:
    \begin{equation}
        \langle A, B\rangle_s = \langle A, \sigma^{1-s}B\sigma^s\rangle = \Tr[A^\dagger \sigma^{1-s}B\sigma^s]\quad \text{for each}\quad A, B.
    \end{equation}
\end{definition}

\begin{definition}[$s$-Detailed Balance]
     A Lindbladian $\mathcal{L}$ is $s$-\emph{Detailed Balance} with respect to $\sigma$ if $\mathcal{L}^\dagger$ is self-adjoint with respect to $\langle\cdot, \cdot \rangle_s$:
     \begin{equation}
        \forall A, B: \langle A, \mathcal{L}^\dagger[B]\rangle_s =\langle \mathcal{L}^\dagger[A], B\rangle_s 
    \end{equation}
\end{definition}

There are two important structural consequences of this detailed balance condition. The first is that the density operator $\sigma$ is a fixed point of Lindbladian evolution:

\begin{equation}
    \mathcal{L}[\sigma] = 0
\end{equation}

The second, as discussed shortly, is that it implies a powerful means to understand the convergence of the mixing process. For the reader most familiar with classical Markov chains, the detailed balance condition above is an analog to its classical counterpart, however, with an additional degree of freedom $0\le s\le 1$ which arises due to non-commutativity.

Two special cases of the above are the GNS (where $s=1$) and KMS ($s=1/2$) detailed balance conditions. Fortunately, under minor constraints on the family of Lindbladians (which our Lindbladian satisfies), all these definitions collapse. We refer the reader back to \cref{equation:Lindbladian} for the definition of the family of Lindbladians we consider, Davies Generators. 

\begin{fact}[Davies' generators are detailed balanced] \label{fact:detailed-balance}
    Consider the Davies generator $\mathcal{L}$ described in \cref{equation:Lindbladian}, subject to the constraint that the transition weights satisfy $\forall\nu: \gamma(\nu)/\gamma(-\nu) = e^{-\beta \nu}$, and the jump operators contain their adjoints $\{A_a\} = \{A_a^\dagger\}$. Then, $\mathcal{L}$ satisfies $s$-DB $\forall s\in [0, 1]$ w.r.t. the Gibbs state $\rho_\beta\propto e^{-\beta  H}$. 
\end{fact}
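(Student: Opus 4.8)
The plan is to verify directly the defining identity of $s$-detailed balance, i.e. self-adjointness of $\mathcal{L}^\dagger$ with respect to $\langle\cdot,\cdot\rangle_s$, by a term-by-term computation that pairs the dissipative term indexed by $(a,\nu)$ with the one indexed by $(\bar a,-\nu)$, where $A^{\bar a}:=(A^a)^\dagger$ is guaranteed to lie in the jump set by the hypothesis $\{A_a\}=\{A_a^\dagger\}$ (for our Pauli jumps one in fact has $\bar a=a$). The whole argument rests on two structural identities for the frequency-decomposed operators $A^a_\nu=\sum_k\Pi_{k+\nu}A^a\Pi_k$. First, \emph{modular covariance}: since $\rho_\beta$ is a function of $H$ and $A^a_\nu$ shifts the (integer) energy by exactly $\nu$, one has $\rho_\beta^{\,s}A^a_\nu=e^{-s\beta\nu}A^a_\nu\rho_\beta^{\,s}$ for every $s\in[0,1]$. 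Second, \emph{adjoint reversal}: $(A^a_\nu)^\dagger=(A^{\bar a})_{-\nu}$, so taking the adjoint flips the Bohr frequency; in particular each $(A^a_\nu)^\dagger A^a_\nu$ is block-diagonal in the energy eigenbasis and hence commutes with $\rho_\beta$.

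With these in hand, write the Heisenberg generator as $\mathcal{L}^\dagger[X]=\sum_{a,\nu}\gamma(\nu)\,(A^a_\nu)^\dagger X A^a_\nu-\tfrac12\{G,X\}$ with $G:=\sum_{a,\nu}\gamma(\nu)(A^a_\nu)^\dagger A^a_\nu$, and handle the two pieces separately. For the anticommutator piece, $G$ is Hermitian and, by the second identity, commutes with $\rho_\beta$; therefore $\rho_\beta^{1-s}$ and $\rho_\beta^{s}$ slide past $G$ and a single cyclic rotation of the trace gives $\langle A,-\tfrac12\{G,B\}\rangle_s=\langle-\tfrac12\{G,A\},B\rangle_s$, which is manifestly symmetric. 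For the "sandwich" piece, I would start from $\langle A,(A^a_\nu)^\dagger B A^a_\nu\rangle_s=\Tr[A^\dagger\rho_\beta^{1-s}(A^a_\nu)^\dagger B A^a_\nu\rho_\beta^{s}]$, use modular covariance to move $\rho_\beta^{1-s}$ past $(A^a_\nu)^\dagger$ and $\rho_\beta^{s}$ past $A^a_\nu$ so that the two exponential factors collect into a single $e^{\beta\nu}$, and then apply cyclicity once to reach $e^{\beta\nu}\,\langle A^a_\nu A\,(A^{\bar a})_{-\nu},\,B\rangle_s$. By the adjoint-reversal identity, $A^a_\nu A\,(A^{\bar a})_{-\nu}=(A^{\bar a}_{-\nu})^\dagger A\,(A^{\bar a}_{-\nu})$ is precisely the sandwich operator that $\mathcal{L}^\dagger[A]$ carries at the index $(\bar a,-\nu)$, and the weight $\gamma(\nu)e^{\beta\nu}$ equals $\gamma(-\nu)$ by the balance hypothesis $\gamma(\nu)/\gamma(-\nu)=e^{-\beta\nu}$ — exactly the weight $\mathcal{L}^\dagger$ assigns to that index.

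Summing over $(a,\nu)$ and relabeling by the involution $(a,\nu)\mapsto(\bar a,-\nu)$, which is a bijection of the index set, then yields $\sum_{a,\nu}\gamma(\nu)\langle A,(A^a_\nu)^\dagger B A^a_\nu\rangle_s=\sum_{a,\nu}\gamma(\nu)\langle (A^a_\nu)^\dagger A\,A^a_\nu,\,B\rangle_s$; combining with the (already symmetric) anticommutator piece gives $\langle A,\mathcal{L}^\dagger[B]\rangle_s=\langle\mathcal{L}^\dagger[A],B\rangle_s$ for all $A,B$ and all $s\in[0,1]$, which is the claimed $s$-DB. As a byproduct, taking $B=\mathbb{I}$ and noting $\mathcal{L}^\dagger[\mathbb{I}]=0$ recovers $\mathcal{L}[\rho_\beta]=0$. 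There is no conceptual obstacle here — this is the standard statement that Davies-type generators written in Bohr-frequency form are detailed balanced — so the only real work is sign bookkeeping: keeping the exponents in the modular-covariance step and the frequency flip under the adjoint consistent, and invoking the $\gamma$-ratio hypothesis at exactly the step where the relabeled weight must be matched. (One could alternatively just cite the Davies-generator literature, but the computation above is short and self-contained.)
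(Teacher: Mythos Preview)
Your proof is correct. The paper itself does not give a proof of this statement: it is stated as a \emph{Fact} and treated as standard background on Davies generators (the surrounding discussion points to \cite{Wocjan2021SzegedyWU}). Your direct verification via the two structural identities---modular covariance $\rho_\beta^{s}A^a_\nu=e^{-s\beta\nu}A^a_\nu\rho_\beta^{s}$ and adjoint reversal $(A^a_\nu)^\dagger=A^{\bar a}_{-\nu}$---followed by the $(a,\nu)\mapsto(\bar a,-\nu)$ relabeling is exactly the standard argument one finds in the literature, and all the bookkeeping (the $e^{\beta\nu}$ factor collecting from the two modular moves, and its cancellation against $\gamma(\nu)/\gamma(-\nu)=e^{-\beta\nu}$) is carried out correctly.
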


In this manner, the Gibbs state $\rho_\beta\propto e^{-\beta  H}$ is a fixed point of the Davies generator we designed in \cref{section:gibbs_state_prep}. However, it may not be the unique stationary state, nor may its evolution converge rapidly. To understand the rate of convergence of this process, we need a bound on its mixing time $t_{mix}(\mathcal{L})$. Physically, the mixing time provides an estimate for the thermalization time of the system. 
\begin{definition}[Mixing time]
    The mixing time $t_{mix}(\mathcal{L})$ of a Lindbladian $\mathcal{L}$ is the smallest time $t \ge 0$ for which 
    \begin{equation}
        \|e^{t \mathcal{L}}(\rho_1-\rho_2)\|_1\leq \frac{1}{2}\|\rho_1-\rho_2\|_1\quad \text{ for any two states }\rho_1, \rho_2.
    \end{equation}
\end{definition}

\noindent In what remains of this subsection, we describe two means to analyze $t_{mix}$. The first of which consists of a bound on the spectral gap of $\mathcal{L}$. Apriori, however, the super-operator $\mathcal{L}$ is not even Hermitian, and its spectral gap may not even be well-defined.
Fortunately, under an appropriate similarity transformation, we can appeal to a related Hermitian quantity known as \textit{the discriminant}:

\begin{definition}[Quantum discriminant]\label{definition:discriminant}
    Fix $s\in [0, 1]$ and a full-rank density matrix $\sigma$. The discriminant $\mathcal{K}_s$ of $\mathcal{L}$ consists of the super-operator
    \begin{equation}
         \mathcal{K}_s(\cdot) = \sigma^{-\frac{1-s}{2}}\mathcal{L}\left(\sigma^{\frac{1-s}{2}}\cdot \sigma^{\frac{s}{2}}\right)\sigma^{-\frac{s}{2}}.
    \end{equation}
\end{definition}

\begin{lemma}[\cite{Wocjan2021SzegedyWU}, Lemma 5 and 7]\label{lemma:discriminant_properties}
    The discriminant $\mathcal{K}_s$ of $\mathcal{L}$ satisfies the following properties
    \begin{enumerate}
        \item $\mathcal{L}$ satisfies $s$-DB if and only if $\mathcal{K}_s$ is Hermitian.
        \item If $\mathcal{L}$ satisfies $s$-DB, then the eigenvalues of $\mathcal{L}$ are the same as that of $\mathcal{K}_s$, which are real. 
        \item If $\mathcal{L}$ is a Davies generator satisfying the constraints of \cref{fact:detailed-balance}, then $\mathcal{K}\equiv \mathcal{K}_s$ is independent of $s\in [0, 1]$. 
    \end{enumerate}
\end{lemma}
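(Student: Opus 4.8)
The plan is to realize the discriminant as a similarity transform of $\mathcal L$ by a Hilbert--Schmidt--positive superoperator, which makes (1) and (2) essentially formal, and then to exploit the explicit Davies structure for (3). Write $\langle A,B\rangle=\Tr[A^\dagger B]$ for the plain Hilbert--Schmidt inner product and introduce the invertible superoperator $\Gamma(X):=\sigma^{\frac{1-s}{2}}X\sigma^{\frac{s}{2}}$, with inverse $\Gamma^{-1}(Y)=\sigma^{-\frac{1-s}{2}}Y\sigma^{-\frac{s}{2}}$. A one-line trace computation (diagonalizing $\sigma$) shows $\Gamma$ is self-adjoint and positive definite with respect to $\langle\cdot,\cdot\rangle$, and $\Gamma^2(X)=\sigma^{1-s}X\sigma^{s}$, so the weighted inner product is simply $\langle A,B\rangle_s=\langle A,\Gamma^2(B)\rangle$. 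By \cref{definition:discriminant}, $\mathcal K_s=\Gamma^{-1}\circ\mathcal L\circ\Gamma$.

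First I would prove (1). Rewriting the $s$-detailed balance identity via $\langle\cdot,\cdot\rangle_s=\langle\cdot,\Gamma^2(\cdot)\rangle$ and the definition of the Hilbert--Schmidt adjoint yields, for all $A,B$, $\langle A,\Gamma^2\mathcal L^\dagger[B]\rangle=\langle A,\mathcal L\Gamma^2[B]\rangle$, that is, $\Gamma^2\mathcal L^\dagger=\mathcal L\Gamma^2$ as superoperators. Conjugating by $\Gamma^{-1}$ turns this into $\Gamma\mathcal L^\dagger\Gamma^{-1}=\Gamma^{-1}\mathcal L\Gamma=\mathcal K_s$, and since $\Gamma=\Gamma^\dagger$ the left-hand side is exactly $(\Gamma^{-1}\mathcal L\Gamma)^\dagger=\mathcal K_s^\dagger$; hence $\mathcal L$ is $s$-DB iff $\mathcal K_s=\mathcal K_s^\dagger$. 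Claim (2) is then immediate: $\mathcal K_s=\Gamma^{-1}\mathcal L\Gamma$ is similar to $\mathcal L$ through the invertible map $\Gamma$, so the two share a spectrum, and under $s$-DB that spectrum is the spectrum of a Hermitian operator, hence real.

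For (3) I would substitute the explicit form of \cref{equation:Lindbladian} into $\mathcal K_s=\Gamma^{-1}\mathcal L\Gamma$ and use the Bohr-frequency property of the frequency-decomposed jumps: because $\sigma\propto e^{-\beta H}$ and $A^a_\nu=\sum_k\Pi_{k+\nu}A^a\Pi_k$ shifts the (integer) energy by exactly $\nu$, one has $\sigma^{z}A^a_\nu\sigma^{-z}=e^{-\beta\nu z}A^a_\nu$ and $\sigma^{z}(A^a_\nu)^\dagger\sigma^{-z}=e^{\beta\nu z}(A^a_\nu)^\dagger$ for every $z\in\mathbb R$, while $(A^a_\nu)^\dagger A^a_\nu$ commutes with $\sigma$. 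Pushing the half-powers $\sigma^{\pm\frac{1-s}{2}},\sigma^{\pm\frac{s}{2}}$ through the three terms of $\mathcal L$, the anticommutator terms are returned unchanged and the jump term picks up the factor $e^{\beta\nu(\frac{1-s}{2}+\frac{s}{2})}=e^{\beta\nu/2}$, which no longer depends on $s$; concretely
\[
  \mathcal K_s(X)=\sum_{a}\sum_{\nu}\gamma(\nu)\Bigl(e^{\beta\nu/2}\,A^a_\nu\,X\,(A^a_\nu)^\dagger-\tfrac12\bigl\{(A^a_\nu)^\dagger A^a_\nu,\,X\bigr\}\Bigr).
\]
As a consistency check with (1): $\gamma(\nu)e^{\beta\nu/2}$ is symmetric under $\nu\mapsto-\nu$ by the constraint $\gamma(\nu)/\gamma(-\nu)=e^{-\beta\nu}$, and together with $\{A^a\}=\{(A^a)^\dagger\}$ this re-derives Hermiticity of $\mathcal K$.

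The one genuinely delicate point is the exponent bookkeeping in (3): one must check that conjugating $A^a_\nu$ and $(A^a_\nu)^\dagger$ by each factor $\sigma^{\pm\frac{1-s}{2}},\sigma^{\pm\frac{s}{2}}$ produces scalars whose exponents combine into a single $s$-free quantity. This works precisely because the energy shift of $A^a_\nu$ coincides with its frequency label $\nu$ --- a consequence of the rescaling to integer spectra in \cref{section:gibbs_prep_appendix} --- so it is this spectral structure, and not any feature of the circuit $C$, that makes the discriminant $s$-independent. Everything else is routine manipulation of the similarity transform $\Gamma$.
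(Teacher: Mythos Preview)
Your proof is correct. Note, however, that the paper does not actually prove this lemma: it is stated as a citation to \cite{Wocjan2021SzegedyWU} (Lemmas~5 and~7) and used as a black box, so there is no ``paper's own proof'' to compare against. Your argument is a clean self-contained derivation of the cited facts: the similarity-transform viewpoint $\mathcal K_s=\Gamma^{-1}\mathcal L\Gamma$ with $\Gamma$ Hilbert--Schmidt self-adjoint makes (1) and (2) transparent, and the Bohr-frequency identity $\sigma^{z}A^a_\nu\sigma^{-z}=e^{-\beta\nu z}A^a_\nu$ is exactly the mechanism behind (3).

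One small imprecision in your closing remark: the identity $\sigma^{z}A^a_\nu\sigma^{-z}=e^{-\beta\nu z}A^a_\nu$ holds for \emph{any} Hamiltonian, not only those with integer spectra, since $A^a_\nu$ is by definition the component of $A^a$ that shifts energy by exactly $\nu$. The integer-spectrum rescaling in \cref{section:gibbs_prep_appendix} only ensures that the frequency set is discrete (so the sum over $\nu$ is finite), but it is not what makes the discriminant $s$-independent. This does not affect the validity of your computation.
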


The spectral gap $\Delta(\mathcal{L}) = \Delta(\mathcal{K}_s)$ of a given Lindbladian is defined to be that of the associated discriminant. Analyzing this gap can be a challenging task, and concrete bounds are often case-dependent. Nevertheless, it provides a powerful means to control the convergence of the time-evolution.

 \begin{lemma}[Mixing time from the Spectral Gap, \cite{Kastoryano2012QuantumLS}]
    If a Lindbladian $\mathcal{L}$ satisfies KMS reversibility with fixed point $\sigma$, then 
    \begin{equation}
        t_{mix}(\mathcal{L}) \leq \frac{\log( 2\|\sigma^{-1/2}\|)}{\Delta(\mathcal{L})},
    \end{equation}
\end{lemma}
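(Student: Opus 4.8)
The plan is to transfer the $L^1$ (trace-norm) contraction we want into an $L^2$ decay estimate for the discriminant $\mathcal{K}:=\mathcal{K}_{1/2}$ of $\mathcal{L}$ — which is exactly what the spectral gap $\Delta(\mathcal{L})=\Delta(\mathcal{K})$ controls — paying for the two changes of norm with a single factor of $\|\sigma^{-1/2}\|$. Throughout, $\langle A,B\rangle=\Tr[A^\dagger B]$ denotes the Hilbert--Schmidt inner product, and we may assume $\Delta(\mathcal{L})>0$ (otherwise the claimed bound is vacuous).

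Fix states $\rho_1,\rho_2$ and set $X:=\rho_1-\rho_2$, a traceless Hermitian operator with $\|X\|_1=\|\rho_1-\rho_2\|_1$. Let $\Phi(\cdot):=\sigma^{1/4}(\cdot)\sigma^{1/4}$, so that by \cref{definition:discriminant} one has $\mathcal{K}=\Phi^{-1}\circ\mathcal{L}\circ\Phi$, and hence $e^{t\mathcal{L}}=\Phi\circ e^{t\mathcal{K}}\circ\Phi^{-1}$. Put $Y:=\Phi^{-1}[X]=\sigma^{-1/4}X\sigma^{-1/4}$. Since $\Phi[\sqrt{\sigma}]=\sigma$ and $\mathcal{L}[\sigma]=0$, the operator $\sqrt{\sigma}$ lies in the kernel of $\mathcal{K}$; moreover $\langle\sqrt{\sigma},Y\rangle=\Tr[\sigma^{-1/4}\sqrt{\sigma}\,\sigma^{-1/4}X]=\Tr[X]=0$, so $Y$ is Hilbert--Schmidt orthogonal to $\sqrt{\sigma}$. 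Because $\mathcal{L}$ is KMS-reversible, \cref{lemma:discriminant_properties} gives that $\mathcal{K}$ is Hermitian with the same spectrum as $\mathcal{L}$; this spectrum is real and non-positive (the maps $e^{t\mathcal{L}}$ are trace-norm contractions, so eigenvalues of $\mathcal{L}$ have non-positive real part, and reality forces them $\leq 0$), with top eigenvalue $0$ and second-largest eigenvalue $-\Delta(\mathcal{L})$. Hence $\mathcal{K}\preceq-\Delta(\mathcal{L})$ on $\{\sqrt{\sigma}\}^{\perp}$, which yields the $L^2$ contraction
\begin{equation*}
\bigl\|e^{t\mathcal{K}}[Y]\bigr\|_2\;\leq\;e^{-\Delta(\mathcal{L})\,t}\,\|Y\|_2 .
\end{equation*}

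It remains to convert norms. For the upper side, the point — and the place where one must avoid a dimension factor — is the weighted comparison $\|A\|_1\leq\|\sigma^{-1/4}A\sigma^{-1/4}\|_2$, valid for every Hermitian $A$: writing $A=\sigma^{1/4}\bigl(\sigma^{-1/4}A\sigma^{-1/4}\bigr)\sigma^{1/4}$ and applying the generalized Hölder inequality for Schatten norms with exponents $(4,2,4)$ gives $\|A\|_1\leq\|\sigma^{1/4}\|_4^{2}\,\|\sigma^{-1/4}A\sigma^{-1/4}\|_2=(\Tr[\sigma])^{1/2}\,\|\sigma^{-1/4}A\sigma^{-1/4}\|_2=\|\sigma^{-1/4}A\sigma^{-1/4}\|_2$, using $\Tr[\sigma]=1$. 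Applied to $A=e^{t\mathcal{L}}[X]$ this gives $\|e^{t\mathcal{L}}[X]\|_1\leq\|\Phi^{-1}[e^{t\mathcal{L}}[X]]\|_2=\|e^{t\mathcal{K}}[Y]\|_2$. For the lower side, submultiplicativity of Schatten norms gives $\|Y\|_2=\|\sigma^{-1/4}X\sigma^{-1/4}\|_2\leq\|\sigma^{-1/4}\|_\infty^{2}\,\|X\|_2=\|\sigma^{-1/2}\|\,\|X\|_2\leq\|\sigma^{-1/2}\|\,\|X\|_1$. Chaining the three displays,
\begin{equation*}
\bigl\|e^{t\mathcal{L}}[\rho_1-\rho_2]\bigr\|_1\;\leq\;e^{-\Delta(\mathcal{L})\,t}\,\|\sigma^{-1/2}\|\,\|\rho_1-\rho_2\|_1 ,
\end{equation*}
which is at most $\tfrac12\|\rho_1-\rho_2\|_1$ as soon as $e^{\Delta(\mathcal{L})t}\geq 2\|\sigma^{-1/2}\|$, i.e.\ $t\geq\log(2\|\sigma^{-1/2}\|)/\Delta(\mathcal{L})$; this proves the stated bound on $t_{mix}(\mathcal{L})$.

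I expect the only genuinely delicate points to be (i) extracting the pure exponential $L^2$ decay from the discriminant, which is essentially handed to us by \cref{lemma:discriminant_properties} together with negative-semidefiniteness of $\mathcal{K}$; and (ii) the dimension-free norm conversion, where one must use the Hölder identity $\|\sigma^{1/4}\|_4=(\Tr[\sigma])^{1/4}=1$ rather than the naive bound $\|A\|_1\leq\sqrt{d}\,\|A\|_2$, which would replace the numerator by the much larger $\log(2\sqrt{d}\,\|\sigma^{-1/2}\|)$.
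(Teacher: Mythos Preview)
The paper does not give a proof of this lemma; it is stated as a background result imported from \cite{Kastoryano2012QuantumLS}. Your argument is correct and is essentially the standard one from that reference: pass to the discriminant via the similarity $\Phi(\cdot)=\sigma^{1/4}(\cdot)\sigma^{1/4}$, use self-adjointness and the gap of $\mathcal{K}$ to get $L^2$ decay on $\{\sqrt{\sigma}\}^\perp$, and convert norms with the H\"older trick $\|\sigma^{1/4}\|_4=1$ to avoid a dimension factor. The only implicit hypothesis worth flagging is that $\Delta(\mathcal{L})>0$ forces the $0$-eigenspace of $\mathcal{K}$ to be one-dimensional (hence exactly $\mathrm{span}\{\sqrt{\sigma}\}$), which you use when restricting to $\{\sqrt{\sigma}\}^\perp$; this is consistent with how the spectral gap is defined here.
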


We remark that the dependence on $\log \|\sigma^{-1/2}\|\approx O(\beta n)$ often-times incurs a polynomial overhead to the mixing time. The notion of a (modified) Log Sobolev inequality provides a significantly stronger means of analyzing the mixing time. To formalize this method, we first require the definition of the conditional expectation of an operator $X$, $\mathcal{E}[X] = \lim_{t\rightarrow \infty} e^{t\mathcal{L}}[X].$

\begin{definition}[Modified Logarithmic Sobolev inequality]
    The Markov semigroup $(e^{t\mathcal{L}})_{t\geq 0}$ satisfies a \emph{Modified Logarithmic Sobolev inequality} (MSLI) with constant $\alpha$ if 
    \begin{equation}
      \frac{d}{dt}D\bigg(e^{t\mathcal{L}}[\rho]||\mathcal{E}[\rho]\bigg)\bigg|_{t=0}  = \Tr \mathcal{L}[\rho](\log \rho - \log \mathcal{E}[\rho])
       \leq -  \alpha\cdot D(\rho||\mathcal{E}[\rho]) \quad \text{for each}\quad \rho,
    \end{equation}
    \noindent where $D(\rho||\sigma) = \Tr \rho (\log \rho - \log \sigma)$ is the quantum relative entropy. 
\end{definition}

In other words, a MLSI quantifies the decay of the relative entropy, which converts to a bound on the mixing time through Pinsker's inequality.

\begin{lemma}
    [Mixing time from MLSI, \cite{Kastoryano2012QuantumLS}] If a Lindbladian $\mathcal{L}$ satisfies KMS-detailed balance with fixed point $\sigma$ and a MLSI with constant $\alpha$, then 
    \begin{equation}
        t_{mix}(\mathcal{L})\leq \frac{2\cdot \log (4\cdot \log \|\sigma^{-1}\|)}{\alpha}
    \end{equation}
\end{lemma}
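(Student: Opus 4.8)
\emph{Proof plan.} The plan is to run the standard ``entropy method'': turn the $t=0$ MLSI into an exponential decay of the relative entropy along the whole semigroup, bound the initial relative entropy crudely, and convert to trace distance via Pinsker's inequality. Throughout I would work in the primitive case, where $\sigma$ is the unique stationary state so that $\mathcal{E}[\rho]=\sigma$ for every state $\rho$; this is the only regime in which $t_{mix}(\mathcal{L})$ as defined can be finite, so nothing is lost.

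\emph{Steps.} First I would fix a state $\rho$, set $f(t)=D\big(e^{t\mathcal{L}}[\rho]\,\big\|\,\sigma\big)$, and upgrade the differential inequality from $t=0$ to all $t\ge 0$: since $e^{s\mathcal{L}}$ is a semigroup and $\mathcal{E}\,e^{t\mathcal{L}}=\mathcal{E}$, one has $f(t+s)=D\big(e^{s\mathcal{L}}[e^{t\mathcal{L}}\rho]\,\big\|\,\mathcal{E}[e^{t\mathcal{L}}\rho]\big)$, so applying the MLSI to the state $e^{t\mathcal{L}}[\rho]$ and differentiating at $s=0$ gives $f'(t)\le -\alpha f(t)$, whence Gr\"onwall's inequality gives $f(t)\le e^{-\alpha t}D(\rho\|\sigma)$. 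Next I would bound the initial value: since $\Tr[\rho\log\rho]\le 0$ and $-\log\sigma\preceq \log\|\sigma^{-1}\|\cdot\mathbb{I}$ (as $\sigma\succeq\|\sigma^{-1}\|^{-1}\mathbb{I}$), one gets $D(\rho\|\sigma)=\Tr[\rho(\log\rho-\log\sigma)]\le \log\|\sigma^{-1}\|$, hence $D\big(e^{t\mathcal{L}}[\rho]\,\big\|\,\sigma\big)\le e^{-\alpha t}\log\|\sigma^{-1}\|$. Finally, Pinsker's inequality gives $\|e^{t\mathcal{L}}[\rho]-\sigma\|_1\le\sqrt{2\,e^{-\alpha t}\log\|\sigma^{-1}\|}$ for every $\rho$, and for arbitrary $\rho_1,\rho_2$ I would write $\rho_1-\rho_2=\tfrac{\delta}{2}(\mu_+-\mu_-)$ with $\delta=\|\rho_1-\rho_2\|_1$ and $\mu_\pm$ the normalized positive and negative parts, so the triangle inequality yields $\|e^{t\mathcal{L}}[\rho_1-\rho_2]\|_1\le \delta\sqrt{2\,e^{-\alpha t}\log\|\sigma^{-1}\|}$. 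Demanding $\sqrt{2\,e^{-\alpha t}\log\|\sigma^{-1}\|}\le\tfrac12$ and solving for $t$ yields $t\ge\frac{1}{\alpha}\log\big(8\log\|\sigma^{-1}\|\big)$, which is at most $\frac{2}{\alpha}\log\big(4\log\|\sigma^{-1}\|\big)$ whenever $\log\|\sigma^{-1}\|\ge\tfrac12$ (always the case for a finite-temperature Gibbs state), giving the claimed bound.

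\emph{Main obstacle.} This is a standard fact, so the difficulty is only technical bookkeeping. The two points I would be careful about are: (i) justifying the time-uniform differential inequality -- that $f$ is differentiable along the flow, and that the $t=0$ MLSI may legitimately be invoked at the displaced state $e^{t\mathcal{L}}[\rho]$, which relies on $\mathcal{E}\,e^{t\mathcal{L}}=\mathcal{E}$ and on KMS-detailed balance making $\mathcal{E}$ a genuine $\sigma$-preserving conditional expectation; and (ii) tracking constants through Pinsker and the positive/negative-part decomposition so that the (slightly lossy) published form $\frac{2}{\alpha}\log(4\log\|\sigma^{-1}\|)$ drops out rather than the marginally sharper $\frac{1}{\alpha}\log(8\log\|\sigma^{-1}\|)$.
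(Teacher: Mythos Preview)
Your proposal is correct and follows the standard argument. Note, however, that the paper does not actually prove this lemma: it is quoted as a known result from \cite{Kastoryano2012QuantumLS} and used as a black box. Your writeup is precisely the textbook derivation one finds in that reference (Gr\"onwall on the relative entropy, the crude bound $D(\rho\|\sigma)\le\log\|\sigma^{-1}\|$, Pinsker, and the Jordan decomposition of $\rho_1-\rho_2$), so there is nothing to compare.
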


This \textit{polylogarithmic} overhead in system size is known as \textit{rapid mixing}. Moreover, if given an additional entangled reference system $R$ the semigroup $(e^{t\mathcal{L}}\otimes \mathbb{I}_R)_{t\geq 0}$ satisfies an MSLI, then $\mathcal{L}$ is said to satisfy a \textit{complete} modified logarithmic Sobolev inequality (CMLSI).

\subsection{Analysis}

The main result of this subsection is a bound on the mixing time of our family of Lindbladians, 

\begin{lemma}\label{lemma:mixingtimebound}
    The mixing time of our family of Lindbladians $\mathcal{L}$ defined in \cref{equation:Lindbladian} is bounded by
    \begin{equation}
        t_{mix}(\mathcal{L}) = O(4^\ell \cdot  e^\beta\cdot \log n).
    \end{equation}
\end{lemma}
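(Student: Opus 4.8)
The plan is to combine the modified log-Sobolev inequality for $\mathcal{L}$ (\cref{lemma:rapid-mixing}) with the standard bound ``mixing time from MLSI'' quoted in the preliminaries. By \cref{lemma:rapid-mixing}, $\mathcal{L}$ satisfies an MLSI with constant $\alpha = \Omega(4^{-\ell} e^{-\beta})$. By \cref{fact:detailed-balance}, $\mathcal{L}$ is $s$-detailed balanced with respect to the Gibbs state $\rho_\beta$ for all $s\in[0,1]$, in particular it is KMS-detailed balanced, so the hypotheses of the MLSI-to-mixing-time lemma are met with fixed point $\sigma = \rho_\beta$. Therefore
\begin{equation}
    t_{mix}(\mathcal{L})\leq \frac{2\log\big(4\log\|\rho_\beta^{-1}\|\big)}{\alpha} = O\big(4^\ell e^\beta\big)\cdot \log\big(4\log\|\rho_\beta^{-1}\|\big).
\end{equation}

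The remaining work is to estimate $\log\|\rho_\beta^{-1}\|$ and check it contributes only a $\log$ factor. Since $H = \sum_i C(\ketbra{1}_i\otimes\mathbb{I})C^\dagger$ is commuting with integer spectrum contained in $\{0,1,\dots,n\}$, the smallest eigenvalue of $\rho_\beta\propto e^{-\beta H}$ corresponds to energy $n$, while the normalization $Z = \mathrm{Tr}\,e^{-\beta H} \geq 1$ (the ground state contributes $e^0=1$). Hence $\|\rho_\beta^{-1}\| = Z\cdot e^{\beta n}\leq 2^n e^{\beta n}$, so $\log\|\rho_\beta^{-1}\| = O(\beta n)$ and $\log(4\log\|\rho_\beta^{-1}\|) = O(\log(\beta n)) = O(\log n)$ for constant $\beta$ (more generally this only costs an additive $\log\beta$). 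Substituting gives $t_{mix}(\mathcal{L}) = O(4^\ell e^\beta \log n)$, as claimed.

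I do not expect a genuine obstacle here: this lemma is a bookkeeping corollary of the real technical input, which is the MLSI bound \cref{lemma:rapid-mixing}. The one point that requires a line of care is confirming that the MLSI constant from \cref{lemma:rapid-mixing} is stated with respect to the conditional expectation $\mathcal{E}[\rho]$ onto the fixed-point algebra and that, for our irreducible Davies generator, $\mathcal{E}[\rho] = \rho_\beta$ (uniqueness of the stationary state), so that $D(\rho\|\mathcal{E}[\rho]) = D(\rho\|\rho_\beta)$ and the two formulations of the MLSI agree; this follows because the jump operators span enough of the Pauli group on each lightcone to make $\rho_\beta$ the unique fixed point. With that identification in place, the displayed chain of inequalities is immediate.
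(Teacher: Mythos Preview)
Your proposal is correct and matches the paper's own argument: the paper proves the MLSI with constant $\Omega(4^{-\ell}e^{-\beta})$ (via the convex combination of Claims~\ref{claim:analysis-ni-gapped}--\ref{claim:analysis-mlsi}, which is exactly \cref{lemma:rapid-mixing}), and then invokes the same ``mixing time from MLSI'' lemma from the preliminaries together with the estimate $\log\|\rho_\beta^{-1}\| = O(\beta n)$ to obtain the stated bound. Your closing remark about uniqueness of the fixed point is also in line with the paper, which establishes it through the positive spectral gap of $\mathcal{L}_{\noti}$ and the convex combination.
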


The starting point of our analysis is based on that of a much simpler Lindbladian, namely, that corresponding to the trivial circuit $C=\mathbb{I}$. In this setting, both the associated parent Hamiltonian, and the associated Lindbladian, are a sum over non-interacting, single-qubit terms:
\begin{equation}
   H_{\noti} = \sum_i \ketbra{1}_i \text{ and } \mathcal{L}_{\noti} = \sum_{i\in [n]} \mathcal{L}^{i}_{single}, \quad\text{where}\quad \mathcal{L}^{i}_{single}[\sigma_\beta^i] = 0\quad \text{and}\quad\sigma_\beta^i \propto e^{-\beta \ketbra{1}_i}.
\end{equation}

\noindent The jump operators of $\mathcal{L}_{single}^i$ are simply single-qubit Pauli operators, and the single-qubit Gibbs state $\sigma_\beta^i$ is its fixed point. Using now standard techniques, one can prove that this non-interacting Lindbladian is both gapped and mixes rapidly:

\begin{claim}
    [The Non-Interacting Lindbladian is rapidly mixing]\label{claim:analysis-ni-gapped} The non-interacting Lindbladian $\mathcal{L}_{\noti}$ has a constant spectral gap $\Delta(\mathcal{L}_{\noti})\geq 4^{-1}$ and satisfies a MSLI with constant $\alpha_{\noti} = \Omega(e^{-\beta})$.
\end{claim}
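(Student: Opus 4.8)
The plan is to reduce both assertions to a single qubit and then compute. Since $\mathcal{L}_\noti=\sum_{i\in[n]}\mathcal{L}^i_{single}$ is a sum of Lindbladians supported on disjoint single qubits, they mutually commute and share the product fixed point $\sigma_\beta=\bigotimes_i\sigma^i_\beta$. For the spectral gap, the discriminant $\mathcal{K}_\noti=\sum_i\mathcal{K}^i_{single}$ (with $\mathcal{K}^i_{single}$ acting on qubit $i$ and the identity elsewhere) is a sum of commuting negative-semidefinite Hermitian operators, each with a one-dimensional kernel, so its spectrum is the sumset of the single-site spectra and its smallest nonzero eigenvalue in absolute value is $\min_i\Delta(\mathcal{L}^i_{single})$; by permutation symmetry of $H_\noti$ this equals the single-qubit gap. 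For the MLSI, the plan is to work with the \emph{complete} MLSI, which tensorizes exactly over disjoint subsystems, so that $\alpha_{CMLSI}(\mathcal{L}_\noti)=\min_i\alpha_{CMLSI}(\mathcal{L}^i_{single})$, and CMLSI implies MLSI. It therefore suffices to prove that the single-qubit Davies generator $\mathcal{L}_{single}$ has spectral gap $\geq 1/4$ and CMLSI constant $\Omega(e^{-\beta})$.

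For the gap, I would use that $\mathcal{L}_{single}$ is phase covariant: because $H_\noti$ is diagonal and the jumps are (frequency components of) single-qubit Paulis, $\mathcal{L}_{single}$ preserves the splitting of a qubit operator into its diagonal and off-diagonal parts. On the diagonal part it is the generator of a classical two-state Markov chain, with transition rates proportional to the Glauber weights $\gamma(\pm1)=(1+e^{\mp\beta})^{-1}$ and a stationary distribution of minimum mass $\Theta(e^{-\beta})$; on the two-dimensional off-diagonal part it acts as a scalar decay (a dephasing contribution from the $Z$ jump plus damping from the $X$ and $Y$ jumps). Diagonalizing these small blocks explicitly — using the identity $\gamma(1)+\gamma(-1)=1$ — gives all nonzero eigenvalues of $\mathcal{K}_{single}$ and a gap bounded below by $1/4$.

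For the single-qubit CMLSI, I would again split into the classical two-state chain and the off-diagonal decay. The modified (entropy-decay) log-Sobolev constant of the two-state chain is classically computable; since its transition rates are $\Omega(e^{-\beta})$ and its stationary distribution has minimum mass $\Omega(e^{-\beta})$, standard estimates give a constant $\Omega(\min(1,1/\beta))=\Omega(e^{-\beta})$. The off-diagonal part is a decoherence semigroup with an $\Omega(1)$ decay rate, hence CMLSI constant $\Omega(1)$ by standard results. Combining the two commuting pieces gives $\alpha_{CMLSI}(\mathcal{L}_{single})=\Omega(e^{-\beta})$; alternatively one can short-circuit this step by invoking the generic lower bound $\alpha_{CMLSI}\gtrsim\Delta(\mathcal{L}_{single})/\log\|\sigma_\beta^{-1}\|=\Omega(1/\beta)=\Omega(e^{-\beta})$ valid for any primitive detailed-balanced qubit Lindbladian. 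Tensorization then yields $\alpha_\noti=\Omega(e^{-\beta})$ and concludes the claim.

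The eigenvalue computation behind the gap is routine; the delicate part is the MLSI. Two points need care: (i) obtaining a clean, $\beta$-explicit lower bound on the entropy-decay rate of a qubit Davies generator — a nonlinear statement that is cleanest to handle by reducing the diagonal sector to a classical two-state chain, where modified log-Sobolev is well understood, and treating the decoherence sector separately; and (ii) promoting the single-qubit bound to all $n$ qubits without loss in the constant, which is precisely why one passes through the complete MLSI — ordinary MLSI is not known to tensorize, whereas CMLSI tensorizes perfectly over disjoint supports and implies MLSI.
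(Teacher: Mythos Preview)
Your proposal is correct and follows essentially the same route as the paper: reduce to a single qubit, compute the single-qubit gap explicitly, convert gap to CMLSI via a black-box bound, and then tensorize. The paper computes the $4\times 4$ vectorized discriminant directly (rather than your diagonal/off-diagonal splitting, which is equivalent), invokes the gap-to-CMLSI bound of \cite{gao2022complete} in the form $\alpha_c\ge \Delta\cdot\|\sigma^{-1}\|^{-1}/D^2$ (rather than the $\Delta/\log\|\sigma^{-1}\|$ form you cite---both yield $\Omega(e^{-\beta})$ here), and performs the tensorization by hand via strong subadditivity of the commuting conditional expectations $\sum_i D(\rho\|\mathcal{E}_i[\rho])\ge D(\rho\|\prod_i\mathcal{E}_i[\rho])$ rather than invoking CMLSI tensorization as a black box.

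One caution: your first approach to the single-qubit CMLSI---splitting into a classical two-state chain on the diagonal sector and a decoherence semigroup on the off-diagonal sector, then ``combining the two commuting pieces''---is not straightforward. The diagonal and off-diagonal parts are invariant subspaces of one Lindbladian, not separate Lindbladians whose CMLSI constants one can simply take the minimum of; $\log\rho$ does not decompose along this splitting, so the entropy production does not factor cleanly. Your alternative (the generic gap-to-CMLSI bound) is the clean route, and is exactly what the paper does.
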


The \textit{unique} fixed point of $\mathcal{L}_{NI}$ is thus the tensor product state $\sigma_\beta = \otimes_i \sigma_\beta^i \propto e^{-\beta H_{\noti}}$. We defer a proof of \cref{claim:analysis-ni-gapped} to the next subsection. In the rest of this subsection, we show how to relate our Lindbladian $\mathcal{L}$ of \cref{equation:Lindbladian} (implicitly defined by the quantum circuit $C$), to $\mathcal{L}_{\noti}$, and moreover how to inherit its rapid mixing properties. 

\begin{claim}
    [A Convex Combination of Lindbladians]\label{claim:analysis-convex-combination} In a basis rotated by $C$, the Lindbladian $\mathcal{L}$ can be written as the convex combination
    \begin{equation}
        C^\dagger \mathcal{L}[C\cdot C^\dagger]C = q\cdot \mathcal{L}_{\noti}[\cdot] +(1-q) \cdot \mathcal{L}_{rest}[\cdot ],
    \end{equation}
    of two Lindbladians $\mathcal{L}_{\noti}, \mathcal{L}_{rest}$ which share the fixed point $\sigma_\beta = \otimes_i \sigma_\beta^i$. Moreover, the parameter $q = 4^{1-\ell}$ depends only on the lightcone size of $C$.  
\end{claim}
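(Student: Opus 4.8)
The plan is to compute the rotated Lindbladian $\widetilde{\mathcal{L}} = C^\dagger \mathcal{L}[C\cdot C^\dagger] C$ directly from its definition in \cref{equation:Lindbladian} and show that the only place $C$ enters is through the ``doubled'' Pauli averages $\E_{P\sim\mc P_\ell}[C^\dagger P C \otimes C^\dagger P C]$ on each lightcone $\mathsf{L}_i$. The key observation is that all the $C$-dependence of $\mathcal{L}$ is packaged in the frequency-resolved jump operators $A^a_\nu = \sum_k \Pi_{k+\nu} A^a \Pi_k$, and in the dissipator each $A^a_\nu$ appears paired with its adjoint (in the term $A^a_\nu \rho (A^a_\nu)^\dagger$ and in the anticommutator term), so that after conjugating by $C$ on both sides the circuit shows up twice, with one factor transposed relative to the other. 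Since the energy projectors $\Pi_k = C(\sum_{|x|=k}\ketbra{x})C^\dagger$ and the weights $\gamma(\nu)$ do not depend on $C$ beyond this conjugation, the whole rotated generator becomes a function of the second-moment operator $M_i := \E_{P\sim\mc P_\ell}[C^\dagger P C \otimes \overline{C^\dagger P C}]$ (or its natural contraction), one for each $i\in[n]$.

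Next I would run the four-step lightcone identity sketched in \cref{fig:lightconeargument}: apply $\E_{P\sim\mc P_\ell}[P\otimes P] = 2^{-\ell}\,\mathrm{SWAP}_{\mathsf{L}_i}$ to collapse the Pauli average into a swap on the lightcone wires; use that $C$ is low-depth so the gates inside $\mathsf{L}_i$ that act on the swapped wires cancel against their inverses in the $C^\dagger$ copy, leaving a bare swap on $\mathsf{L}_i$ conjugated by only the ``boundary'' portion of $C$; re-expand the swap back into a Pauli average $\E_{P\sim\mc P_\ell}[P\otimes P]$; and finally split $\mc P_\ell = \mc P_i \sqcup (\mc P_\ell \setminus \mc P_i)$ according to whether the Pauli is supported only on qubit $i$. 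In the first bucket (probability $4/4^\ell = 4^{1-\ell}$ after normalization) qubit $i$ is completely disentangled from the rest of $C$, so all remaining gates cancel and the contribution is exactly the single-qubit Pauli channel on qubit $i$ — i.e. the jump operators of $\mathcal{L}_\noti$. Collecting these buckets over all $i$ gives $\widetilde{\mathcal{L}} = q\,\mathcal{L}_\noti + (1-q)\,\mathcal{L}_{rest}$ with $q = 4^{1-\ell}$, where $\mathcal{L}_{rest}$ is the (not necessarily interpretable) Lindbladian built from the remaining Pauli buckets together with the unchanged weights $\gamma(\nu)$ and energy projectors.

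Finally I would verify the two structural claims about the pieces. That $\mathcal{L}_\noti$ has fixed point $\sigma_\beta = \otimes_i \sigma_\beta^i$ is immediate from \cref{claim:analysis-ni-gapped} and the factorized form of $H_\noti$; that the \emph{rotated} generator $\widetilde{\mathcal{L}}$ has fixed point $\sigma_\beta$ follows because $\mathcal{L}$ has fixed point $\rho_\beta = C\sigma_\beta C^\dagger$ by \cref{fact:detailed-balance}, and conjugation by $C$ carries $\rho_\beta$ to $\sigma_\beta$. Since $q\in(0,1)$ and $\mathcal{L}_\noti$ already kills $\sigma_\beta$, linearity forces $\mathcal{L}_{rest}[\sigma_\beta] = \frac{1}{1-q}(\widetilde{\mathcal{L}}[\sigma_\beta] - q\,\mathcal{L}_\noti[\sigma_\beta]) = 0$, so both share the claimed fixed point; one should also check $\mathcal{L}_{rest}$ is genuinely a valid Lindbladian (completely positive dissipator), which is clear since it is a nonnegative combination of terms of the standard GKLS form. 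The main obstacle I anticipate is step (ii): making rigorous the ``gates inside the lightcone cancel with their inverses'' claim — one has to be careful that $\mathsf{L}_i$ is defined exactly as the set of wires reachable from $i$, argue that every gate of $C$ either lies entirely inside $\mathsf{L}_i$ (and hence is hit by the swap and cancels) or entirely outside (and hence is untouched by the swap), with no gate straddling the boundary, and track that the leftover ``boundary'' gates in the two copies of $C$ are exactly mirror images so they reassemble correctly after re-expanding the swap. Getting the bookkeeping of which wires are acted on, and the normalization constant $4^{\ell-1}$ versus $4^\ell$, exactly right is where the care is needed.
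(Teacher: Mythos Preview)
Your proposal is correct and follows the same approach as the paper --- specifically, the pictorial SWAP-based lightcone argument sketched in the technical overview (\cref{fig:lightconeargument}). The paper's formal proof recasts the same idea slightly differently: rather than manipulating SWAPs explicitly, it observes that the Lindbladian depends on the jump operators only through their second moments, and that by the 1-design property one may substitute $\mathcal{A}_i \to U_i \mathcal{A}_i U_i^\dagger$ for any unitary $U_i$ supported on $\mathsf{L}_i$ without changing $\mathcal{L}$; choosing $U_i$ to be the gates of $C$ inside the lightcone of qubit $i$ then makes the cancellation $C^\dagger U_i = \mathbb{I}$ immediate on single-qubit Paulis at $i$. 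This formulation sidesteps exactly the boundary-gate bookkeeping you flagged as the main obstacle in step (ii). The paper also argues directly that $\mathcal{L}_{rest}$ is itself a Davies generator on the complementary set of jump operators (hence detailed-balanced with fixed point $\sigma_\beta$), which is a bit stronger than your linearity argument, though for the downstream MLSI application your version suffices.
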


A proof of which we also defer to a future subsection. The convex combination claim above is the heart of our analysis, as it enables us to inherit the gap and mixing properties of $\mathcal{L}_{NI}$, without knowing properties of $\mathcal{L}_{rest}$ except for its (common) fixed point. To conclude this subsection, we present a proof of the MLSI of $\mathcal{L}:$

\begin{claim}
    [The Modified Log-Sobolev Inequality] \label{claim:analysis-mlsi} The Lindbladian $\mathcal{L}$ satisfies a MSLI with constant $\alpha \geq q\cdot \alpha_{NI} = \Omega(4^{-\ell}\cdot e^{-\beta}).$
\end{claim}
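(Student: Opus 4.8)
The plan is to derive the MLSI for $\mathcal{L}$ directly from \cref{claim:analysis-convex-combination} together with \cref{claim:analysis-ni-gapped}, using the basic observation that the relevant entropy-production functionals behave well under convex combinations. First I would recall that the MLSI constant is characterized by a ratio: $\alpha(\mathcal{L})$ is the largest $\alpha$ such that $\mathrm{EP}_{\mathcal{L}}(\rho) \geq \alpha \cdot D(\rho \| \mathcal{E}[\rho])$ for all $\rho$, where $\mathrm{EP}_{\mathcal{L}}(\rho) := -\Tr\mathcal{L}[\rho](\log\rho - \log\mathcal{E}[\rho])$ is the entropy production. Since the MLSI is invariant under conjugating both $\mathcal{L}$ and the state by the unitary $C$ (relative entropy is unitarily invariant, and the conditional expectation transforms covariantly), it suffices to prove the MLSI for $\tilde{\mathcal{L}} := C^\dagger \mathcal{L}[C \cdot C^\dagger] C$ with respect to states $\tilde\rho = C^\dagger \rho C$ and the fixed point $\sigma_\beta$.

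The key step is then the additivity (really superadditivity) of entropy production under the convex decomposition $\tilde{\mathcal{L}} = q\,\mathcal{L}_{\noti} + (1-q)\,\mathcal{L}_{rest}$. Because $\mathcal{L}_{\noti}$ and $\mathcal{L}_{rest}$ share the same fixed point $\sigma_\beta$, and (one needs to check) the same conditional expectation $\mathcal{E}$ onto the fixed-point algebra — here one uses that $\mathcal{L}_{\noti}$ already has $\sigma_\beta = \otimes_i \sigma_\beta^i$ as its \emph{unique} fixed point, so $\mathcal{E}[\cdot] = \Tr[\cdot\,\sigma_\beta]\cdot\mathbb{I}$ up to normalization and the conditional expectation for $\tilde{\mathcal L}$ projects onto (a subalgebra of) the same space — the entropy production splits linearly: $\mathrm{EP}_{\tilde{\mathcal{L}}}(\tilde\rho) = q\,\mathrm{EP}_{\mathcal{L}_{\noti}}(\tilde\rho) + (1-q)\,\mathrm{EP}_{\mathcal{L}_{rest}}(\tilde\rho)$, since $\Tr\mathcal{L}[\tilde\rho](\log\tilde\rho - \log\mathcal{E}[\tilde\rho])$ is linear in $\mathcal{L}$. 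Now $\mathrm{EP}_{\mathcal{L}_{rest}}(\tilde\rho) \geq 0$ always (entropy production is nonnegative for any detailed-balanced Lindbladian with the given fixed point, by monotonicity of relative entropy under the CPTP maps $e^{t\mathcal{L}_{rest}}$), so
\begin{equation}
    \mathrm{EP}_{\tilde{\mathcal{L}}}(\tilde\rho) \;\geq\; q\,\mathrm{EP}_{\mathcal{L}_{\noti}}(\tilde\rho) \;\geq\; q\,\alpha_{\noti}\,D(\tilde\rho\,\|\,\mathcal{E}[\tilde\rho]),
\end{equation}
where the last inequality is exactly the MLSI for $\mathcal{L}_{\noti}$ from \cref{claim:analysis-ni-gapped}. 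Undoing the rotation by $C$ gives $\alpha(\mathcal{L}) \geq q\,\alpha_{\noti} = \Omega(4^{-\ell}e^{-\beta})$.

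The main obstacle I anticipate is the bookkeeping around the conditional expectation $\mathcal{E}$: one needs that the fixed-point algebra of $\tilde{\mathcal{L}}$ coincides with (or is contained in) that of $\mathcal{L}_{\noti}$ so that $D(\tilde\rho \| \mathcal{E}_{\tilde{\mathcal{L}}}[\tilde\rho])$ and $D(\tilde\rho \| \mathcal{E}_{\mathcal{L}_{\noti}}[\tilde\rho])$ are comparable in the right direction. Since $\sigma_\beta$ is the unique fixed point of $\mathcal{L}_{\noti}$, its fixed-point algebra is trivial ($\mathbb{C}\mathbb{I}$) and $\mathcal{E}_{\mathcal{L}_{\noti}}[\tilde\rho] = \sigma_\beta$; for $\tilde{\mathcal{L}}$ the fixed-point algebra could only be larger would-be problematic, but in fact since $\tilde{\mathcal{L}} \succeq q\mathcal{L}_{\noti}$ in the sense of Dirichlet forms, any $\tilde{\mathcal{L}}$-fixed point is also $\mathcal{L}_{\noti}$-fixed, hence $\tilde{\mathcal{L}}$ also has unique fixed point $\sigma_\beta$ and $\mathcal{E}_{\tilde{\mathcal{L}}}[\tilde\rho] = \sigma_\beta$ too; so in fact $D(\tilde\rho\|\mathcal{E}_{\tilde{\mathcal L}}[\tilde\rho]) = D(\tilde\rho\|\sigma_\beta) = D(\tilde\rho\|\mathcal{E}_{\mathcal{L}_{\noti}}[\tilde\rho])$ and the two entropies match exactly. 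The only remaining care is to confirm $\mathcal{L}_{rest}$ is genuinely a valid (completely positive, detailed-balanced) Lindbladian so that its entropy production is nonnegative — this is part of the content of \cref{claim:analysis-convex-combination}, which asserts it shares the fixed point $\sigma_\beta$ — and to invoke the standard fact (e.g. via Spohn's inequality / monotonicity of $D$ under $e^{t\mathcal{L}_{rest}}$) that $\mathrm{EP}_{\mathcal{L}_{rest}} \geq 0$. Everything else is the linearity of $\Tr\mathcal{L}[\rho](\cdots)$ in $\mathcal{L}$ and unitary invariance of relative entropy, which are routine.
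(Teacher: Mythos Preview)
Your proposal is correct and follows essentially the same approach as the paper's proof: rotate by $C$, split the entropy production linearly along the convex decomposition $\tilde{\mathcal{L}} = q\,\mathcal{L}_{\noti} + (1-q)\,\mathcal{L}_{rest}$, apply the MLSI for $\mathcal{L}_{\noti}$ to the first term, and use monotonicity of relative entropy (data processing) to show the second term is nonpositive. Your additional care about the conditional expectation (arguing that $\mathcal{E}_{\tilde{\mathcal{L}}}[\rho] = \sigma_\beta$ via uniqueness of the fixed point of $\mathcal{L}_{\noti}$) is in fact more explicit than the paper, which simply writes $\log\sigma_\beta$ in place of $\log\mathcal{E}[\rho]$ without comment.
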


\begin{proof}

    [of \cref{claim:analysis-mlsi}] From \cref{claim:analysis-convex-combination}, we can write our Lindbladian $\mathcal{L}$ in a basis rotated by the circuit $C$ as a convex combination
    \begin{equation}
       \Tilde{\mathcal{L}} =  C^\dagger\mathcal{L}[C\cdot C^\dagger]C = q\cdot \mathcal{L}_{\noti}[\cdot] + (1-q)\cdot \mathcal{L}_{rest}[\cdot]
    \end{equation}

    \noindent  Since relative entropy is basis independent, proving a MLSI for $\Tilde{\mathcal{L}}$ similarly implies one for $\mathcal{L}$ with the same constant. To do so, we begin by expressing the ``entropy production rate" as a convex combination. 
    \begin{align}
      \ep_{\Tilde{\mathcal{L}}}(\rho) &=  \Tr[\Tilde{\mathcal{L}} [\rho](\log\rho-\log \sigma_\beta)]=  \\ &=  q \Tr[\mathcal{L}_{\noti}[\rho](\log\rho-\log \sigma_\beta)]+ (1-q) \Tr[\mathcal{L}_{rest}[\rho](\log\rho-\log \sigma_\beta)] 
    \end{align}
    To the first term on the RHS above, we can simply apply the MLSI for the non-interacting Lindbladian \cref{claim:analysis-ni-gapped}:
    \begin{equation}
        \Tr[\mathcal{L}_{\noti}[\rho](\log\rho-\log \sigma_\beta)] \leq -\alpha_\noti \cdot D(\rho||\sigma_\beta).
    \end{equation}

In turn, we claim that the second term on the RHS above is non-positive. Indeed, note that by \cref{claim:analysis-convex-combination}, $\sigma_\beta$ is a fixed point of $\mathcal{L}_{rest}$. The Data-processing inequality for the relative entropy then tells us that 
    \begin{equation}
        \Tr[\mathcal{L}_{rest}[\rho](\log\rho-\log \sigma_\beta)] = \frac{d}{dt}D(e^{t\mathcal{L}_{rest}}[\rho]||\sigma_\beta)\bigg|_{t=0} = \frac{d}{dt}D(e^{t\mathcal{L}_{rest}}[\rho]||e^{t\mathcal{L}_{rest}}[\sigma_\beta])\bigg|_{t=0} \leq 0.
    \end{equation}
Put together, we conclude $\ep_{\Tilde{\mathcal{L}}}(\rho) \leq -q\cdot \alpha_\noti\cdot D(\rho||\sigma_\beta)$.
\end{proof}

\subsection{The non-interacting Lindbladian is gapped (\texorpdfstring{\cref{claim:analysis-ni-gapped}}{})}

We dedicate this subsection to an analysis of the non-interacting Lindbladian $\mathcal{L}_{NI}$ (\cref{claim:analysis-ni-gapped}). 

\begin{lemma}\label{lemma:analysis_single_qubit_gapped}
    The spectral gap of the single-qubit Lindbladian $\mathcal{L}_{single}$ is $\Delta(\mathcal{L}_{single})\geq 4^{-1}$.
\end{lemma}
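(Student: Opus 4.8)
The plan is to diagonalize $\mathcal{L}_{single}$ directly, since it is a super-operator on the four-dimensional space $\mathcal{M}_2(\mathbb{C})$ of $2\times 2$ matrices and its spectrum is therefore fully explicit. By \cref{lemma:discriminant_properties}, $\mathcal{L}_{single}$ is $s$-detailed balanced with respect to its fixed point $\sigma \propto e^{-\beta\ketbra{1}}$, so its eigenvalues coincide with those of the Hermitian discriminant; they are real and $\leq 0$, with $0$ attained only on (the vectorization of) the fixed point, and $\Delta(\mathcal{L}_{single})$ is the magnitude of the largest non-zero eigenvalue. Hence it suffices to list finitely many numbers.

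First I would record the jump data. For $h=\ketbra{1}$ the energy projectors are $\Pi_0=\ketbra{0}$, $\Pi_1=\ketbra{1}$, and the $\ell=1$ jump operators are $\tfrac12 P$ for $P\in\{I,X,Y,Z\}$ (from \cref{equation:Lindbladian} with the $2^{-\ell}$ normalization). Since $I$ and $Z$ commute with $h$, they carry only the frequency $\nu=0$; while $X$ and $Y$ carry only $\nu=\pm 1$, with $\tfrac12 X_{+1},\tfrac12 Y_{+1}$ proportional to $\ket{1}\bra{0}$ and $\tfrac12 X_{-1},\tfrac12 Y_{-1}$ proportional to $\ket{0}\bra{1}$ (the phases in the decomposition of $Y$ cancel between the gain term $A^a_\nu\rho (A^a_\nu)^\dagger$ and the loss term). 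Because each $A^a_\nu$ maps the energy-$k$ eigenspace into the energy-$(k+\nu)$ one, $\mathcal{L}_{single}$ preserves the orthogonal splitting $\mathcal{M}_2(\mathbb{C})=\mathrm{span}\{\ketbra{0},\ketbra{1}\}\oplus \mathbb{C}\ket{0}\bra{1}\oplus \mathbb{C}\ket{1}\bra{0}$ into a ``population'' block and two ``coherence'' blocks, so I can analyze each block separately.

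On the population block the $\nu=0$ jumps ($I$ and $Z$) contribute nothing — their gain and loss parts cancel on diagonal states — and the $\nu=\pm1$ jumps ($X$ and $Y$) make $\mathcal{L}_{single}$ restrict to the generator $\begin{pmatrix} -\gamma(1)/2 & \gamma(-1)/2 \\ \gamma(1)/2 & -\gamma(-1)/2\end{pmatrix}$ of a two-state classical Markov chain, whose unique non-zero eigenvalue is $-\tfrac12\big(\gamma(1)+\gamma(-1)\big)$. On each coherence block the basis operator $\ket{0}\bra{1}$ (resp. $\ket{1}\bra{0}$) is an eigenvector: the gain terms vanish, the $\nu=0$ ($Z$) term contributes a dephasing rate $\tfrac14$, and the $\nu=\pm1$ ($X,Y$) terms contribute a decay rate $\tfrac14\big(\gamma(1)+\gamma(-1)\big)$, so the eigenvalue is $-\tfrac14-\tfrac14\big(\gamma(1)+\gamma(-1)\big)$. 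Using the Glauber identity $\gamma(1)+\gamma(-1)=\tfrac{1}{1+e^{-\beta}}+\tfrac{1}{1+e^{\beta}}=1$, all of these are $\beta$-independent constants, and their minimum magnitude is at least $4^{-1}$; hence $\Delta(\mathcal{L}_{single})\geq 4^{-1}$.

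There is no essential obstacle here — it is a finite, explicit computation on a $4\times 4$ super-operator — and the only care needed is bookkeeping: specializing the $2^{-\ell}$ normalization to $\ell=1$, carrying out the frequency decomposition $A^a=\sum_\nu A^a_\nu$ and checking the phase cancellation for $Y$, and verifying that the $\nu=0$ jumps act purely by dephasing (no population transport), so that only $X$ and $Y$ govern the relaxation of populations. From here \cref{claim:analysis-ni-gapped} follows, since $\mathcal{L}_{\noti}=\sum_i \mathcal{L}^i_{single}$ is a sum of commuting single-qubit generators acting on disjoint qubits with common product fixed point $\sigma_\beta=\bigotimes_i \sigma_\beta^i$, so its spectral gap is the minimum of the individual gaps.
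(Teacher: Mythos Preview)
Your proof is correct and is essentially the same computation as the paper's: both explicitly diagonalize a $4\times 4$ object associated to $\mathcal{L}_{single}$ and read off the non-zero eigenvalues. The only organizational difference is that the paper passes through the vectorized discriminant $\mathbf{K}_{single}$ (which by \cref{lemma:discriminant_properties} has the same spectrum as $\mathcal{L}_{single}$) and writes it as an explicit $4\times4$ matrix in the basis $\ket{00},\ket{01},\ket{10},\ket{11}$, whereas you work directly on $\mathcal{L}_{single}$ and exploit the population/coherence block structure of $\mathcal{M}_2(\mathbb{C})$. Both routes yield the same non-zero eigenvalues $-\tfrac{1}{2}(\gamma(1)+\gamma(-1))$ on the population block and $-\tfrac{1}{4}-\tfrac{1}{4}(\gamma(1)+\gamma(-1))$ on the coherences, which under the Glauber identity $\gamma(1)+\gamma(-1)=1$ both equal $-\tfrac{1}{2}$, comfortably exceeding the claimed bound. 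Your framing is arguably more transparent physically; the paper's discriminant route ties more directly into the general machinery of \cref{definition:discriminant}.
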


To understand this spectral gap, we revisit the (Hermitian) Discriminant super-operator $\mathcal{K}$ defined in \cref{definition:discriminant}. Recall, from \cref{lemma:discriminant_properties}, that (under detailed balance) this super-operator has the same eigenvalues of $\mathcal{L}$. In turn, to understand the spectral gap of $\mathcal{K}$, we vectorize this super-operator (on $2\times 2$ matrices) into an operator (a $4\times 4$ matrix).

\begin{equation}
    \mathcal{K}[\cdot] = \sum_j A_j [\cdot] B_j\rightarrow \textbf{K} = \sum_j A_j \otimes B_j^T.
\end{equation}

\begin{proof}
    To analyze the gap, we consider the discriminant $\mathcal{K}_{single}$ of the Lindbladian $\mathcal{L}_{single}$, and in particular its vectorization:
\begin{gather}
    \textbf{K}_{single} =  \sum_{a \in \mathcal{A}} \sum_{\nu \in [-n, n]}- \sqrt{\gamma(\nu)\gamma(-\nu)} \cdot A^a_\nu\otimes (A^a_\nu)^* +\frac{\gamma(\nu)}{2} \bigg((A^a_\nu)^\dagger A^a_\nu\otimes \mathbb{I} + \mathbb{I}\otimes (A^a_\nu)^T(A^a_\nu)^*\bigg)
\end{gather}
which is PSD, frustration free, and preserves the eigenvalues of $\mathcal{L}_{single}$ (up to a factor of $-1$) \cref{lemma:discriminant_properties}. Moreover, via detailed balance, the purified Gibbs state $\ket{\sqrt{\sigma_\beta}} \propto \ket{00}+e^{-\beta /2}\ket{11}$ is a ground state of $\textbf{K}_{single}$. Since the jump operators are single qubit Pauli operators $\{\mathbb{I}, X, Y, Z\}$, they can be written in the energy basis as
\begin{gather}
        A^{\mathbb{I}}_0 \propto \mathbb{I}\quad \text{and}\quad A^{Z}_0 \propto Z_i \quad\text{and}\quad A^{X}_{1} = (-i) A^{ Y}_{1}\propto  \begin{bmatrix} 0 & 0 \\ 1 & 0 \end{bmatrix}
    \end{gather}
and such that the conjugates can be inferred from the identity $A^a_{\nu} = A^{a\dagger}_{-\nu}$. The $4\times 4$ vectorized discriminant can therefore be written as

\begin{equation}
    \textbf{K}_{single} = \frac{1}{2} \begin{bmatrix}
            \gamma(1)  & 0 & 0 & - \sqrt{\gamma(1)\gamma(-1)} \\ 
             0 & \frac{\gamma(1) + \gamma(-1)}{2} + \gamma(0) & 0 & 0 \\
             0 & 0 & \frac{\gamma(1) + \gamma(-1)}{2} + \gamma(0) & 0\\
             - \sqrt{\gamma(1)\gamma(-1)}& 0 & 0 & \gamma(-1)
        \end{bmatrix}
\end{equation}
which we identify to be frustration free and have spectral gap $\frac{\gamma(1)}{2}\cdot \min(1+e^{\beta }, \frac{1+e^{\beta }}{2} + \frac{\gamma(0)}{\gamma(1)}) = \frac{\gamma(1)}{2}\geq \frac{1}{4}$ under Glauber Dynamics, where $\gamma(\nu) = (1+e^{-\beta \nu})^{-1}$.

\end{proof}

The positivity of the spectral gap can be used to show a complete MLSI, as shown by \cite{gao2022complete}. This conversion comes at the cost of factors of the local dimension of the Lindbladian - which in the case of $\mathcal{L}_{single}$, is just $2$.

\begin{theorem}
    [CMLSI from the Spectral Gap, \cite{gao2022complete} Theorem 4.3]\label{theorem:cmlsi_from_gap} Suppose a Lindbladian $\mathcal{G}$, acting on a $D$-dimensional Hilbert space, is GNS-symmetric w.r.t a fixed state $\sigma>0$.  Then, it satisfies a CMLSI with constant 
    \begin{equation}
        \alpha_{c} \geq \Delta(\mathcal{G}) \cdot \frac{\|\sigma^{-1}\|^{-1}}{D^2}.
    \end{equation}
\end{theorem}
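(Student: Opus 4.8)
The plan is to reduce the complete MLSI for $\mathcal{G}$ to two ingredients and then multiply them: (a) a comparison, valid at the level of \emph{entropy production} and stable under adjoining a reference system, between the Dirichlet form of $\mathcal{G}$ and that of the ``generalized depolarizing'' semigroup $\mathrm{id}-\mathcal{E}$, where $\mathcal{E}=\lim_{t\to\infty}e^{t\mathcal{G}}$ is the conditional expectation onto the fixed-point algebra; and (b) an explicit complete MLSI for $\mathrm{id}-\mathcal{E}$ with constant $\Omega(\|\sigma^{-1}\|^{-1}D^{-2})$. Since the CMLSI constant is by definition the largest $\alpha$ such that
\begin{equation}
    -\frac{d}{dt}\bigg|_{t=0}D\big(e^{t\mathcal{G}}\otimes\mathrm{id}_R[\rho]\,\big\|\,\mathcal{E}\otimes\mathrm{id}_R[\rho]\big)\;\ge\;\alpha\cdot D\big(\rho\,\big\|\,\mathcal{E}\otimes\mathrm{id}_R[\rho]\big)
\end{equation}
holds for every reference system $R$ and every state $\rho$ on $S\otimes R$, combining (a) and (b) yields $\alpha_c(\mathcal{G})\ge \Delta(\mathcal{G})\cdot\Omega(\|\sigma^{-1}\|^{-1}D^{-2})$, and it then remains only to track constants carefully through (b) to land on the stated clean form $\alpha_c\ge\Delta(\mathcal{G})\,\|\sigma^{-1}\|^{-1}/D^2$.

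For step (a): because $\mathcal{G}$ is GNS-symmetric, (the symmetrization of) its generator is self-adjoint on each interpolation space $L^2_s(\sigma)$, and the spectral gap hypothesis becomes the operator inequality $-\mathcal{G}\succeq\Delta(\mathcal{G})\,(\mathrm{id}-\mathcal{E})$ on each such space, with $\mathcal{E}$ the orthogonal projection onto the kernel. I would integrate this one-parameter family of quadratic-form inequalities against the Kubo--Mori (BKM) weight $\int_0^1(\cdot)_s\,ds$; the resulting BKM-Dirichlet form of $\mathcal{G}$ is, via the standard integral representation, exactly the entropy production $-\frac{d}{dt}|_0 D(e^{t\mathcal{G}}[\rho]\|\mathcal{E}[\rho])$, and likewise for $\mathrm{id}-\mathcal{E}$, so the operator inequality transfers with no additional dimensional loss to $\mathrm{EP}_{\mathcal{G}}(\rho)\ge\Delta(\mathcal{G})\,\mathrm{EP}_{\mathrm{id}-\mathcal{E}}(\rho)$. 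Adjoining a reference $R$ is harmless at this step: $\mathcal{G}\otimes\mathrm{id}_R$ is still GNS-symmetric (with respect to $\sigma\otimes\tau$ for any full-rank $\tau$), has the same nonzero spectrum and hence the same gap, and its conditional expectation is $\mathcal{E}\otimes\mathrm{id}_R$, so the identical integration argument applies verbatim. This is the source of the $\Delta(\mathcal{G})$ factor, and the only place it enters.

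Step (b) is the technical heart and the step I expect to be the main obstacle. For the simplest fixed-point algebra, $\mathcal{E}[\rho]=\sigma\Tr\rho$, one must show $\Tr\big[(\sigma-\rho)(\log\rho-\log\sigma)\big]\ge c\,D(\rho\|\sigma)$ with $c\ge\|\sigma^{-1}\|^{-1}/D^2$, and more generally the analogue for a nontrivial fixed-point algebra and for states on the doubled system. The difficulty is that the left-hand side is a Bregman-type ``tangent'' quantity that can be far smaller than $D(\rho\|\sigma)$ when $\rho$ is nearly singular, so a crude operator-norm comparison of $\log\rho-\log\sigma$ against $\rho-\sigma$ fails; instead one must exploit that the fixed point is bounded below, $\lambda_{\min}(\sigma)=\|\sigma^{-1}\|^{-1}$, and combine a pinching/twirling argument with operator-convexity of $x\mapsto x\log x$ to absorb the small-eigenvalue directions of $\rho$, which is what produces the $D^{-2}\|\sigma^{-1}\|^{-1}$ scaling. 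Completeness of this bound is obtained by carrying the reference system through this argument directly (the replacer structure of $\mathrm{id}-\mathcal{E}$ makes the estimate essentially insensitive to it). Throughout, I would follow the argument of Gao--Rouzé~\cite{gao2022complete}, adjusting the bookkeeping of constants in (b) so that the two pieces multiply to exactly $\alpha_c\ge\Delta(\mathcal{G})\,\|\sigma^{-1}\|^{-1}/D^2$.
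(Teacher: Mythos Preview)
The paper does not give its own proof of this theorem: it is quoted verbatim as a black-box result from \cite{gao2022complete} (Theorem 4.3) and then immediately applied to $\mathcal{L}_{single}$. So there is nothing in the paper to compare your proposal against.

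That said, your two-step outline is precisely the strategy of \cite{gao2022complete}: first use GNS-symmetry and the gap to get the Dirichlet-form/entropy-production comparison $\mathrm{EP}_{\mathcal{G}}\ge\Delta(\mathcal{G})\,\mathrm{EP}_{\mathrm{id}-\mathcal{E}}$ (stable under tensoring with a reference), then establish a CMLSI for the ``replacer'' generator $\mathrm{id}-\mathcal{E}$ with constant $\|\sigma^{-1}\|^{-1}/D^2$. Your identification of step (b) as the hard part is correct; in \cite{gao2022complete} this is handled not by the pinching/operator-convexity route you sketch, but via a complete positivity order comparison between the semigroup $e^{-t(\mathrm{id}-\mathcal{E})}$ and a depolarizing channel, combined with a complete-index/Pimsner--Popa bound that produces exactly the $\|\sigma^{-1}\|^{-1}D^{-2}$ factor. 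Your sketch of (b) is plausible in spirit but vague at the key estimate; if you intend to reproduce the result rather than cite it, you should follow their index-based argument rather than the ad hoc operator-convexity line you describe, which as written does not obviously close.
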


In this manner, $\mathcal{L}_{single}$ satisfies a CMLSI with constant $\alpha_{single} = \frac{1}{16\cdot (1+e^{\beta})}$.\footnote{As remarked by an anonymous reviewer, it may be possible to remove the $\beta$ dependence, since the Lindbladian is non-interacting, following the results of \cite{MullerHermes2015EntropyPO, MullerHermes2015RelativeEC, Capel2018QuantumCR} or \cite{Beigi2018QuantumRH}.} We are now in a position to prove the MLSI for $\mathcal{L}_{NI}$.

\begin{proof}

    [of \cref{claim:analysis-ni-gapped}] We begin by leveraging the Complete MLSI, on the local Lindbladians (\cref{theorem:cmlsi_from_gap} and \cref{lemma:analysis_single_qubit_gapped}):
\begin{align}
\Tr[\mathcal{L}_{NI}[\rho] (\log(\rho) - \log(\sigma_\beta) )] &= \sum_{i\in [n]} \Tr[\mathcal{L}_{single}^i[\rho] (\log(\rho) - \log(\sigma_\beta) )]\\ &= \sum_{i\in [n]}  \Tr[\mathcal{L}_{single}^i[\rho] (\log(\rho) - \log(\mathcal{E}_i[\rho]) )] \\ & \leq -\alpha_{single} \sum_{i\in [n]}  D(\rho||\mathcal{E}_i[\rho]) 
\end{align}
Where $\mathcal{E}_i$ is the conditional expectation of the $i$th semigroup $e^{t\mathcal{L}_{single}^i}$. Next, we leverage the strong subadditivity of non-interacting conditional expectations \cite{gao2022complete} (eq. 5), see also \cite{Petz1991OnCP}:
\begin{equation}
   \sum_{i\in [n]}  D(\rho||\mathcal{E}_i[\rho]) \geq D(\rho||\prod_i \mathcal{E}_i[\rho]).
\end{equation}

\noindent To conclude, observe that for any $\rho$, the collection of conditional expectation $\prod_{i\in [n]} \mathcal{E}_i[\rho] = \sigma_\beta$ maps to the unique stationary state.
\end{proof}

\subsection{The convex combination claim (\texorpdfstring{\cref{claim:analysis-convex-combination}}{})}

Recall that the parent Hamiltonian $H$ has solvable eigenstates given by $\{C\ket{x}:x\in \{0, 1\}^n\}$, with energies given by the Hamming weight $|x|\in [n]$.

\begin{proof}

We begin by explicitly writing down each jump operator in the frequency basis:
    \begin{align}
        A_\nu^a = \sum_k \Pi_{k+\mu} A^a\Pi_k &= C\bigg( \sum_{k\in [n]} \sum_{\substack{|x| = k \\ |y| = k+\nu}} \ket{y}\bra{x} \cdot \bra{y}C^\dagger A^a C\ket{x}\bigg)C^\dagger \\
        &=: C\bigg( \sum_{k\in [n]} \sum_{\substack{|x| = k \\ |y| = k+\nu}} \ket{y}\bra{x} \cdot N_{x, y}^a\bigg)C^\dagger 
    \end{align}

    \noindent For conciseness, we have denoted the coefficient by $N_{x, y}^a= \bra{y}C^\dagger A^a C\ket{x}$. Our Lindbladian $\mathcal{L}$ of \cref{equation:Lindbladian} can thus be written in a basis rotated by the circuit $C$, in terms of the second moment of these coefficients:
    \begin{gather}
        C^\dagger\mathcal{L}[C\rho C^\dagger]C = \sum_{\nu} \gamma_{\nu} \cdot \sum_a   \bigg(\sum_{k, k'\in [n]} \sum_{\substack{|x| = k \\ |y| = k+\nu}} \sum_{\substack{|x'| = k' \\ |y'| = k'+\nu}}   N_{x, y}^a \cdot (N_{x', y'}^a)^* \cdot  \ket{y}\bra{x} \rho \ket{x'} \bra{y'} \\
        - \frac{1}{2} \sum_{k\in [n]} \sum_{\substack{|x|, |x'| = k \\ |y| = k+\nu}} N_{x, y}^a \cdot (N_{x', y}^a)^* \cdot \bigg\{ \ket{x'}\bra{x}, \rho\bigg\}\bigg).
    \end{gather}

    For $i\in [n]$, consider the subset of jump operators $\mathcal{A}_i$, centered around the $i$-th lightcone $\mathsf{L}_i$:
    \begin{align}
    \mathcal{A}_i = 2^{-\ell}\cdot \big\{P_{\mathsf{L}_i}\otimes \mathbb{I}_{[n]\setminus \mathsf{L}_i}:  P\in \mathcal{P}_{\ell}\big\} 
    \end{align}
    By definition, these subsets are disjoint, and form a partition $\cup_i \mathcal{A}_i =  \mathcal{A}$. We claim that we can rotate the jump operators in each subset, by substituting
    \begin{equation}
        \mathcal{A}_i\rightarrow \mathcal{A}_i'= U_i\mathcal{A}_i U_i^\dagger
    \end{equation}

    \noindent for an \textit{arbitrary} choice of unitary $U_i$ of support contained in $\mathsf{L}_i$, while keeping the Lindbladian $\mathcal{L}$ invariant. 
    Essentially, this is because the Lindbladian is only defined by the second moments of the jump operators, and that the second moment of random Pauli operators is Haar random (via the 1-design property) and thus invariant under unitary conjugation:
    \begin{align}
        \sum_{a\in \mathcal{A}_i} A^a[\cdot]A^{a} = \frac{1}{2^{\ell}}\tr_{\mathsf{L}_i}[\cdot] = \sum_{a\in \mathcal{A}_i} U_i^{\dagger}A^aU_i[\cdot]U_i^{\dagger}A^{a}U_i\quad \text{for any $U_i$ supported on $\mathsf{L}_i$}. 
    \end{align}
    Indeed, for every choice of basis elements $x, x', y, y'\in \{0, 1\}^n$, the pre-factor
    \begin{align}
       \sum_{a\in \mathcal{A}_i}N_{x, y}^a\cdot  (N_{x', y'}^a)^* &=\sum_{a\in \mathcal{A}_i} \bra{y}C^\dagger A_a C\ket{x} \bra{x'} C^\dagger A_a C\ket{y'}   \\ &=\sum_{a\in \mathcal{A}_i} \bra{y}C^\dagger U_i A_a U_i^{\dagger} C\ket{x} \bra{x'} C^\dagger U_i A_a U_i^\dagger C\ket{y'} = \sum_{a'\in \mathcal{A}_i'}N_{x, y}^{a'}\cdot  (N_{x', y'}^{a'})^* 
    \end{align}
    \noindent is preserved, whether in $\mathcal{A}_i$ or $\mathcal{A}_i'$. 
    
    Finally, let $U_i$ be the gates in $C$ contained in the lightcone of the $i$th qubit. The sum over jump operators $a\in \mathcal{A}$ can be written as an expectation over random $\ell$-qubit Paulis, $P\in\mathcal{P}_\ell$, and then an expectation over the center $i\in [n]$ in which to place $P$. With probability $4/ 4^{\ell}$, $P$ is a single qubit Pauli $P_i$ centered at $i$. Moreover, for a single-qubit Pauli $P_i$ centered at $i$, the choice of $U_i$\textit{ exactly cancels with the circuit} $C^\dagger$:
    \begin{align}
        &\bra{y}C^\dagger U_i\bigg(P_i \otimes \mathbb{I}_{[n]\setminus \{i\}}\bigg) U_i^\dagger C\ket{x} = \bra{y} \bigg(P_i \otimes \mathbb{I}_{n\setminus \{i\}}\bigg)\ket{x} \text{, }\\
        &\text{for each}\quad x, y \in \{0, 1\}^n \quad\text{and}\quad P_i\in \{\mathbb{I}, X, Y, Z\}_i.
    \end{align}

    We note that these are precisely the jump operators we expect in the non-interacting case $\mathcal{L}_{\noti}$, where the circuit is replaced by the trivial circuit $C=\mathbb{I}$. In this manner, we conclude that the rotated Lindbladian can be written a convex combination:
    \begin{equation}
         C^\dagger \mathcal{L}[C\cdot C^\dagger]C = 2^{2(1-\ell)}\cdot \mathcal{L}_{\noti}[\cdot] +(1-2^{2(1-\ell)}) \cdot \mathcal{L}_{rest}[\cdot ],
    \end{equation}

    \noindent where both $\mathcal{L}_{\noti}[\cdot]$, $\mathcal{L}_{rest}[\cdot ]$ are Davies' generators defined on disjoint sets of jump operators. Both of them satisfy detailed balance and share the Gibbs state as the stationary state.
\end{proof}

\section{Circuit Implementation of the Dissipative Lindbladian}
\label{section:implementation}

The main claim of this section is an efficient implementation of the Lindbladian time-evolution using a quantum circuit. Put together with our bound on the mixing time of our Lindbladians, this all but concludes the proof of the preparation of Gibbs states of the parent Hamiltonians of quantum circuits. 

\begin{lemma}[Dissipative Lindbladian Implementation]\label{lemma:dissipative_implementation} Fix parameters $t \geq  1$ and $\epsilon\leq \frac{1}{2}$. Let $\mathcal{L}$ denote the Lindbladian of \cref{equation:Lindbladian}, defined by a quantum circuit on $n$ qubits, of lightcone size $\ell$ and depth $d$. Then, we can simulate the map $e^{t\mathcal{L}}$ to error $\epsilon$ in diamond norm using a quantum algorithm of depth $O(t\cdot n\cdot 2^{2\ell}\cdot 2^d\cdot  \poly(\ell, \log n, \log \frac{1}{\epsilon}, \log t))$.
\end{lemma}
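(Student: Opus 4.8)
The plan is to invoke the ``continuous-time quantum Gibbs sampler'' framework of \cite{Chen2023QuantumTS} (their Theorem I.1), which reduces implementing $e^{t\mathcal{L}}$ to diamond-norm error $\epsilon$ to making $\tilde{O}(t)\cdot\polylog(1/\epsilon)$ controlled calls to a unitary block-encoding of the coherent jump super-operator, together with $O(1)$-qubit ancilla rotations and mid-circuit measurements whose contribution to the depth is lower order. Our setting affords two simplifications. First, since the Hamiltonians in $\mathscr{H}$ are commuting, the Davies generator genuinely takes the form of \cref{equation:Lindbladian} and no further machinery for noncommuting $H$ is needed. Second, since $H$ has the integer spectrum $\{0,\dots,n\}$, the operator Fourier transform used to resolve the frequencies $\nu$ becomes \emph{exact} with only an $O(\log n)$-qubit time register -- a discrete Fourier transform over the period $2\pi$ -- so we avoid the Gaussian windowing and energy-resolution bookkeeping of the general case. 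It then suffices to implement three sub-circuits: (a) Hamiltonian simulation $e^{-iHs}$ controlled on the time register; (b) a block-encoding of $\sum_{a\in\mathcal{A}}\ketbra{a}\otimes A^a$; and (c) a ``frequency filter'' which, conditioned on the extracted $\nu$, weights a jump by $\sqrt{\gamma(\nu)}$ with $\gamma(\nu)=(1+e^{-\beta\nu})^{-1}$.

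For ingredient (a): because $H=\sum_i h_i$ is commuting with $\mathrm{supp}(h_i)=\mathsf{S}_i\subseteq\mathsf{L}_i$ and $|\mathsf{L}_i|\le\ell$, we have $e^{-iHs}=\prod_i e^{-ih_i s}$, and each $e^{-ih_i s}$ is realized by diagonalizing the $2^\ell\times 2^\ell$ matrix $h_i$ (gate complexity $2^{O(\ell)}$) and applying the resulting diagonal phases. To parallelize, I would color the lightcone-overlap graph on the terms; since a depth-$d$ circuit puts each qubit in at most $2^d$ forward lightcones, this graph has maximum degree $O(\ell\cdot 2^d)$, giving depth $O(2^d\cdot 2^{O(\ell)})$ for $e^{-iHs}$, and only a $\polylog(n)$ overhead for the controlled version $\sum_s\ketbra{s}\otimes e^{-iHs}$ (built bitwise from $e^{-iH\cdot 2^j}$). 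The final $O(\log n)$-qubit QFT on the time register realizes the (exact) operator Fourier transform, and simultaneously writes $\nu$ into an $O(\log n)$-bit register.

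For ingredient (b): an index $a=(i,P)$ with $i\in[n]$ and $P\in\mathcal{P}_\ell$; prepare the uniform superposition over $i$ (a $\lceil\log n\rceil$-qubit register) and over $P$ (a $2\ell$-qubit register) in depth $O(\log n+\ell)$, then apply $\sum_{i,P}\ketbra{i,P}\otimes P_{\mathsf{L}_i}$. This is the step forcing the factor of $n$: without geometric locality the set $\mathsf{L}_i$ depends on $i$ arbitrarily, so I would implement the controlled Pauli by iterating $i=1,\dots,n$, at each step comparing the index register to $i$ and, conditioned on equality, applying the Pauli stored in the $P$-register to the hard-coded qubit set $\mathsf{L}_i$ -- total depth $O(n\cdot\poly(\ell,\log n))$. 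For ingredient (c): with $\nu\in\{-n,\dots,n\}$ in an $O(\log n)$-bit register, reversibly compute $\theta(\nu)=\arccos\sqrt{\gamma(\nu)}$ to $\poly(\log(1/\epsilon))$ bits using standard fixed-point arithmetic circuits of $\polylog$ depth, then apply one controlled $Y$-rotation; this costs depth $\poly(\log n,\log(1/\epsilon),\beta)$.

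Composing, one call to the block-encoding costs depth $O(n\cdot 2^{2\ell}\cdot 2^d\cdot\poly(\ell,\log n,\log(1/\epsilon),\beta))$, where the $2^{2\ell}=4^\ell$ absorbs the cost of diagonalizing $\ell$-qubit local terms; multiplying by the $\tilde{O}(t)\cdot\polylog(1/\epsilon)$ queries of \cite{Chen2023QuantumTS}, plus the lower-order wrapper circuitry, gives the claimed bound. I expect the main obstacle to be the error analysis rather than any single gadget: the diamond-norm error of the simulated channel in \cite{Chen2023QuantumTS} depends on the precision of the block-encoding, so one must choose the arithmetic precision in (c), the number of Fourier points, and the block-encoding subnormalization (of order $\sum_a\|A^a\|^2 = n$) so that each source contributes at most $\epsilon/\poly$. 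A secondary point worth flagging is that the obfuscation of $C$ among the $\{h_i\}$ genuinely blocks the cheap identity $e^{-iHs}=C e^{-iH_\noti s}C^\dagger$, so the factor $n$ appearing in (b) seems difficult to remove in the absence of geometric locality.
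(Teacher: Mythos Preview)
Your overall strategy---invoking the framework of \cite{Chen2023QuantumTS} and decomposing the block-encoding into controlled Hamiltonian simulation, a jump-operator select oracle, and a frequency filter---is exactly the paper's, and your use of the integer spectrum to make the operator Fourier transform exact is correct. However, your accounting of where the factor of $n$ comes from is off. The quantity $\big\|\sum_a (A^a)^\dagger A^a\big\| = n$ that you call the ``subnormalization'' is not merely an error source to be controlled: in the black-box theorem of \cite{Chen2023QuantumTS} it rescales the effective evolution time, so the number of queries to the block-encoding is $\tilde{O}(n\cdot t)$, not $\tilde{O}(t)$ (the paper makes this rescaling $t\to n\cdot t$ explicit at the outset of its implementation section). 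Combined with your sequential $O(n)$-depth implementation of ingredient (b), the honest total of your scheme is $O(t\cdot n^2\cdot 4^\ell\cdot 2^d\cdot\poly(\cdots))$, a factor of $n$ worse than the lemma asserts. You arrive at the stated bound only because the missing $n$ in the query count is accidentally offset by the avoidable $n$ in (b).

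The paper recovers the single factor of $n$ by implementing (b) in depth $O(\Delta\cdot(\ell+\log n))$ rather than $O(n)$: fan out the index register $\ket{a}\to\ket{a}^{\otimes n}$ via an $O(\log n)$-depth $\cnot$ tree, then use the same lightcone coloring you invoke in (a) so that all centers in one color class apply their controlled Paulis in parallel, each reading its own private copy of $\ket{a}$. The identical fan-out trick is needed in (a) for the clock register---without it your ``only a $\polylog(n)$ overhead for the controlled version'' is also not correct, since within a color class all the controlled $e^{-ih_i s}$ share the same control wire and cannot be parallelized. Your alternative for (c), reversible arithmetic plus a controlled rotation, is a fine substitute for the paper's truncation-plus-QSVT filter and does not affect the bound.
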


We dedicate \cref{subsection:prelim} to presenting the required background results on implementing Lindbladian evolution using quantum circuits. In the ensuing section \cref{subsection:optimizing}, we discuss optimizations both particular to our systems, and generic, to the runtime of our algorithms. 

\subsection{Preliminaries on simulating Lindbladian evolution}
\label{subsection:prelim}

Our implementation of the map $e^{t\mathcal{L}}$ follows the framework of \cite{Chen2023QuantumTS}, in reducing the task to constructing a block-encoding of the Lindblad operators. To implement their scheme it is suitable to renormalize the time-scale and Lindblad operators $\{L_j\}_{j\in \mathcal{A}}$ such that the resulting Lindbladian has norm 1:
\begin{equation}
    t\rightarrow t \cdot \|\sum_j L_j^\dagger L_j\|  \text{ and } L_j\rightarrow L_j \cdot \|\sum_j L_j^\dagger L_j\|^{-1/2}
\end{equation}

\noindent Given the choice of jump operators from \cref{equation:Lindbladian}, under this normalization we have $t\rightarrow  n\cdot t$.

\begin{definition}[Unitary block encoding for Lindblad Operators{~\cite[Definition I.2]{Chen2023QuantumTS}}]\label{def:block-encoding}
    Given a purely irreversible Lindbladian determined by the Lindblad operators $\{L_j\}_{j\in \mathcal{A}}$, a unitary $U$ is said to be block-encoding of the Lindblad operators if
    \begin{equation}
        \big(\bra{0}^b\otimes \mathbb{I}\big) U \big(\ket{0}^c\otimes \mathbb{I}\big) = \sum_{a\in \mathcal{A}} \ket{a}\otimes L_j \text{ for }b, c\in \mathbb{N}
    \end{equation}
\end{definition}

Given a black-box circuit corresponding to a block-encoding of $\mathcal{L}$, the following theorem stipulates that one can simulate the corresponding Lindbladian evolution for time $t$ using just $\Tilde{O}(t)$ invocations of the black-box:

\begin{theorem}
    [Theorem I.2, \cite{Chen2023QuantumTS}]\label{theorem:Lindblad_implementation_bb} Suppose $U$ is a unitary block-encoding of the Lindbladian $\mathcal{L}$ as in \cref{def:block-encoding}. Let time $t \geq  1$ and error $\epsilon\leq \frac{1}{2}$, then we can simulate the map $e^{t\mathcal{L}}$ to error $\epsilon$ in diamond norm using 
    \begin{enumerate}
        \item $O((c+\log \frac{t}{\epsilon})\log \frac{t}{\epsilon})$ resettable ancilla qubits,
        \item $\Tilde{O}(t)$ controlled uses of $U$ and $U^\dagger$, and 
        \item $\Tilde{O}(t+c)$ other 2-qubit gates.
    \end{enumerate}
\end{theorem}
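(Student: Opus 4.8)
The plan is to invoke \cref{theorem:Lindblad_implementation_bb} as a black box and reduce everything to constructing a single unitary block-encoding $U$ of the (renormalized) Lindblad operators. Recall that in standard form the Lindblad operators of \cref{equation:Lindbladian} are $L_{a,\nu}=\sqrt{\gamma(\nu)}\,A^a_\nu$. The first step is the renormalization $t\to t'=t\cdot\|\sum_{a,\nu}L_{a,\nu}^\dagger L_{a,\nu}\|$ required by the framework: since $A^a=2^{-\ell}P_{\mathsf{L}_i}$ gives $(A^a)^\dagger A^a=4^{-\ell}\mathbb{I}$, since $\sum_\nu (A^a_\nu)^\dagger A^a_\nu=(A^a)^\dagger A^a$ (using $\sum_\nu\Pi_{k+\nu}=\mathbb{I}$ and that $(A^a)^\dagger A^a$ is a multiple of $\mathbb{I}$ and hence commutes with every $\Pi_k$), since $\gamma(\nu)\le 1$, and since $|\mathcal{A}|=n\cdot 4^\ell$, one gets $\|\sum_{a,\nu}L_{a,\nu}^\dagger L_{a,\nu}\|\le n$, so $t'=O(nt)$ — this is the source of the factor $n$ in the final bound. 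Because the $\Tilde{O}(t')$ invocations of $U$ supplied by \cref{theorem:Lindblad_implementation_bb} are applied sequentially in the simulation algorithm, the total circuit depth is $\Tilde{O}(nt)$ times the depth of one copy of $U$, plus the $\Tilde{O}(nt+c)$ extra two-qubit gates; it therefore suffices to build $U$ and bound its depth and ancilla count $c$.

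The construction of $U$ proceeds in four stages, specializing the energy-filter construction of \cite{Chen2023QuantumTS} to our commuting, integer-spectrum setting. \emph{(i) Index preparation}: on an $(\lceil\log n\rceil+2\ell)$-qubit register prepare the appropriately rescaled uniform superposition over jump labels $a=(i,P)$, $i\in[n]$, $P\in\mathcal{P}_\ell$, in depth $O(\log n+\ell)$. \emph{(ii) Bare jumps}: controlled on the index register apply the multiplexed Pauli $\sum_a\ket{a}\!\bra{a}\otimes P_{\mathsf{L}_i}$; fanning out the index register through a CNOT tree and using that each physical qubit lies only in the lightcones $\mathsf{L}_i$ with $i$ in that qubit's backward lightcone, this costs depth $O(4^\ell\cdot\poly(\ell,\log n))$, which is where the factor $2^{2\ell}$ enters. \emph{(iii) Frequency decomposition}: to replace each $A^a$ by its energy-shift component $A^a_\nu$ we apply the operator Fourier transform with an $O(\log n)$-qubit ``time'' register $\ket{s}$ in uniform superposition over the grid $s_j=2\pi j/(2n+1)$, conjugating $A^a$ by $e^{iHs}$, then DFT'ing $s\mapsto\nu$; because the eigenvalues of $H$ lie in $\{0,\dots,n\}$, all frequencies lie in $\{-n,\dots,n\}$, so $O(\log n)$ qubits suffice and this transform is \emph{exact}. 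Crucially we are only given the local terms $\{h_i\}$, not $C$ itself; but the $h_i$ commute, so $e^{iHs}=\prod_i e^{ih_i s}$ \emph{exactly} (no Trotter error), each factor obtained by classically diagonalizing the $2^\ell\times 2^\ell$ matrix $h_i$ and synthesizing the resulting $|\mathsf{S}_i|$-qubit unitary in depth $O(4^\ell)$; controlling $e^{iHs}$ on $\ket{s}$ amounts to $O(\log n)$ powers of $e^{iH\theta_0}$, $\theta_0=2\pi/(2n+1)$, and the product $\prod_i e^{ih_i\theta}$ parallelizes into $O(\ell\cdot 2^d)$ layers since each qubit meets at most $2^d$ of the supports $\mathsf{S}_i$ — this is the source of the factor $2^d$. \emph{(iv) Transition weight}: with $\nu$ now explicit in a register, reversibly compute $\arccos\sqrt{\gamma(\nu)}$ and apply the corresponding single-qubit rotation on a flag ancilla, realizing the amplitude $\sqrt{\gamma(\nu)}$ in depth $\poly(\log n,\log\tfrac1\epsilon)$.

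Composing the stages, $U$ has depth $O(4^\ell\cdot 2^d\cdot\poly(\ell,\log n,\log\tfrac1\epsilon,\log t))$ and uses $c=O(\ell+\log\tfrac{nt}{\epsilon})$ ancillas. Finite-precision errors in stages (iii)--(iv) (grid truncation, rotation angles, diagonalizers) are handled by carrying $\poly(\log\tfrac{nt}{\epsilon})$ bits of precision so that $U$ is $\tfrac{\epsilon}{\Tilde{O}(nt)}$-close to an exact block-encoding; by the triangle inequality over the $\Tilde{O}(nt)$ invocations this keeps the diamond-norm error below $\epsilon$. Plugging $t'=O(nt)$ and this depth into \cref{theorem:Lindblad_implementation_bb} yields total depth $\Tilde{O}(nt)\cdot O(4^\ell\cdot 2^d\cdot\poly(\dots))=O\!\big(t\cdot n\cdot 2^{2\ell}\cdot 2^d\cdot\poly(\ell,\log n,\log\tfrac1\epsilon,\log t)\big)$, as claimed, with a polylogarithmic ancilla count $O((c+\log\tfrac{nt}{\epsilon})\log\tfrac{nt}{\epsilon})=\poly(\ell,\log n,\log t,\log\tfrac1\epsilon)$.

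The main obstacle is stage (iii): Hamiltonian simulation of $H$ without access to the underlying circuit $C$, and the frequency extraction built on top of it. Commutativity of the $\{h_i\}$ rescues the former — it gives an \emph{exact} product formula $e^{iHs}=\prod_i e^{ih_i s}$ with depth governed only by the lightcone-overlap structure (hence the $2^d$ and the $4^\ell$ from synthesizing each $2^\ell\times 2^\ell$ factor), avoiding any Trotter error or knowledge of $C$. The integer spectrum $\{0,\dots,n\}$ rescues the latter — it makes the operator Fourier transform exact on an $O(\log n)$-qubit register, so the transition filter introduces neither additional error nor a large ancilla overhead. The remaining stages (index preparation, bare-jump application, Glauber-weight loading) and the error bookkeeping are routine.
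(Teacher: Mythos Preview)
You have misidentified the target. The statement labelled \cref{theorem:Lindblad_implementation_bb} is not proved in this paper at all: it is quoted verbatim as ``Theorem I.2, \cite{Chen2023QuantumTS}'' and used as an imported black box. Your proposal openly says ``the plan is to invoke \cref{theorem:Lindblad_implementation_bb} as a black box,'' which is circular if the task is to prove that very theorem. What you have actually written is a proof plan for \cref{lemma:dissipative_implementation} (the statement that $e^{t\mathcal{L}}$ for our specific family can be simulated in depth $O(t\cdot n\cdot 2^{2\ell}\cdot 2^d\cdot\poly(\cdots))$), not for the general-purpose simulation theorem from \cite{Chen2023QuantumTS}.

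Read as a plan for \cref{lemma:dissipative_implementation}, your outline is close to the paper's approach but organized differently. The paper does not build the block-encoding from scratch; it invokes a second imported result, \cref{lemma:Lindblad_block_encoding} (also from \cite{Chen2023QuantumTS}), which reduces the block-encoding to four explicit subroutines: controlled Hamiltonian simulation, a block-encoding of the bare jumps, an $O(\log n)$-qubit QFT, and the Boltzmann filter $W$. It then bounds each subroutine separately (\cref{claim:hamiltonian-simulation}, \cref{claim:jumps}, \cref{claim:boltz_filter}), using a graph-coloring lemma (\cref{lemma:coloring}) to parallelize the commuting terms and a GHZ fan-out of the time register to parallelize the controls. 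Your stages (i)--(iv) hit the same ingredients, and your $O(\ell\cdot 2^d)$ parallelization and the $4^\ell$ cost per local unitary match the paper's coloring bound $\Delta\le \ell\cdot 2^d+1$ and its generic $O(4^\ell)$ synthesis. One substantive difference: the paper handles the Glauber filter by truncating $\gamma(\nu)$ outside $|\nu|\le \beta^{-1}\log(n/\epsilon)$ and then applying QSVT on that window, rather than by reversible arithmetic of $\arccos\sqrt{\gamma(\nu)}$ as you propose; both give $\polylog(n/\epsilon)$ depth, but the truncation argument is what actually appears.
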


We remark that since our Hamiltonian has a integer spectra $[n]$, one can exactly implement the projection of the jump operators $\{A^a\}$ onto the energy eigenbasis by performing an operator Fourier transform with uniform weights:

\begin{equation}
    A_\nu^a \propto \sum_{\Bar{t}\in S_{\pi/n}} e^{i\nu \bar{t}} e^{i H\bar{t}} A^a e^{-iH\bar{t}} \text{ where } S_{\pi/n} = \frac{\pi}{n}\cdot \{ -n, -(n-1), \cdots, -1, 0, 1, \cdots, n\}
\end{equation}
In this setting, we can now apply a lemma on the efficient implementation of block-encodings from \cite{Chen2023QuantumTS}, simplified to the context of integer spectra Hamiltonians.

\begin{lemma}[Lemma I.1, \cite{Chen2023QuantumTS}]\label{lemma:Lindblad_block_encoding} In the setting of \cref{theorem:Lindblad_implementation_bb}, a unitary block encoding for the Lindblad operators corresponding to a Hamiltonian $H$ of integer spectra $[n]$ can be created using $O(n+\log |A|)$ ancilla qubits, as well as one query to

\begin{enumerate}
    \item The controlled Hamiltonian simulation: $\sum_{\bar{t}\in S_{\pi / n}} \ketbra{\bar{t}} \otimes e^{\pm i H \bar{t}}$,
    \item A block-encoding of the jump operators: $\sum_{a\in \mathcal{A}} \ket{a}\otimes A^a$,
    \item $O(\log n)$ qubit Quantum Fourier transform: $\ket{\bar{t}}\rightarrow (2n)^{-1/2}\sum_{\omega\in [-n, \cdots n]} e^{-i\omega \bar{t}}\ket{\omega}$
    \item And a controlled filter for the Boltzmann factors:
    \begin{equation}
        W = \sum_{\omega\in [-n, \cdots, n]}\begin{bmatrix}
            \sqrt{\gamma(\omega)} & - \sqrt{1-\gamma(\omega)} \\ 
            \sqrt{1-\gamma(\omega)} & \sqrt{\gamma(\omega)}
        \end{bmatrix}\otimes \ketbra{\omega}
    \end{equation}
\end{enumerate}
\end{lemma}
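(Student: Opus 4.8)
The plan is to specialize the block-encoding construction of \cite{Chen2023QuantumTS} to our commuting, integer-spectrum Hamiltonian, the key (and only) simplification being that the operator Fourier transform used to resolve the jump operators into the energy basis becomes \emph{exact} rather than Gaussian-windowed and approximate. I take as the starting point the identity displayed just above the lemma: since every Bohr frequency of $H$ is an integer in $[-n,n]$, one has $A_\nu^a = c\sum_{\bar t\in S_{\pi/n}} e^{i\nu\bar t}\,e^{iH\bar t}A^a e^{-iH\bar t}$ exactly, with a normalization $c=|S_{\pi/n}|^{-1}$ independent of both $a$ and $\nu$. The Lindblad operators to be block-encoded are $J_{a,\nu}=\sqrt{\gamma(\nu)}\,A_\nu^a$, indexed by pairs $(a,\nu)$ with $a\in\mathcal A$, $\nu\in[-n,n]$ (after the norm rescaling $t\to n t$ discussed above), so the index register in \cref{def:block-encoding} is the composite $\ket{a}\ket{\omega}$.

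I would then build the unitary $U$ by composing the four listed ingredients. Introduce a time register on $O(\log n)$ qubits prepared in the uniform superposition $|S_{\pi/n}|^{-1/2}\sum_{\bar t\in S_{\pi/n}}\ket{\bar t}$, a jump-index register on $\lceil\log|\mathcal A|\rceil$ qubits, the $O(1)$ flag ancillas of the jump-operator block encoding, and one extra qubit for the Boltzmann filter. On the system, apply controlled $e^{-iH\bar t}$ (ingredient~1), then the jump-operator block encoding $\sum_a\ket{a}\otimes A^a$ (ingredient~2), then controlled $e^{+iH\bar t}$ (ingredient~1 again, absorbed into a single controlled call), so that in the branch where the block-encoding flag is $\ket{0}$ the system register carries $\sum_a\ket{a}\otimes\big(e^{iH\bar t}A^a e^{-iH\bar t}\big)$. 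Applying the QFT on the time register (ingredient~3), $\ket{\bar t}\mapsto |S_{\pi/n}|^{-1/2}\sum_\omega e^{-i\omega\bar t}\ket{\omega}$, with the sign of the kernel chosen to match $e^{i\nu\bar t}$, collapses the time sum to $A_\omega^a$ in the $\ket{\omega}$ component up to the scalar $c$; the controlled filter $W$ (ingredient~4) on the extra qubit then multiplies that component by $\sqrt{\gamma(\omega)}$ in the branch where the filter qubit is $\ket{0}$. Projecting all flag ancillas onto $\ket{0}$ and reading $(a,\omega)$ as the Lindblad index gives $(\bra{0}^b\otimes\mathbb{I})\,U\,(\ket{0}^c\otimes\mathbb{I})=\sum_{a,\omega}\ket{a}\ket{\omega}\otimes\sqrt{\gamma(\omega)}\,A_\omega^a$ up to the fixed subnormalization, i.e.\ a unitary block encoding of the Lindblad operators in the sense of \cref{def:block-encoding}. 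The resource count follows: $O(\log n)$ qubits for the time/frequency register, $\lceil\log|\mathcal A|\rceil$ for the jump index, and $O(1)$ flags, while the extra $O(n)$ in the statement absorbs the work qubits used to realize controlled Hamiltonian simulation of the commuting $H=\sum_i h_i$ (e.g.\ one phase/work qubit per local term, or a register holding the computed energy); each of the four ingredients is queried once.

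The step I expect to demand the most care is making the operator Fourier transform genuinely exact and checking that the prefactor $c$ is truly independent of $\omega$ and of $a$: this is precisely where the integer-spectrum hypothesis is used, since a Hamiltonian with irrational or unbounded gaps forces the Gaussian-weighted, truncation-error-incurring transform of \cite{Chen2023QuantumTS} instead. A second, purely bookkeeping, subtlety is reconciling the sign and normalization conventions among the $e^{i\nu\bar t}$ kernel, the QFT as stated in ingredient~3, and the filter $W$, and verifying that projecting the flag qubits onto $\ket{0}$ leaves no residual cross terms between distinct indices $(a,\omega)$. Everything else is the routine composition and ancilla-counting already carried out in \cite{Chen2023QuantumTS}, which I would simply cite for the black-box guarantees.
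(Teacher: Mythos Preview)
The paper does not prove this lemma: it is quoted verbatim as Lemma~I.1 of \cite{Chen2023QuantumTS}, specialized to integer spectra, with only the remark preceding it (that the operator Fourier transform $A^a_\nu\propto\sum_{\bar t\in S_{\pi/n}}e^{i\nu\bar t}e^{iH\bar t}A^ae^{-iH\bar t}$ is exact for integer spectrum) serving as justification for the simplification. Your proposal is a correct and faithful reconstruction of that construction---uniform time superposition, sandwich of controlled $e^{\pm iH\bar t}$ around the jump block-encoding, QFT on the time register, then the Boltzmann filter---and you have correctly isolated the one place the integer-spectrum hypothesis enters (exactness and $\omega$-independence of the Fourier prefactor), so there is nothing to compare beyond noting that you have supplied the argument the paper elected to cite.
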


\subsection{Optimizing the circuit implementation}
\label{subsection:optimizing}

In light of \cref{theorem:Lindblad_implementation_bb} and \cref{lemma:Lindblad_block_encoding}, in what remains of this section, we describe how to implement the controlled Hamiltonian simulation (\cref{claim:hamiltonian-simulation}), the block-encoding of the jump operators (\cref{claim:jumps}), and the controlled Boltzmann filter (\cref{claim:boltz_filter}), in circuit depth $O(4^\ell\cdot 2^d\cdot \poly(\log n, \log \frac{1}{\epsilon}, \ell))$. While the first two optimizations are particular to our family of Hamiltonians, the latter may find independent application to the framework of \cite{Chen2023QuantumTS}.

We begin with a simple lemma which ``colors" the interaction graph of the Hamiltonian, partitioning the interactions into disjoint subsets $S_1, S_2\cdots S_\Delta \subset [n]$ such that no two terms $h_i, h_j$ of the same subset have overlapping support.

\begin{lemma}\label{lemma:coloring}
    Any parent Hamiltonian $H\in \mathscr{H}$ defined by a quantum circuit of depth $d$ and lightcone size $\ell$ can be $\Delta$-colored with $\Delta \leq \ell \cdot 2^d + 1$ colors. 
\end{lemma}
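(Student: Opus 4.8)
The plan is to build the coloring greedily via a graph-coloring argument on the \emph{interaction graph} $G$ of $H$, whose vertices are the Hamiltonian terms $\{h_i\}_{i\in[n]}$ and whose edges connect pairs $h_i,h_j$ whose supports overlap, i.e. $\mathsf{S}_i\cap\mathsf{S}_j\neq\emptyset$. A proper vertex coloring of $G$ is exactly a partition of $[n]$ into subsets $S_1,\dots,S_\Delta$ in which no two terms in the same part share support, so it suffices to show $G$ has chromatic number at most $\ell\cdot 2^d+1$. Since any graph of maximum degree $D$ is $(D+1)$-colorable by the standard greedy argument, the entire task reduces to the degree bound: $\deg_G(h_i)\le \ell\cdot 2^d$ for every $i$.

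To bound the degree of $h_i$, first recall $\mathsf{S}_i=\mathrm{supp}(C Z_i C^\dagger)\subseteq \mathsf{L}_i$, and $|\mathsf{L}_i|\le\ell$. So it suffices to count, for each qubit $q\in\mathsf{S}_i$, how many distinct $j$ can have $q\in\mathsf{S}_j$; then $\deg_G(h_i)\le \sum_{q\in\mathsf{S}_i}(\#\{j: q\in\mathsf{S}_j\}-1)$, and since $|\mathsf{S}_i|\le\ell$ it is enough to show each qubit $q$ lies in the support of at most $2^d$ of the terms $h_j$. Here is where the depth-$d$ structure enters: $q\in\mathsf{S}_j=\mathrm{supp}(CZ_jC^\dagger)$ forces qubit $j$ to be in the \emph{backward} lightcone of $q$ under $C$ (if $Z_j$ conjugated by $C$ has support including $q$, then there is a causal path of gates in $C$ from wire $j$ to wire $q$). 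In a depth-$d$ circuit built from (say) two-qubit gates, the backward lightcone of a single wire has size at most $2^d$; more generally with $g$-local gates it is $g^d$, and $2^d$ is the stated bound for the two-qubit-gate convention used here. Hence at most $2^d$ indices $j$ satisfy $q\in\mathsf{S}_j$, giving $\deg_G(h_i)\le \ell\cdot 2^d$ and the claim.

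The one point requiring a little care — and the main (minor) obstacle — is the causal-path argument: one must argue that $q\in\mathrm{supp}(CZ_jC^\dagger)$ implies the existence of a chain of gates in $C$ connecting wire $j$ to wire $q$, so that the count of such $j$ is controlled by the lightcone of $q$ rather than by $n$. This is the standard fact that Heisenberg evolution of a single-qubit Pauli under a local circuit cannot grow its support beyond the forward lightcone of that qubit; equivalently, two wires whose lightcones are disjoint yield commuting, disjointly-supported conjugated operators. I would state this as a one-line observation (it follows by induction on circuit layers: each layer of gates can only spread support to gate-neighbors), then plug the resulting bound $\#\{j:q\in\mathsf{S}_j\}\le 2^d$ into the degree computation above. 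Everything else is the textbook greedy coloring, so the proof is short: establish the lightcone/support bound, convert it to a degree bound on $G$, then invoke greedy $(\Delta+1)$-coloring with $\Delta=\ell\cdot2^d$.
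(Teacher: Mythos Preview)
Your proposal is correct and follows essentially the same route as the paper's proof: build the interaction graph, bound its maximum degree by $\ell\cdot 2^d$ via the observation that each qubit $q$ lies in at most $2^d$ of the supports $\mathsf{S}_j$ (since $q\in\mathsf{S}_j\subseteq\mathsf{L}_j$ forces $j$ into the backward lightcone of $q$, which has size $\le 2^d$ in a depth-$d$ circuit of two-qubit gates), and then apply greedy $(\Delta+1)$-coloring. The paper phrases the same bound in terms of a ``reverse lightcone size'' $\ell_r\le 2^d$, but the argument is identical.
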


\begin{proof}
    Two interactions $h_i, h_j$ overlap at a qubit only if their lightcones intersect in the underlying circuit $C$, which determines $H$. Let $\ell_r$ denote the maximum ``reverse lightcone" size of the circuit, that is, the maximum number of qubits which have a given qubit in their lightcone. Since the Hamiltonian is at most $\ell$-local, any interaction $h_i$ overlaps with at most $\ell\cdot \ell_r$ other terms, which in turn tells us the interactions can then be partitioned using $\Delta\leq \ell\cdot \ell_r+1$ different colors. If the depth of the circuit $C$ as measured by layers of 2-qubit gates is $d$, then $\ell_r\leq \min(n, 2^d)$. 
\end{proof}

\begin{claim}\label{claim:hamiltonian-simulation}
    The controlled time-evolution of an $n$ qubit parent Hamiltonian of a quantum circuit with lightcone size $\ell$, can be implemented using a quantum circuit of depth $O(4^\ell \cdot \Delta\cdot \log n)$ and size $O(4^\ell \cdot \Delta \cdot n\cdot\log n)$.
\end{claim}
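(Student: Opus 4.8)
The plan is to exploit two structural features of $H=\sum_i h_i$: it is \emph{commuting}, so the controlled time-evolution factorizes over the terms, and each term $h_i=C(\ketbra{1}_i\otimes\mathbb{I})C^\dagger$ is a \emph{projector} supported on at most $\ell$ qubits, so each factor reduces to a controlled single-qubit phase gate dressed by a small local unitary. Concretely, since $[h_i,h_j]=0$ we have $e^{\pm iH\bar t}=\prod_{i\in[n]}e^{\pm ih_i\bar t}$, and therefore
\[
\sum_{\bar t\in S_{\pi/n}}\ketbra{\bar t}\otimes e^{\pm iH\bar t}\;=\;\prod_{i\in[n]}\Bigl(\sum_{\bar t\in S_{\pi/n}}\ketbra{\bar t}\otimes e^{\pm ih_i\bar t}\Bigr),
\]
so it suffices to implement each controlled single-term evolution and compose the $n$ of them, then parallelize with the coloring of \cref{lemma:coloring}.

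\textbf{A single term.} Fix $i$. Since $CZ_iC^\dagger$ is Hermitian, unitary, and traceless, its restriction to $\mathsf{S}_i$ has balanced $\pm1$ spectrum, so $h_i$, viewed as an operator on a set of at most $\ell$ qubits containing $\mathsf{S}_i$, is a projector of rank exactly half. Hence by a purely classical preprocessing step — diagonalizing the given $\le\ell$-qubit matrix $h_i$, which uses no knowledge of the (inaccessible) global circuit $C$ — one obtains an $\le\ell$-qubit unitary $U_i$, supported on $\mathsf{S}_i$, with $U_i^\dagger h_iU_i=\ketbra{1}_j$ for a single qubit $j$. Then $e^{\pm ih_i\bar t}=U_i\,R_j(\pm\bar t)\,U_i^\dagger$ with $R_j(\theta)=\mathrm{diag}(1,e^{i\theta})$ on qubit $j$, so
\[
\sum_{\bar t}\ketbra{\bar t}\otimes e^{\pm ih_i\bar t}=(\mathbb{I}\otimes U_i)\Bigl(\sum_{\bar t}\ketbra{\bar t}\otimes R_j(\pm\bar t)\Bigr)(\mathbb{I}\otimes U_i^\dagger).
\]
The middle operator is a controlled phase: writing $\bar t=\pi m/n$ with $m\in\{-n,\dots,n\}$ held in binary on the $O(\log n)$-qubit clock register, it applies, conditioned on qubit $j$ being $\ket1$, the phase $e^{\pm i\pi m/n}$; by phase kickback this is $O(\log n)$ doubly-controlled phase rotations, one per bit $p$ of $m$ with angle $\pm\pi2^p/n$, each $O(1)$ two-qubit gates. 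Implementing the general $\le\ell$-qubit unitaries $U_i,U_i^\dagger$ costs $O(4^\ell)$ two-qubit gates. So one controlled single-term evolution has size and depth $O(4^\ell+\log n)$.

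\textbf{Parallelization via coloring.} Composing the $n$ single-term blocks naively already yields size $O(n(4^\ell+\log n))$ but depth $O(n(4^\ell+\log n))$, which is too deep. To reduce depth, invoke \cref{lemma:coloring} to partition $[n]$ into $\Delta\le\ell2^d+1$ color classes so that within a class the supports $\mathsf{S}_i$ — hence the $U_i$ — are pairwise disjoint and their $O(4^\ell)$-depth implementations run in parallel. The only shared resource is the clock register; to run the $O(\log n)$ controlled-phase rotations of all same-color terms simultaneously, fan the clock register out in the computational basis into $|S_c|$ copies via a \cnot-tree of depth $O(\log n)$, perform all rotations in parallel, and uncompute the fan-out. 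Each color class then costs depth $O(4^\ell+\log n)$, and over $\Delta$ classes the total is depth $O(\Delta\cdot4^\ell+\Delta\log n)=O(4^\ell\,\Delta\,\log n)$ and size $O(n(4^\ell+\log n))=O(4^\ell\,\Delta\,n\,\log n)$, as claimed (taking the $\pm$ variants by conjugating the circuit / negating the phase angles).

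\textbf{Main obstacle.} The analytic content is light; the only real care points are (i) observing that the $h_i$ are rank-balanced $0/1$ projectors on $\le\ell$ qubits, which collapses each factor to a \emph{single}-qubit controlled phase and keeps the per-term cost at $O(4^\ell+\log n)$ rather than something exponential beyond $4^\ell$; and (ii) handling the serialization forced by the single shared clock register, which the coloring together with computational-basis fan-out of the clock resolves at an additive $O(\log n)$ depth cost per color class. A secondary point worth stating is that $U_i$ is produced entirely by classical diagonalization of the given local term and need bear no relation to $C$.
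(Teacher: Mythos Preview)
Your proof is correct and follows essentially the same strategy as the paper: factor the controlled evolution over the commuting terms, expand $\bar t$ in binary to reduce each controlled single-term evolution to $O(\log n)$ gates on $O(\ell)$ qubits, and parallelize via the coloring of \cref{lemma:coloring} together with a \cnot-tree fan-out of the clock register. The one genuine variation is that you first diagonalize each $h_i$ by a local $\le\ell$-qubit unitary $U_i$ so that the clock-controlled part becomes a single-qubit phase, giving per-term cost $O(4^\ell+\log n)$; the paper instead implements each bit-controlled $e^{i(\pi/n)2^k h_j}$ directly as a generic $(\ell{+}1)$-qubit gate, for per-term cost $O(4^\ell\cdot\log n)$. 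Your route is marginally tighter but both land within the stated bound, and your observation that $U_i$ is obtained by classically diagonalizing the given local term (no access to $C$ needed) is a clean addition.
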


At a high level, the circuit of \cref{claim:hamiltonian-simulation} partitions the terms of the (commuting) Hamiltonian into disjoint subsets of non-overlapping terms, which can be implemented in parallel. However, since we need to implement the \textit{controlled} Hamiltonian simulation, all of these Hamiltonian terms need to act conditioned on the time-register, which is a sequential bottleneck to the circuit depth. In order to further compress the depth, we parallelized the access to the time-register by encoding it into a GHZ state.

\begin{proof}

    Since the Hamiltonian is commuting, let us restrict our attention to a fixed subset $S$ in the partition guaranteed by \cref{lemma:coloring}. It suffices to prove how to implement the controlled time evolution of each subset of non-overlapping terms $H_S = \sum_{i\in S} h_{i}$. For this purpose, we begin by parallelizing the access to the clock register: $\ket{\Bar{t}}\rightarrow \ket{\Bar{t}}^{\otimes n}$, using a $O(\log n)$ depth circuit of CNOT gates, of size $O(n\log n)$.
    
    Next, controlled on the $j$th clock register, we apply the time evolution of the $j$th interaction. Although this gate acts on $O(\log n) + \ell$ qubits, it can be implemented via a sequence of $O(\log n)$ gates acting only on $\ell+1$ qubits, by applying a binary expansion of the time register $\bar{t} = \frac{\pi}{n}\cdot \sum_k 2^k \bar{t}_k$:

    \begin{equation}
        \sum_{\bar{t}}\ketbra{\bar{t}}\otimes e^{i\bar{t}h_j} =  \prod_{k} \mathbb{I}_{\setminus k}\otimes  \sum_{\bar{t}_k \in \{0, 1\}} \ketbra{\bar{t}_k}\otimes \exp[i  \frac{\pi}{n} \cdot  2^k \bar{t}_k \cdot  h_j]
    \end{equation}

    In turn, each unitary on $\ell+1$ qubits can generically be implemented in $O(4^\ell)$ size and depth. After all the colors have concluded, we revert the copies of the clock register. 
\end{proof}

\begin{claim}\label{claim:jumps}
    A block encoding of the jump operators can be implemented using a quantum circuit of depth $O(\Delta\cdot (\ell + \log n))$ and size $O(n\cdot \Delta\cdot (\ell + \log n))$.
\end{claim}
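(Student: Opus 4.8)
The plan is to realize the block encoding through the standard ``PREP then SELECT'' construction for a weighted sum of unitaries. I index a jump operator by $a=(i,P)$, storing $i\in[n]$ in a ``location'' register on $\lceil\log n\rceil$ qubits and $P\in\mathcal P_\ell$ in a ``Pauli'' register on $2\ell$ qubits. The PREP stage merely produces the uniform superpositions $n^{-1/2}\sum_i\ket i$ and $2^{-\ell}\sum_P\ket P$ (depth $O(\log n)$, negligible size), which is exactly the right weighting: since $\sum_a (A^a)^\dagger A^a = nI$, the renormalization fixed before \cref{def:block-encoding} rescales each $A^a = 2^{-\ell}P_{\mathsf L_i}$ to $L_a = n^{-1/2}2^{-\ell}P_{\mathsf L_i}$, so that $\sqrt{|\mathcal A|}\,L_a = P_{\mathsf L_i}$ is unitary and the SELECT stage can be taken to be the multiplexed Pauli $V=\sum_{i,P}\ketbra ii\otimes\ketbra PP\otimes(P_{\mathsf L_i}\otimes\mathbb I_{[n]\setminus\mathsf L_i})$. (I will also freely substitute $XZ$ for $Y$ inside $V$, since the global phase this introduces on each $A^a$ cancels in the Lindbladian.) Thus the entire cost is that of implementing $V$ in depth $O(\Delta(\ell+\log n))$ and size $O(n\Delta(\ell+\log n))$.

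To implement $V$ I would first re-run the coloring argument of \cref{lemma:coloring} with circuit lightcones in place of Hamiltonian supports: each qubit lies in at most $\ell_r\le\min(n,2^d)$ lightcones, so any $\mathsf L_i$ meets at most $\ell\cdot\ell_r$ other lightcones, and hence $[n]$ partitions into $\Delta\le\ell\cdot 2^d+1$ color classes $S_1,\dots,S_\Delta$ whose lightcones are pairwise disjoint within a class. Write $V=V_\Delta\cdots V_1$, where $V_c$ applies, for the $i\in S_c$ held in the location register, the Pauli $P_{\mathsf L_i}$ conditioned on the Pauli register, and acts as the identity whenever the location register holds an index outside $S_c$ (so the $V_c$ commute and their product is $V$). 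Since the $\mathsf L_i$ with $i\in S_c$ occupy disjoint system qubits, all of $V_c$'s conditional Paulis can be run in one parallel layer once the shared location and Pauli registers are fanned out into $|S_c|\le n$ read-only copies by CNOT trees of depth $O(\log n)$ and size $O(n(\ell+\log n))$; per lightcone one then spends an $O(\log n)$-depth equality test of a location copy against the hard-wired value $i$, followed by the $\le\ell$ single-qubit controlled Paulis on the qubits of $\mathsf L_i$ in sequence (the position of each $q\in\mathsf L_i$ inside $\mathsf L_i$, which selects the two Pauli-register bits controlling $q$'s gate, is classical data determined by $C$ and is simply hard-wired). Uncomputing the equality ancillas and the fan-out adds another $O(\log n)$ depth, so $V_c$ has depth $O(\ell+\log n)$ and size $O(n(\ell+\log n))$; summing over the $\Delta$ colors yields the claimed bounds, and the PREP stage is absorbed.

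The one place where the structure of the problem really matters — and the main obstacle — is the parallelization of $V$: a single system qubit may lie in $\Theta(2^d)$ distinct lightcones and would naively receive that many sequential conditional Paulis, each moreover reading the same $\log n$-qubit location register. The lightcone coloring is what guarantees that within each of the $\Delta$ layers a qubit is touched at most once, and fanning the index registers out into many copies is what removes the control contention; everything else (state preparation, equality tests, CNOT fan-out trees) is routine. I expect no real difficulty matching the precise index conventions of \cref{def:block-encoding}, as PREP--SELECT block encodings of sums of unitaries are entirely standard.
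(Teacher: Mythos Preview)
Your proposal is correct and follows essentially the same approach as the paper: both index jump operators by $(i,P)$ on $\log n + 2\ell$ qubits, invoke the lightcone coloring of \cref{lemma:coloring} to split the SELECT into $\Delta$ parallelizable layers, fan out the control register into $n$ copies, perform an $O(\log n)$-depth equality check against the hard-wired center, and apply the controlled $\ell$-qubit Pauli before uncomputing. Your write-up is somewhat more explicit about the PREP/normalization bookkeeping and the $Y\mapsto XZ$ phase issue, but the construction is the same.
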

\begin{proof}
    There are $|A| = n\cdot 4^\ell$ jump operators, which we represent by indexing them using a pair $a=(i, P)$ in terms of the center of its support $i\in [n]$, using $\log n$ qubits, as well as the $\ell$-local Pauli $P$, using $2\cdot \ell$ qubits. Next we proceed using similar techniques to \cref{claim:hamiltonian-simulation}. We begin by partitioning the jump operators into $\Delta$ disjoint subsets using \cref{lemma:coloring}, where any two jump operators in the same subset either have the same center, or do not intersect. Our implementation proceeds by addressing each color $c\in [\Delta]$ independently.
    
    First, we create copies of the control register $\ket{a}\rightarrow \ket{a}^{\otimes n}$, to parallelize access to it. Suppose all the jump operators centered at $j$ have been colored $c$. Our goal is to coherently apply all the controlled jump operators of the form $(j, P)$ by acting only on the support (centered at $j$) and the $j$th control register $\ket{a = (i, Q)}_j$. For this purpose, we first check whether $i=j$, and controlled on the one-qubit outcome, we apply the Pauli $Q$. The check can be implemented using $O(\log n)$ size and depth, and the controlled Pauli in $O(\ell)$ size and depth. We conclude by inverting the checking and copying steps.
\end{proof}

\begin{claim}\label{claim:boltz_filter}
    The controlled filter $W$ can be implemented up to error $\epsilon$ in spectral norm using a circuit of size $O(\polylog(\frac{n}{\epsilon}))$ 2-qubit gates.
\end{claim}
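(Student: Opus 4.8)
The plan is to recognize $W$ as a rotation of the data qubit by an angle that is a smooth, classically-computable function of the value stored in the frequency register, and then implement it by the standard ``compute the angle, rotate, uncompute'' pattern.

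First I would put $W$ into a convenient form. Since $\gamma(\omega)=(1+e^{-\beta\omega})^{-1}$ we have $1-\gamma(\omega)=\gamma(-\omega)$ and $\gamma(-\omega)/\gamma(\omega)=e^{-\beta\omega}$, so the $2\times 2$ block of $W$ at frequency $\omega$ is
\begin{equation*}
\begin{bmatrix}\sqrt{\gamma(\omega)} & -\sqrt{1-\gamma(\omega)}\\ \sqrt{1-\gamma(\omega)} & \sqrt{\gamma(\omega)}\end{bmatrix}\;=\;e^{-i\theta_\omega Y},\qquad \theta_\omega\;=\;\arctan\!\big(e^{-\beta\omega/2}\big)\in\big[0,\tfrac{\pi}{2}\big].
\end{equation*}
Equivalently $W=\exp(-i\,\Theta\otimes Y)$ with $\Theta=\sum_\omega\theta_\omega\ketbra{\omega}$ diagonal in the frequency register; because the spectrum of $H$ is $\{-n,\dots,n\}$, that register uses only $m=\lceil\log(2n{+}1)\rceil=O(\log n)$ qubits. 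So the task reduces to a controlled $Y$-rotation of the data qubit by the angle named by the $\omega$-register.

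The implementation then has three steps. (i) Reversibly compute a $b$-bit fixed-point approximation $\widetilde\theta_\omega$ of $\theta_\omega$ into a fresh $b$-qubit ancilla register, with $b=O(\log\tfrac1\epsilon)$; this is a classical reversible computation on $O(\log n+b)$-bit numbers using standard circuits for $\exp$ and $\arctan$ (truncated series / iterative schemes), of size $\polylog(\tfrac n\epsilon)$. To stop the intermediate quantity $e^{-\beta\omega/2}$ from blowing up to $\Theta(n)$ bits near $\omega=\pm n$, I would handle the two tails $\beta\omega\ge 2\log\tfrac1\epsilon$ and $\beta\omega\le-2\log\tfrac1\epsilon$ separately: there $\theta_\omega$ is within $\epsilon$ of $0$ resp.\ $\pi/2$ and may simply be replaced by that constant, while on the remaining window $|\beta\omega|\le 2\log\tfrac1\epsilon$ the argument of $\arctan$ fits in $O(\log\tfrac1\epsilon)$ bits. (ii) Apply the rotation: writing $\widetilde\theta_\omega=\pi\sum_{j=1}^{b}2^{-j}t_j$ with $t_j\in\{0,1\}$ the bits just computed, execute, for $j=1,\dots,b$, the gate $R_y\!\big(2\pi\cdot 2^{-j}\big)$ on the data qubit controlled on $t_j$; this realizes $R_y(2\widetilde\theta_\omega)=e^{-i\widetilde\theta_\omega Y}$ exactly using $O(b)$ two-qubit gates. (iii) Uncompute the ancilla register by reversing step (i). The whole circuit has $\polylog(\tfrac n\epsilon)$ two-qubit gates and uses $\polylog(\tfrac n\epsilon)$ resettable ancillas.

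For the error bound, I would use $\|R_y(2\alpha)-R_y(2\beta)\|=|e^{i(\alpha-\beta)}-1|\le|\alpha-\beta|$; since the whole construction acts block-diagonally over the $\omega$-register this gives $\|\widetilde W-W\|\le\max_\omega|\widetilde\theta_\omega-\theta_\omega|\le\epsilon$ once the arithmetic delivers $b=O(\log\tfrac1\epsilon)$ accurate bits (a constant number of guard bits absorbs the rounding in the intermediate steps). The main — and essentially the only — technical point is the $\polylog$ reversible evaluation of $\theta_\omega$ together with the tail-truncation that keeps the exponential from overflowing; everything else is the routine ``rotation by a computed value'' gadget. I would also remark that nothing here uses the Glauber form of $\gamma$: any transition weight with $\gamma(\omega)$ poly-time computable works, and the integer spectrum $\{-n,\dots,n\}$ enters only to bound the frequency register at $O(\log n)$ qubits.
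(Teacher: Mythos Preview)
Your argument is correct, and it arrives at the same result by a genuinely different route than the paper. You recognize $W$ as a controlled $Y$-rotation by the explicit angle $\theta_\omega=\arctan(e^{-\beta\omega/2})$, then use the standard compute--rotate--uncompute pattern with reversible classical arithmetic on $O(\log\tfrac{n}{\epsilon})$-bit numbers; the paper instead truncates $\gamma$ outside a window of width $O(\beta^{-1}\log\tfrac{n}{\epsilon})$ and then invokes the QSVT-based implementation from \cite{Chen2023QuantumTS} on that window. Both approaches share the same tail-truncation idea (your cutoff $|\beta\omega|\ge 2\log\tfrac{1}{\epsilon}$ is exactly the paper's $|\omega|>n_\delta$), which is the only substantive step. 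Your approach is more elementary and self-contained (no QSVT black box), and as you note it generalizes immediately to any poly-time computable transition weight; the paper's version is shorter to state because it delegates to an existing lemma. One small slip: the spectrum of $H$ is $\{0,\dots,n\}$, not $\{-n,\dots,n\}$; the latter is the range of the \emph{frequency} $\omega$, but your conclusion that the register has $O(\log n)$ qubits is unaffected.
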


\begin{proof}

    Let us denote $n_\delta = \beta^{-1} \log \frac{1}{\delta}$. Then, the Glauber dynamics weight $\gamma(\nu) = (1+e^{-\beta \nu})^{-1}$ satisfies
    \begin{align}
        \gamma(\nu) &\leq \delta\quad \quad\text{if}\quad \nu \leq - n_\delta, \\
        \text{ and } \gamma(\nu) &\geq 1- \delta \quad\text{if}\quad \nu \geq n_\delta.
    \end{align}
    \noindent We claim that the $W$ gate can be replaced by a truncation $W_\delta$, 
    \begin{equation}
        W_\delta = \sum_{\omega\in [-n, \cdots, n]}\begin{bmatrix}
            \sqrt{\tilde{\gamma}(\omega)} & - \sqrt{1-\tilde{\gamma}(\omega)} \\ 
            \sqrt{1-\tilde{\gamma}(\omega)} & \sqrt{\tilde{\gamma}(\omega)}
        \end{bmatrix}\otimes \ketbra{\omega},\quad  \tilde{\gamma} (\omega) := \begin{cases}
            \gamma(\omega) &\text{ if } \omega \in [-n_\delta, n_\delta] \\
            1 &\text{ if } \omega > n_\delta \\
            0 &\text{ if } \omega < -n_\delta
        \end{cases}.
    \end{equation}

    Indeed, the truncation error is controlled by
    \begin{align}
    \|W -W_\delta\|
        &\leq  \sum_{j\in [-n, -n_\delta]} \| \begin{bmatrix}
            \sqrt{\gamma(\omega)} &  1 - \sqrt{1-\gamma(\omega)} \\ 
            \sqrt{1-\gamma(\omega)} - 1 & \sqrt{\gamma(\omega)}
        \end{bmatrix} \| \\ &+ \sum_{j\in [n_\delta, n]} \| \begin{bmatrix}
            \sqrt{\gamma(\omega)} - 1 &   - \sqrt{1-\gamma(\omega)} \\ 
            \sqrt{1-\gamma(\omega)}  & \sqrt{\gamma(\omega)} - 1
        \end{bmatrix} \|\\
        &\leq 2n\cdot (2\sqrt{\delta}+2\delta)\leq 8n\cdot \sqrt{\delta},
    \end{align}
    
    \noindent where the last line uses that $1-\sqrt{1-x}\leq x$ when $x\in [0, 1/2]$.
    
    It only remains now analyze the gate complexity of implementing $W_\delta$. Following \cite{Chen2023QuantumTS} (pg. 25, footnote 33), the $W_{\delta}$ filter for the Glauber weight between $[-n_\delta, n_\delta]$ can be implemented using the QSVT up to error $\epsilon$ using $\Tilde{O}((1+\beta n_\delta)\polylog \frac{1}{\epsilon})$ 2-qubit gates. With the choice $\delta = O(\frac{\epsilon}{n})$, we arrive at the advertised bounds by combining with the trivial cases $\omega \notin[-n_{\delta},n_{\delta}]$.
\end{proof}

We remark that this error in spectral norm between unitaries is equivalent  to the channel diamond norm distance, up to a constant: $\|U-V\|_\diamond \leq 2\cdot \|U-V\|$. 

Put together, \cref{claim:hamiltonian-simulation}, \cref{claim:jumps} and \cref{claim:boltz_filter} imply \cref{lemma:dissipative_implementation}.

\section{Low-Depth State Preparation on Lattices}
\label{sec:latticeprep}

We dedicate this section to a proof of \cref{theorem:gibbs-prep-lattices}, on the preparation of Gibbs states of parent Hamiltonians of quantum circuits defined on lattices.

\begin{theorem}\label{theorem:gibbs-prep-lattices}
    Fix an inverse-temperature $\beta >0 $, and let $H$ be the parent Hamiltonian of an $n$ qubit, depth $d$, quantum circuit comprised of 2-qubit nearest neighbor gates in $D$ dimensions. Then, there exists a quantum algorithm which prepares the Gibbs state of $H$ up to error $\epsilon$ in depth $2^{O(d^D)}\cdot \polylog\frac{n}{\epsilon}$.
\end{theorem}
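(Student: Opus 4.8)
The plan is to exploit geometric locality to remove the factor of $n$ present in \cref{lemma:results-gibbs-prep}, by combining the rapid mixing of our Davies generator with the divide-and-conquer Gibbs-state preparation of \cite{Brando2016FiniteCL}. First I would record the relevant structure of $H$: if $C$ has depth $d$ and is built from nearest-neighbor two-qubit gates on a $D$-dimensional lattice, then the lightcone $\mathsf{L}_i$ of each qubit lies inside the $\ell_\infty$-ball of radius $d$ about $i$, so the lightcone size obeys $\ell \le (2d+1)^D = O(d^D)$, and $H=\sum_i h_i$ is a commuting Hamiltonian of geometric range $O(d)$ and bounded degree. Substituting $\ell = O(d^D)$ into \cref{lemma:mixingtimebound} gives $t_{mix}(\mathcal{L}) = 2^{O(d^D)}\cdot e^{O(\beta)}\cdot O(\log n)$; equivalently, by \cref{lemma:rapid-mixing}, $\mathcal{L}$ satisfies an MLSI with an $n$-independent rate $\alpha = \Omega(4^{-O(d^D)}e^{-\beta})$, hence its Davies generator has a constant spectral gap. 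Since $\mathcal{L}$ is a geometrically local Lindbladian with a constant gap, standard arguments (e.g.\ \cite{Kastoryano2012QuantumLS,Brando2016FiniteCL}) show that $\rho_\beta$ enjoys uniform clustering of correlations with an $n$-independent correlation length $\xi = O(d^D)$ --- precisely the hypothesis required by the preparation algorithm of \cite{Brando2016FiniteCL}.

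Next I would run the block-decimation scheme of \cite{Brando2016FiniteCL}: partition the lattice into disjoint blocks of side $L=\Theta\!\big(\xi\cdot(t_{mix}+\log\tfrac1\epsilon)\big)=\polylog(n/\epsilon)$; in parallel over blocks, prepare the Gibbs state of each block's restricted Hamiltonian by simulating the geometrically truncated Davies generator on that block for time $O(t_{mix}\log\tfrac1\epsilon)$; then merge blocks over $O(\log n)$ rounds, where each round applies a \emph{local} Davies-type correction supported on the interfaces between already-prepared blocks (of width $O(t_{mix}\log\tfrac1\epsilon)$, by the Lieb--Robinson bound for Lindbladians) to restore the merged region to its Gibbs state up to clustering error. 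Every Lindbladian simulation invoked acts on only $N'=\polylog(n/\epsilon)$ qubits, so by \cref{lemma:dissipative_implementation} it has depth $O\!\big(t_{mix}\log\tfrac1\epsilon\cdot N'\cdot 4^{\ell}\cdot 2^{d}\cdot\poly(\ell,\log N',\log\tfrac1\epsilon)\big)=2^{O(d^D)}\cdot\polylog(n/\epsilon)$; since disjoint blocks, and later disjoint interface regions, are handled simultaneously and there are $O(\log n)$ merge rounds, the total circuit depth is $2^{O(d^D)}\cdot\polylog(n/\epsilon)$. The trace-distance error accumulates additively over $O(\log n)$ rounds and is driven below $\epsilon$ by the polylogarithmic choices of $L$ and of the simulation times, using the clustering bound together with \cref{lemma:mixingtimebound}.

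The main obstacle is the merging step: establishing that a purely \emph{local} dissipative correction on the interfaces suffices to reassemble the global Gibbs state. This is the technical core of \cite{Brando2016FiniteCL}, and it rests on uniform clustering implying ``local indistinguishability'' of Gibbs states --- a block marginal is essentially insensitive to the terms of $H$ outside an $O(\xi\log\tfrac1\epsilon)$-neighborhood --- together with the fast local relaxation furnished by our MLSI. Once that input is imported, what remains is the bookkeeping indicated above: checking that the locality parameters are $O(d^D)$ and $n$-independent, that every simulation invoked is of polylogarithmic size, and that the per-round errors telescope to $\epsilon$.
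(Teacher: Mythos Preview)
Your route and the paper's diverge substantially. The paper's proof of \cref{theorem:gibbs-prep-lattices} does not touch the Davies generator or the MLSI at all. Instead it feeds two \emph{exact} structural properties directly into the framework of \cite{Brando2016FiniteCL}: (i) since $H$ is commuting, the Quantum Hammersley--Clifford theorem gives $I(A{:}C\,|\,B)_{\rho_\beta}=0$ for every shielding tripartition; and (ii) a short lightcone argument (\cref{claim:exact-li}) shows \emph{exact} local indistinguishability whenever $d(A,C)\ge 4d+1$, by exhibiting a unitary supported in $B$ that decouples $A$ from $C$. With both hypotheses holding exactly at length scale $\ell_0=O(d)$, the simplified version of \cite{Brando2016FiniteCL} applies with \emph{exact} Petz recovery maps on regions of volume $O(\ell_0^D)=O(d^D)$, each implemented by brute-force Stinespring dilation in depth $2^{O(d^D)}$. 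No mixing-time bound, no clustering-from-gap argument, and no Lindbladian simulation enter.

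Your approach could in principle be made to work, but it is more circuitous and the merging step as written is not quite right. What you describe --- running a ``local Davies-type correction supported on the interfaces'' to stitch blocks together --- is not the mechanism of \cite{Brando2016FiniteCL}; their algorithm reconstructs the global state via \emph{recovery maps} $\mathcal{R}_{B\to BC}$, not by relaxing an interface region under a truncated Lindbladian. Establishing that a short burst of local Davies dynamics on the seam has the same effect would require an additional argument you have not supplied (and which, to my knowledge, is not in the cited references). If instead you switch to the recovery-map formulation, you will find that the MLSI and clustering detour is unnecessary: commutativity and the lightcone structure already give the needed inputs exactly, and at the sharper linear scale $\ell_0=O(d)$ rather than the volumetric $O(d^D)$ you obtain from bounding $|\mathsf L_i|$.
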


The algorithm of \cref{theorem:gibbs-prep-lattices} based on that of \cite{Brando2016FiniteCL}, who showed that if the Gibbs state satisfies two structural decay-of-correlations properties, then it can be efficiently prepared by a quantum computer. In particular, a decay of the conditional mutual information (CMI) and a certain clustering of correlations. In this section we show that parent Hamiltonians of low depth circuits satisfy strengthened versions of both of these properties, giving rise to a state preparation circuit of nearly constant depth.

\subsection{The Markov Property}

The condition mutual information is an information-theoretic measure of the correlations in a tripartite state $\rho$:
\begin{equation}
    I(A:C|B)_\rho = S(AB)_\rho + S(BC)_\rho  -  S(ABC)_\rho -  S(B)_\rho,
\end{equation}

\noindent where $S(A)_\rho = -\Tr \rho_A\log \rho_A$ is the von Neumann entropy. Roughly, this quantity captures the mutual information between systems $A$ and $C$, conditioned on system $B$.

The quantum states with vanishing conditional mutual information are known as quantum Markov chains, which is equivalent to the existence of a local recovery map that reconstructs $\rho$ from just its subsystems. The following results also handle the cases with errors. 

\begin{theorem}[\cite{Fawzi2014QuantumCM}]
        For any state $\rho$ on $ABC$, there exists a quantum channel $\mathcal{R}_{B\rightarrow BC}$ such that 
        \begin{equation}
            I(A:C|B)_\rho \geq \frac{1}{4\ln 2} \|\rho - [\mathbb{I}_A\otimes \mathcal{R}_{B\rightarrow BC}](\rho_{AB})\|_1^2.
        \end{equation}
    \end{theorem}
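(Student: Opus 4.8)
The statement is the universal recoverability bound of Fawzi and Renner, and I would deduce it from the strengthened (quantitative) monotonicity of the quantum relative entropy under data processing. The first step is purely algebraic: expanding in von Neumann entropies one verifies the identity
\begin{equation}
    I(A:C|B)_\rho \;=\; D\big(\rho_{ABC}\,\|\,\mathbb{I}_A\otimes\rho_{BC}\big)\;-\;D\big(\rho_{AB}\,\|\,\mathbb{I}_A\otimes\rho_B\big),
\end{equation}
since $D(\rho_{ABC}\|\mathbb{I}_A\otimes\rho_{BC})=S(BC)_\rho-S(ABC)_\rho$ and $D(\rho_{AB}\|\mathbb{I}_A\otimes\rho_B)=S(B)_\rho-S(AB)_\rho$. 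Writing $\mathcal{N}:=\mathrm{id}_A\otimes\mathrm{Tr}_C$ and $\sigma:=\mathbb{I}_A\otimes\rho_{BC}$ (an unnormalized but positive operator, which is harmless; non-full-rank marginals are handled by restricting to supports and a continuity argument), we have $\mathcal{N}(\sigma)=\mathbb{I}_A\otimes\rho_B$, so the right-hand side is exactly the deficit $D(\rho_{ABC}\|\sigma)-D(\mathcal{N}(\rho_{ABC})\|\mathcal{N}(\sigma))$ in data processing under $\mathcal{N}$.

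The substantive input is the rotated-Petz universal recovery inequality (Junge--Renner--Sutter--Wilde--Winter; see also the pinched-Petz version of Sutter--Tomamichel--Harrow): for the rotated Petz recovery maps $\mathcal{R}^{t}_{\sigma,\mathcal{N}}$, $t\in\mathbb{R}$, and the probability density $\beta_0(t)=\tfrac{\pi}{2}(\cosh(\pi t)+1)^{-1}$ (with $\int\beta_0=1$),
\begin{equation}
    D(\rho_{ABC}\|\sigma)-D(\mathcal{N}(\rho_{ABC})\|\mathcal{N}(\sigma)) \;\geq\; -2\int_{-\infty}^{\infty}\beta_0(t)\,\log_2 F\!\big(\rho_{ABC},\,(\mathcal{R}^{t}_{\sigma,\mathcal{N}}\circ\mathcal{N})(\rho_{ABC})\big)\,dt,
\end{equation}
where $F$ is the root fidelity. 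For $\mathcal{N}=\mathrm{id}_A\otimes\mathrm{Tr}_C$ and $\sigma=\mathbb{I}_A\otimes\rho_{BC}$ the recovery map factorizes as $\mathcal{R}^{t}_{\sigma,\mathcal{N}}=\mathrm{id}_A\otimes\mathcal{R}^{t}_{B\to BC}$, where $\mathcal{R}^{t}_{B\to BC}$ is the CPTP map $\tau\mapsto \rho_{BC}^{(1+it)/2}\big(\rho_B^{-(1+it)/2}\tau\,\rho_B^{-(1-it)/2}\otimes\mathbb{I}_C\big)\rho_{BC}^{(1-it)/2}$, built only from $\rho_{BC}$; and $(\mathcal{R}^{t}_{\sigma,\mathcal{N}}\circ\mathcal{N})(\rho_{ABC})=(\mathrm{id}_A\otimes\mathcal{R}^{t}_{B\to BC})(\rho_{AB})$.

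To finish, set $\mathcal{R}_{B\to BC}:=\int_{-\infty}^{\infty}\beta_0(t)\,\mathcal{R}^{t}_{B\to BC}\,dt$, a convex mixture of channels and hence CPTP, acting trivially on $A$. Joint concavity of $F$ gives $F\big(\rho_{ABC},(\mathrm{id}_A\otimes\mathcal{R}_{B\to BC})(\rho_{AB})\big)\geq \int\beta_0(t)F_t\,dt$ with $F_t:=F\big(\rho_{ABC},(\mathrm{id}_A\otimes\mathcal{R}^{t}_{B\to BC})(\rho_{AB})\big)$, and convexity of $-\log_2$ with Jensen yields $-2\int\beta_0\log_2 F_t\,dt\geq -2\log_2\int\beta_0 F_t\,dt\geq -2\log_2 F\big(\rho_{ABC},(\mathrm{id}_A\otimes\mathcal{R}_{B\to BC})(\rho_{AB})\big)$; combining with the first two displays, $I(A:C|B)_\rho\geq -2\log_2 F\big(\rho_{ABC},(\mathrm{id}_A\otimes\mathcal{R}_{B\to BC})(\rho_{AB})\big)$. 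Now Fuchs--van de Graaf ($\tfrac12\|\rho-\sigma\|_1\leq\sqrt{1-F(\rho,\sigma)^2}$) gives $F\leq\sqrt{1-\tfrac14\|\cdot\|_1^2}$, so $-2\log_2 F\geq-\log_2\!\big(1-\tfrac14\|\cdot\|_1^2\big)\geq\tfrac{1}{\ln 2}\cdot\tfrac14\|\cdot\|_1^2$ by $-\ln(1-x)\geq x$, which is the claimed bound with $\mathcal{R}_{B\to BC}$ the desired channel. The main obstacle is the second paragraph: establishing the rotated-Petz universal recovery inequality. I expect to prove it by complex interpolation — building an analytic family interpolating the sandwiched R\'enyi relative entropies and applying the Stein--Hirschman (Hadamard three-line) theorem, then taking $\alpha\to1$ to extract the bound with the universal weight $\beta_0$ — or, alternatively, via Fawzi--Renner's original smooth-entropy and state-redistribution argument. (If one is willing to work with the measured relative entropy instead, the non-rotated Petz map $\mathcal{R}^{\mathrm{Petz}}_{B\to BC}(\tau)=\rho_{BC}^{1/2}(\rho_B^{-1/2}\tau\rho_B^{-1/2}\otimes\mathbb{I}_C)\rho_{BC}^{1/2}$ already suffices for the trace-norm conclusion, giving an explicit recovery map.)
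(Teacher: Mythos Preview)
The paper does not prove this statement at all: it is quoted as a black-box result from \cite{Fawzi2014QuantumCM} (and implicitly \cite{Junge2015UniversalRF}), with no argument given. So there is no ``paper's own proof'' to compare against.

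That said, your derivation is essentially the standard modern proof and is correct. The reduction of $I(A{:}C|B)$ to a relative-entropy data-processing deficit under $\mathrm{id}_A\otimes\Tr_C$ is right; the rotated-Petz universal recovery inequality you invoke is precisely the JRSWW strengthening of Fawzi--Renner; the factorization $\mathcal{R}^{t}_{\sigma,\mathcal{N}}=\mathrm{id}_A\otimes\mathcal{R}^{t}_{B\to BC}$ holds because $\sigma=\mathbb{I}_A\otimes\rho_{BC}$; and the chain ``Jensen on $-\log$'' $+$ ``concavity of $F$ in its second argument'' $+$ Fuchs--van de Graaf $+$ $-\ln(1-x)\ge x$ correctly yields the constant $\tfrac{1}{4\ln 2}$ (consistent with $D$ taken in base $2$). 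Your closing remark is also accurate: the heavy lifting is entirely in the second display, and the two routes you name---complex interpolation/three-lines for the rotated Petz family, or the original one-shot/state-redistribution argument of Fawzi--Renner---are exactly the known ways to establish it. The parenthetical about the measured relative entropy giving the non-rotated Petz map is likewise correct and would already suffice for the trace-norm statement quoted here.
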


\noindent In the special case of $I(A:C|B)_\rho = 0$, the recovery map is explicit and can be chosen to be the Petz (or transpose) map \cite{Petz1991OnCP, Junge2015UniversalRF}:
\begin{equation}\label{equation:petz}
    \mathcal{R}_{B\rightarrow BC}[X] = \rho_{BC}^{1/2}\big(\rho_{B}^{-1/2}X\rho_{B}^{-1/2}\otimes \mathbb{I}_C\big)\rho_{BC}^{1/2}
\end{equation}

In the context of this work, we require characterization of the CMI of the Gibbs states of commuting Hamiltonians. Fix a local commuting Hamiltonian $H$ defined on a set of vertices (the qubits) and hyper-edges (the interactions). Given three disjoint subsets $A, B, C$ of the qubits of $H$, we say that $B$ \textit{shields} $A$ from $C$ if all paths on the hyper-edges from $A$ to $C$ must pass through $B$. For instance, the boundary of a region $C$, comprised of the qubits which share an interaction with $C$, shield $C$ from the rest of the lattice/qubits. 

The Quantum Hammersley-Clifford Theorem states that the Gibbs states of commuting local Hamiltonians form quantum Markov Chains, for \textit{shielding} tripartitions. 

\begin{fact}
    [The Quantum Hammersley-Clifford Theorem, \cite{Brown2012QuantumMN}]\label{fact:qhc} Let $H$ denote a commuting local Hamiltonian and $A, B, C$ three disjoint subsets of the qubits of $H$ such that $B$ shields $A$ from $C$. Then, 
    \begin{equation}
        I(A:C|B)_{\rho} = 0, \text{ where } \rho = e^{-H}/\Tr e^{-H}.
    \end{equation}
\end{fact}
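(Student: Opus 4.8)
\noindent\emph{Proof proposal.} The plan is to reduce to a clean two-region splitting of $H$, write in closed form the four marginals that appear in the conditional mutual information, and then verify the operator identity $\log\rho_{ABC}+\log\rho_B=\log\rho_{AB}+\log\rho_{BC}$. This identity suffices: abbreviating $M:=\log\rho_{ABC}-\log\rho_{AB}-\log\rho_{BC}+\log\rho_B$ (with the obvious identity tensor factors), the chain-rule manipulation $\Tr[\rho_{XY}\log\rho_{XY}]=\Tr[\rho\,(\log\rho_{XY}\otimes\mathbb{I})]$ gives $I(A:C|B)_\rho=\Tr[\rho\,M]$, so $M=0$ forces $I(A:C|B)_\rho=0$.

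\smallskip\noindent First I would set up the two-coloring. Because $B$ shields $A$ from $C$, deleting the qubits of $B$ from the interaction hypergraph leaves $A$ and $C$ in distinct unions of connected components; in particular no single hyper-edge joins $A$ directly to $C$. Letting $\hat A$ be the union of all components meeting $A$ and $\hat C$ the union of the remaining qubits, we have $\hat A\sqcup B\sqcup\hat C$ equal to all qubits, $A\subseteq\hat A$, $C\subseteq\hat C$, and no term of $H$ supported on both $\hat A$ and $\hat C$. Since conditional mutual information only decreases when subsystems are discarded on either side (strong subadditivity), $I(A:C|B)_\rho\le I(\hat A:\hat C|B)_\rho$, so it suffices to treat $\hat A,B,\hat C$; renaming, I may assume the qubits are $A\sqcup B\sqcup C$ and $H=H_{AB}+H_{BC}$ with $\operatorname{supp}H_{AB}\subseteq A\cup B$, $\operatorname{supp}H_{BC}\subseteq B\cup C$ (terms inside $B$ assigned to either group). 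Commutativity of $H$ then gives $e^{-H}=e^{-H_{AB}}e^{-H_{BC}}$.

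\smallskip\noindent Next I would compute the marginals. Writing $K_B:=\Tr_C e^{-H_{BC}}$ and $L_B:=\Tr_A e^{-H_{AB}}$ (positive-definite operators on $B$) and $Z:=\Tr e^{-H}$, and using that $e^{-H_{AB}}$ acts trivially on $C$ while $e^{-H_{BC}}$ acts trivially on $A$, so partial traces pull through, one obtains $\rho_{ABC}=\tfrac1Z e^{-H_{AB}}e^{-H_{BC}}$, $\rho_{AB}=\tfrac1Z e^{-H_{AB}}K_B$, $\rho_{BC}=\tfrac1Z L_B e^{-H_{BC}}$, and $\rho_B=\tfrac1Z L_B K_B$. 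I would then extract three commutation relations — $[K_B,H_{AB}]=0$, $[L_B,H_{BC}]=0$, and $[L_B,K_B]=0$ — by partial-tracing the identities $[e^{-H_{BC}},H_{AB}]=0$, $[e^{-H_{AB}},H_{BC}]=0$, $[e^{-H_{AB}},e^{-H_{BC}}]=0$ (all consequences of $H$ being commuting), using in each case that the operator being traced out is supported away from the trace. These relations let every one of the four marginals be written as a product of commuting positive operators, hence its logarithm is the sum of logarithms: $\log\rho_{ABC}=-H_{AB}-H_{BC}-\log Z$, $\log\rho_{AB}=-H_{AB}+\log K_B-\log Z$, $\log\rho_{BC}=-H_{BC}+\log L_B-\log Z$, and $\log\rho_B=\log L_B+\log K_B-\log Z$. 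Adding the first and the last equals the sum of the middle two, so $M=0$ and therefore $I(A:C|B)_\rho=0$.

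\smallskip\noindent The main obstacle is the commutation step: deriving $[K_B,H_{AB}]=0$ and its siblings from commutativity of $H$ together with disjointness of supports is exactly where the ``commuting'' hypothesis does real work (beyond merely factoring $e^{-H}$), and it is what licenses splitting each logarithm into a sum. Getting the reduction in the second paragraph exactly right also needs some care — one must use that shielding rules out even a single $A$--$C$ hyper-edge, so that the connected-component construction genuinely two-colors the interaction terms — as does checking that the various partial traces factor as claimed.
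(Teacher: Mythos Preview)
The paper does not prove this statement; it is quoted as a Fact with a citation to \cite{Brown2012QuantumMN} and used as a black box in \cref{sec:latticeprep}, so there is no in-paper argument to compare against.

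Your proposal is nonetheless correct and follows the standard route through the quantum Hammersley--Clifford literature. The reduction via monotonicity of conditional mutual information to the case $A\sqcup B\sqcup C=$ all qubits with $H=H_{AB}+H_{BC}$ is clean, and the crux --- deriving $[K_B,H_{AB}]=0$, $[L_B,H_{BC}]=0$, $[K_B,L_B]=0$ by partially tracing the commutator identities $[e^{-H_{BC}},H_{AB}]=0$ etc.\ --- is exactly right and is indeed where the commuting hypothesis does work beyond factoring $e^{-H}$. With those in hand each of $\rho_{ABC},\rho_{AB},\rho_{BC},\rho_B$ is a product of commuting strictly positive operators, so the logarithms split additively and the operator identity $\log\rho_{ABC}+\log\rho_B=\log\rho_{AB}+\log\rho_{BC}$ holds exactly, forcing $I(A:C|B)_\rho=0$. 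One tiny quibble: in the reduction paragraph you invoke ``strong subadditivity'' for $I(A:C|B)\le I(\hat A:\hat C|B)$; more precisely this is the chain rule $I(\hat A:\hat C|B)=I(A:\hat C|B)+I(\hat A\setminus A:\hat C|AB)$ together with nonnegativity of each summand (which is SSA), applied twice. Otherwise the argument is complete.
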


We remark that parent Hamiltonians of depth $d$ quantum circuits have local interactions of range $2\cdot d$. In this manner, the boundary of any region $C$ on the lattice is comprised of all the qubits at distance $\leq 2\cdot d$ from $C$, which aren't already in $C$.

\subsection{Local Indistinguishability}

Given a local Hamiltonian $H$ defined on a lattice $\Lambda$, and a subset of said lattice $X\subset \Lambda$, we refer to the subsystem Hamiltonian $H_X$ as all the interactions contained \textit{entirely} in $X$\footnote{i.e., the interactions act trivially outside subsystem $X$}:
\begin{equation}
    H_X = \sum_{e\subset X} h_e.
\end{equation}

\noindent Henceforth let us denote as $\rho \propto e^{-\beta H}$ the Gibbs state of $H$, and $\rho^{X} \propto e^{-\beta H_X}$ the Gibbs state of the subsystem Hamiltonian.

\begin{definition}[Local indistinguishability]
    Let $H$ be a local Hamiltonian defined on a lattice $\Lambda$, $\beta$ an inverse temperature, and $ABC=X\subset \Lambda$ be a partition of a subset of the lattice. Then the Gibbs state of $H$ is said to satisfy \emph{local indistinguishability} on $ABC$ if
    \begin{equation}
        \Tr_{BC}\rho^X = \Tr_B \rho^{AB}.
    \end{equation}
\end{definition}

    That is, the thermal expectation of local observables for a global Hamiltonian can be evaluated using only knowledge of a local Hamiltonian patch. This notion is closely related to the stability of gapped phases \cite{Michalakis2011StabilityOF}. \cite{Brando2016FiniteCL} show that an approximate version of the above follows if one assumes a clustering property for the correlations in the Gibbs state of $H$.

    In the context of this work, we claim that the structure of parent Hamiltonians of low depth circuits implies an exact local-indistinguishability property for their associated Gibbs states.

    \begin{claim}[Exact local indistinguishability for our parent Hamiltonians]\label{claim:exact-li}
        Let $C$ be a quantum circuit comprised of $d$ layers of nearest neighbor gates on a $D$ dimensional lattice $\Lambda$, and $H$ its associated parent Hamiltonian. Then, the Gibbs state of $H$ satisfies local indistinguishability for arbitrary $ABC = X\subset \Lambda$ such that the distance $d(A, C)\geq 4\cdot d+1$.
    \end{claim}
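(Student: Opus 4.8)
The plan is to exploit the circuit structure to collapse both sides of the claimed identity to the reduced state on $A$ of a fixed circuit acting on a \emph{product} input, and then to observe that the two product inputs agree on the backward lightcone of $A$.

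\textbf{Step 1: subsystem Gibbs states are circuit-rotated product states (after padding).} First I would record that for any subset $Y\subseteq\Lambda$, the subsystem Hamiltonian $H_Y=\sum_{i:\,\mathsf{S}_i\subseteq Y}h_i$, viewed as an operator on the full space $\mathcal{H}_\Lambda$ (it acts trivially outside $Y$ since each $h_i$ does), equals $C\big(\sum_{i:\,\mathsf{S}_i\subseteq Y}\ketbra{1}_i\big)C^\dagger$. As the single-qubit projectors commute, $e^{-\beta H_Y}=C\big(\bigotimes_i\tau_i^Y\big)C^\dagger$ with $\tau_i^Y=e^{-\beta\ketbra{1}}$ if $\mathsf{S}_i\subseteq Y$ and $\tau_i^Y=\mathbb{I}$ otherwise; normalizing gives $e^{-\beta H_Y}/\Tr[e^{-\beta H_Y}]=C\,\rho_{\mathrm{prod}}^Y\,C^\dagger$, where $\rho_{\mathrm{prod}}^Y=\bigotimes_i\mu_i^Y$ is the product state with $\mu_i^Y=\sigma_\beta^i$ if $\mathsf{S}_i\subseteq Y$ and $\mu_i^Y=\mathbb{I}/2$ otherwise. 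Since the left-hand side is literally $\rho^Y\otimes\big(\mathbb{I}_{\Lambda\setminus Y}/2^{|\Lambda\setminus Y|}\big)$, tracing out $\Lambda\setminus A$ for any $A\subseteq Y$ (and using that the maximally mixed factor on $\Lambda\setminus Y$ has unit trace) yields
\begin{equation}
    \Tr_{Y\setminus A}\big[\rho^Y\big]\;=\;\Tr_{\Lambda\setminus A}\big[C\,\rho_{\mathrm{prod}}^Y\,C^\dagger\big].
\end{equation}
Applying this with $Y=X$ gives $\Tr_{BC}\rho^X=\Tr_{\Lambda\setminus A}[C\rho_{\mathrm{prod}}^X C^\dagger]$, and with $Y=AB$ gives $\Tr_{B}\rho^{AB}=\Tr_{\Lambda\setminus A}[C\rho_{\mathrm{prod}}^{AB}C^\dagger]$, so it suffices to prove $\Tr_{\Lambda\setminus A}[C\rho_{\mathrm{prod}}^X C^\dagger]=\Tr_{\Lambda\setminus A}[C\rho_{\mathrm{prod}}^{AB}C^\dagger]$.

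\textbf{Step 2: a lightcone argument.} Next I would invoke the standard Heisenberg-picture lightcone bound: for a product input $\bigotimes_i\mu_i$ and any observable $O$ supported on $A$, $\Tr\big[C(\bigotimes_i\mu_i)C^\dagger(O\otimes\mathbb{I})\big]=\Tr\big[(\bigotimes_i\mu_i)\,C^\dagger(O\otimes\mathbb{I})C\big]$, and $C^\dagger(O\otimes\mathbb{I})C$ is supported on the backward lightcone $\mathcal{N}_A:=\{i:\mathsf{L}_i\cap A\neq\emptyset\}$; since each $\mu_i$ is normalized, this quantity depends only on $\{\mu_i\}_{i\in\mathcal{N}_A}$, hence so does the reduced state $\Tr_{\Lambda\setminus A}[C(\bigotimes_i\mu_i)C^\dagger]$. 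It therefore remains only to check $\mu_i^X=\mu_i^{AB}$ for every $i\in\mathcal{N}_A$, i.e.\ $\mathsf{S}_i\subseteq X\iff\mathsf{S}_i\subseteq AB$. For $i\in\mathcal{N}_A$ there is $a\in A$ with $a\in\mathsf{L}_i$; as $C$ has depth $d$ its lightcones satisfy $\mathsf{L}_i\subseteq B_d(i)$, so $d(i,A)\le d$ and thus $\mathsf{S}_i\subseteq\mathsf{L}_i\subseteq B_d(i)\subseteq B_{2d}(A)$. Because $d(A,C)\ge 4d+1$ we have in particular $B_{2d}(A)\cap C=\emptyset$, so $\mathsf{S}_i\cap C=\emptyset$, and consequently $\mathsf{S}_i\subseteq X=ABC$ if and only if $\mathsf{S}_i\subseteq AB$. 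This gives $\mu_i^X=\mu_i^{AB}$ on $\mathcal{N}_A$ and hence $\Tr_{BC}\rho^X=\Tr_{B}\rho^{AB}$.

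\textbf{Main obstacle.} There is no deep difficulty once Step 1 is in place; the care is in the bookkeeping between the subsystem space $\mathcal{H}_Y$ on which $\rho^Y$ literally lives and the full space $\mathcal{H}_\Lambda$ on which $C$ acts, and in keeping the lightcone directions straight (forward lightcones $\mathsf{L}_i$ of input qubits versus the backward lightcone $\mathcal{N}_A$ of the output region). If one prefers to avoid the padding step, an equivalent route is to work throughout on $\mathcal{H}_\Lambda$: observe that $\rho^Y\otimes(\mathbb{I}/2^{|\Lambda\setminus Y|})$ is itself the Gibbs state of the commuting Hamiltonian $\sum_{i:\mathsf{S}_i\subseteq Y}h_i$ on $\mathcal{H}_\Lambda$, after which the whole statement is a single lightcone computation comparing the inputs $\rho_{\mathrm{prod}}^X$ and $\rho_{\mathrm{prod}}^{AB}$ on the region $\mathcal{N}_A\subseteq B_d(A)$.
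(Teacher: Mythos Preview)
Your proof is correct and takes a genuinely different route from the paper's. The paper argues by explicitly conjugating $H_X$ by the unitary $U$ consisting of all gates of $C$ whose lightcones lie entirely inside $B$, thereby decoupling the Hamiltonian into two non-interacting pieces $H_{AB_A}+H_{B_-B_CC}$; the tensor-product structure of the rotated Gibbs state then forces the reduced state on $A$ to coincide whether one starts from $H_X$ or from $H_{AB}$. Your approach instead pads every subsystem Gibbs state to the full lattice, writes it uniformly as $C$ acting on a product input $\rho_{\mathrm{prod}}^Y$, and reduces the identity to checking that $\rho_{\mathrm{prod}}^X$ and $\rho_{\mathrm{prod}}^{AB}$ agree on the backward lightcone $\mathcal{N}_A$. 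This is more elementary (no decoupling unitary, no partition of $B$) and in fact only uses $d(A,C)\geq 2d+1$: you need $\mathsf{S}_i\subseteq B_{2d}(A)$ to miss $C$, nothing more. The paper's decoupling argument genuinely requires the extra factor of two, since it must separate the supports of terms indexed by $A\cup B_A$ from those indexed by $C\cup B_C$, so your route actually yields a sharper constant.
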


    Here we define the distance between subsets of the lattice $d(A, C) = \min_{i\in A, j\in C}d(i,j)$, defined by the path-length over edges of the lattice. 

    \begin{proof}
        We claim that the unitary $U$ comprised of all the gates in the reverse-lightcone of $B$ i.e, the union of all the lightcones $L_i$ which are entirely contained in $B$, disentangles $A$ from $C$ in the Gibbs state:
        \begin{equation}
            (U^\dagger\otimes \mathbb{I})\rho^X(U\otimes \mathbb{I}) = \sigma_{AB_1}\otimes \gamma_{B_2 C}.
        \end{equation}
        \noindent This decoupling implies the desired property. To begin, note that the depth $d$ of the circuit $C$ implies a bound of $\leq 2d$ on the range of interaction of the terms in the Hamiltonian $H$. In this manner, if $d(A, C)\geq 2\cdot d+1$ then $B$ screens $A$ from $C$ in the subsystem Hamiltonian $H_X$. 

        Suppose we partition the qubits of $B$ into $B_A$, all qubits of $B$ at distance $\leq d$ from $A$, $B_C$, those at distance $\leq d$ from $C$, and $B_- = B\setminus (B_A\cup B_C)$. Since $d(A, C)\geq 4d+1$, no two terms $h_i, h_j$, $i\in A\cup B_A$ and $j\in C\cup B_C$ have intersecting support. Moreover, for $k\in B_-$ such that the lightcone $L_k\subset B$, 
        \begin{equation}
            U^\dagger h_k U = U^\dagger C(\ketbra{1}_k\otimes \mathbb{I}_{[n]\setminus \{k\}})C^\dagger U = \ketbra{1}_k,
        \end{equation}

        \noindent and therefore other $U^\dagger h_i U$ has support on any element of $B_-$. Thus, the unitary $U$ partitions the Hamiltonian into non-interacting components:
        \begin{equation}
            U^\dagger H_XU = H_{AB_A} + H_{B_- B_C C}.
        \end{equation}
        To conclude, we observe that $U^\dagger \rho^XU$ is the Gibbs state of $U^\dagger H_XU$.
     \end{proof}

    We remark that the claim above can alternatively be seen as a consequence of both clustering of correlations and the quantum belief propagation equations with 0 error \cite{Brando2016FiniteCL, Hastings2007QuantumBP}.

\subsection{The Algorithm}

We can now invoke a much simplified version of the algorithm of \cite{Brando2016FiniteCL}, on the preparation of Gibbs states of Hamiltonians of $D$-dimensional lattices. Their algorithm is based on the operation of recovery maps on finite-sized regions, namely up to a certain correlation length-scale $\ell_0$. To instantiate their algorithm, it remains to show how to perform these recovery maps. 

Fortunately, we can appeal to the explicit structure of the Petz map of \cref{equation:petz}. Via a Stinespring dilation\footnote{See e.g. Chapter 2 of \cite{Watrous2018TheTO}}, each such recovery channel can be purified into a quantum circuit on a doubled Hilbert space. In turn, said circuit can be implemented in exponential time in the volume (the number of qubits) of the region, up to factors of $\polylog\frac{n}{\epsilon}$. See also \cite{gilyen2022quantumPetz} for a more general discussion on implementing Petz recovery maps. 

\begin{theorem} [\cite{Brando2016FiniteCL},  Theorem 6, Simplified] Fix $\beta > 0$, let $H$ be a bounded local Hamiltonian defined on $D$ dimensional lattice $\Lambda$, and let $\rho$ be its Gibbs state at inverse-temperature $\beta$. If $\rho$ is both locally indistinguishable and exactly Markov for all tripartitions $ABC=X\subset \Lambda$ such that $d(A, C)\geq \ell_0$, then a purification of $\rho$ can be prepared up to error $\epsilon$ using a quantum circuit of depth $2^{O(\ell_0^D)}\cdot \polylog \frac{n}{\epsilon}$.
\end{theorem}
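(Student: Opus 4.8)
The plan is to reconstruct a purification of $\rho$ by a chain of \emph{exact local recoveries}, scheduled so that they collapse into $O_D(1)$ rounds of parallel $O(\ell_0^D)$-qubit operations. Two consequences of the hypotheses drive everything. First (exact recovery): whenever $ABC\subseteq\Lambda$ with $d(A,C)\ge\ell_0$, exact Markovianity gives $I(A:C\mid B)_\rho=0$, so the Petz map $\mathcal{R}_{B\to BC}$ of \cref{equation:petz} satisfies $\mathcal{R}_{B\to BC}[\rho_{AB}]=\rho_{ABC}$ exactly \cite{Petz1991OnCP,Junge2015UniversalRF}; passing to purifications, a Stinespring dilation turns $\mathcal{R}_{B\to BC}$ into an isometry $W_B:\mathcal{H}_B\to\mathcal{H}_B\otimes\mathcal{H}_C\otimes\mathcal{H}_E$ (cf.\ \cite{Watrous2018TheTO,gilyen2022quantumPetz}), and applying $W_B$ to the $B$-register of any purification of $\rho_{AB}$ produces a purification of $\rho_{ABC}$, leaving the reduced state on $AB$ untouched since $\Tr_C\rho_{ABC}=\rho_{AB}$. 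Second (local computability): by local indistinguishability, the operators $\rho_B,\rho_{BC}$ in $\mathcal{R}_{B\to BC}$ equal reductions of Gibbs states of \emph{local} subsystem Hamiltonians on $O(\ell_0^D)$ qubits, so $W_B$ is a fixed unitary on $O(\ell_0^D)$ qubits whose matrix is computable classically in time $2^{O(\ell_0^D)}$ and then synthesizable by a circuit of depth $2^{O(\ell_0^D)}\polylog(1/\epsilon')$ to precision $\epsilon'$ (the condition number $\|\rho_B^{-1}\|\le 2^{O(\beta\ell_0^D)}$ enters only the prefactor).

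Next I would fix the schedule. Tile $\Lambda$ by hypercubes of side $\Theta(\ell_0)$ and properly color them with $O_D(1)$ colors so that two hypercubes of the same color are at distance $\ge 2\ell_0+1$; process colors $1,2,\dots$ in order, maintaining the invariant that after color $c$ we hold a purification of $\rho_{Y_{\le c}}$, where $Y_{\le c}$ is the union of the hypercubes of colors $\le c$. \textbf{Base case:} the color-$1$ hypercubes are pairwise at distance $\ge\ell_0$, so exact Markovianity applied iteratively with $B=\emptyset$ gives $\rho_{Y_1}=\bigotimes_{X\text{ of color }1}\rho_X$, and each $\rho_X$ is (by local indistinguishability) the Gibbs state of a local $O(\ell_0^D)$-qubit patch, hence brute-force preparable; as these act on disjoint qubits, this round has depth $2^{O(\ell_0^D)}$. \textbf{Inductive round:} given a purification of $\rho_{Y_{<c}}$, for each color-$c$ hypercube $Z$ let $B_Z$ be the sites of $Y_{<c}$ within distance $\ell_0$ of $Z$ and $A_Z=Y_{<c}\setminus B_Z$; then $A_Z\cup B_Z=Y_{<c}$ and $d(A_Z,Z)>\ell_0$, so applying $W_{B_Z}$ to the $B_Z$-register extends the purification from $\rho_{A_ZB_Z}=\rho_{Y_{<c}}$ to $\rho_{A_ZB_ZZ}$. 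The regions $B_Z\cup Z$ over distinct color-$c$ hypercubes are pairwise disjoint (by the separation $2\ell_0+1$), so all these isometries act on disjoint qubits, commute, are applied in parallel, and compose to a purification of $\rho_{Y_{\le c}}$; each acts on $O(\ell_0^D)$ qubits, so the round has depth $2^{O(\ell_0^D)}$.

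After $O_D(1)$ rounds we have a purification of $\rho_{\Lambda}=\rho$. To account for finite precision, synthesize each of the $O(n/\ell_0^D)$ recovery isometries (and the base-case Gibbs states) to operator-norm error $\epsilon/\poly(n)$; this adds a $\polylog(n/\epsilon)$ factor and uses $\poly(n)$ ancillas (the accumulated Stinespring environments), giving total depth $2^{O(\ell_0^D)}\polylog(n/\epsilon)$, as claimed.

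The main obstacle is designing the last two steps so that three requirements hold at once: (i) every recovery isometry acts on only $O(\ell_0^D)$ qubits — forced by taking $B_Z$ to be the width-$\ell_0$ collar of a hypercube of side $\Theta(\ell_0)$; (ii) the collars within a single color are pairwise disjoint, so the parallel round is well-defined — forced by the distance-$(2\ell_0+1)$ separation, which still costs only $O_D(1)$ colors; and (iii) the induction invariant is genuinely preserved, i.e., the state held before merging $Z$ is \emph{exactly} the reduction of the global $\rho$, so that exact Markovianity applies with shielding region $B_Z$ and the recovery is error-free. Verifying (iii) is the delicate part: one must check that $B_Z$ being the full width-$\ell_0$ collar guarantees $d(A_Z,Z)\ge\ell_0$ (even after earlier merges in the same round, which only added far-away hypercubes), that each merge leaves the already-built region untouched ($\Tr_Z\rho_{A_ZB_ZZ}=\rho_{A_ZB_Z}$), and that the merges within a color commute. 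The supporting ingredients — exactness of Petz recovery under $I(A:C\mid B)=0$, and the reduction of $\rho_B,\rho_{BC}$ to computable $O(\ell_0^D)$-qubit patches via local indistinguishability — are then comparatively routine.
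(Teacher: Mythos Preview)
Your proposal is correct and essentially reconstructs the algorithm of \cite{Brando2016FiniteCL} specialized to the exact case, which is precisely what the paper invokes rather than proves. The paper itself does not give a proof of this theorem; it cites the result and only adds the remark that under \emph{exact} Markovianity the recovery map is the explicit Petz map (\cref{equation:petz}), which via a Stinespring dilation becomes a unitary on a doubled Hilbert space implementable in time exponential in the region size---exactly the mechanism you use. Your tiling-and-coloring schedule with $O_D(1)$ rounds, the use of local indistinguishability to localize $\rho_B,\rho_{BC}$ to $O(\ell_0^D)$-qubit patches, and the telescoping error analysis are all in line with the cited construction; the only thing to note is that the paper's brief discussion does not spell out the parallelization argument you give in detail, but it is implicit in the $\polylog\frac{n}{\epsilon}$ depth claim.
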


In full generality, \cite{Brando2016FiniteCL}'s result applies to Gibbs states which are approximately Markov and approximately locally indistinguishable. However, the typically exponentially-decaying tail in the approximation error implies quasi-polynomial-depth state preparation algorithms. By appealing to their exact counterparts, we arrive at a polylog depth state-preparation algorithm. To conclude \cref{theorem:gibbs-prep-lattices}, we note that $\ell_0 = O(d)$ from \cref{fact:qhc} and \cref{claim:exact-li}.

\section{The Input Noise Model and Gibbs States of Quantum Circuits}
\label{section:noise}

In this section, we show that the Gibbs states of parent Hamiltonians of quantum circuits correspond to noisy versions of the output of the quantum circuit, under a certain input noise model. To begin, let us recollect the noise model. Fix a noise rate $p\in (0, 1)$. The single-qubit bit-flip error channel consists of the superoperator
\begin{equation}\label{equation:bit-flip}
    \mathcal{D}_p(\sigma) = (1-p)\cdot \sigma + p\cdot X\sigma X.
\end{equation}

Given a quantum circuit $C$ on $n$ qubits, the input noise model consists of independent applications of the bit-flip error channel on the input wires of $C$. In particular, the mixed state given by the output of the noisy circuit is:
\begin{equation}
        \rho = C\bigg( \mathcal{D}_p(\ketbra{0})\bigg)^{\otimes n} C^\dagger
    \end{equation}

\noindent For a fixed $n$ qubit quantum circuit $C$, recall that we refer to the parent Hamiltonian of $C$ as 

\begin{equation}
    H_C = C\bigg(\sum_{i\in [n]} \ketbra{1}_i\otimes \mathbb{I}_{[n]\setminus i}\bigg)C^\dagger
\end{equation}

\begin{lemma}\label{lemma:gibbs_input_noise}
    Fix $\beta>0$, and let $H_C$ be the parent Hamiltonian of a quantum circuit $C$. The Gibbs state of $H_C$ at inverse-temperature $\beta$ is given by the output of the circuit $C$ under input level noise with probability $p = (1+e^{\beta})^{-1}:$
    \begin{equation}
        \rho_\beta =\frac{ e^{-\beta H_C}}{\Tr e^{-\beta H_C}} = C \bigg( \mathcal{D}_p(\ketbra{0})\bigg)^{\otimes n} C^\dagger
    \end{equation}
\end{lemma}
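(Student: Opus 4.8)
The plan is a direct computation that exploits the commuting, product structure of $H_{\noti}=\sum_{i}\ketbra{1}_i$. First I would pull the circuit outside the exponential: since $H_C = C H_{\noti} C^\dagger$ and conjugation by a unitary commutes with the matrix exponential and preserves the trace, $e^{-\beta H_C} = C\, e^{-\beta H_{\noti}}\, C^\dagger$ and $\Tr e^{-\beta H_C} = \Tr e^{-\beta H_{\noti}}$, so $\rho_\beta = C\bigl(e^{-\beta H_{\noti}}/\Tr e^{-\beta H_{\noti}}\bigr)C^\dagger$. Next, because the single-qubit projectors $\ketbra{1}_i$ act on distinct tensor factors (hence commute), the exponential factorizes as $e^{-\beta H_{\noti}} = \bigotimes_{i\in[n]} e^{-\beta \ketbra{1}_i}$, and the normalized Gibbs state of $H_{\noti}$ is a tensor product of identical single-qubit states; it therefore suffices to analyze one qubit.

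On a single qubit, diagonalizing $\ketbra{1}$ (eigenvalue $0$ on $\ket{0}$, eigenvalue $1$ on $\ket{1}$) gives $e^{-\beta\ketbra{1}} = \ketbra{0} + e^{-\beta}\ketbra{1}$, with trace $1+e^{-\beta}$. Hence the single-qubit Gibbs state equals $\tfrac{1}{1+e^{-\beta}}\ketbra{0} + \tfrac{e^{-\beta}}{1+e^{-\beta}}\ketbra{1}$. Setting $p := \tfrac{e^{-\beta}}{1+e^{-\beta}} = (1+e^{\beta})^{-1}$ and using $\ketbra{1} = X\ketbra{0}X$, this is exactly $(1-p)\ketbra{0} + p\,X\ketbra{0}X = \mathcal{D}_p(\ketbra{0})$. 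Taking the $n$-fold tensor product and conjugating by $C$ yields $\rho_\beta = C\bigl(\mathcal{D}_p(\ketbra{0})\bigr)^{\otimes n} C^\dagger$, which is the claim.

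I do not expect any real technical obstacle here — the statement is essentially a one-line consequence of the commuting product structure. The only points requiring care are bookkeeping: verifying the algebraic identification $\tfrac{e^{-\beta}}{1+e^{-\beta}} = (1+e^{\beta})^{-1}$ so the flip probability matches the statement, and, if one instead uses the affinely rescaled/frustration-free Hamiltonian of \cref{section:gibbs_prep_appendix} (or the $-\sum_i Z_i$ normalization of \cref{fig:noisemodel}), tracking the induced factor of two in the inverse temperature, which turns $p$ into $(1+e^{2\beta})^{-1}$ in that convention.
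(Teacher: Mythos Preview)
Your proposal is correct and follows essentially the same approach as the paper: both reduce to the non-interacting Gibbs state $\sigma_\beta$ of $H_{\noti}=\sum_i\ketbra{1}_i$ via unitary conjugation, compute it as a tensor product of single-qubit thermal states $\tfrac{1}{1+e^{-\beta}}\ketbra{0}+\tfrac{e^{-\beta}}{1+e^{-\beta}}\ketbra{1}$, and identify this with $\mathcal{D}_p(\ketbra{0})$ for $p=(1+e^{\beta})^{-1}$. Your remark about the alternative normalization yielding $(1+e^{2\beta})^{-1}$ is also correct and matches the paper's footnote on the affine rescaling.
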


\begin{proof}
    It suffices to consider the Gibbs state $\sigma_\beta$ of the Hamiltonian $H = \sum_{i\in [n]} \ketbra{1}_i $, as $\rho_\beta = C\sigma_\beta C^{\dagger}$. Since $H$ is commuting, the partition function can be written as:
    \begin{gather}
        \Tr e^{-\beta H_C} = \Tr e^{-\beta H} = \sum_{x\in \{0, 1\}^n} \prod_i^n \bra{x_i} e^{-\beta \ketbra{1}_i}\ket{x_i} = \\ = \prod_i^n \sum_{x_i\in \{0, 1\}} \bra{x_i} e^{-\beta \ketbra{1}_i}\ket{x_i} = (1+e^{-\beta})^n.
    \end{gather}

    \noindent Therefore, the Gibbs state of $H$ can be expressed as the outcome of the depolarizing channel:
    \begin{equation}
        \sigma_\beta = (1+e^{-\beta})^{-n}\cdot e^{-\beta H} = \bigotimes_i^n \bigg( \frac{\ketbra{0}}{1+e^{-\beta}} + \frac{\ketbra{1}}{1+e^{\beta}}\bigg) = \bigg( \mathcal{D}_p(\ketbra{0})\bigg)^{\otimes n},
    \end{equation}

    \noindent with $p = (1+e^{\beta})^{-1}$.
    
\end{proof}

\section{Computational Complexity of Shallow IQP Sampling}
\label{section:appendix-hardness}

In recent years several architectures have been proposed for achieving a quantum speedup, based on quantum processes which resemble or are equivalent to the IQP Circuit Sampling task discussed in \cref{section:hardness}. The basis for these speedups is on standard complexity-theoretic conjectures, including the non-collapse of the Polynomial Hierarchy, often in addition to strong assumptions on the hardness of computing permanents or partition functions. We dedicate this section to a discussion on the background behind \cref{theorem:sampling}, as well as a comparison to related statements in the literature. 

To begin, let us recollect the circuit described in \cref{section:hardness}, comprised of a 2D cluster state and random phase gates \cite{BermejoVega2017ArchitecturesFQ, Gao2016QuantumSF}.

\begin{figure}[H]
\begin{tcolorbox}
\begin{enumerate}
    \item Prepare an $n$ qubit cluster state on a 2D rectangular lattice using a layer of Hadamard gates and 4 layers of CZ gates.

    \item Sample a random string $b \in [7]^n$, and apply powers of $T$ gates to each qubit:
    \begin{equation}
        \bigotimes_{i\in [n]}T^{b_i}\bigg(\prod_{<i, j>} CZ_{i, j} \ket{+}^{\otimes n}\bigg) = C_b\ket{0}^{\otimes n}, \text{ where }T = \begin{bmatrix}
            1 & 0\\
            0 & e^{i\pi/4}
        \end{bmatrix}.
    \end{equation}

    \item Finally, measure the output in the $X$ basis. 
\end{enumerate}
\end{tcolorbox}
\caption{A family of random IQP circuits, $\{C_b\}$.}
    \label{fig:iqp_circuits}
\end{figure}

If instead of \textit{random} powers of single-qubit $T$ gates, the powers were chosen adaptively given partial measurements of the circuit, this scheme would implement measurement-based quantum computation \cite{Broadbent2008UniversalBQ}. The universality of MBQC under adaptivity (or post-selection) implies the hardness of \textit{exactly} sampling from the output distribution, unless the polynomial hierarchy collapses to the third level \cite{Bremner2010ClassicalSO, Gao2016QuantumSF}. To reproduce their argument, universality implies

\begin{equation}
    \mathsf{PostIQP} \under{=}{\text{\cite{Bremner2010ClassicalSO}}} \mathsf{PostBQP} \under{=}{\text{\cite{Aaronson2004QuantumCP}}} \mathsf{PP}.
\end{equation}

\noindent If we now assume there existed a classical algorithm to exactly sample from arbitrary IQP circuits, that would imply $\mathsf{PP} = \mathsf{PostIQP}\subseteq \mathsf{PostBPP}$, which in turn gives us a collapse of the Polynomial Heirarchy (henceforth, $\mathsf{PH}$):
\begin{equation}
  \mathsf{PH}\under{=}{\text{Toda's Theorem}}  P^{\mathsf{PP}} \under{=}{\text{By assumption}} P^{\mathsf{PostBPP}}  = \Sigma_3.
\end{equation}

In fact, by similar reasoning \cite{Bremner2010ClassicalSO} (Theorem 2) showed that no classical algorithm can even \textit{weakly} approximately sample from IQP circuits - i.e. up to some fixed multiplicative error. To extend these hardness results to approximate sampling (up to some additive error) in total variation distance, we require stronger assumptions. 

\cite{Bremner2015AveragecaseCV} were the first to show that, assuming an additional complexity-theoretic conjecture on the average-case hardness of computing partition functions, approximately sampling from the output of IQP circuits remains classically intractable even up to small total variation distance. They noted that the output distribution of IQP circuits, 

\begin{equation}
    p_x = |\bra{x}C\ket{0}^{\otimes n}|^2 = 2^{-n}\cdot \big|\mathcal{Z}_x\big|^2,
\end{equation}

\noindent precisely resembles a complex-valued partition function, defined by $w_{u, v}, w_u$ real-valued edge and vertex weights on some underlying architecture graph $G$:
\begin{gather}
    \mathcal{Z}_x = \sum_{z\in \{\pm 1\}^n} \exp\bigg[i\bigg(\sum_{<u, v>}w_{uv}z_uz_v + \sum_u (\pi\cdot x_u + w_u) z_u\bigg].
\end{gather}

 They prove that approximating $\big|\mathcal{Z}_x\big|^2$, and therefore $p_x$, up to multiplicative error is $\# P$ hard in the worst-case, and pose as a conjecture its hardness in the average case over $x$. Under this conjecture, \cite{Bremner2015AveragecaseCV} show that the existence of an efficient classical algorithm to approximately sample from $\{p_x\}$, even up to constant TVD, would imply a collapse of the polynomial hierarchy. 

However, the original results of \cite{Bremner2015AveragecaseCV} referred to a complete graph $G$, which, roughly speaking, correspond to IQP circuits of some polynomial depth. In follow up work by the same authors \cite{Bremner2016AchievingQS}, they reduced the circuit depth to logarithmic under a sparsified version of the graph $G$. It was only in \cite{Gao2016QuantumSF} and \cite{BermejoVega2017ArchitecturesFQ} that the $\# P$ hardness of approximately computing $p_x$ on 2D circuit architectures was established (in the worst-case), corresponding to constant depth IQP circuits in 2D. Their analogous average-case conjecture for approximately computing $p_x$ on 2D circuits, is reproduced below:

\begin{conjecture}
[\cite{Gao2016QuantumSF}]\label{conjecture:mixture_of_errors} There exists a choice of vertex and edge weights $\{w_{uv}, w_u\}_{u, v\in[n]}$ on a 2D lattice $G$, and constants $\epsilon, \delta$, such that approximating the measurement distribution $\{p_x\}$ to the following mixture of multiplicative and additive errors 
\begin{equation}
    |\Tilde{p}_x - p_x| \leq \frac{1}{\poly(n)}\cdot p_x + \frac{\epsilon}{\delta\cdot 2^{n}}
\end{equation}

\noindent is $\#$P hard for any $1-\delta$ fraction of instances $x$.
\end{conjecture}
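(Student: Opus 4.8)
\emph{This final statement is deliberately phrased as a conjecture, and a full proof is not expected here; what follows is the template one would attack it with, and the point at which current techniques fall short.} The natural starting point is the \emph{worst-case} hardness that is already available: for the 2D constant-depth architecture of \cref{fig:iqp_circuits}, \cite{Gao2016QuantumSF, BermejoVega2017ArchitecturesFQ} establish that approximating $p_x = 2^{-n}\,|\mathcal{Z}_x|^2$ up to a small multiplicative factor is $\#\mathsf{P}$-hard in the worst case over $x$. The goal is then twofold: upgrade ``worst-case $x$'' to ``a $1-\delta$ fraction of $x$'', and simultaneously weaken the error requirement to the mixed bound $\frac{1}{\poly(n)}\,p_x + \frac{\epsilon}{\delta 2^n}$ in the statement.

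The plan is a worst-case-to-average-case self-reduction in two stages. First, one exploits a \emph{hiding} structure of the circuit family: because the output-dependence enters $\mathcal{Z}_x$ only through the character $(-1)^{x\cdot z}$ and the weights enter as $e^{\pm i w_{uv}}$, there is a measure-preserving symmetry of the family that maps any fixed instance to a (nearly) uniformly random $x$, while transforming the partition-function value in a known way — this is the mechanism by which an oracle correct on a $1-\delta$ fraction of $x$ becomes useful against an adversarially chosen instance, in the spirit of \cite{Bremner2015AveragecaseCV, Bremner2016AchievingQS}. Second, one introduces an auxiliary continuous deformation of the weights along a $\poly(n)$-degree algebraic path so that $\mathcal{Z}_x$, restricted to the path, is a low-degree polynomial whose value at the origin is the worst-case target; sampling $\poly(n)$ points of the path, querying the average-case oracle, and running Berlekamp--Welch (Reed--Solomon) decoding — i.e.\ Lipton-style random self-reducibility of $\#\mathsf{P}$ functions — recovers that value. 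Combined with anticoncentration of $\{p_x\}$ \cite{Hangleiter2017AnticoncentrationTF}, which certifies that a typical $p_x = \Theta(2^{-n})$, this would already yield $\#\mathsf{P}$-hardness of computing $p_x$ \emph{exactly} on a $1-\delta$ fraction of $x$.

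The main obstacle — and the reason the statement remains a conjecture — is \textbf{robustness of the interpolation}. If every queried value of $p_x$ carries an additive error of order $\epsilon/(\delta 2^n)$, which by anticoncentration is a constant fraction of the typical signal, then exact Berlekamp--Welch decoding fails: one needs a \emph{noisy} polynomial-reconstruction argument that tolerates a bounded fraction of grossly corrupted evaluations together with small perturbations on the remainder, and it must be quantitatively compatible with the specific $\frac{1}{\poly(n)}\,p_x + \frac{\epsilon}{\delta 2^n}$ model on the \emph{2D, constant-depth} architecture (rather than the sparse or polynomial-depth families where such robust arguments are cleaner, e.g.\ \cite{Bremner2016AchievingQS}). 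I would therefore expect the bulk of the work to be a robust self-correction lemma for $\mathcal{Z}_x$ on the 2D lattice, together with a sharp enough anticoncentration estimate to pin down the fraction of $x$ on which decoding may be allowed to fail. Closing the gap between the error tolerance provable by present techniques and the tolerance demanded above is precisely the open problem, so the statement is retained here as the working complexity-theoretic assumption, on equal footing with the hardness conjectures underpinning all existing sampling-based quantum-supremacy proposals.
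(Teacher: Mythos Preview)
Your reading is correct: the statement is a \emph{conjecture}, and the paper makes no attempt to prove it. It is imported wholesale from \cite{Gao2016QuantumSF} and used purely as a complexity-theoretic assumption underpinning \cref{theorem:sampling}; the paper's only engagement with it is to recall the worst-case $\#\mathsf{P}$-hardness that motivates it and to note that \cite{Gao2016QuantumSF} show the conjecture implies the approximate-sampling hardness.

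Your outline of the worst-to-average-case template (hiding symmetry, Lipton-style polynomial interpolation, anticoncentration) and your identification of the obstruction --- the lack of a robust interpolation lemma that tolerates the mixed $\frac{1}{\poly(n)}p_x + \frac{\epsilon}{\delta 2^n}$ error on the 2D constant-depth family --- is accurate and is in fact more detailed than anything the paper itself provides. There is nothing to compare against in the paper, and nothing in your write-up is wrong; just be aware that the paper treats this strictly as an assumption, so no ``proof'' is expected or supplied.
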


\cite{Gao2016QuantumSF} show that \cref{conjecture:mixture_of_errors} implies \cref{theorem:sampling}:

\begin{theorem} [\cite{Gao2016QuantumSF}, restatement of \cref{theorem:sampling}]
Assuming \cref{conjecture:mixture_of_errors}, simulating the distribution $\{p_x\}$ up to $\epsilon$ total variation distance is classically intractable, assuming $\mathsf{PH}$ doesn't collapse. 
\end{theorem}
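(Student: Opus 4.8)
The plan is to run the by-now-standard Stockmeyer–Toda hardness argument, with \cref{conjecture:mixture_of_errors} supplying the worst-case-to-average-case ingredient; the crucial point is that the hybrid multiplicative/additive error in the conjecture is exactly calibrated to what the argument produces. Let $\{C_n\}$ be the constant-depth 2D IQP family of \cref{fig:iqp_circuits} whose output weights realize the conjecture, and write $p_x = |\bra{x}C_n\ket{0}^{\otimes n}|^2$. \emph{Step 1: a classical sampler yields approximate counting.} Suppose for contradiction that some randomized classical polynomial-time algorithm outputs, on input $n$, a sample from a distribution $\{q_x\}$ with $\|q-p\|_{TV}\le\epsilon$. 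This algorithm is a polynomial-size randomized circuit, so Stockmeyer's approximate counting theorem yields, for each fixed $x$, a quantity $\tilde q_x$ computable in $\mathsf{FBPP}^{\mathsf{NP}}$ with $|\tilde q_x - q_x|\le \frac{1}{\poly(n)}\,q_x$. Note this estimates $q_x$, not $p_x$ itself, since the true amplitudes are not available.

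\emph{Step 2: most instances are good, and the errors compose into the conjectured form.} From $\sum_x |q_x - p_x| = 2\|q-p\|_{TV}\le 2\epsilon$ and Markov's inequality, all but a $\delta$-fraction of strings $x$ satisfy $|q_x - p_x|\le \frac{2\epsilon}{\delta}\,2^{-n}$. For every such $x$, the triangle inequality gives
\[
|\tilde q_x - p_x| \;\le\; |\tilde q_x - q_x| + |q_x - p_x| \;\le\; \frac{1}{\poly(n)}\,q_x + \frac{2\epsilon}{\delta}\,2^{-n} \;\le\; \frac{1}{\poly(n)}\,p_x + \frac{O(\epsilon)}{\delta}\,2^{-n},
\]
which, after relabeling $\epsilon$ and the polynomial, is exactly the mixture-of-errors bound of \cref{conjecture:mixture_of_errors}. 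Hence $\tilde q_x$ is an $\mathsf{FBPP}^{\mathsf{NP}}$ approximation of $p_x$, valid on a $1-\delta$ fraction of $x$, at precisely the accuracy the conjecture asserts is $\#\mathsf{P}$-hard.

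\emph{Step 3: collapse.} It follows that $\#\mathsf{P}\subseteq\mathsf{FBPP}^{\mathsf{NP}}$, hence $\mathsf{P}^{\#\mathsf{P}}\subseteq\mathsf{BPP}^{\mathsf{NP}}\subseteq\Sigma_3$; combining with Toda's theorem $\mathsf{PH}\subseteq\mathsf{P}^{\#\mathsf{P}}$, we conclude $\mathsf{PH}$ collapses to its third level, contradicting the hypothesis. This establishes classical intractability of $\epsilon$-TVD sampling from $\{p_x\}$ for every $\epsilon$ below the absolute constant fixed by the conjecture, which is the content of \cref{theorem:sampling}.

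I expect Step 2 to be the only genuinely delicate point: one must verify that the multiplicative Stockmeyer error on $q_x$ and the (merely on-average) additive closeness $|q_x - p_x|\lesssim \epsilon\,2^{-n}$ compose into precisely the \emph{mixed} error term of \cref{conjecture:mixture_of_errors} — a purely additive or purely multiplicative approximation of $p_x$ would respectively be too weak to derive here or not known to be $\#\mathsf{P}$-hard. This mismatch between ``what the sampler gives'' and ``what worst-case counting is hard for'' is exactly why the conjecture is phrased with that particular hybrid error, and lining up the constants and quantifiers (good for a $1-\delta$ fraction of $x$, with $\epsilon$ below threshold) is where care is required; the remaining steps are the textbook Stockmeyer–Toda template.
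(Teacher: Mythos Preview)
Your argument is correct and is precisely the standard Stockmeyer--Toda reduction that underlies the cited result. Note, however, that the paper does not supply its own proof of this theorem: it is stated as a restatement of \cref{theorem:sampling} and attributed to \cite{Gao2016QuantumSF}, with the surrounding discussion in \cref{section:appendix-hardness} serving only as background on where the result comes from. So there is nothing to compare against beyond confirming that your three-step outline (Stockmeyer on the sampler, Markov to isolate a good $1-\delta$ fraction of instances and compose the multiplicative and additive errors into the hybrid form of \cref{conjecture:mixture_of_errors}, then Toda to collapse $\mathsf{PH}$) is exactly the argument the cited works use.
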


A related result was shown by \cite{BermejoVega2017ArchitecturesFQ}. They start from the (weaker) conjecture that computing $p(x)$ up to a multiplicative factor is hard-on-average, and combine it with a further conjecture on the anti-concentration of the output distribution of random linear-depth IQP circuits. Put together, they also arrive at \cref{theorem:sampling}.

\section{Fault Tolerance of IQP Circuits under Input Noise}
\label{section:fault-tolerance}

We dedicate this section to a proof of \cref{lemma:results-iqp-ft}, on the fault tolerance of IQP circuits under input noise. 

\begin{lemma}\label{lemma:IQP-FT-Distillation}
    Fix an input noise rate $p< \frac{1}{2}$ and a positive integer $D$. Let $C$ be an $n$ qubit IQP circuit with depth $d$ and lightcone size $\ell$. Then, there exists another quantum circuit $\Tilde{C}$, such that a sample from the output of $\Tilde{C}$ under input bit-flip errors can be post-processed using an efficient classical algorithm into a sample $\epsilon$-close to the output distribution of $C$. The circuit $\Tilde{C}$
    \begin{enumerate}
        \item acts on $O(n\log \frac{n}{\epsilon})$ qubits,
        \item has lightcone size $\ell + O\left(D\log^{1/D}\left(\frac{n}{\epsilon}\right)\right)$,
        \item depth $d + O\left(D\log^{1/D} \left(\frac{n}{\epsilon}\right)\right)$, and
        \item the locality of its parent Hamiltonian is $\ell + O(D)$.
    \end{enumerate}
\end{lemma}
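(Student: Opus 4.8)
The strategy is to convert fault tolerance against input noise into a purely classical \emph{input-identification} problem, exploiting that Pauli-$X$ errors commute through IQP circuits. Writing $C = H^{\otimes n} D H^{\otimes n}$ with $D$ diagonal, a bit-flip $X_i$ on an input wire conjugates to $Z_i$ after the first Hadamard layer, commutes past $D$, and conjugates back to $X_i$ after the final layer; equivalently, feeding $C$ the computational-basis input $\ket{r}$ instead of $\ket{0^n}$ simply post-composes the ideal output of $C$ with the bit-flip $X^r$. Hence a post-processor that learns the realized input string $r\in\{0,1\}^n$ can correct any measured sample $y$ by outputting $y\oplus r$, \emph{with no decoding or feedforward inside the circuit}. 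Accordingly, $\Tilde{C}$ will, for each input wire $i$ of $C$, replace the single noisy qubit by a block of $k$ noisy qubits carrying a $\cnot$ ``distillation gadget'' (\cref{fig:distillation}); the block's \emph{root} qubit is fed into $C$, while the remaining $k-1$ qubits are measured directly and used in classical post-processing to reconstruct the root's input-noise bit $r_i$. Classical hardness of the associated Gibbs-sampling task then follows by combining \cref{lemma:gibbs_input_noise} with \cref{theorem:sampling}.

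\textbf{The gadget and its decoder.} Organize each block's $k = 1 + B + \cdots + B^{D} = \Theta(B^D)$ qubits as a complete $B$-ary tree of depth $D$ with the root at the top, and apply, for each internal node, a $\cnot$ with that node as the \emph{control} and each of its $B$ children as targets, processing the tree layers \emph{from the leaves up to the root}. This ordering is chosen so that (i) the root, being only ever a control, still carries exactly its own input-noise bit $s_{\mathrm{root}}$ after the gadget, so feeding the $n$ roots into $C$ realizes $C\ket{r}$ with $r_i=$ the $i$-th root's noise bit; and (ii) every non-root node ends up holding $s_{\mathrm{node}}\oplus s_{\mathrm{parent}}$, the XOR with its \emph{immediate} parent only (the parent is still untouched at the moment the CNOT into this node fires). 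A recursive $\maj$ decoder reconstructs $r_i$ bottom-up: from the $B$ leaf values $\{s_\ell\oplus s_{\mathrm{parent}}\}$ under a layer-$(D{-}1)$ node, $\maj$ recovers that node's noise bit with failure $\delta_1=e^{-\Omega(B)}$; XOR-ing those estimates against the measured layer-$(D{-}1)$ values and again taking $\maj$ over $B$ siblings recovers a layer-$(D{-}2)$ noise bit with failure $\delta_2\le(4\delta_1)^{B/2}$; and inductively $\delta_{j+1}\le(4\delta_j)^{B/2}$, using that the subtrees feeding distinct majorities are disjoint and the input noises are i.i.d.~$\ber(p)$. Thus the root bit is recovered with failure $\delta_D\le e^{-\Omega(B^D)}$. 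Choosing $B=\Theta(\log^{1/D}(n/\epsilon))$ makes $n\,\delta_D\le\epsilon$, so with probability $\ge 1-\epsilon$ the post-processor recovers $r$ exactly and outputs $y\oplus r$, an ideal sample of $C$ (and garbage otherwise), hence $\epsilon$-close in total variation to $C$'s distribution. This gives item (1), since each block has $\Theta(B^D)=\Theta(\log(n/\epsilon))$ qubits, and item (3), since the gadget has $D$ tree layers, each realized in $B$ CNOT time steps (the $B$ CNOTs out of a fixed control node are serial, but distinct control nodes run in parallel), so $\Tilde{C}$ has depth $d+O(DB)=d+O(D\log^{1/D}(n/\epsilon))$.

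\textbf{Locality bookkeeping.} For the parent-Hamiltonian locality (item 4): conjugating $Z_v$ for a tree node $v$ forward through the gadget, the only spreading rule that applies is ``$Z$ on a CNOT target $\mapsto$ $Z$ on target and control,'' while $v$'s own CNOTs (where $v$ is a control) leave $Z_v$ fixed; so $\Tilde{C}Z_v\Tilde{C}^\dagger$ is supported on the path from $v$ up to the root (at most $D$ tree qubits) together with $\mathrm{supp}(CZ_{\mathrm{root}}C^\dagger)\subseteq\mathsf{L}_{\mathrm{root}}$, of size $\le\ell$; hence $|\mathsf{S}_v|\le\ell+O(D)$. For the circuit lightcone (item 2): the forward causal cone of a leaf $u$, within its block, collects at each ancestor $u_j$ along $u$'s root-path that ancestor together with (at most) its $B$ children, i.e.\ $O(BD)$ qubits, and once this path reaches the root it enters $C$ and picks up at most $\ell$ further qubits; hence $|\mathsf{L}_u|\le\ell+O(BD)=\ell+O(D\log^{1/D}(n/\epsilon))$, while roots have strictly smaller cones ($\le\ell+O(B)$, since as last-step controls they reach only their $B$ children inside the gadget). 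These are exactly the claimed bounds.

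\textbf{Main obstacle.} The delicate point is fixing the CNOT directions and the gate ordering so that three requirements hold simultaneously: \emph{majority} (not \emph{parity}) decoding must be available, which forces control $=$ parent, target $=$ child; the roots, which feed $C$, must not acquire large lightcones, which forces the leaf-adjacent CNOTs to come first so that the root's single layer of CNOTs is the very last thing touching it; and the decoder's failure probability must be genuinely doubly exponential in the tree depth, which needs both the independence of the sub-block estimates and a clean Chernoff/Hoeffding recursion $\delta_{j+1}\le(4\delta_j)^{B/2}$ that starts contracting once $B$ exceeds an absolute ($p$-dependent) constant. Once those design choices are locked in, the locality estimates are routine path counting, and correctness of the post-processing follows since $\hat r=r$ with probability $\ge 1-\epsilon$ and the intrinsic randomness of measuring $C\ket{0^n}$ is independent of the measured syndromes.
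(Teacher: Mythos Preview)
Your proposal is correct and follows essentially the same approach as the paper: the same $B$-ary tree gadget with parent-to-child $\cnot$s applied leaf-to-root, the same bottom-up majority decoder with doubly-exponential error suppression, and the same lightcone/$Z$-propagation bookkeeping (Claims~7.1--7.3 in the paper) yielding the four stated bounds. Your write-up is in fact slightly more explicit than the paper's on why the root qubit is unchanged, why the syndrome measurements are independent of the $C$-output randomness, and why the gate ordering is forced; but the construction, decoder, and analysis are identical.
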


For any sufficiently large constant $D$, we recover the claimed fault-tolerance result of \cref{lemma:results-iqp-ft}. If $D = O(\log \log \frac{n}{\epsilon})$, then the circuit depth, lightcone size, and locality are all increased by an additive $O(\log \log \frac{n}{\epsilon})$ factor.

At a high level, our approach is based on pre-processing each of the $n$ input bits into ``code-blocks" or gadgets of size $k = O(\log \frac{n}{\epsilon})$ bits, where each gadget has a designated ``root" bit. The $n$ root bits are then input into the IQP circuit $C$. Since bit-flip errors commute with the IQP circuit, to be able to sample from the original output distribution of $C$, it suffices to \textit{identify} these root bits. Indeed, we emphasize that we do not use the encoding to \textit{correct} the errors within the circuit, as this would require adaptivity and an increase in circuit depth, and instead perform the correction only in post-processing.   

\subsection{The Distillation Gadget}

We place the noisy bits into a tree of arity $B$ (a ``$B$-tree") of depth $D$. For notational convenience, let us partition the nodes in the tree into disjoint subsets, $L_1\cup L_2\cdots \cup L_D = [k]$, the ``layers" of the tree. Moreover, for each node $u$ in the tree, let the subset $N_u$ denote its children or (downwards) neighbors in the tree. The encoding circuit proceeds over the layers from the leaves to the root, where at the $i$th layer $L_i$ of the tree, a CNOT gate is applied from each parent bit to each of its children. 

Note that the size of the tree $k$ is implicitly defined by $B$ and $d$: $k = \sum_{j=0}^{D-1} B^j = \Theta(B^D)$.

\begin{algorithm}[H]
\label{alg:gadget}
    \setstretch{1.35}
    \caption{The Distillation Gadget $U$}
    
    \KwInput{$k$ qubits in the computational basis $\ket{s}$, where $s\leftarrow \text{Ber}^k(p/2)$.}

    \begin{algorithmic}[1]
    
    \State For each layer $i\in [2, \cdots, D]$ from leaves to root, 

    \State For each child $c\in N_p$ of a parent node $p$, apply a $\cnot$ gate from $p$ to $c$.
    \begin{equation*}
        \prod_{i\in [D]} \bigotimes_{p\in L_i} \bigg(\prod_{c\in N_p} \cnot_{p, c}\bigg)\ket{s} \equiv U\ket{s}.
    \end{equation*}
    \end{algorithmic}

\end{algorithm}

We emphasize that the ordering of operations, from leaves to root, matters crucially. In this manner, the $i$th layer acts as a ``parity check syndrome" for the $(i+1)$st. When implemented using 2-qubit gates, the depth of the distillation circuit is $B\cdot D$, as the $\cnot$ gates at the same layer but operating on different subtrees can be performed in parallel, but the $B$ $\cnot$ gates which act on the same parent must be performed sequentially.

\subsection{The Decoding Algorithm}

Next, suppose that all the qubits of $U\ket{s}$ except for that at the root of the tree have been measured, resulting in bits $b_2, \cdots b_k$. Can we reconstruct $s_1$, the bit at the root? The decoding algorithm below traverses the tree layer by layer, from leaves to root, attempting to reconstruct the bit $s_p$ of the next layer. 

\begin{algorithm}[H]
    \setstretch{1.35}
    \label{alg:decoding}
    \caption{The Decoding Algorithm}
    \KwInput{$(k-1)$ bits $b_2, \cdots, b_k$, organized into a $B$-tree, where the root bit has been removed.}
    \KwOutput{A single bit $\Tilde{s}_1$, a guess for the bit at the root.}

    \begin{algorithmic}[1]
    
    \State At the leaves $L_1\subset [k]$, let us denote $\Tilde{b}_u = b_u$ for $u\in L_1$.

    \item For each layer $i\in [2, \cdots, D]$, from leaves to root,

    \State For each parent node $p\in L_i$, let $\Tilde{s}_p = \text{Maj}(\Tilde{b}_c:c\in N_p)$ be the majority of its children bits. 
    
    \State If the root hasn't been reached, update $\Tilde{b}_p \leftarrow \Tilde{s}_p \oplus b_p$. Otherwise, output $\Tilde{s}_1$.
    \end{algorithmic}
\end{algorithm}

The decoding algorithm above maintains the invariant that $\Tilde{s}_u$ is a ``guess" for the original noisy bit $s_u$ input into the distillation gadget. Since $U$ acts from leaves to root, the children in each layer contain (with high probability) the necessary information to reconstruct the parents' bit $s_p$. Together with the measurement outcome $b_p$ - which reveals information about the layer above - we can continue the reconstruction up the tree.

\subsection{Analysis}
 
We divide the analysis into three claims, which consider the correctness, the lightcone size of the circuit, and the ``$Z$-locality" of the distillation gadget which determines the locality of the parent Hamiltonian.

\begin{claim}
    [Correctness]\label{claim:distillation-correctness} Fix any noise rate $p\leq \frac{1-\delta}{2} $ and let $B = \Omega(\delta^{-2})$. Then, the effective bit-flip error rate at the root of the depth $d$ $B$-tree is $\leq 2^{-B^{\Omega(d)}}$.
\end{claim}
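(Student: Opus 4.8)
## Proof plan for the correctness claim

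The plan is to analyze the decoding algorithm recursively, layer by layer from leaves to root, tracking the probability that the ``guess'' bit $\tilde s_u$ produced by the algorithm disagrees with the true noisy input bit $s_u$ at each node $u$. Let me denote by $p_i$ the worst-case error probability $\Pr[\tilde s_u \neq s_u]$ over nodes $u$ in layer $L_i$ (with $L_1$ the leaves and $L_D$ the root). At the leaves, no decoding happens: the algorithm sets $\tilde b_u = b_u = s_u$, so the ``error'' is $0$ — but the subtlety is that what we actually care about is a \emph{different} quantity, namely whether $\tilde b_u$ (the corrected value we pass upward) matches $s_{\text{parent}}$. So the cleaner recursion is on the ``residual'' bits $\tilde b_u$: I claim $\tilde b_u$ agrees with $s_{p(u)}$ (the parent's noisy input) with probability at least $1 - q_i$, where $q_i$ decays doubly-exponentially in $i$.

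\textbf{Base case and the CNOT structure.} First I would unpack what the distillation gadget $U$ actually does to the computational basis state $\ket{s}$. Since CNOTs from each parent $p$ to its children are applied \emph{from leaves to root}, by the time we apply the CNOT at layer $i$ from $p$ to child $c\in N_p$, the wire at $c$ already holds $\bigoplus$ of the $s$-values along the subtree rooted at $c$ down to... — actually, more carefully, one checks by induction that after $U$, the measured bit at a node $c$ in layer $L_{i-1}$ (child of $p\in L_i$) equals $s_c \oplus s_p$ — i.e., each non-root wire records the XOR of its own noisy input with its parent's noisy input. (The root wire is unmeasured.) This is the key algebraic fact: the CNOT pattern turns the array of independent noisy bits into an array of \emph{pairwise parities along tree edges}. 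With this in hand, $b_c = s_c \oplus s_p$ for $c$ a child of $p$, so $b_c \oplus s_c = s_p$, which is exactly why the decoding update $\tilde b_p \leftarrow \tilde s_p \oplus b_p$ is designed to recover $s_{p(p)}$ when $\tilde s_p$ is a good guess for $s_p$.

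\textbf{Inductive step via majority / Chernoff.} Now the recursion: suppose for every node $c$ in layer $L_{i-1}$ we have $\Pr[\tilde b_c \neq s_{p(c)}] \le q_{i-1}$, and crucially that these events are \emph{independent} across siblings (they depend on disjoint subtrees of fresh independent noise — I would need to verify this independence carefully, it is where one must be slightly careful since $\tilde b_c$ depends on $b_c$ which depends on $s_c$ and $s_{p(c)}$, but $s_{p(c)}=s_p$ is shared among all $B$ siblings — this shared randomness needs handling: condition on $s_p$, then the $B$ events ``$\tilde b_c \neq s_p$'' are conditionally independent). Then $\tilde s_p = \mathrm{Maj}(\tilde b_c : c\in N_p)$ is the majority of $B$ conditionally-i.i.d.\ bits each equal to $s_p$ with probability $\ge 1 - q_{i-1}$. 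If $q_{i-1} \le \tfrac{1-\delta}{2}$ for a constant $\delta>0$ (which we must maintain — this is why the hypothesis $p \le \tfrac{1-\delta}{2}$ enters, to seed the induction at $q_1 = p/2 \le \tfrac{1-\delta}{2}$, say after noting the leaves are noiseless guesses so really the first nontrivial layer has the bound inherited from the raw noise rate), a Chernoff/Hoeffding bound gives
\begin{equation}
\Pr[\tilde s_p \neq s_p] \le \exp(-\Omega(\delta^2 B)) =: \eta.
\end{equation}
Then $\tilde b_p = \tilde s_p \oplus b_p = \tilde s_p \oplus s_p \oplus s_{p(p)}$, so $\tilde b_p \neq s_{p(p)}$ exactly when $\tilde s_p \neq s_p$, giving $q_i \le \eta = \exp(-\Omega(\delta^2 B))$. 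Choosing $B = \Omega(\delta^{-2})$ large enough makes $\eta \le \tfrac{1-\delta}{2}$ and in fact $\eta$ as small a constant as we like, so the induction closes. Iterating, after $d$ layers the error is at most the $d$-fold composition of $x \mapsto \exp(-\Omega(\delta^2 B x^{\cdots}))$ — more precisely, unrolling the recursion $q_i \le \exp(-c\delta^2 B \cdot (\text{something growing in } q_{i-1}))$, one gets the stated tower-type bound $q_D \le 2^{-B^{\Omega(d)}}$. (The cleanest way: show $q_i \le \exp(-c B \cdot (-\ln q_{i-1})/(\text{const}))$ type inequality so that $-\ln q_i$ grows by a factor $\Omega(B)$ each layer, yielding $-\ln q_D \ge B^{\Omega(d)}$.)

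\textbf{Main obstacle.} I expect the main subtlety is \emph{not} the Chernoff bound but rather (i) nailing down the exact algebraic identity $b_c = s_c \oplus s_p$ produced by the leaves-to-root CNOT ordering — getting the order of operations and which wire is control vs.\ target exactly right — and (ii) the conditional-independence argument: the $B$ children of a common parent share the parent's noisy bit $s_p$ as common randomness, so one must condition on $s_p$ (and on all noisy bits strictly above layer $i-1$, which are irrelevant to the subtrees) before invoking independence and the majority concentration. Once the recursion is set up as ``$-\ln q_i$ multiplies by $\Omega(B)$ per layer,'' extracting $q_D \le 2^{-B^{\Omega(d)}}$ is routine. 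I would also remark that the decoding is purely classical post-processing and runs in time linear in $k$, consistent with the ``efficient classical algorithm'' in \cref{lemma:IQP-FT-Distillation}.
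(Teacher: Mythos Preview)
Your proposal is correct and follows essentially the same route as the paper: an induction on layers with a majority-vote tail bound, where the key recursion is $q_i \lesssim (4q_{i-1})^{B/2}$ (equivalently, $-\ln q_i$ grows by a factor $\Omega(B)$ per layer) once the base case $B = \Omega(\delta^{-2})$ has driven the error below a fixed constant. The paper is in fact less explicit than you are about the identity $b_c = s_c \oplus s_p$ and the conditional-independence point---it simply asserts the effective per-layer error rate $p_i$ and bounds the binomial tail directly as $p_{i+1} \le 2^B p_i^{B/2} \le p_i^{B/4}$ once $p_i \le \tfrac{1}{16}$.
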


\begin{proof}

    We prove inductively that the \textit{effective} bit-flip error rate $p_i$ at the $i$th layer, i.e., 
    \begin{equation}
        p_i \equiv \mathbb{P}_{s\leftarrow \ber^k(p)}[s_u \neq \Tilde{s}_u] \text{ for each node } u\in L_i,
    \end{equation}

    \noindent decays doubly-exponentially with the layer index $i>2$. As the base case, $p_{1} = p$ is the probability of a bit-flip error on the leaves. Suppose $p=\frac{1-\delta}{2}$. Then, after the first layer, the probability the majority vote of the children bits is incorrect is
    \begin{gather}
        p_2 \leq \sum_{j = B/2}^B \binom{B}{j} \big(p_1\big)^j \big(1-p_1\big)^{B-j}  = 2^{-B} \sum_{j = B/2}^B \binom{B}{j} \big(1-\delta\big)^j \big(1+\delta\big)^{B-j} \leq \\
        \leq  \big(1-\delta\big)^{B/2} \cdot \big(1+\delta\big)^{B/2} \leq \big(1-\delta^2\big)^{B/2} \leq \frac{1}{16},
    \end{gather}

    \noindent so long as $B$ is chosen to be $\Omega(\delta^{-2})$. For each layer $i\geq 2$, the effective bit-flip error rate on the $(i+1)$st layer is 
    \begin{equation}
        p_{i+1} \leq \sum_{j = B/2}^B \binom{B}{j} \big(p_i\big)^j \cdot \big(1-p_i\big)^{B-j} \leq 2^B (p_i)^{B/2} \leq p_i^{B/4}.
    \end{equation}

    \noindent In this manner, $p_{i+1} \leq 2^{-(B/4)^i}$ for $i\geq 1$.

\end{proof}

\begin{claim}
    [Circuit lightcone size]\label{claim:distillation-locality} The circuit lightcone size of the distillation scheme is $\leq B\cdot D$. 
\end{claim}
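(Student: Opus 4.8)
The plan is to read the lightcone off the combinatorial structure of the gadget $U$ (\cref{alg:gadget}) directly. Recall that the forward lightcone $\mathsf{L}_u$ of a qubit $u$ is computed by the time-ordered flood fill: initialize $S=\{u\}$, scan the two-qubit gates of $U$ in the order they are applied, and whenever a gate's support meets $S$ add its whole support to $S$; the final $S$ equals $\mathsf{L}_u$. For our gadget the gates are the $\cnot$'s grouped into $D-1$ layers, where layer $i$ (applied for $i=2,\dots,D$ in this order) consists of the $\cnot$'s from each level-$i$ node into each of its $B$ children at level $i-1$. The heart of the argument is to show the flood fill moves strictly \emph{up} the tree along the leaf-to-root path and never re-enters an already-processed subtree, so that $\mathsf{L}_u$ lives in a thin tube around that path.

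Concretely, I would fix a node $u$ at level $j$, write $a_j=u,a_{j+1},\dots,a_D$ for the path from $u$ to the root (so $a_{i+1}$ is the parent of $a_i$), and prove by induction on $m=j,\dots,D$ that after processing all layers of index $\le m$ the flood set is exactly
\begin{equation*}
S_m=\{u\}\cup N_u\cup\bigcup_{i=j+1}^{m}\bigl(\{a_i\}\cup N_{a_i}\bigr),
\end{equation*}
where $N_v$ is the set of children of $v$. The base case is immediate: layers $2,\dots,j-1$ touch only nodes at levels $\le j-1$, none of which is (yet) in $S=\{u\}$, and layer $j$ contributes precisely the $\cnot$'s out of $u$, adding $N_u$. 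For the inductive step the key observation is that at layer $m+1$ the only gate whose support meets $S_m$ is the $\cnot$ from $a_{m+1}=\mathrm{parent}(a_m)$ into $a_m$: the level-$m$ part of $S_m$ is exactly $N_{a_{m+1}}\ni a_m$ (in a tree, siblings share a unique parent), no level-$(m+1)$ node lies in $S_m$, and the lower layers have already been processed, so nothing propagates back downward; once $a_{m+1}$ enters $S$ its remaining $\cnot$'s add the rest of $N_{a_{m+1}}$. Taking $m=D$ gives $\mathsf{L}_u=\{u\}\cup N_u\cup\bigcup_{i=j+1}^{D}(\{a_i\}\cup N_{a_i})$, and the root case $j=D$ reduces to $\mathsf{L}_u=\{u\}\cup N_u$.

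Finally I would count. Each non-root path node $a_i$ is a child of $a_{i+1}$, so $\{a_{j+1},\dots,a_{D-1}\}\subseteq\bigcup_{i=j+1}^{D}N_{a_i}$ and $u\in N_{a_{j+1}}$ when $j<D$; hence $\mathsf{L}_u\subseteq\{a_D\}\cup N_u\cup\bigcup_{i=j+1}^{D}N_{a_i}$, a union of pairwise disjoint sets (they lie in distinct levels of the tree). Each $N_v$ has size at most $B$, and a leaf has $N_u=\emptyset$, so $|\mathsf{L}_u|\le 1+|N_u|+(D-j)B\le 1+(D-1)B\le B\cdot D$ using $B\ge 1$ (the root case $|\mathsf{L}_u|\le 1+B$ is even smaller for $D\ge2$, and $D=1$ is trivial). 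Since the distillation scheme is a disjoint union of $n$ copies of $U$ with no gates acting across copies, every qubit's lightcone stays inside its own copy, so the bound $|\mathsf{L}_u|\le B\cdot D$ holds for the whole scheme.

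The argument involves no real difficulty: the only thing to be careful with is the layer-ordering bookkeeping in the inductive step—verifying that once we pass below a given level the fill cannot leak back into an untouched subtree—and then squeezing the constant down to exactly $B\cdot D$ rather than a loose $O(B\cdot D)$, which just needs the two small observations that path nodes are themselves children of path nodes and that leaves contribute no children. Both become routine once the invariant $S_m$ is in place.
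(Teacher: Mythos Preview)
Your proof is correct and follows the same approach as the paper: both identify the lightcone of a node $u$ as being confined to the path from $u$ to the root together with the children of the nodes on that path, and then bound the size of this set by $B\cdot D$. Your version is simply more detailed, supplying an explicit inductive invariant for the flood-fill and a careful count, whereas the paper states the key observation (that only ascendants of $u$ and their neighbors are causally connected to $u$, by the leaf-to-root ordering of the $\cnot$ layers) in one sentence and reads off the bound directly.
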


\begin{proof}
The lightcone size of the quantum circuit $U$ is upper bounded by the size of the lightcone of the qubits at the leaves of the tree. Crucially, we claim that if 
\begin{equation}
   u = u_1\rightarrow u_2\rightarrow u_3\cdots \rightarrow u_D = \text{root}
\end{equation}

\noindent denotes the path from a leaf $u\in L_1$ to the root, then only the children of these nodes can be in the lightcone of $u$. Indeed, this is since the $\cnot$ gates in \cref{alg:gadget} are applied layer by layer in increasing order, so the only nodes which are causally connected to $u$ in the circuit are its immediate ascendants or their neighbors. In turn, the size of this set is bounded by $B\cdot D$.

\end{proof}

The last key claim makes reference to the locality of the parent Hamiltonian of the distillation circuit, that is, the size of the support of the operator $U(Z_i\otimes \mathbb{I})U^\dag$, maximized over bits $i$ in the gadget. We thank Joel Rajakumar and James Watson for the observation that the locality of the Hamiltonian is only related to the propagation of Pauli-$Z$ instead of the full circuit lightcone (see also \cite{rajakumar2024gibbs}).

\begin{claim}
    [Parent Hamiltonian Locality]\label{claim:z-locality} The locality of parent Hamiltonian of the distillation circuit is $\leq D$.
\end{claim}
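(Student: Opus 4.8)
The plan is to track the Heisenberg evolution of a single Pauli-$Z$ through the $\cnot$ circuit $U$ of \cref{alg:gadget} and show that it never spreads off of the tree-path from the qubit it started on up to the root. The only fact I need about conjugating Pauli-$Z$'s through a $\cnot$ is that, with control (parent) $a$ and target (child) $b$, one has $Z_a\mapsto Z_a$ and $Z_b\mapsto Z_aZ_b$: a $Z$ on a node spreads only onto that node's parent, and a $Z$ sitting on a node that is currently acting as a control stays put. Combined with the fact that \cref{alg:gadget} applies its $\cnot$s layer by layer \emph{from the leaves toward the root} --- so the $\cnot$ joining a node to its parent fires strictly before the $\cnot$ joining that parent to its grandparent --- this already pins down the picture: a $Z$ placed at a node $i$ in layer $L_{m_0}$ (leaves $=L_1$, root $\in L_D$) walks straight up the unique path $i=v_0\to v_1\to\cdots\to v_r=\mathrm{root}$, picking up the extra factor $Z_{v_t}$ exactly at processing step $m_0+t$, and ends as $UZ_iU^\dagger=\pm\, Z_{v_0}Z_{v_1}\cdots Z_{v_r}$, which is supported on $r+1\le D$ qubits.

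To make this rigorous I would prove, by induction on $t$, the invariant that just before the step-$(m_0+t)$ $\cnot$ between $v_{t-1}$ (child) and $v_t$ (parent) is applied, the propagated operator equals $\pm\prod_{s<t}Z_{v_s}$, with $v_t$ not yet in its support. The base case is the observation that steps $2,\dots,m_0$ never touch $v_0\in L_{m_0}$ as a target and touch it only as a control (at step $m_0$), which leaves $Z_{v_0}$ fixed; so entering step $m_0+1$ the operator is still $Z_{v_0}$. For the inductive step, the support entering step $m_0+t$ lies on layers $\le m_0+t-1$, hence strictly below $v_t\in L_{m_0+t}$; this gives $v_t\notin\mathrm{supp}$, so the path $\cnot$ maps $Z_{v_{t-1}}\mapsto Z_{v_t}Z_{v_{t-1}}$ with no cancellation, and it also shows the remaining $\cnot$s of that layer act trivially (controls in $L_{m_0+t}$, targets in $L_{m_0+t-1}$, and the only support element in $L_{m_0+t-1}$ is $v_{t-1}$, whose parent is precisely $v_t$). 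Thus the invariant propagates; after the final step ($L_D$) the evolution is complete, so $UZ_iU^\dagger=\pm\prod_{t=0}^{r}Z_{v_t}$. Maximizing over $i$, a leaf gives the longest path, of length $D$, so the parent-Hamiltonian locality of $U$ --- the maximum $Z$-support $|\mathrm{supp}(UZ_iU^\dagger)|$ --- is at most $D$.

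The step I expect to be fussiest is the no-cancellation bookkeeping in the inductive step: ruling out that the propagating operator ever carries a $Z$ on \emph{both} endpoints of a $\cnot$ at the moment it fires (which would delete a $Z$ and shrink the support), and checking that the order in which the $\cnot$s of a single layer are applied is immaterial. Both reduce to the one structural observation that the support of the propagated $Z$ is always a contiguous initial segment of a single root-path and therefore lives on strictly increasing tree-layers --- and this is exactly where the leaves-to-root ordering of \cref{alg:gadget} is used in an essential way (a root-to-leaves ordering would let $Z$'s branch out over the whole subtree). Finally, note that combined with the $\ell$-qubit lightcone of the IQP circuit $C$ acting only on the root qubits, this $\le D$ bound is what yields the $\ell+O(D)$ parent-Hamiltonian locality claimed in \cref{lemma:IQP-FT-Distillation}.
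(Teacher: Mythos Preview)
Your proposal is correct and follows essentially the same approach as the paper: both identify the two $\cnot$ propagation rules (a $Z$ on the control stays put, a $Z$ on the target spreads to the control) and conclude that the $Z$-support of any node is contained in its set of ancestors, hence has size $\le D$. The paper's proof is a two-line application of these rules, while you carry out the induction explicitly and track the no-cancellation bookkeeping; this extra care is fine but not strictly needed, and the $\pm$ sign you allow for is in fact always $+$ since $\cnot$ conjugation of $Z$-type Paulis never introduces a phase.
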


\begin{proof}
    The following two circuit identities describe how Pauli $Z$ operators propagate through $\cnot$ gates. 
    \begin{gather}
        \cnot_{i, j} (Z_i\otimes \mathbb{I})\cnot_{i, j} = Z_i\otimes \mathbb{I} \\
        \cnot_{i, j} (\mathbb{I}\otimes Z_j)\cnot_{i, j} = Z_i\otimes Z_j
    \end{gather}

    Crucially, the locality only increases (or propagates) from the target qubit to the control qubit. Applied to our gadget in \cref{alg:gadget}, we conclude that the qubits in the Z-lightcone of any qubit $i$ in the tree, are precisely the ancestors of $i$. Thus, $|\text{supp}(U(Z_i\otimes \mathbb{I})U)|\leq D$, the depth of the tree.
\end{proof}

We are now in a position to conclude the proof of \cref{lemma:IQP-FT-Distillation}.

\begin{proof}

[of \cref{lemma:IQP-FT-Distillation}] By \cref{claim:distillation-correctness}, if $p\leq \frac{1}{2} (1-\delta)$, then, so long as 

\begin{equation}
    B = \max\bigg(\Theta(\delta^{-2}), \log^{1/D} \big(\frac{n}{\epsilon}\big)\bigg)
\end{equation}

\noindent the probability the decoding algorithm incorrectly outputs the bit at the root of the tree is $\leq \epsilon n^{-1}$. By a union bound, all the gadgets succeed with probability $\geq 1-\epsilon$. Conditioned on this event, the output distribution of $\Tilde{C}$ corrected by the output of the $n$ decoding algorithms is exactly that of $C$, which implies the bound on the TV distance. To conclude, the locality parameters are then implied by \cref{claim:distillation-locality} and \cref{claim:z-locality}
\end{proof}

\section{Quantum Advantage in Gibbs Sampling}
\label{section:together}

We dedicate this section to combining all the aforementioned ingredients and concluding the proof of our main result in \cref{theorem:main}. 

\begin{theorem}[General version of \cref{theorem:main}]
\label{thm:maingeneral}

For any constant inverse-temperature $\beta = \Theta(1)$ and integer $L$, there exists a family of $n$-qubit commuting $O(L)$-local Hamiltonians, such that the $n$-qubit Gibbs state $\rho_\beta$ is both
    \begin{enumerate}
        \item \emph{Rapidly Thermalizing.} It can be prepared within small trace distance by the Davies generator (\cref{equation:Lindbladian}) which has mixing time $e^{O(L\cdot \log^{1/L}(n))}$. In addition, this process can be simulated on a quantum computer in time $n\cdot e^{O(L\cdot \log^{1/L}(n))}$. And yet, 
        \item \emph{Classically Intractable.} Under \cref{conjecture:mixture_of_errors}, there is no polynomial time classical algorithm to sample from the measurement outcome distribution $p(x)=\bra{x}\rho_\beta\ket{x}$ within small constant total variation distance. 
    \end{enumerate}
\end{theorem}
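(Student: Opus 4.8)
The plan is to instantiate the three technical ingredients assembled above with the parameter choice $D = L$. I would start from a family $\{C_n\}$ of constant-depth, classically-hard IQP circuits as guaranteed by \cref{theorem:sampling}; since these are 2D cluster states dressed by single-qubit rotations, their depth $d_C$ and lightcone size $\ell_C$ are $O(1)$, and let $\delta > 0$ be the associated hardness threshold. Fix the constant $\epsilon = \delta/3$ and apply the fault-tolerant compilation of \cref{lemma:IQP-FT-Distillation} with $D = L$: this produces a circuit $\Tilde{C}_n$ on $O(n \log n)$ qubits, of depth and lightcone size $d_C + O(L\log^{1/L}(n)) = O(L\log^{1/L}(n))$, whose parent Hamiltonian $H_n$ (after the affine rescaling of \cref{section:gibbs_prep_appendix}) is commuting, frustration-free, has integer spectrum, and has locality $\ell_C + O(L) = O(L)$ by \cref{claim:z-locality}. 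As in the remark there, I regard the circuit-lightcone data $\{\mathsf{L}_i\}$ as part of the specification of the family, so that the Davies generator of \cref{equation:Lindbladian} is well defined from the local terms alone.

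\textbf{Classical intractability.} By \cref{lemma:gibbs_input_noise}, the Gibbs state $\rho_\beta \propto e^{-\beta H_n}$ equals the output of $\Tilde{C}_n$ with every input qubit acted on by an independent bit-flip channel of rate $p = (1+e^{\beta})^{-1}$, a constant strictly below $\frac{1}{2}$ since $\beta = \Theta(1)$. Suppose some polynomial-time classical algorithm sampled from $p(x) = \bra{x}\rho_\beta\ket{x}$ to within total variation distance $\delta/3$. Post-composing its output with the efficient classical decoder of \cref{lemma:IQP-FT-Distillation} -- a channel, hence non-expansive in total variation distance -- and using that this decoder applied to the true measurement distribution of $\Tilde{C}_n$ is $\epsilon$-close to the output distribution of $C_n$, the triangle inequality yields a polynomial-time classical sampler for $C_n$ within total variation distance $2\delta/3 < \delta$, contradicting \cref{theorem:sampling} under \cref{conjecture:mixture_of_errors} and the non-collapse of $\mathsf{PH}$.

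\textbf{Rapid thermalization and quantum preparation.} The Davies generator $\mathcal{L}$ of \cref{equation:Lindbladian} attached to $\Tilde{C}_n$ has mixing time $t_{mix}(\mathcal{L}) = O(4^{\ell} e^{\beta} \log n)$ by \cref{lemma:mixingtimebound}, with $\ell = O(L\log^{1/L}(n))$. The elementary bound $e^x \ge 1 + x$ gives $L\log^{1/L}(n) = \Omega(\log\log n)$, hence $\log\log n = O(\ell)$ and $\log n = e^{O(\ell)}$, so the prefactor $\log n$ is absorbed and $t_{mix}(\mathcal{L}) = e^{O(L\log^{1/L}(n))}$; running the generator for $O(t_{mix}\log\frac{1}{\epsilon})$ time reaches trace distance $\epsilon$ from $\rho_\beta$, which is the thermal statement. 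For the quantum-algorithmic statement, \cref{lemma:results-gibbs-prep} (equivalently, \cref{lemma:mixingtimebound} composed with \cref{lemma:dissipative_implementation}) prepares $\rho_\beta$ to error $\epsilon$ in time $O(2^{4\ell}\cdot 2^{d}\cdot e^{2\beta}\cdot n\cdot \poly(\log\frac{n}{\epsilon},\ell,\beta))$; substituting $\ell, d = O(L\log^{1/L}(n))$ and $\beta, \epsilon = \Theta(1)$ gives $n \cdot e^{O(L\log^{1/L}(n))}$. Finally, for any constant $L \ge 2$ and constant $\epsilon$ one has $L\log^{1/L}(n) = o(\log n)$, so the two running times become $n^{o(1)}$ and $n^{1+o(1)}$, which recovers \cref{theorem:main}.

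\textbf{Main obstacle.} The crux is the parameter balancing inside the first step: the distillation gadget must drive the effective input-noise rate below $\epsilon/n$ -- which forces the $B$-tree to carry $\Theta(\log(n/\epsilon))$ leaves, i.e. arity $B = \Omega(\log^{1/D}(n/\epsilon))$ -- while keeping the circuit lightcone, which controls both the $4^{\ell}$ blow-up in the mixing time and the quantum-simulation cost, down to $O(D\log^{1/D}(n))$, and, separately, keeping the Pauli-$Z$ support of the gadget down to $O(D)$. It is exactly this gap between the (larger) lightcone $O(D\log^{1/D}n)$ and the (smaller) $Z$-support $O(D)$ established in \cref{claim:z-locality} that lets the Hamiltonian be $O(L)$-local rather than $O(L\log^{1/L}n)$-local; once the gadget parameters are chosen, the hardness reduction is a routine data-processing plus triangle-inequality argument, and the mixing and simulation bounds are direct substitutions into \cref{lemma:mixingtimebound} and \cref{lemma:results-gibbs-prep}.
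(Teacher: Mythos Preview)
Your proposal is correct and follows essentially the same route as the paper: instantiate \cref{lemma:IQP-FT-Distillation} with $D=L$ on the hard IQP family of \cref{theorem:sampling}, invoke \cref{lemma:gibbs_input_noise} to identify the Gibbs state with the input-noisy circuit, reduce classical hardness via the post-processing decoder, and read off the thermalization and simulation times from \cref{lemma:mixingtimebound} and \cref{lemma:results-gibbs-prep}. Your treatment is in fact slightly more explicit than the paper's in two places --- the data-processing/triangle-inequality step for the hardness reduction, and the $e^x\ge 1+x$ justification that the $\log n$ prefactor is absorbed into $e^{O(L\log^{1/L} n)}$ --- but the structure and ingredients are identical.
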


In particular, the choice of a sufficiently large constant $L$ recovers our main result of \cref{theorem:main}. When $L = \log\log n$, we obtain a mixing time of $\polylog (n)$.

\begin{proof}[Proof of \cref{thm:maingeneral}]
To begin our proof, let us fix an inverse-temperature $\beta = \Theta(1)$, and consider the equivalent bit-flip error rate
\begin{equation}
    p = (1+e^{\beta})^{-1} < \frac{1}{2},
\end{equation}
as guaranteed by \cref{lemma:gibbs_input_noise}. 

\paragraph{Classical Intractability.} Consider the family of constant-depth, classically intractable, $n$-qubit IQP circuits $C$ guaranteed by \cref{theorem:sampling} (\cref{conjecture:mixture_of_errors}). Using \cref{lemma:results-iqp-ft}, let us fix a depth parameter $L$, and embed each circuit in said family into a new circuit $\Tilde{C}$, which is fault tolerant to input noise of rate $p = \frac{1}{2}(1-\Theta(1))$. $\Tilde{C}$ now has $Z$-locality $O(L)$, circuit depth and lightcone size $O(L\log^{1/L} (\frac{n}{\epsilon}))$; and a noisy sample from $\Tilde{C}$ can be efficiently classically post-processed into a sample $\epsilon$-close in trace distance to an ideal sample from $C$.

Now, consider the family of parent Hamiltonians defined by the family of Fault-Tolerant circuits $\Tilde{C}$, 
\begin{equation}
    H = \sum_i \Tilde{C}\bigg(Z_i\otimes \mathbb{I}_{[n]\setminus i}\bigg)\Tilde{C}^\dagger.
\end{equation}

\noindent The support size of each term is given by the $Z$-locality of the fault-tolerant circuit $\Tilde{C}$, which is $O(L)$.

If, by assumption, there was a polynomial time classical algorithm $\mathcal{A}$ to sample from the Gibbs state of $H$ at inverse-temperature $\beta$, then we could construct a polynomial time classical algorithm to sample from a distribution $\epsilon$-close to the ideal distribution of $C$, as follows: First, construct $\Tilde{C}$ and thus the local terms of $H$ from $C$. Then, leverage $\mathcal{A}$ to sample from $\propto e^{-\beta H}$. Finally, process the output sample using the post-processing algorithm from the fault-tolerance statement of \cref{lemma:results-iqp-ft}.

\paragraph{Rapid Thermalization.} To conclude, via \cref{lemma:results-gibbs-prep}, the Gibbs state of $H$ can be prepared using the Davies generator of \cref{equation:Lindbladian} of mixing time exponential in the circuit lightcone size, $\log n\cdot \exp(O(L\cdot \log^{1/L}(n))) = \exp(O(L\cdot \log^{1/L}(n)))$. To simulate this process on a quantum computer, the overall runtime $n\cdot \exp(O(L\cdot \log^{1/L}(n))) $ has an additional quasi-linear overhead. 
\end{proof}

\begin{remark}\label{remark:order-of-quantifiers}
    \cref{theorem:main} asserts that for every constant temperature, there exists a Hamiltonian $H$ which is classically hard-to-sample from. Conversely, results by \cite{yin2023polynomialtime} and \cite{bakshi2024hightemperature} show that every local Hamiltonian (of fixed degree) has a critical temperature, such that above said threshold one can efficiently classically sample from their Gibbs state. The resolution to this apparent contradiction lies in the order of quantifiers. The degree/locality of our Hamiltonians increases with the temperature, see \cref{section:fault-tolerance} for their dependence on the noise rate. 
\end{remark}

\begin{remark}\label{remark:gate-obfuscation}
    Since the Gibbs state is determined by a low depth quantum circuit $C$, with access to a description of $C$, one could trivially produce it on a quantum computer. However, if given access only to the local Hamiltonian terms $\{h_i\}_i = \{-CZ_iC^\dagger\}_i$, we do not believe it to be computationally efficient to recover the global structure of $C$, in general. While this is not a rigorous statement, we only know how to do so for 1D circuits, via dynamic programming. It is worthwhile to contrast this to the Feynman-Kitaev circuit-to-Hamiltonian mapping \cite{kitaev02}, wherein the gates of the circuit can be exactly read-off from the local Hamiltonian interactions. 
\end{remark}

\section{BQP Completeness with Adaptive Single-Qubit Measurements}
\label{section:completeness}

We dedicate this section to a proof of \cref{theorem:results-mbqc}, on the BQP completeness of Gibbs Sampling with adaptive measurements.

\begin{theorem}\label{theorem:mbqc}
    Fix an inverse-temperature $\beta = \Theta(1)$. Then, there exists an $n$-qubit $O(1)$-local Hamiltonian, whose Gibbs state at inverse-temperature $\beta$ is a universal resource state for quantum computation and is efficiently preparable on a quantum computer. 
\end{theorem}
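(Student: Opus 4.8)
The plan is to reuse the machinery of \cref{section:fault-tolerance} and \cref{section:gibbs_prep_appendix} applied to a circuit that prepares a $2$D cluster state. First I would fix the constant-depth circuit $C_{\mathrm{cl}} = \big(\prod_{\langle u,v\rangle} CZ_{u,v}\big)H^{\otimes n}$ preparing the cluster state on the square lattice (a Hadamard layer followed by $4$ layers of $CZ$), and pre-process each of its $n$ input wires with an independent copy of the distillation gadget $U$ of \cref{alg:gadget} on a $B$-tree of constant depth $D$, designating the wire feeding $C_{\mathrm{cl}}$ as the root of the tree. The candidate Hamiltonian is the parent Hamiltonian $H = -\sum_a \tilde C Z_a \tilde C^\dagger$ of the composed circuit $\tilde C = C_{\mathrm{cl}}\circ\big(\bigotimes_i U_i\big)$ on $N = n\cdot\Theta(B^D)$ qubits, with $D$ a large enough constant (depending on $\beta$ through $\delta$, where $\tfrac12(1-\delta)=p:=(1+e^{\beta})^{-1}$ is the effective bit-flip rate from \cref{lemma:gibbs_input_noise}) and $B=\max(\Theta(\delta^{-2}),\log^{1/D}(n/\epsilon))$, so that each tree has size $\Theta(B^D)=\Theta(\log(n/\epsilon))$.

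Second, I would verify $O(1)$-locality of $H$. For a root qubit $i$ the root is only ever a $\cnot$-control in \cref{alg:gadget}, so $U_i Z_i U_i^\dagger = Z_i$, and $C_{\mathrm{cl}} Z_i C_{\mathrm{cl}}^\dagger = X_i\prod_{j\sim i}Z_j$ is the weight-$5$ cluster stabilizer generator; for an ancilla qubit $a$ of gadget $i$, \cref{claim:z-locality} bounds the support of $U_i Z_a U_i^\dagger$ by the $\le D$ ancestors of $a$, and $C_{\mathrm{cl}}$ acts nontrivially only on root qubits, so $\tilde C Z_a \tilde C^\dagger$ is supported on $O(D)$ ancillas together with at most one root and its lattice neighbors. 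Hence $H$ is commuting and $O(1)$-local. Third, for efficient preparation I would apply \cref{lemma:gibbs_input_noise} to identify $\rho_\beta$ with $\tilde C\big(\mathcal D_p(\ketbra{0})\big)^{\otimes N}\tilde C^\dagger$, note that $\tilde C$ has lightcone size $O\big(D\log^{1/D}(n/\epsilon)\big)$ by \cref{claim:distillation-locality} and the constant lightcone of $C_{\mathrm{cl}}$, and feed this into \cref{lemma:results-gibbs-prep} with the mixing bound of \cref{lemma:mixingtimebound}, giving preparation time $2^{O(D\log^{1/D}n)}\poly(n,\log\tfrac1\epsilon)=\poly(n)$ for $D\ge 2$ (and $\polylog(n)$-time Davies thermalization if one instead lets $D=\Theta(\log\log n)$, at the cost of $O(\log\log n)$ locality).

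The heart of the argument, and where I expect the most care is needed, is universality: tracing how the constant-rate input noise becomes a \emph{known} Pauli frame on the cluster state. A bit-flip on an input wire stays a classical $X$-flip as it passes through the $\cnot$ gadget (everything remains in the computational basis), and a surviving flip on a root wire $i$ obeys $C_{\mathrm{cl}} X_i C_{\mathrm{cl}}^\dagger = Z_i$; hence $\rho_\beta$ is a classical mixture, over $s\leftarrow\ber^N(p)$, of $\big(\bigotimes_{i:s_i=1}Z_i\big)\ket{\mathrm{Cluster}}$ tensored with the gadget ancillas in the computational-basis string determined by $s$. The resource-state protocol is then: (i) measure every gadget-ancilla qubit in the computational basis and run \cref{alg:decoding} on each $B$-tree, which by \cref{claim:distillation-correctness} and a union bound over the $n$ gadgets recovers, with probability $\ge 1-\epsilon$, the exact error pattern $\{s_i\}_{i\in[n]}$, i.e.\ the set of cluster qubits carrying a known $Z$; (ii) run the standard adaptive single-qubit-measurement computation on the 2D cluster state \cite{Broadbent2008UniversalBQ}, folding each known $Z_i$ into the byproduct (Pauli-frame) operators, which merely flips the outcome of the $X$-type measurement on qubit $i$ and is compensated by adjusting subsequent measurement angles and the final readout. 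This realizes an arbitrary polynomial-size quantum computation from $\rho_\beta$. The two obstacles I expect are (i) the locality bookkeeping for $\tilde C Z_a\tilde C^\dagger$ across the composition --- making sure the $C_{\mathrm{cl}}$-conjugation never enlarges the ancilla support beyond $O(D)$ --- and (ii) arguing cleanly that ``measure the ancillas, then classically decode, then run MBQC'' is a legitimate sequence of adaptive single-qubit measurements with classical side-processing, requiring no in-circuit feedforward.
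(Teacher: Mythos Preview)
Your proposal is correct and follows essentially the same approach as the paper: form the parent Hamiltonian of the cluster-state preparation circuit pre-processed by the distillation gadgets of \cref{section:fault-tolerance}, identify the Gibbs state via \cref{lemma:gibbs_input_noise} as a cluster state carrying known-but-uncorrected $Z$ errors that are revealed by measuring and decoding the gadget ancillas, and absorb these $Z$'s into the MBQC Pauli frame. Your treatment of the locality bookkeeping for $\tilde C Z_a \tilde C^\dagger$ and of how the recovered $Z$ frame folds into the byproduct operators is in fact more explicit than the paper's own proof.
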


This result is all but a corollary of our fault tolerance techniques for IQP circuits, applied to measurement-based quantum computation. Indeed, it is well known that 2D cluster states, in addition to single-qubit measurements in adaptively chosen basis on the $X-Y$ plane, is universal for quantum computation. The following lemma shows that one can produce said cluster state out of the Gibbs state of a local Hamiltonian, so long as we are allowed to measure a subset of the qubits, and subsequently apply a Pauli correction to ``distill" out the cluster state.

\begin{lemma}
    There exists a $n$-qubit, $O(1)$-local commuting Hamiltonian, whose Gibbs state at inverse-temperature $\beta$ can be used to prepare a cluster state. That is, by measuring a subset of the qubits of the Gibbs state, and then with 1 round of adaptive Pauli correction, one can produce a 2D cluster state on $O(n/\log \frac{n}{\epsilon})$ qubits with probability $1-\epsilon$. 
\end{lemma}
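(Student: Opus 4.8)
The plan is to recycle the distillation-gadget construction of \cref{section:fault-tolerance}, but to \emph{keep} the root wires in the quantum state instead of measuring them. Concretely, let $C_{\mathrm{cl}}$ be the depth-$5$ circuit on $m=\Theta(n/\log\tfrac{n}{\epsilon})$ qubits that prepares a $2$D cluster state (a layer of Hadamards followed by four layers of $CZ$ gates on a square lattice). Preprocess each of the $m$ input wires of $C_{\mathrm{cl}}$ through an independent copy of the distillation gadget $U$ of \cref{alg:gadget}, of constant depth $D$ and size $k=O(\log\tfrac{n}{\epsilon})$ (with $B$ and $D$ chosen as in the proof of \cref{lemma:IQP-FT-Distillation}), feeding the root bit of each gadget into $C_{\mathrm{cl}}$, and set $\tilde C=C_{\mathrm{cl}}\circ(U^{\otimes m})$ on $n=mk$ qubits. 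Take $H=-\sum_v \tilde C\,Z_v\,\tilde C^\dagger$. This Hamiltonian is commuting by construction, and it is $O(1)$-local: by \cref{claim:z-locality} the operator $U Z_v U^\dagger$ is a product of $Z$'s supported on the $\le D$ ancestors of $v$ in the $B$-tree, exactly one of which is a root wire; conjugating by $C_{\mathrm{cl}}$ leaves the $Z$'s on ancilla wires intact, while the lone root-wire $Z$ becomes $HZH=X$ and then, through the $CZ$ layers, spreads only to the $\le 4$ lattice-neighbours of that root wire. Hence every term of $H$ has support $\le D+4=O(1)$.

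Next I would unfold the Gibbs state. By \cref{lemma:gibbs_input_noise}, $\rho_\beta$ is $\tilde C$ applied to $\big(\mathcal{D}_p(\ketbra 0)\big)^{\otimes n}$ with $p=(1+e^\beta)^{-1}<\tfrac12$, i.e.\ the classical mixture $\sum_s \Pr_{\ber^n(p)}[s]\,\tilde C\ketbra s\tilde C^\dagger$. Fix $s$ and split it over the $m$ gadget blocks as $s=(s^{(1)},\dots,s^{(m)})$. Since $U$ is a cascade of $\cnot$ gates it sends the computational-basis state $\ket{s^{(j)}}$ to a computational-basis state $\ket{t^{(j)}}$ whose tree-root bit is unchanged, $t^{(j)}_{\mathrm{root}}=s^{(j)}_{\mathrm{root}}=:r_j$, with the remaining $k-1$ bits forming a string $a^{(j)}$. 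Applying $C_{\mathrm{cl}}$ to the $m$ root wires only and using $[Z,CZ]=0$ gives $C_{\mathrm{cl}}\ket r=\big(\prod_j Z_j^{r_j}\big)\ket{\mathrm{cl}}$, where $\ket{\mathrm{cl}}$ denotes the ideal cluster state, so
\[
\tilde C\ket s=\Big(\textstyle\prod_j Z_j^{r_j}\ket{\mathrm{cl}}\Big)_{\text{roots}}\otimes\ket{a^{(1)}}\otimes\cdots\otimes\ket{a^{(m)}}_{\text{anc}} .
\]
That is, up to a product of Pauli-$Z$'s the cluster register is disentangled from the $m(k-1)$ ancilla wires, which are in a fixed computational-basis string determined by $s$.

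The procedure is then: measure all $m(k-1)$ ancilla wires in the computational basis, obtaining $a^{(j)}$ for each block; run the decoder of \cref{alg:decoding} on each $a^{(j)}$ to get a guess $\tilde r_j$; and apply the single round of adaptive Pauli correction $\prod_j Z_j^{\tilde r_j}$ to the root wires. Because the ancilla register is in a product computational-basis state, measuring it does not disturb the root register, which remains in $\prod_j Z_j^{r_j}\ket{\mathrm{cl}}$; whenever $\tilde r_j=r_j$ for all $j$ the correction produces exactly $\ket{\mathrm{cl}}$ on $m=\Omega(n/\log\tfrac n\epsilon)$ qubits. By \cref{claim:distillation-correctness}, with $p=\tfrac12(1-\Theta(1))$ and $k=O(\log\tfrac n\epsilon)$, the per-block failure probability $\Pr_{s^{(j)}\sim\ber^k(p)}[\tilde r_j\neq r_j]$ is at most $2^{-B^{\Omega(D)}}\le\epsilon/m$; since the $m$ blocks are independent, a union bound gives overall success probability $\ge 1-\epsilon$, and on the complementary event (of probability $\le\epsilon$) the output state is still $\epsilon$-close to $\ket{\mathrm{cl}}$ in trace distance. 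The step that needs the most care — and the natural main obstacle — is exactly this decoupling: one must verify that after the gadgets and $C_{\mathrm{cl}}$ the ancillas carry no entanglement with the cluster register and the residual error is a \emph{known} tensor product of $Z$'s on the cluster qubits. This uses only that $U$ is built from $\cnot$ gates (so a computational-basis input stays computational-basis on the wires $C_{\mathrm{cl}}$ does not touch) and that $Z$ commutes through the $CZ$ layers of $C_{\mathrm{cl}}$ — the same structural feature of ``IQP-type'' circuits exploited in \cref{section:fault-tolerance}.
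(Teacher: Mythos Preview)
Your proposal is correct and follows essentially the same approach as the paper: take the parent Hamiltonian of $\tilde C=C_{\mathrm{cl}}\circ U^{\otimes m}$, use \cref{lemma:gibbs_input_noise} to interpret the Gibbs state as input-noisy, observe that the ancilla registers are classical so measuring them leaves a $Z^r$-twisted cluster state on the roots, and decode $r$ via \cref{alg:decoding}. Your write-up is in fact more explicit than the paper's (which simply cites \cref{lemma:IQP-FT-Distillation}); the only minor quibble is the final clause about the complementary event being ``$\epsilon$-close''---conditioned on failure the state can be orthogonal to $\ket{\mathrm{cl}}$, but this is irrelevant since the lemma only asks for success with probability $1-\epsilon$, which you have already established.
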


\begin{proof}
    Let $C$ be the circuit which prepares a 2D cluster state on $m$ qubits, comprised of Hadamard gates and CZ gates. Let $\Tilde{C}$ be the $n=\Theta(m\log \frac{m}{\epsilon})$ qubit circuit defined by the fault tolerance scheme of \cref{lemma:IQP-FT-Distillation}, which is robust to input errors of finite probability $<\frac{1}{2}$. Then, consider the parent Hamiltonian $H$ of $\Tilde{C}$, on $n$ qubits and with locality $O(1)$. 
    
    By construction, its Gibbs state is a quantum-classical state, of classical bits lying in the fault-tolerance gadget of \cref{lemma:IQP-FT-Distillation}, and qubits comprising a cluster-state under input noise. Again, recall that input bit-flip errors are equivalent to output $Z$ errors, due to the gate structure of $C$. From \cref{lemma:IQP-FT-Distillation}, by measuring the classical bits of the fault-tolerance gadget, one can recover the output $Z$ error with probability $1-\epsilon$.
\end{proof}

We remark that the adaptively chosen $X-Y$ measurements can be performed simultaneously with the Pauli corrections. In this manner, after producing the desired resource Gibbs state, it suffices to perform adaptively chosen single-qubit measurements to achieve universal measurement based quantum computation.

\section{Addressing Output Measurement Errors}
\label{sec:measurementnoise}

In this section, we prove \cref{theorem:results-measurement-errors} on sampling from finite-temperature Gibbs states subject to measurement errors. 

\begin{lemma}\label{lemma:measurement-errors}
    Fix an inverse temperature $\beta = \Theta(1)$, and a measurement error rate $p < \frac{1}{2}$. There exists a family of $n$-qubit, $O(\log n)$-local Hamiltonians, such that sampling from their Gibbs state at inverse-temperature $\beta$, under measurement errors of rate $p$, is classically intractable under \cref{theorem:sampling}. Moreover, there exists a $\poly(n)$ time quantum algorithm to produce said Gibbs state.
\end{lemma}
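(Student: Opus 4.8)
The plan mirrors the proof of \cref{thm:maingeneral}, but replaces the non-adaptive distillation gadget of \cref{section:fault-tolerance} by a fault-tolerance scheme that is robust to \emph{both} input and output bit-flip noise. I would start from the family of constant-depth, classically-hard IQP circuits $\{C_n\}$ of \cref{theorem:sampling} and encode each $C_n$ into a fault-tolerant IQP circuit $\Tilde{C}_n$ using an optimized version of the repetition-based construction of \cite{Bremner2016AchievingQS}: every logical wire is replicated into a block of $k = \Theta(\log\tfrac n\epsilon)$ physical qubits (a constant times $\log n$, since $\epsilon$ is a small constant), the inter-block $CZ$ gates are applied transversally, and the diagonal single-qubit $T^{b_i}$ gates are applied so that each block still realizes the corresponding logical phase (the parity of $k$ governs this, as for transversal diagonal gates). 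Since input bit-flip errors become phase errors after the first Hadamard layer and commute through the diagonal part $D$ of the IQP circuit, and since decoding is purely classical majority voting within each block, the temperature-induced input noise (rate $q = (1+e^\beta)^{-1} < \tfrac12$) and the measurement noise (rate $p < \tfrac12$) together act as a single independent bit-flip channel of rate $q + p - 2pq < \tfrac12$ on each physical output bit. A majority vote over each block thus recovers the corresponding logical bit of $C_n$ except with probability $\exp(-\Omega(k)) \le \epsilon/n$, and a union bound over the $n$ blocks yields, after classical post-processing, a sample within $\epsilon$ total variation distance of the ideal distribution of $C_n$.

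Classical hardness then follows as in \cref{thm:maingeneral}. Let $H = -\sum_i \Tilde{C}_n Z_i \Tilde{C}_n^\dagger$ be the parent Hamiltonian of $\Tilde{C}_n$; by \cref{lemma:gibbs_input_noise}, measuring $\rho_\beta \propto e^{-\beta H}$ already realizes the output of $\Tilde{C}_n$ under input bit-flip noise of rate $q$, and imposing measurement errors of rate $p$ composes a second bit-flip channel on the output. Hence any polynomial-time classical sampler for $\rho_\beta$ under measurement errors would, via the post-processing above, give a polynomial-time classical sampler for $C_n$ up to small constant total variation distance, contradicting \cref{theorem:sampling}.

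For the locality bound and the quantum algorithm, note that in the replicated construction the support of $\Tilde{C}_n Z_i \Tilde{C}_n^\dagger$ lies within the $O(\log n)$ physical qubits of qubit $i$'s block together with the $O(1)$ neighbouring blocks contributed by the (geometrically local, constant-depth) circuit $C_n$; so $H$ is $O(\log n)$-local and likewise has circuit lightcone size and depth $O(\log n)$. Plugging $\ell, d = O(\log n)$ into \cref{lemma:results-gibbs-prep} gives Gibbs-state preparation in time $2^{4\ell}\cdot 2^{d}\cdot e^{2\beta}\cdot n\cdot \poly(\log\tfrac n\epsilon,\beta) = \poly(n)$, and by \cref{lemma:mixingtimebound} the Davies generator of \cref{equation:Lindbladian} has mixing time $O(4^\ell e^\beta\log n) = \poly(n)$.

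The main obstacle is the construction of $\Tilde{C}_n$ itself. Unlike the distillation gadget of \cref{section:fault-tolerance}, which only needs to \emph{detect} input errors using an uncorrected syndrome tree, here the errors live on the measured output and must be \emph{corrected} there, which forces an honest repetition-style encoding of the output register. One must verify simultaneously that (i) the transversal inter-block $CZ$ gates together with the blockwise $T^{b_i}$ gates reproduce the logical distribution of $C_n$ exactly; (ii) block size $k = \Theta(\log\tfrac n\epsilon)$ suffices for the combined-noise decoding error to be $\le \epsilon/n$ per block; and (iii) the parent Hamiltonian genuinely has locality $O(\log n)$ — i.e.\ backward $Z$-propagation through $\Tilde{C}_n$, which stays Pauli through the diagonal $D$ (since $T$ and $CZ$ commute with $Z$) and only spreads within a block and its $O(1)$ neighbours at the Hadamard layers, does not leak further. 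Meeting all three constraints with only logarithmic overhead is exactly where the optimization over \cite{Bremner2016AchievingQS} is required.
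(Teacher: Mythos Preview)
Your overall strategy matches the paper's: take the parent Hamiltonian of a fault-tolerant version $\tilde{C}_n$ of the hard IQP circuit $C_n$, observe that Gibbs sampling plus measurement noise is exactly $\tilde{C}_n$ under combined input/output bit-flip noise at rate $q = p_{in}(1-p_{out}) + p_{out}(1-p_{in}) < \tfrac12$ (since $\tilde{C}_n$ is globally an IQP unitary), and then invoke \cref{lemma:results-gibbs-prep} with $\ell = O(\log n)$ for the quantum algorithm.

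The gap is in your construction of $\tilde{C}_n$. Applying inter-block $CZ$ gates \emph{transversally} does not implement the logical IQP circuit on the repetition code: after the first Hadamard layer the physical state is $|+\rangle^{\otimes nk}$, which is not in the codespace (the encoded $|+\rangle_L$ is a GHZ state), so transversal diagonal gates do not act as their logical counterparts. Concretely, your transversal-$CZ$ circuit with blockwise $T^{b_i}$ factorizes into $k$ independent copies of $C_n$, and bitwise majority of $k$ i.i.d.\ samples from $p_{C_n}$ is \emph{not} a sample from $p_{C_n}$ --- already a single-qubit $T$ gate gives a biased marginal that majority voting distorts further. Your item~(i) therefore fails even in the noiseless case.

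The paper (following \cite{Bremner2016AchievingQS}) instead replaces each two-qubit diagonal gate $e^{i\theta Z_a Z_b}$ by the single $2k$-body rotation $e^{i\theta Z_{a_1}\cdots Z_{a_k} Z_{b_1}\cdots Z_{b_k}}$, i.e.\ maps the IQP support matrix $M \to M G^T$ where $G$ generates the repetition code. The identity $\langle x|\tilde D|x\rangle = \langle G^T x| D| G^T x\rangle$ for all $x\in\{0,1\}^{nk}$ is what guarantees that decoding the full $nk$-bit output recovers $p_{C_n}$ exactly in the noiseless case, and the repetition code's distance then handles the combined noise. Each $2k$-body $Z$-rotation is compiled into depth $O(\log k)$ via $\cnot$ conjugation, which is what sets the $O(\log\log n)$ depth and $O(\log n)$ lightcone and locality. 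Your locality and preparation arguments go through once the encoding is fixed this way.
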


Our construction of \cref{lemma:measurement-errors} is similarly based on the parent Hamiltonians of fault-tolerant IQP circuits, which are hard-to-sample from in the ideal case. We note that the distribution defined by sampling from the Gibbs state of the parent Hamiltonian of a quantum circuit $C$, given measurement errors, corresponds exactly to sampling from $C$ under both input and output noise, albeit with different noise rates. Unfortunately, to address this mixed noise model, we do need to appropriately modify our fault-tolerance scheme. For this purpose, we appeal to prior work by \cite{Bremner2016AchievingQS}, at the cost of a higher locality.

\subsection{Overview}

To model the noise in this section, recall the definition of the bit-flip error channel $\mathcal{D}_p$ in \cref{equation:bit-flip}. Given a quantum circuit $C$ on $n$ qubits, and fixed noise rates $p_{in}, p_{out}\in [0, \frac{1}{2})$, the noisy output distribution of $C$ given input and output noise is given by

\begin{equation}
     p_{C, p_{in}, p_{out}}(x) = \Tr[\ketbra{x}\cdot \mathcal{D}_{p_{out}}^{\otimes n}\circ C\bigg(\mathcal{D}_{p_{in}}\circ (\ketbra{0})\bigg)^{\otimes n} C^\dagger]
\end{equation}

If $\mathcal{A}:\{0, 1\}^{n}\rightarrow \{0, 1\}^{n'}$ is a deterministic classical post-processing algorithm, we denote as $\mathcal{A}\circ p$ the distribution given by sampling $x\leftarrow p$ and outputting $\mathcal{A}(x)$. The following lemma is a fault-tolerance statement for IQP circuits against this input/output noise model.

\begin{lemma}\label{lemma:iqp-ft}
    Let $C$ be an $n$ qubit IQP circuit of depth $d$ and lightcone size $\ell$, and fix input and output bit-flip error rates $p_{in}, p_{out}\in [0, \frac{1}{2})$. Then, for every $r\in \mathbb{N}$ there exists a quantum circuit $C_r$ and a deterministic, $O(n_r)$-time decoding algorithm $\mathcal{A}_r:\{0, 1\}^{n_r}\rightarrow \{0, 1\}^n$, such that in the presence of input and output noise, the statistical distance  
        \begin{equation}
            \|\mathcal{A}_r\circ p_{C_r, p_{in}, p_{out}} - p_{C, 0, 0}\|_1\leq n\cdot (4q(1-q))^{r/2}, \text{ where } q = p_{in}(1-p_{out}) + p_{out}(1-p_{in}) < \frac{1}{2}.
        \end{equation}

        \noindent Moreover, $C_r$ is defined on $n_r=n\cdot r$ qubits, has depth $d_r = d\cdot \log r$ and lightcone size $\leq \ell\cdot r$. 
\end{lemma}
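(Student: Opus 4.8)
The plan is to instantiate and analyze the fault-tolerance scheme of \cite{Bremner2016AchievingQS}, tailored to the IQP gate set, which I would present in the form most convenient here. Write $C = H^{\otimes n} D H^{\otimes n}$ with $D$ a depth-$d$ diagonal circuit, so that in the computational basis $\langle x | C | 0^n\rangle = f(x) := \langle x|_X D |{+^n}\rangle = 2^{-n}\sum_{z} (-1)^{x\cdot z} D_{zz}$ and $p_{C,0,0}(x) = |f(x)|^2$. The first thing I would record is the elementary ``error propagation'' fact: since $D$ is diagonal and sits between two Hadamard layers, a bit-flip ($X$) error on an input wire becomes a $Z$ error after the first Hadamard layer, commutes through $D$, and returns to an $X$ error after the last Hadamard layer; thus both input and output bit-flip errors act, up to the circuit, purely as bit flips on the measurement outcome. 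Consequently, under independent input noise of rate $p_{in}$ and output noise of rate $p_{out}$, the $i$-th measured bit is flipped independently with probability exactly $q = p_{in}(1-p_{out}) + p_{out}(1-p_{in})$, and $q < \tfrac12$ since $1 - 2q = (1-2p_{in})(1-2p_{out}) > 0$. It therefore suffices to build an encoded circuit whose noiseless output redundantly encodes that of $C$, and to majority-decode.

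For the encoded circuit, I would replace each wire $j \in [n]$ by a block $B_j$ of $r$ fresh qubits and set $C_r = H^{\otimes nr}\, \tilde D\, H^{\otimes nr}$, where $\tilde D$ is the diagonal unitary defined by $\tilde D_{zz} = D_{\bar z \bar z}$ with $\bar z := \big(\bigoplus_k z_{1,k},\dots,\bigoplus_k z_{n,k}\big)$ the vector of block parities; concretely each gate $e^{i\theta Z_S}$ of $D$ is replaced by $e^{\,i\theta \prod_{j\in S}\prod_{k\le r} Z_{j,k}}$. Implementing one layer of $D$ inside $\tilde D$ then costs a compute-parities / uncompute-parities sandwich of depth-$O(\log r)$ $\cnot$ binary trees (one per involved block, run in parallel since the blocks are disjoint) around a depth-$O(1)$ layer of few-body phase gates on the block representatives; because the phase gates are diagonal they do not disturb the computational values, so the uncompute is exact and no ancillas are needed. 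This gives $C_r$ on $n_r = nr$ qubits, depth $O(d\log r)$, and lightcone $\le \ell r$ (the block-adjacency graph of $C_r$ equals the wire-adjacency graph of $C$, so the blocks reachable from $B_j$ correspond to the $\le \ell$ original wires reachable from $j$, each contributing $r$ qubits); note $C_r$ is again an IQP circuit. A direct Walsh--Hadamard computation then shows the key identity: $\langle y | C_r | 0^{nr}\rangle$ vanishes unless $y$ is \emph{block-constant}, i.e. $y_{j,1} = \cdots = y_{j,r} =: b_j$ for all $j$, in which case it equals $f(b)$. Hence the noiseless output distribution of $C_r$ is exactly the image of $p_{C,0,0}$ under the map $b \mapsto b^{\otimes\mathrm{block}}$ that repeats each bit $r$ times.

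The decoder $\mathcal A_r$ is block-wise majority vote: on $y' \in \{0,1\}^{nr}$ output $\tilde b$ with $\tilde b_j = \maj(y'_{j,1},\dots,y'_{j,r})$; it is deterministic and runs in time $O(nr)$. Putting the two steps together, the measured string satisfies $y'_{j,k} = b_j \oplus e_{j,k}$ with $b \sim p_{C,0,0}$ and the $e_{j,k} \sim \ber(q)$ i.i.d.; for a fixed block, $\maj$ errs only if at least $r/2$ of the $r$ i.i.d.\ rate-$q$ bits equal $1$, which since $q<\tfrac12$ has probability at most $\sum_{t\ge r/2}\binom{r}{t}q^t(1-q)^{r-t} \le (4q(1-q))^{r/2}$. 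A union bound over the $n$ blocks gives $\Pr[\tilde b \neq b] \le n\,(4q(1-q))^{r/2}$ conditioned on any $b$, hence unconditionally, and coupling to $b \sim p_{C,0,0}$ yields $\|\mathcal A_r \circ p_{C_r,p_{in},p_{out}} - p_{C,0,0}\|_1 \le n\,(4q(1-q))^{r/2}$, as claimed.

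I expect the main obstacle to be bookkeeping rather than conceptual: carefully verifying the Walsh--Hadamard identity that $C_r|0^{nr}\rangle$ is supported exactly on block-constant strings with amplitudes $f(b)$, and confirming that the parity-tree implementation of $\tilde D$ really achieves depth $O(d\log r)$ and lightcone $\le \ell r$ on exactly $nr$ qubits — in particular that stacking the compute/uncompute sandwiches over the $d$ layers of $D$ does not enlarge the lightcone beyond $\ell r$. The error-propagation reduction and the majority-vote tail bound are routine.
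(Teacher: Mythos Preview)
Your proposal is correct and follows essentially the same approach as the paper: both encode each wire into an $r$-bit repetition block (the scheme of \cite{Bremner2016AchievingQS}), implement the resulting $2r$-local $Z$-rotations via depth-$O(\log r)$ $\cnot$ trees, reduce input+output bit-flips to effective output flips at rate $q$, and decode by block-wise majority with the tail bound $(4q(1-q))^{r/2}$. Your write-up is somewhat more self-contained (you verify directly the Walsh--Hadamard support identity and the majority tail bound), whereas the paper largely defers these to \cite{Bremner2016AchievingQS}, but the construction and analysis are the same.
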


In other words, noisy samples from $C_r$ can be post-processed into nearly-ideal samples from $C$. Note that $q<\frac{1}{2}$ implies the total variation distance above decays exponentially with $r$.

\begin{corollary}
    Fix input and output bit-flip error rates $< \frac{1}{2}$. Then, any IQP circuit on $n$ qubits and constant depth can be efficiently transformed into a quantum circuit of $O(\log \log n)$ depth and $O(\log n)$ lightcone size, robust to input and output noise with error $n^{-\Omega(1)}$.
\end{corollary}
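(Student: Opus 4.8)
The plan is to instantiate Lemma~\ref{lemma:iqp-ft} with an appropriate choice of the redundancy parameter $r$. Start from a constant-depth IQP circuit $C$ on $n$ qubits; constant depth $d = O(1)$ immediately gives a constant lightcone size $\ell = O(1)$ as well. Fix the input and output bit-flip rates $p_{in}, p_{out} < \tfrac{1}{2}$, and let $q = p_{in}(1-p_{out}) + p_{out}(1-p_{in})$ be the effective ``folded'' error rate, which satisfies $q < \tfrac{1}{2}$ by hypothesis. The key quantitative fact is that $q < \tfrac{1}{2}$ forces $4q(1-q) < 1$ strictly (equality holds only at $q = \tfrac{1}{2}$), so $(4q(1-q))^{1/2} = e^{-\kappa}$ for some constant $\kappa = \kappa(q) > 0$.

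Next I would choose $r = \Theta(\log n)$, concretely $r = \lceil (c/\kappa)\ln n \rceil$ for a constant $c > 1$ of our choosing. Plugging into the error bound of Lemma~\ref{lemma:iqp-ft} gives
\[
\| \mathcal{A}_r \circ p_{C_r, p_{in}, p_{out}} - p_{C,0,0}\|_1 \;\le\; n \cdot (4q(1-q))^{r/2} \;=\; n\, e^{-\kappa r} \;\le\; n^{-(c-1)} \;=\; n^{-\Omega(1)},
\]
as desired. With this choice of $r$, the depth of $C_r$ is $d_r = d\cdot \log r = O(1)\cdot \log\big(\Theta(\log n)\big) = O(\log\log n)$, and its lightcone size is at most $\ell\cdot r = O(1)\cdot \Theta(\log n) = O(\log n)$. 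Finally, the transformation $C \mapsto (C_r, \mathcal{A}_r)$ is efficient: $C_r$ acts on $n_r = n\cdot r = O(n\log n)$ qubits and is produced by the explicit construction underlying Lemma~\ref{lemma:iqp-ft}, while the decoding map $\mathcal{A}_r$ runs in time $O(n_r) = O(n\log n)$.

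There is no genuine obstacle at the level of this corollary, since all the work is already packaged inside Lemma~\ref{lemma:iqp-ft}; the only point requiring care is the bookkeeping, namely that the two competing demands on $r$ are simultaneously satisfiable — $r$ must be large enough (order $\log n$) to push the error below a fixed inverse polynomial, yet small enough that $\log r$ and $r$ stay at $O(\log\log n)$ and $O(\log n)$ respectively — which holds precisely because both $\log\log n$ and $\log n$ are unbounded. One may optionally note that the exponent in $n^{-\Omega(1)}$ can be made an arbitrarily large constant by enlarging the constant factor in $r$, at the cost only of larger constants in the $O(\log\log n)$ depth and $O(\log n)$ lightcone bounds.
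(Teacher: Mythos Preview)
Your proposal is correct and is precisely the intended argument: the paper treats this corollary as immediate from Lemma~\ref{lemma:iqp-ft}, and your instantiation with $r = \Theta(\log n)$ together with the observation $4q(1-q)<1$ is exactly the right bookkeeping. There is nothing to add.
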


Starting from the hard-to-sample IQP circuits ensured by \cref{theorem:sampling}, we can construct circuits fault-tolerant to input and output noise via the Corollary above. In turn, these fault-tolerant circuits define a parent Hamiltonian, which is rapidly thermalizing (via \cref{lemma:results-gibbs-prep}), and yet, classically hard to sample from. Put together, we prove \cref{lemma:measurement-errors}.

\subsection{Analysis}

We remark that if the circuit $C$ itself is an IQP circuit, then the bit-flip noise model $\mathcal{B}_p$ commutes with the circuit, and thus the input/output noise is equivalent to input noise at a higher rate: $p_{C, p_{in}, p_{out}}(x) = p_{C, q, 0}(x)$, with 

\begin{equation}
    q = p_{in}(1-p_{out}) + p_{out}(1-p_{in})<\frac{1}{2}
\end{equation}

\noindent To leverage this equivalence, however, we need to design a fault-tolerant circuit which itself is an IQP circuit. Fortunately, here we can appeal to \cite{Bremner2016AchievingQS}, who achieved precisely that. To summarize their construction, their circuit embedding leverages the following property of IQP circuits. The diagonal part $D$ of any IQP circuit can be expressed as a matrix-exponential of a polynomial of $Z$ Pauli matrices:
\begin{equation}
    D = \exp\bigg[i\sum_{{j\in [m]}} \theta_j \bigotimes_{i\in [n]} Z_i^{M_{ji}}\bigg], \text{ for real coefficients }\{\theta_j\}, \text{ and a boolean matrix } M\in \mathbb{F}_2^{m\times n}.
\end{equation}

If $D$ is comprised of 2-qubit gates, then the weight of any row of $M$ is $\leq 2$. Now, suppose $G\in \mathbb{F}_2^{(n\cdot r)\times n}$ is the generator matrix of a repetition code, on $n' = n\cdot r$ bits and rate $n/n' = \frac{1}{r}$. \cite{Bremner2016AchievingQS} observe that the new IQP circuit defined by mapping $M\rightarrow \Tilde{M}=M\cdot G^T$ is robust to input noise, up to (roughly) the random-error-correction capacity of $G$. Indeed, this follows from the fact that 
\begin{equation}
    \bra{G^T x}D \ket{G^Tx} = \bra{x}\Tilde{D}\ket{x}, \forall x\in\{0, 1\}^{n'}.
\end{equation}

Therefore, the output distribution of the new circuit $\Tilde{C}$ under input (or output) noise is the same as sampling $y\in \{0, 1\}^n$ from $C$, encoding $y$ into the code $\Tilde{y} = Gy\in \{0, 1\}^{n'}$, and finally flipping each entry of $\Tilde{y}$ independently with probability $q$. If the repetition code can tolerate random bit-flip errors with rate $q$, then one can approximately sample from $C$ using noisy samples from $\Tilde{C}$.

The caveat in their approach is that the resulting IQP circuits maybe polynomially larger. Indeed, each two qubit gate in the original circuit $C$, is mapped to a $2\cdot r$ multi-qubit gate in $\Tilde{C}$:
\begin{equation}
    e^{i\theta Z_a\otimes Z_b}\rightarrow e^{i\theta Z_a^1\otimes Z_a^2\cdots Z_a^r\otimes Z_b^1\cdots Z_b^r}
\end{equation}

\noindent which is complex to implement using only diagonal operations. Instead, we dispense with the requirement that the intermediate gates in the circuit be diagonal (and thus the circuit is not an IQP circuit), however, globally it is equivalent to the same (IQP) unitary operation. 

\begin{definition}
    A \emph{$k$-local Pauli rotation gate} is the $k$ qubit unitary $U$ defined by an angle $\theta\in [0, 2\pi]$ and a $k$-qubit Pauli $P$ where $U = e^{i\theta P}$.
\end{definition}

Of particular note to us are multi-controlled $Z$ rotations, where $P = Z_1\otimes Z_2\cdots Z_k$.

\begin{claim}
    Any $k$-local Pauli rotation gate can be implemented using an $\leq \log k$ depth circuit on a fully connected architecture of 2-qubit gates. 
\end{claim}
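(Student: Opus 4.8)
The plan is to reduce an arbitrary $k$-local Pauli rotation $U = e^{i\theta P}$, with $P = P_1 \otimes \cdots \otimes P_k$, to a single-qubit $Z$-rotation conjugated by a logarithmic-depth $\cnot$ network. First I would discard any identity tensor factors of $P$ (they only lower the locality) and, for each remaining qubit $j$, apply a single-qubit Clifford $V_j$ with $V_j P_j V_j^\dagger = Z$ (for instance $H$ when $P_j = X$, $H S^\dagger$ when $P_j = Y$, and the identity when $P_j = Z$). These two layers of single-qubit gates have depth $1$ each, and they leave us with the task of implementing $\exp[i\theta\, Z_1 \otimes \cdots \otimes Z_k]$.

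Next I would build a balanced binary ``reduction tree'' of $\cnot$ gates among the $k$ qubits, with qubit $1$ designated as the root. In round $r = 1, \ldots, \lceil \log_2 k\rceil$ the currently ``active'' qubits each hold the XOR of a disjoint block of the input bits; pairing them up and applying, within each pair, a $\cnot$ from one qubit onto the other (keeping qubit $1$ as a target throughout) halves the active set, and since the pairs are vertex-disjoint each round is a single depth-$1$ layer of $\cnot$ gates. Here the full connectivity of the architecture is used, since the pairings are not geometrically local. After $\lceil \log_2 k\rceil$ rounds qubit $1$ holds $z_1 \oplus \cdots \oplus z_k$; equivalently, writing $W$ for this network and invoking the conjugation identities $\cnot_{c,t}\, Z_t\, \cnot_{c,t} = Z_c Z_t$ and $\cnot_{c,t}\, Z_c\, \cnot_{c,t} = Z_c$ from \cref{claim:z-locality}, one obtains $W Z_1 W^\dagger = Z_1 \otimes \cdots \otimes Z_k$. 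Then $\exp[i\theta\, Z_1 \cdots Z_k] = W\,\big(e^{i\theta Z_1}\big)\,W^\dagger$, so the gate is realized by running $W^\dagger$, then a single-qubit $Z$-rotation on qubit $1$, then $W$, for a total depth of $2\lceil \log_2 k\rceil + 1 = O(\log k)$, including the $O(1)$ extra layers from the first step.

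The argument is essentially routine, so there is no real obstacle; the only things that need care are bookkeeping the control/target orientation of the $\cnot$'s so that it is a Pauli $Z$ (not $X$) that fans out across the tree, and checking that the partial-parity registers left behind after round $r$ of $W$ are exactly restored by $W^\dagger$. I would also remark that the bound ``$\le \log k$'' in the statement is to be read as $O(\log k)$ (already a single-qubit rotation, $k=1$, needs one layer), and that $O(\log k)$ is precisely what is used downstream, e.g.\ in the bound $d_r = d\log r$ of \cref{lemma:iqp-ft}.
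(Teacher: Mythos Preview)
Your proposal is correct and follows essentially the same approach as the paper: reduce to the all-$Z$ case via single-qubit Cliffords, then use the conjugation identity $Ve^{i\theta P}V^\dagger = e^{i\theta VPV^\dagger}$ together with a balanced $\cnot$ tree that halves the weight of $Z^{\otimes k}$ at each layer (the paper uses the identity $\cnot(Z\otimes Z)\cnot^\dagger = \mathbb{I}\otimes Z$, which is the same as your target-to-control propagation). Your remark that the stated bound should be read as $O(\log k)$ is well taken; the paper's own proof also yields depth $2\lceil\log k\rceil + O(1)$ once one accounts for both $V$ and $V^\dagger$.
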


For simplicity, we prove the above for multi-qubit $Z$ Paulis, as the general case is analogous. 

\begin{proof}
    Let $U$ be a $k$-local Z rotation gate, and $V$ be any unitary. Then, the identity $Ve^{i\theta P}V^\dagger = e^{i\theta VPV^\dagger}$ tells us that it suffices to find a depth $d\leq \log k$ Clifford circuit $V$ such that $V(\otimes_i^k Z_i)V^\dagger= Z_1\otimes \mathbb{I}_{[k]\setminus 1}$. We claim that this can be done recursively, where each layer of $V$ halves the weight of the remaining Z's. Indeed, since $(\mathbb{I}\otimes Z) = \cnot (Z\otimes Z)\cnot^\dagger$, layers of $\cnot$ gates on a matching of the remaining $Z$'s will suffice. 
\end{proof}

To prove our statement, we instantiate the lemma below with our implementation of multi-controlled $Z$ gates.

\begin{lemma}[\cite{Bremner2016AchievingQS}\label{lemma:iqp-ft-multi-controlled}]
    Let $C$ be an $n$ qubit IQP circuit of depth $d$. Then, for every $r\in \mathbb{N}$, there exists a deterministic, $O(n\cdot r)$-time decoding algorithm $\mathcal{A}_r:\{0, 1\}^{n\cdot r}\rightarrow \{0, 1\}^n$, and a quantum circuit $C_r$ on $n_r=n\cdot r$ qubits, comprised only of Hadamard gates and $O(d)$ layers of $\leq 2r$-local Z rotation gates, satisfying
    \begin{enumerate}
        \item In the absence of noise, the distribution $\mathcal{A}_r\circ p_{C_r, 0, 0}$ given by sampling $y\leftarrow p_{\Tilde{C}_r, 0, 0}$ from the output of $C_r$, and outputting $\mathcal{A}_r(y)$, is the same as sampling from $C$. 
        \item In the presence of input-level noise with probability $q$, the statistical distance  
        \begin{equation}
            \|\mathcal{A}_r\circ p_{\Tilde{C}_r, q, 0} - p_{C, 0, 0}\|_1\leq n\cdot (4\cdot q\cdot (1-q))^{r/2}.
        \end{equation}
    \end{enumerate}
\end{lemma}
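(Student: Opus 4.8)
The plan is to follow the repetition-code embedding of \cite{Bremner2016AchievingQS}, but to keep the intermediate gates as multi-qubit $Z$-rotations rather than two-qubit diagonal gates, so that the resulting circuit realizes the same IQP unitary while having only $O(d)$ layers of $\le 2r$-local rotations. First I would fix notation: write $C = H^{\otimes n} D H^{\otimes n}$ with $D = \exp\big[i\sum_{j}\theta_j \bigotimes_{i}Z_i^{M_{ji}}\big]$, and (since $C$ has depth $d$ and $D$ is built from two-qubit diagonal gates) take the rows of $M \in \mathbb{F}_2^{m\times n}$ to have weight $\le 2$, organized into $O(d)$ layers each a matching of two-qubit $ZZ$-rotations. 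Let $G \in \mathbb{F}_2^{(nr)\times n}$ be the generator matrix of the rate-$1/r$ repetition code, $(Gw)_i = w_{\mathrm{block}(i)}$, put $\tilde M = MG^T$, $\tilde D = \exp\big[i\sum_j\theta_j\bigotimes_i Z_i^{\tilde M_{ji}}\big]$, and define $C_r = H^{\otimes nr}\tilde D H^{\otimes nr}$ on $n_r = nr$ qubits. Each row of $\tilde M$ is the union of two length-$r$ blocks, so it has weight $\le 2r$, and gates acting on disjoint original pairs map to rotations on disjoint blocks, so $\tilde D$ inherits the $O(d)$-layer structure of $D$; thus $C_r$ is a Hadamard layer, $O(d)$ layers of $\le 2r$-local $Z$-rotation gates, and another Hadamard layer, on $nr$ qubits (with lightcone size $\le \ell r$). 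The decoder $\mathcal A_r$ outputs, for each of the $n$ blocks, the majority of the $r$ measured bits in that block; this is deterministic and runs in $O(nr)$ time.

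The crux — and the step I expect to require the most care — is the algebraic claim that in the noiseless case $C_r\ket{0^{nr}} = \sum_{w\in\{0,1\}^n}\mel{w}{C}{0^n}\,\ket{Gw}$, i.e.\ the clean output of $C_r$ is exactly a repetition-codeword $Gw$ with $w$ distributed as $p_{C,0,0}$. I would prove this by direct expansion: $C_r\ket{0^{nr}} = 2^{-nr}\sum_{x,z}(-1)^{x\cdot z} f(G^Tz)\,\ket{x}$, where $f(y) = \exp\big[i\sum_j\theta_j(-1)^{M_j\cdot y}\big]$, and then group the $z$-sum by the value $y = G^Tz$. The fiber $\{z : G^Tz = y\}$ is a coset of $\ker G^T = (\mathrm{im}\,G)^\perp$, so the inner character sum $\sum_{z: G^Tz = y}(-1)^{x\cdot z}$ vanishes unless $x \in \mathrm{im}\,G$; for $x = Gw$, picking the coset representative that places $y_a$ on a single copy in block $a$ gives $(-1)^{w\cdot y}\cdot 2^{n(r-1)}$, so the coefficient of $\ket{Gw}$ is $2^{-n}\sum_y (-1)^{w\cdot y}f(y) = \mel{w}{C}{0^n}$. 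This is genuinely stronger than the diagonal identity $\bra{G^Tx}D\ket{G^Tx} = \bra{x}\tilde D\ket{x}$ recalled above: it is the statement that the whole output distribution is supported on the code with the right marginal, and the coset bookkeeping is what makes it precise. Since every block of $Gw$ is constant, block-wise majority recovers $w$ exactly, which is part (1).

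For part (2) I would first note that bit-flip errors commute with IQP unitaries — $C_r X_i C_r^\dagger = H^{\otimes nr}\tilde D Z_i\tilde D^\dagger H^{\otimes nr} = H^{\otimes nr}Z_iH^{\otimes nr} = X_i$ — so $C_r X^e = X^e C_r$ for every error string $e$. Hence a noisy run with input noise rate $q$ outputs $Gw \oplus e$ where $w \sim p_{C,0,0}$ is the (independent) measurement outcome of the clean circuit and $e \sim \ber^{nr}(q)$. Block-wise majority mis-decodes a given block only if more than half of its $r$ error bits are set, which by the standard tail estimate $\mathbb{P}[\mathrm{Bin}(r,q) \ge r/2] \le 2^r (q(1-q))^{r/2} = (4q(1-q))^{r/2}$ occurs with probability at most $(4q(1-q))^{r/2}$; a union bound over the $n$ blocks bounds the probability that $\mathcal A_r$ returns anything other than $w$ by $n(4q(1-q))^{r/2}$. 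Coupling $\mathcal A_r \circ p_{C_r,q,0}$ to $p_{C,0,0}$ so that they agree whenever every block decodes correctly then yields $\|\mathcal A_r \circ p_{C_r,q,0} - p_{C,0,0}\|_1 \le n(4q(1-q))^{r/2}$.

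Everything except the codeword-support claim is essentially routine: the $\le 2r$-locality and $O(d)$-layer bookkeeping under $M \mapsto MG^T$ is mechanical, the commutation-through-IQP step is immediate, and the noise half is a textbook repetition-code calculation. The place I would be most careful is keeping the coset/character-sum computation clean, and stating explicitly the (standard, convention-sensitive) coupling step that converts the per-block failure probability into the advertised statistical-distance bound.
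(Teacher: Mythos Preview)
Your proposal is correct and follows the same repetition-code embedding $M\mapsto MG^T$ that the paper sketches (and attributes to \cite{Bremner2016AchievingQS}); the paper does not give its own proof of this lemma beyond that overview. In fact you supply the one nontrivial step the paper glosses over: the paper only records the diagonal identity $\bra{G^Tx}D\ket{G^Tx}=\bra{x}\tilde D\ket{x}$ and then asserts that noisy samples from $C_r$ are distributed as $Gw\oplus e$, whereas your coset/character-sum computation actually establishes the stronger statement $C_r\ket{0^{nr}}=\sum_w\mel{w}{C}{0^n}\ket{Gw}$ that justifies this.
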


\section*{Acknowledgements}
We thank Anurag Anshu, Ángela Capel, Lijie Chen, Daniel Stilck França, Sevag Gharibian and Umesh Vazirani for helpful discussions. This work was done in part while the authors were visiting the Simons Institute for the Theory of Computing. 
T.B.~acknowledges support by the National Science Foundation under Grant No. DGE 2146752.
Y.L.~is supported by DOE Grant No. DE-SC0024124, NSF Grant No. 2311733, and DOE Quantum Systems Accelerator.

\printbibliography

\end{document}